\newtheorem{theorem}{Theorem}[section]
\newtheorem{lemma}[theorem]{Lemma}
\newtheorem{claim}[theorem]{Claim}
\newtheorem{corollary}[theorem]{Corollary}
\newtheorem{definition}[theorem]{Definition}
\newtheorem{observation}[theorem]{Observation}
\newtheorem{remark}[theorem]{Remark}
\newtheorem{conj}[theorem]{Conjecture}
\newcommand{\dist}{\ensuremath{\delta}}
\newcommand{\len}[1]{{\norm{#1}}}
\newcommand{\dom}{\ensuremath{\mathrm{dom}}}
\newcommand{\initdom}{\ensuremath{\mathrm{dom_0}}}
\newcommand{\bdry}{{\partial\!}}
\newcommand{\eps}{\e}
\newcommand{\bag}{\mathsf{bag}}
\newcommand{\out}[1]{\ensuremath{\mathrm{out}(#1)}}
\newcommand{\basev}{\ensuremath{\hat v}}
\def\cA{\ensuremath{\mathcal{A}}}
\def\cN{\ensuremath{\mathcal{N}}}
\def\cX{\ensuremath{\mathcal{X}}}
\def\cS{\ensuremath{\mathcal{S}}}
\def\cT{\ensuremath{\mathcal{T}}}
\def\cC{\ensuremath{\mathcal{C}}}
\def\cF{\ensuremath{\mathcal{F}}}
\def\Oh{\ensuremath{O}}
\def\bbC{\ensuremath{\mathbb{C}}}
\def\bbF{\ensuremath{\mathbb{F}}}
\def\bbR{\ensuremath{\mathbb{R}}}
\def\frakC{\ensuremath{\mathfrak{C}}}
\def\frakF{\ensuremath{\mathfrak{F}}}
\newcommand{\tw}{\mathrm{tw}}
\begin{document}
\begin{titlepage}
\title{Embedding Planar Graphs into Graphs of Treewidth $O(\log^{3} n)$%
\thanks{
  Ma.P.\ is supported by Polish National Science Centre SONATA BIS-12 grant number 2022/46/E/ST6/00143.
  This work is a part of project BOBR (Mi.\ P\/.) that has received funding from the European Research Council (ERC) under the European Union's Horizon 2020 research and innovation programme (grant agreement No.~948057).
  Hung Le was supported by the NSF CAREER Award No. CCF-223728, an NSF Grant No. CCF-2121952, and a Google Research Scholar Award.}}

\author{
    Hsien-Chih Chang\thanks{Dartmouth College.  Email: {\tt hsien-chih.chang@dartmouth.edu}.} \and
    Vincent Cohen-Addad\thanks{Google Research, France. Email: {\tt cohenaddad@google.com}.}\and
    Jonathan Conroy\thanks{Dartmouth College.  Email: {\tt jonathan.conroy.gr@dartmouth.edu}.} \and
    Hung Le\thanks{University of Massachusetts Amherst. Email: {\tt hungle@cs.umass.edu}.}\and
      Marcin Pilipczuk%
  \thanks{Institute of Informatics, University of Warsaw, Poland.  Email: {\tt m.pilipczuk@uw.edu.pl}.}
    \and
    Micha\l{} Pilipczuk%
  \thanks{Institute of Informatics, University of Warsaw, Poland.  Email: {\tt michal.pilipczuk@mimuw.edu.pl}.}
}

\date{}

\maketitle
\thispagestyle{empty}

\begin{abstract}
Cohen-Addad, Le, Pilipczuk, and Pilipczuk \cite{CLPP23} recently constructed a stochastic embedding with expected $1+\eps$ distortion of $n$-vertex planar graphs (with polynomial aspect ratio) into graphs of treewidth $O(\eps^{-1}\log^{13} n)$.  Their embedding is the first to achieve polylogarithmic treewidth.  However, there remains a large gap between the treewidth of their embedding and the treewidth lower bound of $\Omega(\log n)$ shown by Carroll and Goel~\cite{CG04}.  In this work, we substantially narrow the gap by constructing a stochastic embedding with treewidth  $O(\eps^{-1}\log^{3} n)$.

We obtain our embedding by improving various steps in the CLPP construction.  First, we streamline their embedding construction by showing that one can construct a low-treewidth embedding for any graph from (i) a stochastic hierarchy of clusters and (ii) a stochastic balanced cut. We shave off some logarithmic factors in this step by using a single hierarchy of clusters.  Next, we construct a stochastic hierarchy of clusters with optimal separating probability and hop bound based on shortcut partition~\cite{CCLMST23a,CCLMST24}.  Finally, we construct a stochastic balanced cut with an improved trade-off between the cut size and the number of cuts. This is done by a new analysis of the contraction sequence introduced by~\cite{CLPP23}; our analysis gives an optimal treewidth bound for graphs admitting a contraction sequence. 
\end{abstract}

\begin{textblock}{20}(-1.75, 3.8)
\includegraphics[width=40px]{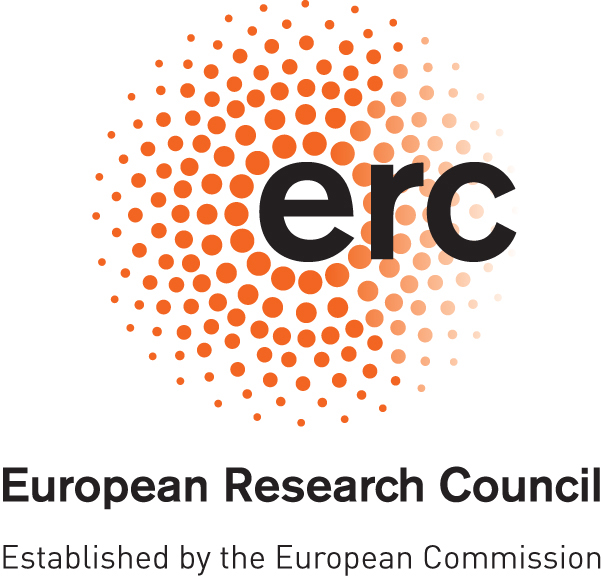}%
\end{textblock}
\begin{textblock}{20}(-1.75, 4.8)
\includegraphics[width=40px]{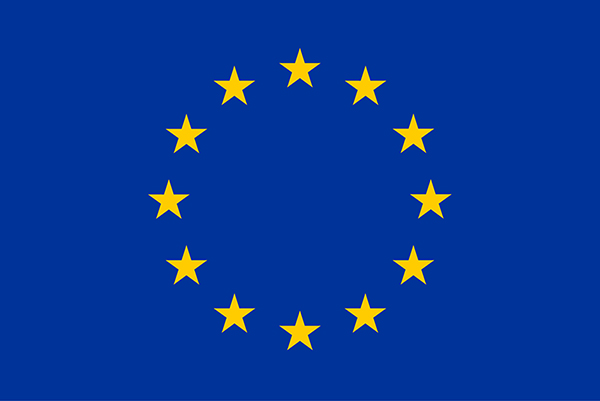}%
\end{textblock}

\end{titlepage}

\section{Introduction}

A general algorithmic technique to design approximation algorithms on graphs is \emph{metric embedding}: given an edge-weighted graph $G$ and a problem $\Pi$, embed $G$ into a structurally simpler edge-weighted graph $H$ with a small \emph{distortion} and solve $\Pi$ on $H$ to get an approximate solution for $\Pi$ in $G$. 
The distortion often decides the approximation ratio, while the running time to solve $\Pi$ depends on the simplicity of the structure of $H$. 
An important line of work initiated by Bartal~\cite{Bartal96,Bartal98} culminated in the tree embedding theorem by Fakcharoenphol, Rao, and Talwar~\cite{FRT04}, who showed that any $n$-vertex graph can be stochastically embedded into a tree with distortion $O(\log n)$. 
This result gives polynomial-time $O(\log n)$-approximation algorithms for a host of optimization problems in graphs, for example, buy-at-bulk network design, metric labeling, minimum cost communication network; see Section 4 in~\cite{FRT04} for a more comprehensive list of applications. 

However, the $O(\log n)$ distortion for embedding into trees is tight~\cite{Bartal96} even for planar graphs~\cite{CG04,CJVL08}; therefore, techniques using embedding into trees cannot give better than $O(\log n)$ approximation ratio. To achieve a better distortion, one has to not only restrict the structure of the input graph $G$ but also extend the class of host graphs $H$ to be broader than trees. 
Here, we consider a well-studied and natural setting: $G$ is planar or minor-free, and $H$ has a small \emph{treewidth}. 
Until recently, most results in this direction were negative: 
(1) the expected distortion remains $\Omega(\log n)$ even if $H$ has \emph{bounded treewidth}~\cite{CJVL08}; 
(2) if the distortion is a constant $c\geq 1$, then the treewidth must be $\Omega(c^{-1}\cdot \log n)$~\cite{CG04}. 
If one looks for a deterministic embedding, the treewidth of $H$ must be $\Omega(\sqrt{n})$ for a constant distortion~\cite{CG04}. 
However, as observed by Cohen-Addad, Le, Pilipczuk, and Pilipczuk~\cite{CLPP23}, these lower bounds do not rule out the following conjecture:

\begin{conj}
\label{conj:planar-embed} 
For every $\eps \in (0,1)$, any edge-weighted $n$-vertex planar graph can be embedded with expected distortion $1+\eps$ into a distribution of graphs with treewidth $O(\eps^{-1} \cdot \log n)$.
\end{conj}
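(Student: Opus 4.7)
The plan is to follow the decomposition framework articulated in this paper: produce the embedding from a stochastic hierarchical clustering together with a stochastic balanced cut construction, so that the treewidth of the resulting host graph scales with the product of the per-level bag charge contributed by the hierarchy and the size of a single balanced cut. Since the Carroll--Goel lower bound \cite{CG04} already forces any nontrivial balanced cut to have size $\Omega(\log n)$, the conjecture asks essentially that the hierarchy contribute only $O(1)$ overhead per level and that the levels telescope to an overall $O(1)$ factor on top of the unavoidable $\log n$ from the cut.

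First, I would build the hierarchy using the shortcut partition of \cite{CCLMST23a,CCLMST24} so that each cluster at the scale-$\ell$ level has $O(1)$ hop-diameter and the optimal separating probability $O(d/(\eps \ell))$ between two vertices at distance $d$. The constant hop-diameter is crucial: when the host graph is assembled, each cluster can then be replaced by a constant-size substructure rather than by a path whose length scales with its Euclidean diameter, which is what allows the per-level bag contribution to be $O(1)$ instead of $O(\log n)$.

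Second, inside each cluster at each level I would construct a stochastic balanced separator of size $O(\log n)$ via the contraction-sequence analysis of \cite{CLPP23} with the improved size-versus-count trade-off established in this paper. Composed in the natural way with a tree decomposition whose bags follow root-to-leaf paths of the hierarchy, this already yields treewidth $O(\eps^{-1} \log^3 n)$, matching the paper's main theorem.

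The main obstacle---and the source of the remaining $\log^{2} n$ gap from the conjecture---is to combine these ingredients without paying independently for each of: the $O(\log n)$ hierarchy levels, the $O(\log n)$ separator within a single level, and the $O(\log n)$ aspect-ratio or diameter-to-hop conversion. Reaching $O(\eps^{-1} \log n)$ treewidth seems to require a genuinely telescoping bag construction in which a vertex's total bag, summed over all ancestor clusters, still has size $O(\log n)$ rather than $O(\log n)$ per level. This looks to need either a charging argument that reuses separator vertices across levels, or a balanced cut whose size decays geometrically with the scale so the total telescopes. I expect sharing separator vertices across scales to be the hardest step, since neither the shortcut partition nor the contraction sequence currently produces any explicit correlation between consecutive levels of the hierarchy, and breaking this independence is precisely what a proof of the full conjecture would have to supply.
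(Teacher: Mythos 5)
The statement you were asked to prove is Conjecture~\ref{conj:planar-embed}, which this paper does not prove and explicitly leaves open (``While we are closer to \Cref{conj:planar-embed}, its resolution remains elusive''). Your proposal does not prove it either: the concrete construction you describe --- a $\beta$-separating clustering chain with constant hop bound from shortcut partitions, combined with the improved $(\tau,\psi)$-stochastic balanced cut via the sharpened contraction-sequence bound --- is exactly the paper's route to \Cref{thm:main}, i.e.\ treewidth $O(\eps^{-1}\log^{3} n)$ for polynomial aspect ratio, and you correctly stop there. The final step of your plan (a ``genuinely telescoping bag construction'' or ``sharing separator vertices across levels'') is stated as a desideratum, not carried out, so the gap between $\log^{3} n$ and the conjectured $\log n$ is precisely the part of the argument that is missing. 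You cannot close it by a routine refinement of the ingredients you list: the paper argues that one of the three logarithmic factors (an edge being cut by $\Omega(\log n)$ clusters at different scales of the chain) is inherent to \emph{any} clustering chain, and another (the relation $\tau = O(\log n)\cdot\psi$) cannot be improved within this framework because the $O(\sqrt{ab}\,c)$ treewidth bound for $(a,b,c)$-contraction sequences is asymptotically optimal (\Cref{rm:contraction-opt}). So a proof of the conjecture would have to deviate substantially from the framework, not merely correlate levels of the existing hierarchy.

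One smaller inaccuracy: the Carroll--Goel lower bound~\cite{CG04} says that distortion $c$ forces treewidth $\Omega(c^{-1}\log n)$ for the \emph{host graph}; it does not say that ``any nontrivial balanced cut has size $\Omega(\log n)$,'' and the cuts used here are measured in number of clusters of the chain, not vertices, so that reading of the lower bound does not directly constrain $\tau$. Also, in this framework the treewidth is $O(\tau + \log(n\Phi))$, not a product of a per-level charge with the cut size --- the single shared clustering chain is exactly what removes the multiplicative dependence on the recursion depth --- so the accounting in your first paragraph does not match the construction you then invoke.
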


\Cref{conj:planar-embed}, if true, will imply, among other things, the first PTAS (polynomial-time approximation scheme) for the \textsc{$k$-MST} ($k$-minimum spanning tree) problem in planar graphs: find a spanning tree of minimum weight that contains at least $k$ vertices a graph. 
After a long line of works~\cite{RSMRR96,AABV95,ARV99,Garg96,AR98,AK06}, Garg~\cite{Garg05} showed that this problem admits a $2$-approximation in polynomial time in general graphs. 
In planar graphs, it has been a long-standing open problem to design a PTAS for $k$-MST; only a QPTAS (quasi-polynomial time approximation scheme) was known by a very recent work of Cohen-Addad~\cite{CA22}. 

Cohen-Addad, Le, Pilipczuk, and Pilipczuk~\cite{CLPP23} gave strong evidence supporting \Cref{conj:planar-embed}: they constructed the first embedding with polylogarithmic treewidth. The precise treewidth bound is 
\[
O(\eps^{-1} \cdot \log^{6}(n\Phi)\cdot \log^5(n\Phi/\eps) \cdot \log^2 n) 
\] 
where $\EMPH{$\Phi$} \coloneqq \frac{\max_{u,v}d_G(u,v)}{\min_{u\not= v}d_G(u,v)}$ is the \EMPH{aspect ratio}. 
Specifically, for graphs with polynomial aspect ratios $\Phi = n^{O(1)}$, the treewidth they obtained is roughly $O(\eps^{-1}\log^{13} n)$. (In many interesting applications, including those considered in~\cite{CLPP23}, one can assume that the aspect ratio is polynomial.) Clearly there is a big gap between the treewidth upper bound achieved by \cite{CLPP23} and the treewidth bound in \Cref{conj:planar-embed}. In this work, we substantially narrow the gap, and specifically, for the case of polynomial aspect ratio, we achieve treewidth bound $O(\eps^{-1}\log^3 n)$. 
Our result extends to any apex-minor-free graphs, which include planar graphs as a subclass.

\begin{theorem}\label{thm:main} 
Let $G$ be an $n$-vertex apex-minor-free graph with aspect ratio $\Phi$. For any parameter $\eps \in (0,1)$, $G$ can be stochastically embedded with expected distortion $1+\eps$ into a distribution of graphs of treewidth $O(\eps^{-1}\cdot \log^2 \Phi \cdot \log(n\Phi))$.
\end{theorem}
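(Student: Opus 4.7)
The plan is to instantiate the streamlined framework described in the abstract: combine (i) a stochastic hierarchy of clusters with well-controlled separating probabilities and (ii) a stochastic balanced cut for every cluster into a single recursive construction of the tree decomposition. Unlike \cite{CLPP23}, we use one hierarchy rather than two, which is the source of the logarithmic savings.

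For the hierarchy, I would use the shortcut partition of \cite{CCLMST23a,CCLMST24} available for apex-minor-free graphs. At every scale $2^i$ it provides a randomized partition into clusters of diameter $O(2^i)$, in which only a constant number of portal vertices suffice to route every shortest path crossing a cluster boundary, and in which any pair $u,v$ is separated at that scale with probability $O(d_G(u,v)/2^i)$. Iterating over $i = 0, 1, \ldots, O(\log \Phi)$ produces a laminar hierarchy with $O(\log \Phi)$ scales. Adding an $\eps$-net of $O(\eps^{-1})$ portals per cluster and applying a standard FRT-style randomized shifting across scales then yields expected distortion $1+\eps$. For the balanced cut, I would construct, for every cluster of $m$ vertices, a random vertex set $S$ of expected size $O(\log(n\Phi))$ whose removal leaves components of size at most $m/2$, with every vertex included only with small marginal probability. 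This comes from a refined analysis of the contraction sequence of \cite{CLPP23} which achieves the optimal treewidth bound extractable from any graph admitting such a sequence; apex-minor-free graphs do.

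Assembling the two ingredients, each bag of the final tree decomposition consists of the balanced cut of its cluster together with the portals inherited from all ancestor clusters across the $O(\log \Phi)$ relevant scales. Multiplying the three factors gives the expected bag size $O(\log \Phi) \cdot O(\log(n\Phi)) \cdot O(\eps^{-1} \log \Phi) = O(\eps^{-1} \log^2 \Phi \cdot \log(n\Phi))$, matching the claim. The hard part will be this composition step: showing that portals chosen at coarser scales remain valid at finer scales --- a ``portal inheritance'' property only possible with a single hierarchy --- and simultaneously bounding each vertex's marginal inclusion probability across levels so that the expected bag size does not blow up. Delivering both properties from a single random object is the technical content of the new contraction-sequence analysis.
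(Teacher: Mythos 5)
There is a genuine gap. Your high-level ingredients (a single hierarchy built from the shortcut partition, balanced cuts via a sharpened contraction-sequence analysis) match the paper, but the two mechanisms you rely on to actually get $1+\eps$ distortion and the treewidth bound are not the paper's and, as stated, do not work. First, the distortion: you propose $O(\eps^{-1})$ portals per cluster plus FRT-style random shifting. The paper uses no portals at all; its $1+\eps$ comes from the \emph{[Low probability]} property of the $(\tau,\psi)$-stochastic balanced cut --- each non-singleton cluster lands in the sampled cut with probability at most $1/\psi$, so a pair split at scale $i$ suffers the \emph{full} additive error $O(2^i)$ but only rarely, and summing over the $O(\log\Phi)$ scales and the $O(\log(n\Phi))$ recursion depth gives expected additive error $O(\beta\log(n\Phi)\log\Phi/\psi)\cdot\dist_G(u,v)$, after which $\psi$ is set to $\Theta(\eps^{-1}\beta\log(n\Phi)\log\Phi)$. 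A deterministic portal scheme cannot substitute for this: a shortest path crosses $\Omega(\dist_G(u,v)/2^i)$ clusters at scale $i$, so per-cluster portal error $\eps\cdot 2^i$ accumulates to $\eps\log\Phi\cdot\dist_G(u,v)$, and more fundamentally the $\Omega(\sqrt{n})$ lower bound for deterministic embeddings of Carroll--Goel shows that error must be controlled probabilistically, not by dense portal sets.

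Second, your treewidth accounting is inconsistent with the framework you are invoking. You claim a balanced cut of expected size $O(\log(n\Phi))$ per cluster with small per-vertex marginal probability, and then multiply (scales) $\times$ (cut size) $\times$ (portals) to reach the bag bound. In the paper the cut is a set of \emph{clusters} of the chain, its size is $\tau = O(h^2\log\Phi)\cdot\psi$ --- i.e.\ the cut size itself carries the $\eps^{-1}$ factor through $\psi$ --- and the low probability $1/\psi$ is obtained by exhibiting $\psi$ cuts that reuse no non-singleton cluster and sampling one uniformly; the refined contraction-sequence bound $\tw = O(\sqrt{ab}\,c)$ is exactly what guarantees that after $\psi$ rounds of marking clusters unavailable a balanced separator of at most $\tau$ clusters still exists. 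You cannot have both a cut of size $O(\log(n\Phi))$ independent of $\eps$ and marginal probability $O(\eps/\mathrm{polylog})$; the trade-off $\tau = \Theta(\psi)\cdot\mathrm{polylog}$ is forced. Finally, the bag in the paper is a single cut's representatives plus $O(\log(n\Phi))$ representatives accumulated along the recursion (treewidth $O(\tau+\log(n\Phi))$, with depth controlled by a potential function), not a product over scales, so the ``portal inheritance across scales'' step you flag as the technical core is not a property the construction needs --- and it is the step your proposal does not know how to deliver.
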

Our \Cref{thm:main} immediately leads to faster QPTAS  for several problems on planar graphs, including capacitated vehicle routing, capacitated $k$-median, and capacitated/uncapacitated facility location.

\paragraph{A brief overview of the CLPP embedding.} 
Our construction builds on ideas of~\cite{CLPP23}, which we briefly review here. 
For simplicity of exposition, we assume that the aspect ratio $\Phi = n^{O(1)}$ and hence $\log\Phi = O(\log n)$. 
The key idea in~\cite{CLPP23} is the construction of a $(\tau,\psi)$-stochastic balanced cut $\frakF$. Roughly speaking, a cut $\cF$ in $\frakF$ is a (small) set of at most $\tau$ clusters taken from a (stochastic) hierarchy $\bbC$ of clusters with exponentially decreasing diameter, called a \EMPH{clustering chain}. 
(This notion of cut is non-standard.)  
Intuitively a cut $\cF$ is \emph{balanced} if $G\setminus \cF$, the graph obtained by removing from $G$ all the vertices in the clusters in $\cF$, has connected components of size at most $2n/3$. 
Loosely speaking, a $(\tau,\psi)$-stochastic balanced cut $\frakF$ is a distribution of balanced cuts of size at most $\tau$ such that every cluster in $\bbC$ is sampled with at most $1/\psi$ probability.  \cite{CLPP23} showed that for apex-minor-free graphs (and hence for planar graphs), one can construct a $(\tau,\psi)$-stochastic balanced cut with $\tau = \psi\cdot\log^6 n$. 
(Parameter $\tau$ will dictate the treewidth of the embedding, while $\psi$ will dictate the distortion.)

The embedding algorithm is then rather simple: sample a cut $\cF$ from $\frakF$, recurse on each cluster $C\in \cF$ and each connected component of $G\setminus \cF$, and then form the embedding of $G$ from the embeddings in the recursions by (i) taking a single representative vertex per cluster in $\cF$ and (ii) connecting each representative to every other vertex of the graph.  

The depth of the recursion is shown to be $L = O(\log^2 n)$ by the fact that clusters in $\cF$ have a diameter at most half of the diameter of $G$, and the cut is balanced. 
(A subtle point is that connected components of $G$ might have a diameter bigger than $G$, and hence, the recursion depth is much larger than $\log n$.) 
Every time a pair $(u,v)$ is cut by a cluster $C$ of at level $i$ of $\bbC$ (of diameter $2^i$) in the balanced cut $\cF$, the additive distortion is $O(2^i)$ since their shortest path $\pi_G(u,v)$ is now rerouted (by the embedding) through the representative of $C$. 
(A pair $(u,v)$ is cut by a cluster $C$ if $\pi_G(u,v)$ intersects $C$.) 
By constructing $\bbC$ carefully, in expectation, one could show that the number of clusters at level $i$  intersecting $\pi(u,v)$ is about $O(\log^2 n)\cdot {\delta_G(u,v)}/{2^i}$, and hence the \emph{expected additive distortion at one level of the recursion} is:

\begin{equation}\label{eq:dist-CLPP-1level}
   O(2^i)\cdot \frac{1}{\psi}\cdot O(\log^2 n) \cdot \frac{\delta_G(u,v)}{2^i} 
   = \frac{O(\log^2 n)\cdot \delta_G(u,v)}{\psi}~,
\end{equation}
where $\frac{1}{\psi}\cdot O(\log^2 n)\cdot \frac{\delta_G(u,v)}{2^i}$ is the probability that a cluster intersecting $\pi(u,v)$ is chosen. 

Over $L$ levels of the recursion, the expected additive distortion is roughly $L \cdot O(\log^2 n)\cdot \delta_G(u,v)/\psi$. 
A technical subtlety is that a single edge $(x,y)$ could be cut by up to $O(\log n)$ clusters in different levels of $\bbC$ since the depth of $\bbC$ is $O(\log n)$; this introduces another $O(\log n)$ factor in the distortion. 
(Indeed, in this paper we suffer from the same extra log factor which is inherent in the framework.) 
Therefore, the final expected additive distortion is:
\begin{equation}\label{eq:dist-CLPP-all}
    \frac{L \cdot O(\log^3 n) \cdot \delta_G(u,v)}{\psi} = \frac{O(\log^5 n) \cdot \delta_G(u,v)}{\psi}
\end{equation}
To get an expected additive distortion of $\eps \cdot \delta_G(u,v)$ so to imply an expected \emph{multiplicative} distortion $1+\eps$, one sets $\psi \coloneqq \frac{\log^5(n)}{\eps}$, giving $\tau = \log^6(n) \cdot \psi = \eps^{-1}\log^{11} n$. 
Since every sampled cut has size at most $\tau$, and the recursion depth is $L$, the final treewidth of the embedding is $O(\tau \cdot L) = O(\eps^{-1} \log^{13} n)$. 

One could probably improve one log factor in the analysis of the algorithm by~\cite{CLPP23} by showing that the number of clusters at level $i$ of $\bbC$ intersecting $\pi(u,v)$ is instead $O(\log n)\cdot \frac{\delta_G(u,v)}{2^i}$, which is tight for their algorithm. 
By redoing the analysis, one could see that this log factor improvement shaves two log factors in the relationship between $\psi$ and $\tau$, giving $\tau = \psi \cdot \log^4 n$. 
All things together imply an embedding with treewidth $O(\eps^{-1} \log^{10} n)$. 
Significantly improving over the $O(\log^{10} n)$ treewidth bound seems to require new ideas for every step in the framework.

\paragraph{Our ideas.} 
We propose three new ideas on every component of the CLPP embedding.
First, we streamline the framework, showing that one can construct a low-treewidth embedding from two objects: 
(i) a (stochastic) clustering chain $\bbC$ with \emph{$\beta$-separating property}: the probability of cutting an edge $e$ at level $i$ of $\bbC$ is $\beta \cdot \len{e}/2^i$ where $2^i$ is the diameter of clusters at level~$i$, and 
(ii) a $(\tau,\psi)$-stochastic balanced cut. 
In this streamlined framework the treewidth of the embedding is $O(\tau + \log n)$ while the expected additive distortion is $O(\beta\log^2 n/\psi) \cdot \delta_G(u,v)$, 
compared to the original treewidth bound $O(\tau \log^2 n)$ and additive distortion $O(\log^5 n/\psi) \cdot \delta_G(u,v)$.
(See \Cref{thm:embedding}; notice this theorem holds even for general graphs.)
Our key idea here is to use a single clustering chain $\bbC$ throughout the recursion instead of recomputing the clustering chain for every connected component of $G\setminus \cF$ as done by \cite{CLPP23}. At every step of the recursion, the subgraph on which we recurse is a cluster of $\bbC$, and hence, we are guaranteed to make progress in \emph{every level of the recursion}: either the diameter reduced by a constant factor or the number of important vertices (those that the recursion has to handled) reduced by a constant factor while the diameter remains unchanged. Therefore, the recursion depth is $O(\log n)$. 

Next, we show how to construct a (stochastic) $\beta$-separating clustering chain $\bbC$ with $\beta = O(1)$ for any minor-free graphs, see \Cref{thm:beta-separating}. The construction of  \cite{CLPP23} can be seen as constructing a $\beta$-separating clustering chain $\bbC$ with $\beta = O(\log^2 n)$.  (See Lemma 3.6 in~\cite{CLPP23}, arXiv version.) 
Their construction does not exploit minor-closed property and holds for general graphs. 
As we remarked above, one could probably improve their analysis to obtain $\beta = O(\log n)$, and this is the best for general graphs. We achieve constant $\beta$ by exploiting minor-closed property. 
Although the same $\beta$ can be achieved by the KPR decomposition (Klein-Plotkin-Rao~\cite{KPR93}), for the purpose of constructing balanced cuts in the next step, we need an additional guarantee that the clustering chain $\bbC$ has \emph{bounded-hop property}. 
Roughly speaking, $\bbC$ has a bounded-hop property if, for every cluster $C$ in $\bbC$, the \emph{cluster contraction graph}, obtained from $G[C]$ by contracting every child cluster of $C$ into a single vertex, has bounded hop-diameter.  (See \Cref{def:beta-sep} for a precise definition.)
KPR decomposition (and its variants~\cite{AGMW10,AGGNT14,Fil19padded}) does not guarantee the bounded-hop property --- this is also why CLPP embedding cannot use KPR in their construction. 
Here, we build on the shortcut partition in a white-box fashion for minor-free graphs recently introduced by~\cite{CCLMST24} to achieve the constant hop bound (and constant $\beta$). 
The basic idea is to sample a random radius in the ball carving process of~\cite{CCLMST24}. The analysis of~\cite{CCLMST24} already provides a constant hop bound; our job is to show that for every edge, not many random balls can potentially ``reach'' it, giving a constant $\beta$ in the separating probability. Clearly, constant $\beta$ and constant hop bound are the best one can hope for.  
Note that the original shortcut partition construction is deterministic, whereas our $\beta$-separating clustering chain is necessarily stochastic; 
analyzing such variant requires us to prove certain new property about the shortcut partition that has not been established before.  
For comparison, recently Filtser~\cite{Fil24} constructed a padded-decomposition based off the buffered cop decomposition~\cite{CCLMST24}, but they emphasized that only a weak diameter guarantee can be achieved.  
Here we construct a strong-diameter padded decomposition with constant hop guarantee (\Cref{lem:random-shortcut}) which is needed for the clustering chain.
It turns out the proof is rather technical.%
\footnote{For the readers who are familiar with the buffered cop decomposition: Roughly speaking, we need to bound the number of supernodes $X$ that are at most distance $O(\Delta)$ away from a fixed vertex $\hat{v}$ within the domain of $X$.  However, it is not with respect to the final domain $\dom(X)$, but the \emph{initial} domain $\dom_0(X)$ at the time when $X$ was initiated.  As a result, many more supernodes could be distance $O(\Delta)$ away.  See Section~\ref{SS:cop-threateners} and \Cref{lem:bdd-threatener-dom0} for a precise statement.}

Finally, we construct a $(\tau,\psi)$-stochastic balanced cut for apex-minor-free graphs with $\tau = O(\log n)\cdot \psi$. 
Recall that in~\cite{CLPP23}, $\tau = O(\log^6 n)\cdot \psi$ (see Lemma 4.2 in \cite{CLPP23}, arXiv version). 
By plugging the clustering chain with constant hop bound above into the construction of balanced cuts by~\cite{CLPP23}, we already remove a factor of $\log^4 n$, giving $\tau = O(\log^2 n)\cdot \psi$. 
One key idea to remove another $\log n$ factor is a new analysis of \emph{contraction sequence}: roughly speaking, an \emph{$(a,b,c)$-contraction sequence} represents a contraction of a graph $G$ into a single vertex through $b$ rounds, where in each round, we contract a certain number of vertex-disjoint subgraphs of radius at most $c$; the total number of subgraphs we contract in $b$ rounds is $a$. 
\cite{CLPP23} showed that if an apex-minor-free graph $G$ admits an $(a,b,c)$-contraction sequence, then $\tw(G) = O(bc\sqrt{a})$. 
Their basic idea is to argue that there exists a set $S$ of $a$ vertices such that every other vertex is within hop distance $O(bc)$ from the set and then apply known techniques to get treewidth bound $O(bc\sqrt{a})$. 
Here, we improve the treewidth bound by a factor of $\sqrt{b}$, which we show is optimal.  
First, by a simple reduction, we can assume that $c = 1$. Second, which is also the bulk of the technical details, we show the following structural property: Let $Z$ be any maximal set of vertices in $G$ such that their pairwise (hop) distance in $G$ is $\Omega(b)$, then $|Z| \leq \frac{a}{b}$. 
Thus, by standard tools, we can conclude that $\tw(G) = O(\sqrt{Z}b) = O(\sqrt{ab})$ (for the case $c=1$). 
This, in turn, leads to our improved bound $\tau = O(\log n)\cdot \psi$. 
Proving this structural property requires an in-depth analysis of the contraction sequence, which might be of independent interest. 

By incorporating all the ideas above, we obtain an embedding with treewidth $O(\eps^{-1}\log^3 n)$ as stated in \Cref{thm:main}. 
It seems to us that, to further improve any logarithmic factor, one has to significantly deviate from the framework of~\cite{CLPP23}. 
Recall that our expected distortion is $1 + O(\beta\log^2 n/\psi)$. 
One log factor comes from the fact that we must handle distances in $O(\log n)$ scales; this factor is unavoidable and also reflects in the treewidth lower bound $\Omega(\log n)$ by~\cite{CG04}. 
Another log factor comes from the technical subtlety mentioned in the overview of~\cite{CLPP23}: a single edge $(x, y)$ could be cut by up to $\Omega(\log n)$ clusters in different levels of $\bbC$. 
This is inherent in any clustering chain: there is always an edge that is cut by  $\Omega(\log n)$ clusters.  
One last log factor comes from the bound $\tau = O(\log n)\cdot \psi$ in $(\tau,\psi)$-balanced cut. 
Removing this log factor seems to require a completely different technique to construct balanced cuts since the core component of our construction, which is the $(a,b,c)$-contraction sequence, is the best possible. 
While we are closer to \Cref{conj:planar-embed}, its resolution remains elusive.

\section{Technical Overview}

In this section, we provide key definitions from the discussion above, as well as a high-level overview of our embedding construction.
A \EMPH{clustering} of an edge-weighted graph $G = (V,E,\len{\cdot})$  is a partition of the vertex set $V$ where each \EMPH{cluster} in the clustering is a connected subset of vertices. 
The strong (resp.\ weak) \EMPH{diameter} of a cluster $C$ is $\max_{x,y\in C} d_{G[C]}(x,y)$ (resp.\ $\max_{x,y\in C} d_G(x,y)$). 
We say that an edge $e\in E(G)$ is \EMPH{cut}
by a clustering $\cC$ if the endpoints of the edge belong to two different clusters in $\cC$. 
We now recall the notion of \emph{clustering chain}~\cite{CLPP23}.

\begin{definition}[Clustering Chain]
\label{def:clusterin-chain} 
A \EMPH{clustering chain} of a connected graph $G$ is a sequence $\mathbb{C} = (\cC_0, \cC_1, \ldots, \cC_k)$ of ever coarser clusterings $G$ such that:
\begin{itemize}
 \item $\cC_k$ contains a single cluster $V$.
 \item  $\cC_0$ consists of vertices in $V$ each as a singleton cluster.
 \item for every $i$, $\cC_i$ \emph{refines} $\cC_{i+1}$: every cluster of $\cC_i$ is contained within a unique cluster of~$\cC_{i+1}$. 
 \item  every cluster $C\in \cC_{i}$ has \textul{strong} diameter at most $2^i$.
\end{itemize}    
\end{definition}

A clustering $\mathbb{C}$ naturally forms a hierarchy where the root of the hierarchy corresponds to the (only) cluster in $\cC_k$, the leaves are singleton clusters in $\cC_0$, scale-$i$ clusters are clusters in $\cC_i$, and the parent of a cluster $C\in \cC_i$ is the cluster $C' \in \cC_{i+1}$ such that $C\subseteq C'$.  
See \Cref{fig:clustering-chain}.
We denote by \EMPH{$\cC_{i}[C']$} the set of scale-$i$ clusters in $\cC_{i}$ that make up  the scale-$(i+1)$ cluster $C'$.
We often abuse the language and say that a cluster $C$ of $G$ is \EMPH{in $\bbC$} if $C$ belongs to the clustering at some scale of $\bbC$. 
For a cluster $C$ in $\bbC$, we denoted by \EMPH{$\bbC_{\downarrow C}$} a clustering chain of $C$ obtained by taking all the clusters in $\bbC$ that are subsets of $C$ (including $C$). 
Intuitively, $\bbC_{\downarrow C}$ is a sub-hierarchy of $\bbC$ rooted at $C$. 

\begin{figure}[t]
\centering
\includegraphics[width=1.0\linewidth]{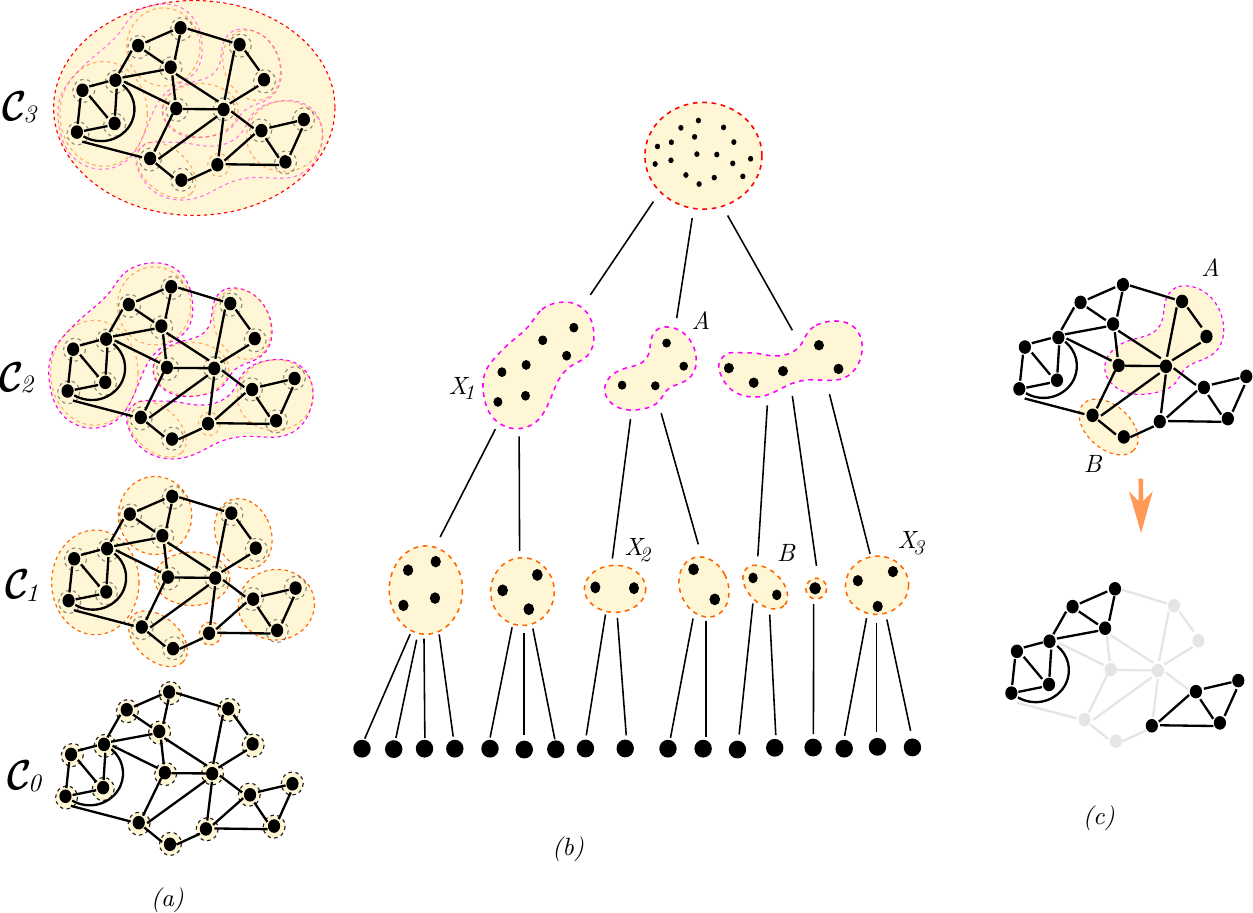}
\caption{%
(a) A clustering chain $\bbC = \{\cC_0,\cC_1,\cC_2,\cC_3\}$. 
(b) Viewing $\bbC$ as a hierarchy represented by a tree. 
(c) Balanced cut $\cF = \{A,B\}$ respects $\bbC$ since the cut contains two clusters $A$ and $B$ (at different levels) of $\bbC$; removing this cut results in connected components of size at most $2n/3$ each. 
If $\cX = \{X_1,X_2,X_3\}$ as in (b), then the cut $\cF$ in (c) conforms $\cX$.} 
\label{fig:clustering-chain}
\end{figure}

\medskip
For our randomized embedding, we need to construct a \emph{distribution} of clustering chains so that the probability of an edge being cut by a clustering $\cC_i$ at a specific scale $i$ is inverse-proportional to the diameter of the clusters in $\cC_i$. We formalize this via a notion of \emph{$\beta$-separating distribution}.

\begin{definition}[$\beta$-Separating Distribution of Clustering Chains with Hop Guarantee]
\label{def:beta-sep} 
We say that a distribution $\frakC$ of clustering chains of $G$ is \EMPH{$\beta$-separating} for some parameter $\beta \geq 1$ if for every edge $e\in E(G)$ and every $i$ in $\set{0, \ldots, k}$,
\begin{equation}
    \Pr_{\bbC \sim \frakC}[\text{$e$ is cut by $\cC_i \in \bbC$}] \leq \beta \cdot \frac{\len{e}}{2^i}.
\end{equation}
For a clustering chain $\bbC$, 
We say that $\bbC$ has \EMPH{hop bound} $h$
if for every $i$ in $\set{0, \ldots, k-1}$ and every cluster $C \in \cC_{i+1}$, the graph obtained by starting from $G[C]$ and contracting every cluster in $\cC_i$, denoted by $G[C] / \cC_i[C]$, has hop-diameter at most $h$. 
\end{definition}

\cite{CLPP23} showed that any $n$-vertex graph admits a $\beta$-separating distribution of clustering chains for  $\beta = \log\Phi \cdot \log(n)$. 
Here in Section~6, we show $\beta$ can be improved to $O(1)$ for minor-free graphs, and furthermore, the clustering chain has a small hop bound.
\begin{theorem} \label{thm:beta-separating}
    Let $G$ be a graph excluding a fixed minor. There exists a $\beta$-separating distribution of clustering chains $\frakC$ with $\beta = O(1)$. Furthermore, every chain $\bbC = \{ \cC_0, \ldots, \cC_k\}$ in the support of $\frakC$ has hop bound $h = O(1)$.   The big-O in both $\beta$ and $h$ hides a dependence on the minor size.  
\end{theorem}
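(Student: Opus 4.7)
The plan is to construct the clustering chain top-down using a randomized version of the shortcut partition (buffered cop decomposition) of~\cite{CCLMST24}. Set $\cC_k = \{V(G)\}$, and given $\cC_{i+1}$, produce $\cC_i$ by applying, independently within each cluster $C \in \cC_{i+1}$, the shortcut partition at diameter scale $\Theta(2^i)$. The only change from the original deterministic procedure is that whenever the ball-carving step selects a basepoint $\basev$ and produces a supernode by taking a ball around $\basev$ in the current domain, we draw the ball radius from a truncated exponential distribution (or uniformly from $[\Theta(2^i), 2^i]$) instead of a fixed value. This ensures every scale-$i$ cluster has strong diameter $O(2^i)$, rescalable to the required $2^i$ bound, and that $\cC_i$ refines $\cC_{i+1}$. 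Since the combinatorial structure of the decomposition (order of carvings, choice of basepoints) is untouched, the constant hop-diameter guarantee of~\cite{CCLMST24} for the contraction graph $G[C]/\cC_i[C]$ carries over directly, giving $h = O(1)$.

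For the $\beta = O(1)$ separating property, fix an edge $e = xy$ and a scale $i$. A supernode $X$ with basepoint $\basev_X$ and random radius $R_X$ can cut $e$ only if $R_X$ falls in an interval of length at most $\len{e}$ around $d_{\dom(X)}(\basev_X, e)$; by the bounded density of the radius distribution, each individual supernode cuts $e$ with probability $O(\len{e}/2^i)$. It therefore suffices to establish the following structural claim, which is the main obstacle: the number of supernodes $X$ created at scale $i$ whose basepoint lies within distance $O(2^i)$ of $e$ \emph{in the metric of the initial domain $\initdom(X)$} -- not of the shrunken final domain $\dom(X)$ -- is bounded by a constant depending only on the size of the excluded minor. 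A union bound over these $O(1)$ ``threateners'' then yields $\Pr[e \text{ cut by } \cC_i] \leq O(\len{e}/2^i)$, which is exactly $\beta = O(1)$.

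The hardest step is proving this threatener bound, and it is subtler than the analogous statement for the \emph{final} domains that underlies the CCLMST24 hop analysis. An initial domain can be much larger than its eventual shrunken counterpart because later carvings excise pieces of it, so many supernodes may have $e$ in their initial-domain neighborhoods even though their final domains are disjoint from $e$. The approach is to order threateners by creation time along the cop-decomposition tree and to show that any family of threateners of $e$ becomes ``sealed off'' from one another by previously carved supernodes, so that a family exceeding a constant size would exhibit a shallow minor of $G$ forbidden by the $H$-minor-free assumption. Making this precise requires proving new invariants about how $\initdom(X)$ relates to the boundaries between $X$ and its successors in the cop decomposition; these invariants do not appear in~\cite{CCLMST24} because the deterministic analysis only needs final-domain bounds. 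I expect the bulk of the technical work to go into this structural lemma -- precisely the statement the authors flag in the footnote and phrase as \Cref{lem:bdd-threatener-dom0} -- after which the theorem follows by combining it with the radius-density calculation and the unchanged hop analysis.
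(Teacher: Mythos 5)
Your architecture matches the paper's: randomize the \cite{CCLMST24} buffered-cop/shortcut construction, inherit the constant hop bound, bound the per-event cut probability by the density of the random radius, and reduce everything to bounding the number of ``threatening'' supernodes $X$ with $e$ close to $X$ in the metric of $\initdom(X)$ rather than $\dom(X)$. However, the heart of the theorem --- that threatener bound, which the paper states as \Cref{lem:bdd-threatener-dom0} --- is not proved in your proposal; you only sketch a strategy (``threateners become sealed off by previously carved supernodes, so a large family would exhibit a shallow forbidden minor''). This is not the paper's argument, and it is far from clear it can be made to work: the initial domains of distinct threateners are nested along the partition tree rather than disjoint, so threateners of $e$ are generally \emph{not} metrically separated from one another by earlier carvings, and no forbidden-minor model is apparent. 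The paper instead runs a multi-step charging argument over ``threatening sequences'' $(X_0,\dots,X_\ell)$ of length less than $r$ (\Cref{clm:threat-sequence,clm:charge}), and to make each step of that chain countable it must re-establish the supernode-buffer property and the Filtser-style threatener bound relative to the enlarged graphs $\dom(X)\cup\out{\basev}$ (\Cref{clm:out-buffer,clm:out-filtser}), together with the bound that a fixed supernode is expanded by at most $r-1$ calls $\textsc{GrowBuffer}(\cdot,\cdot,H)$ with $\basev\in H$ (\Cref{clm:expand-count}). None of this machinery, nor a workable substitute, appears in your plan, so the main technical content of the theorem is missing.

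There is a second, more mechanical gap: randomizing only the ball-carving radius in \textsc{BuildTree} does not give the $\beta$-separating property. An edge can be cut in two further ways that your construction leaves deterministic: the buffer $\cN H_X$ grown in \textsc{GrowBuffer} can contain exactly one endpoint (a fixed threshold $\Delta/r$ gives no $O(\len{e}/2^i)$ bound conditioned on earlier randomness), and even when both endpoints end up in the same supernode, the nearest-center assignments --- of buffer vertices to existing supernodes inside \textsc{GrowBuffer}, and of supernode vertices to net points in the shortcut step --- can split $u$ from $v$ along a Voronoi boundary with constant conditional probability. This is exactly why the paper injects randomness in three places (the buffer radius $\alpha\sim\mathrm{Unif}[1,2]$, the offsets $\alpha_{X'}$ in the supernode assignment, and the offsets $\alpha_x$ in the net-point assignment) and analyzes the three cut events $\xi^i_{\rm build}$, $\xi^i_{\rm buffer}$, $\xi^i_{\rm split}$ separately in \Cref{clm:cop-single-probability} and \Cref{clm:shortcut-probability}. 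Your proposal would need to add these perturbations and their analysis, in addition to an actual proof of the initial-domain threatener bound, before it establishes the theorem.
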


Next, we introduce the notion of a balanced cut.  
A \EMPH{cut} of $G$, denoted by $\cF$, is a set of vertex-disjoint clusters that are proper subsets of $V$. (We do not allow the cut to contain a single cluster $V$.) In this paper, we construct a cut by taking some clusters from a given clustering chain $\mathbb{C} = (\cC_0, \ldots, \cC_k)$ of $G$: 
A cut $\cF$ \EMPH{respects} a clustering chain $\mathbb{C}$ if every cluster in $\cF$ is in $\mathbb{C}\setminus \{\cC_k\}$ (recall that $\cC_k$ contains a single cluster $V$).
See \Cref{fig:clustering-chain}(c). 
Let $\omega_V: V\rightarrow \bbR^+$ be a weight function on vertices of $G$.  The \EMPH{size} of $\cF$ is the number of clusters that $\cF$ contains. We say that a cut $\cF$ of $G$ is \EMPH{balanced} with respect to $\omega_V$ if the total vertex weight of  every connected component $H$ of $G\setminus \cF$ is at most $W/2$ where $W \coloneqq \sum_{u\in V}\omega_V(u)$.  

\begin{definition}[$(\tau,\psi)$-Stochastic Balanced Cuts]
\label{def:stoc-Bcut} 
Let $G = (V,E,\len{.})$ be a graph with vertex weight function $\omega_V$, $\bbC = (\cC_0, \ldots, \cC_k)$ be a clustering chain for $G$, and $\cX$ be a set of clusters in $\bbC$. 
We say $(G,\bbC)$ has a \EMPH{$(\tau,\psi)$-stochastic balanced cut with respect to $(\omega_{V},\cX)$} if there is a distribution $\frakF$ of cuts of $G$:
\begin{itemize}
    \item \textnormal{[Respecting.]} Every cut $\cF$ in the support of $\frakF$ respects $\bbC$.
    
    \item \textnormal{[Balanced.]} Every cut $\cF$ in the support of $\frakF$ is balanced with respect to $\omega_{V}$.
    
    \item  \textnormal{[Conforming.]} Every cut $\cF$ in the support of $\frakF$ has the property that: for every cluster $C\in \cF$, $C$ is not strictly contained in a cluster in $\cX$. ($C$ may be in or above clusters in $\cX$.) 
    We say that $\cF$ \EMPH{conforms $\cX$}. 
    
    \item \textnormal{[Small size.]}  Every cut $\cF$ in the support of $\frakF$ contains at most $\tau$ clusters.
    
    \item \textnormal{[Low probability.]} For every \emph{non-singleton} cluster $C$ in $\bbC$ such that $C\not\in \cX$:
    \begin{equation}
        \Pr_{\cF \sim \frakF}[\text{$C$ is contained in $\cF$}] \leq 1/\psi.
    \end{equation}
\end{itemize}
We say that \EMPH{$(G,\bbC)$ admits a $(\tau,\psi)$-stochastic balanced cut} if for any cluster  $C\in\bbC$, for any given weight function $\omega_{C}$, and any subset of clusters $\cX$ in $\bbC$, there exists a  $(\tau,\psi)$-stochastic balanced cut w.r.t  $(G[C],\omega_{C},\bbC_{\downarrow C},\cX)$. 
(Notice that $\cX$ can be in $\bbC$ outside $\bbC_{\downarrow C}$.) 
\end{definition}
The [conforming] property in the definition of stochastic balanced cut is somewhat artificial and unintuitive. 
This property is needed for a purely technical reason: at some step of the  embedding construction, we will sample a balanced cut $\cF$ from some subgraph $H$ of $G$. 
However, we are not allowed to break some ``boundary clusters'' of $H$---these are clusters in $\bbC$ that are adjacent to vertices not in $H$---in the sense that every cluster in $\cF$ either contains a whole boundary cluster or is disjoint from any boundary cluster. In other words, $\cF$ has to conform the boundary clusters as defined in the [conforming] property. 

\medskip
When constructing a stochastic balanced cut, the most representative case is when $\cX = \varnothing$, and it is simpler to focus on this case.

\begin{theorem}\label{thm:balanced-cuts} 
Let $G$ be an apex-minor-free graph, $\bbC$ be a clustering chain of $G$ that has hop bound $h$. Then for any parameter $\psi \geq 1$, $(G,\bbC)$ admits a $(\tau,\psi)$-stochastic balanced cut $\frakF$ with $\tau = O(h^2\cdot\log (\Phi))\cdot\psi$. 
\end{theorem}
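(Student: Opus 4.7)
The plan is to adapt the CLPP framework for constructing stochastic balanced cuts, plugging in the hop-bound-$h$ clustering chain $\bbC$ and invoking the improved contraction-sequence treewidth bound that was advertised in the introduction (apex-minor-free graphs admitting an $(a,b,c)$-contraction sequence have $\tw = O(c\sqrt{ab})$). I would first reduce to the representative case $\cX = \emptyset$ and $C = V$; the conforming property for general $\cX$ can be enforced by treating every cluster $X \in \cX$ as an atom that must not be split, which amounts to running the construction on the graph obtained by contracting the clusters of $\cX$ and translating the weights $\omega_V$ accordingly.

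The key structural ingredient is that every \emph{cluster contraction graph} has small treewidth. For each scale $i$, let $H_i \coloneqq G/\cC_i$ denote the graph obtained from $G$ by contracting each scale-$i$ cluster of $\bbC$ into a single vertex. Since $\bbC$ has hop bound $h$, going from $H_i$ to $H_{i+1}$ contracts, inside each scale-$(i+1)$ cluster, a connected subgraph of hop-diameter at most $h$ in $H_i$. Iterating up to the top scale $k = O(\log \Phi)$ yields an $(|V(H_i)|-1,\, k-i,\, O(h))$-contraction sequence of $H_i$. Because apex-minor-free classes are closed under contraction, $H_i$ remains apex-minor-free, and the improved contraction-sequence theorem gives
\[
\tw(H_i) \;=\; O\!\bigl(h\sqrt{|V(H_i)| \cdot \log \Phi}\bigr).
\]

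Next I would sample the cut. The idea is to fix a single scale $\rho^*$, chosen as the smallest scale with $|V(H_{\rho^*})| \leq h^2 \log \Phi \cdot \psi^2$ (with a separate easy argument when the entire graph is already smaller than this threshold), and to compute a \emph{randomized} balanced separator of $H_{\rho^*}$ with respect to the weights on the supernodes induced by $\omega_V$. The cut $\cF$ is declared to be the set of scale-$\rho^*$ clusters corresponding to the separator vertices; removing $\cF$ from $G$ leaves components contained in scale-$\rho^*$ clusters, whose weights inherit the balance guarantee of the $H_{\rho^*}$-separator. The randomized separator is constructed so that its size is $O(\tw(H_{\rho^*}))$ and every supernode is included with probability $O(\tw(H_{\rho^*})/|V(H_{\rho^*})|)$. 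Substituting the treewidth bound and the choice of $\rho^*$ then gives $|\cF| = O(h^2 \log \Phi \cdot \psi) = \tau$ and per-cluster inclusion probability $O(1/\psi)$; clusters at scales other than $\rho^*$ are never placed in $\cF$, so the [low probability] condition for them holds trivially.

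The hard part will be producing the randomized balanced separator of $H_{\rho^*}$ with both the per-vertex probability guarantee and the $\omega_V$-induced balance condition. Classical balanced-separator constructions for apex-minor-free graphs (for instance via Robertson--Seymour or DeVos--Ding) output a deterministic separator of size $O(\tw)$; to randomize while preserving balance, one approach is to produce a tree decomposition of width $O(\tw(H_{\rho^*}))$ and then randomly shift the choice of the balanced bag so that each supernode appears in only an $O(\tw/|V(H_{\rho^*})|)$-fraction of the possible shifts, analogously to how random radii are used in KPR-style decompositions. A secondary subtlety is ensuring the [conforming] property when clusters of $\cX$ live above $\rho^*$: the reduction that contracts $\cX$-clusters into atoms sidesteps the issue conceptually, but care must be taken when translating the weights and cluster sizes back to $G$, since contracted atoms may dominate the balance calculation and one has to verify that the chosen scale $\rho^*$ still lies within $\bbC_{\downarrow C}$ so that the sampled cut respects $\bbC$ as required.
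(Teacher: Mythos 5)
There is a genuine gap, and it sits exactly at the step you yourself flag as "the hard part": the randomized balanced separator of $H_{\rho^*}$ with per-supernode inclusion probability $O(\tw(H_{\rho^*})/|V(H_{\rho^*})|)$ does not exist in general, and no amount of random shifting of bags in a tree decomposition can produce it. Consider a star-like instance: $H_{\rho^*}=K_{1,N}$ with all of the weight $\omega$ on the leaves, where the hub supernode corresponds to a \emph{non-singleton} cluster of $\bbC$. Every balanced separator of size at most $\tau \ll N$ must contain the hub, so under \emph{any} distribution over such separators the hub is chosen with probability $1$, not $O(\tw/N)=O(1/N)$; since the hub is a non-singleton cluster outside $\cX$, the [Low probability] condition of \Cref{def:stoc-Bcut} is violated no matter how you randomize. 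The deeper point is that restricting the cut to a single scale $\rho^*$ makes such "mandatory hub" clusters unavoidable; any correct construction must be allowed to place clusters from \emph{different} scales into different sampled cuts, so that a hub cluster used once can later be replaced by its children (and ultimately by singleton clusters, which are exempt from the probability requirement). Your proposal has no mechanism for this, so it cannot be repaired by only fixing the separator subroutine.

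For comparison, the paper's proof does use your (correct) observation that contracting along the chain yields a contraction sequence to which the improved bound \Cref{lem:tw} applies, but it deploys it inside a different sampling scheme: it greedily builds a family $\bbF$ of $\psi$ cuts, marking the root and all previously used clusters (outside $\cX$) \emph{unavailable}, contracting $G$ along the \emph{maximally available} clusters (a partition mixing many scales, not a fixed scale $\rho^*$), and arguing via the $(\tau|\bbF|+1,\,k,\,h)$-contraction sequence of this contracted graph that its treewidth stays at most $\tau=O(h^2\log\Phi)\cdot\psi$, so a deterministic balanced separator of size $\tau$ exists at every step. The distribution $\frakF$ is then uniform over the $\psi$ cuts, and the $1/\psi$ bound follows because each non-singleton cluster outside $\cX$ is used in at most one cut — no per-vertex-randomized separator is ever needed. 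Your reduction of the [conforming] property by contracting $\cX$ is also shakier than the paper's handling (which simply keeps $\cX$-clusters permanently available), but that is secondary to the main flaw above.
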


Finally, we show that one can construct a low-distortion embedding into a low-treewidth graph for a graph $G$ if it admits a good clustering chain and a stochastic balanced cut. 

\begin{theorem}\label{thm:embedding} 
Let $G$ be an $n$-vertex edge-weighted graph with aspect ratio $\Phi$. Suppose that (i) $G$ admits a $\beta$-separating distribution $\frakC$ of clustering chains for some $\beta\geq 1$ and (ii) for every clustering chain $\bbC$ in the support of $\frakC$, $(G,\bbC)$ admits a $(\tau,\psi)$-stochastic balanced cut. Then $G$ can be stochastically embedded into graphs of treewidth $O(\tau + \log(n\Phi))$ and with expected distortion $1 + \frac{O(\beta\log(n\Phi)\log \Phi)}{\psi}$.
\end{theorem}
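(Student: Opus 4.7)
The plan is to drive the entire construction by a single clustering chain $\bbC$ sampled from $\frakC$ and a single recursive procedure $\mathrm{Embed}(C, \cX)$ that takes a cluster $C \in \bbC$ and a conforming set $\cX$. The top-level call is $\mathrm{Embed}(V, \varnothing)$. Inside $\mathrm{Embed}(C,\cX)$ we invoke assumption~(ii) to sample a $(\tau,\psi)$-stochastic balanced cut $\cF$ of $(G[C], \bbC_{\downarrow C})$ with unit vertex weights and conforming set $\cX$; we then recurse on each $C' \in \cF$, and on each connected component $H$ of $G[C] \setminus \bigcup\cF$ by identifying $H$ with the smallest cluster $C_H \in \bbC_{\downarrow C}$ containing $V(H)$ and passing down $\cX \cup \{C' \in \cF : C' \subseteq C_H\}$. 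The conforming property ensures that cuts sampled deeper never split a boundary cluster, so the recursion is well defined. For every cluster $C'$ ever produced in some $\cF$ we fix a representative $r_{C'} \in C'$; the embedded graph $H^\ast$ is obtained from $V(G)$ by attaching, for each such $r_{C'}$, an edge from $r_{C'}$ to every other vertex $v$ in the subgraph where $C'$ was produced, of length $\delta_G(v, r_{C'})$.

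The recursion depth is $L = O(\log(n\Phi))$: at each step either the current cluster moves strictly down in $\bbC$, halving its strong diameter, or we stay at the same scale but the size of the subproblem halves by the balancedness of $\cF$ together with the guarantee that no $\cX$-cluster is split. For the treewidth, the tree decomposition mirrors the recursion tree: at the node for $\mathrm{Embed}(C,\cX)$ we lay out a path of $\tau$ bags, the $j$-th containing the representatives $\{r_{C_1},\dots,r_{C_j}\}$ when $\cF = \{C_1,\dots,C_\tau\}$, with the child recursive decompositions glued in at the appropriate endpoint. The critical observation is that a descendant subproblem needs to carry in its bags only the portals on its ``ancestor spine''; since there are at most $L$ ancestors and at most $O(1)$ relevant portals per ancestor, each bag ends up containing the $\tau$ locally produced portals plus $O(L)$ ancestor portals, giving total width $O(\tau + \log(n\Phi))$.

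For the distortion, fix $u, v \in V$ and let $\pi$ be a shortest $u$--$v$ path in $G$. Consider a single call $\mathrm{Embed}(C,\cX)$. By the $\beta$-separating property of $\frakC$, the expected number of scale-$i$ clusters of $\bbC$ cut by $\pi$ is $O(\beta \cdot \delta_G(u,v)/2^i)$; by the low-probability condition of $\frakF$, each such cluster appears in $\cF$ with probability at most $1/\psi$; and whenever a scale-$i$ cluster $C'\in\cF$ intersects $\pi$, the detour $u\to r_{C'}\to v$ costs at most $\delta_G(u,v) + 2\cdot\mathrm{diam}(C') \le \delta_G(u,v) + O(2^i)$. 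Summing $O(2^i)\cdot O(\beta\delta_G(u,v)/2^i)/\psi = O(\beta\delta_G(u,v)/\psi)$ over the $O(\log\Phi)$ nontrivial scales yields expected additive distortion $O(\beta\log\Phi/\psi)\cdot\delta_G(u,v)$ per recursion level; multiplying by $L = O(\log(n\Phi))$ gives the claimed bound $1 + O(\beta\log(n\Phi)\log\Phi/\psi)$, and a final rescaling yields the multiplicative distortion.

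The hard part is the additive treewidth bound $O(\tau + \log(n\Phi))$ rather than the naive multiplicative bound $O(\tau \cdot \log(n\Phi))$. Achieving it requires laying out each level's $\tau$ portals along a path of width $O(\tau)$ and propagating to each descendant only the portals on its single ancestor spine, which is possible precisely because a single $\bbC$ is used throughout the recursion and every descendant subproblem therefore has a well-defined chain of ancestor clusters of length $O(L)$.
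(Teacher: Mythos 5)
There is a genuine gap, and it sits exactly at the point you yourself flag as ``the hard part'': the additive treewidth bound $O(\tau+\log(n\Phi))$. Your claim that a descendant subproblem needs only ``$O(1)$ relevant portals per ancestor'' is unjustified and in fact contradicts your own host-graph construction: you attach each representative $r_{C'}$ by a star to \emph{every} vertex of the piece in which $C'$ was produced, so every vertex of that piece is adjacent in $H^\ast$ to all (up to $\tau$) representatives created at that level, and this repeats at every ancestor level. Covering these edges forces, in the worst case, $\Theta(\tau)$ portals per ancestor level into the decomposition of a descendant, i.e.\ the naive $O(\tau\cdot\log(n\Phi))$ bound, not $O(\tau+\log(n\Phi))$. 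The paper avoids this with two ingredients your proposal lacks: (a) the star from $v_C$ goes only to vertices of the cluster $C$ itself (so only $O(\log(n\Phi))$ such representatives ever accumulate along a root-to-leaf path of the decomposition), and the up-to-$\tau$ representatives of a cut are instead passed down as \emph{boundary terminals} $\bdry T$ of each component; and (b) a re-balancing rule: whenever $|\bdry T|>4\tau$, the next cut is taken balanced with respect to $\bdry T$ rather than the terminals, which maintains the invariant $|\bdry T|\le 5\tau$ and hence bag size $O(\tau)+O(\log(n\Phi))$. Without some mechanism of this kind the portal accumulation is not controlled, and the theorem's treewidth bound does not follow.

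A second, related weakness is the recursion-depth argument. You recurse on a component $H$ by replacing it with the smallest cluster $C_H\in\bbC_{\downarrow C}$ containing $V(H)$; but $C_H$ can equal $C$ itself, and then neither the scale drops nor does any well-defined measure of the subproblem halve (the balanced cut halves $|V(H)|$, not $|V(C_H)|$, and your recursion does not carry a terminal set distinguishing the two). The paper resolves this by keeping the same piece $P$ but shrinking the terminal set $T$, and proving depth $O(\log(n\Phi))$ via the potential $5\log_2|T|+\mathrm{scale}(P)+|\bdry T|/\tau$ together with the two-case choice of what the cut balances. Your distortion sketch (per-level cost $O(\beta\log\Phi/\psi)\cdot\delta_G(u,v)$ times $L$ levels) has the right shape, though it glosses over the compounding of detours across levels, which the paper handles by an induction over pseudo-edges and the same potential function; that part is repairable, but the treewidth and depth issues above are substantive.
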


We observe that \Cref{thm:beta-separating}, \Cref{thm:balanced-cuts}, and \Cref{thm:embedding} together implies \Cref{thm:main}.

\begin{proof}[of \Cref{thm:main}] 
Let $G = (V,E,\len{\cdot})$ be an apex-minor-free graph. By \Cref{thm:beta-separating}, there exists a $\beta = O(1)$-separating distribution $\frakC$ of clustering chains of hop bound $h = O(1)$.  We choose
\begin{equation*}
    \psi \coloneqq \frac{\beta \log(n\Phi)
\log(\Phi)}{\eps}
\end{equation*}

Let $\bbC$ be a clustering chain sampled from $\frakC$. Let $\cX$ be any set of clusters in $\bbC$. Let $C$ be any cluster in $\bbC$. Since $G$ is apex-minor-free, $G[C]$ is also apex-minor-free. Thus, by \Cref{thm:balanced-cuts}, $(G[C], \bbC_{\downarrow C})$ admits a $(\tau,\psi)$-balanced cut conforming $\cX$ with $\tau = O(h^2\log\Phi)\cdot \psi = O(\log^2 \Phi\log(n\Phi)/\eps)$. By \Cref{thm:embedding}, we can embed $G$ stochastically into graphs with treewidth $O(\tau + \log(n\Phi)) = O(\log^2 \Phi \log(n\Phi)/\eps)$. The expected distortion of the embedding is:

\begin{equation*}
    1 + O\Paren{ \frac{\beta \log(n\Phi) \log\Phi}{\psi} } = 1 + O(\eps),
\end{equation*}
by our choice of $\psi$.
\end{proof}

\section{Preliminaries}

We use mostly standard graph terminology. We denote by $G$ an edge-weighted graph with nonnegative length function $\len{\cdot}$ on the edges. 
For any graph $H$, we denote by $V(H)$ and $E(H)$ the vertex set and edge set of $H$, respectively. 
For a subset of vertices $S$, we denote by $G[S]$ the subgraph of $G$ induced by $S$.

A \EMPH{tree decomposition} of a graph $G = (V,E)$ is a tree $\cT$ where each node $x\in \cT$ is associated with a subset of $V$ called \EMPH{$\bag(x)$}, such that: (i) $\cup_{x\in \cT}\, \bag(x) = V$, (ii) if $u$ and $v$ are adjacent in $G$, there exists a node $x$ in $\cT$ such that $\{u,v\}\subseteq \bag(x)$, and (iii) for every $u\in V$, all the nodes whose bags contains $u$ induce a connected subtree of $\cT$.  The \EMPH{width} of a tree decomposition $\cT$ is $\max_{x\in \cT} |\bag(x)| - 1$ and the \EMPH{treewidth} of $G$, denoted by \EMPH{$\tw(G)$}, is the minimum width over all valid tree decompositions of $G$. 

Given a graph $H$, we say that $G$ is \EMPH{$H$-minor-free} if $G$ excludes $H$ as a minor. We say that $G$ is \EMPH{minor-free} if it excludes a graph of constant size as a minor.  A graph $K$ is an \EMPH{apex} graph if there exists a vertex $u$, called the apex of $K$, such that $K-\{u\}$ is a planar graph. We say that $G$ is \EMPH{apex-minor-free} if it excludes an apex graph of constant size as a minor. We will rely on the following lemma in our construction:

\begin{lemma}[Lemma 2.1 in~\cite{CLPP23}]
\label{lem:tw:balls}
For every apex graph $K$, there is a positive integer $\alpha_K$ such that if $G = (V,E)$ is $K$-(apex-)minor-free and $Z\subseteq V$ is a subset of vertices such that every vertex of $G$ is at hop distance at most $d$ from a vertex of $Z$, then $\tw(G)\leq \alpha_K \cdot d \sqrt{|Z|}$.
\end{lemma}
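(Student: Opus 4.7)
I would combine two classical results for apex-minor-free graphs: the Alon--Seymour--Thomas type bound, which states that every $n$-vertex $K$-apex-minor-free graph has treewidth $O_K(\sqrt{n})$ (and hence a balanced vertex separator of size $O_K(\sqrt{n})$), and Eppstein's linear local treewidth theorem, which states that $\tw(G[N_r^G(v)]) \leq O_K(r)$ for every vertex $v$ of such a graph, where $N_r^G(v)$ is the hop-ball of radius $r$ around $v$.  The argument proceeds by induction on $|Z|$, with base case $|Z| \leq c_K$ for a suitable constant depending on $K$: here $V(G) \subseteq \bigcup_{z \in Z} N_d^G(z)$, so a crude union bound via Eppstein's theorem gives $\tw(G) \leq O_K(d \cdot |Z|) = O_K(d\sqrt{|Z|})$ when $|Z|$ is bounded.

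For the inductive step, I would perform multi-source BFS from $Z$ in hop distances, obtaining a partition $\{V_z\}_{z \in Z}$ of $V(G)$ where each $V_z$ is connected and contained in $N_d^G(z)$, equipped with an internal BFS spanning tree $T_z$ rooted at $z$ of depth at most $d$.  Form the contracted minor $G_Z := G/\{V_z\}_{z \in Z}$, which is $K$-apex-minor-free on $|Z|$ vertices, and by Alon--Seymour--Thomas admits a balanced vertex separator $S_Z \subseteq V(G_Z)$ of size at most $\gamma_K\sqrt{|Z|}$ splitting $G_Z - S_Z$ into pieces on at most $2|Z|/3$ vertices each; call the two sides $Z_A, Z_B$ and write $V_A := \bigcup_{z \in Z_A} V_z$, $V_B := \bigcup_{z \in Z_B} V_z$, and $V_S := \bigcup_{z \in S_Z} V_z$.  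Then $V_S$ separates $V_A$ from $V_B$ in $G$, and I would apply the induction hypothesis to the three subgraphs $G[V_A \cup V_S]$, $G[V_B \cup V_S]$, and $G[V_S]$ with covering sets $Z_A \cup S_Z$, $Z_B \cup S_Z$, and $S_Z$ respectively.  The covering property survives the subgraph operation because each internal tree $T_z$ lies entirely inside $V_z$, so hop distances from vertices of $V_z$ to $z$ are unchanged.  The resulting covering-set sizes are at most $3|Z|/4$ for the first two and $\gamma_K\sqrt{|Z|}$ for the third.

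The main obstacle will be gluing the three tree decompositions produced by the recursion into a single one for $G$ without a width blowup.  I would refine the decompositions of $G[V_A \cup V_S]$ and $G[V_B \cup V_S]$ by augmenting each of their bags with an aligned bag from the decomposition of $G[V_S]$, so that both refined decompositions contain the $V_S$-decomposition as a common ``spine'' along which they can be joined into a tree decomposition of $G$.  This augmentation increases the width of each decomposition by at most the width of the $V_S$-decomposition, yielding the recurrence
\[
f(|Z|) \leq f(3|Z|/4) + f(\gamma_K\sqrt{|Z|}),
\]
which, together with the base case, solves to $f(|Z|) = O_K(d\sqrt{|Z|})$ for a suitably large constant $\alpha_K$: indeed, plugging $f(k) = \alpha_K d\sqrt{k}$ into the recurrence reduces to the inequality $1 \geq \sqrt{3/4} + \sqrt{\gamma_K}\cdot|Z|^{-1/4}$, which holds for $|Z|$ exceeding some constant depending on $K$ and absorbs into the base case otherwise.
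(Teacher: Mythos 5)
The gap is the gluing step. Your recursion implicitly relies on the general principle that if $V_S$ separates $V_A$ from $V_B$ in $G$, then from tree decompositions of $G[V_A\cup V_S]$, $G[V_B\cup V_S]$, and $G[V_S]$ of widths $w_A,w_B,w_S$ one can assemble a tree decomposition of $G$ of width about $\max(w_A,w_B)+w_S$ by augmenting bags with ``aligned'' bags of the $V_S$-decomposition. No such principle exists: take the $n\times n$ grid, let $S$ be all vertices in odd rows, $A$ the vertices in rows $\equiv 2 \pmod 4$, and $B$ the vertices in rows $\equiv 0 \pmod 4$. There are no $A$--$B$ edges, $G[A\cup S]$ and $G[B\cup S]$ are disjoint unions of $3\times n$ grids (treewidth $3$), $G[S]$ is a disjoint union of paths (treewidth $1$), yet $\tw(G)=n$. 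So the width of $G$ is not bounded by any function of the three recursive widths, and the recurrence $f(|Z|)\le f(3|Z|/4)+f(\gamma_K\sqrt{|Z|})$ is unjustified; the recursively returned decompositions of the two sides need not be compatible along the (huge) vertex set $V_S$ --- only its contraction $S_Z$ is small --- and adding arbitrary bags of the $V_S$-decomposition to bags of the side decompositions already violates the connectedness condition for vertices of $V_S$. (A smaller issue: in the base case, treewidth is not subadditive under unions of balls --- the grid is the union of two forests --- but that step is repairable by noting each component has hop-diameter $O_K(d\,|Z|)$ and invoking linear local treewidth.)

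For context, this paper does not prove the lemma; it cites \cite{CLPP23}. The standard argument keeps exactly your first step --- the multi-source BFS partition $\{V_z\}_{z\in Z}$ into connected parts of hop-radius at most $d$, and the contracted minor $G_Z=G/\{V_z\}$ on $|Z|$ vertices, whose treewidth is $O_K(\sqrt{|Z|})$ by Alon--Seymour--Thomas --- but then, instead of a separator recursion, it applies the contraction form of linear local treewidth for apex-minor-free graphs (Eppstein, Demaine--Hajiaghayi): if $G$ excludes a fixed apex minor and $P$ is a partition into connected parts of hop-radius at most $d$, then $\tw(G)\le O_K(d)\cdot(\tw(G/P)+1)$. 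That single lemma does precisely the job your gluing is meant to do, and it is where the apex condition enters in an essential way; your plan confines the apex hypothesis to the base case and replaces this step with a combination argument that, as the grid example shows, cannot work at the level of generality at which you invoke it.
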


\section{The Embedding Framework: Proof of Theorem~\ref{thm:embedding}}

Let $G$ be a graph satisfying the assumptions in \Cref{thm:embedding}. 
Let $\bbC$ be a clustering chain for $G$ sampled from the distribution $\frakC$. 
Roughly speaking, the high-level idea is to (i) sample a balanced cut $\cF$ to cut $G$ into connected components of size at most $n/2$, (ii) pick a representative vertex $v_C$ for each cluster  $C\in \cF$ and make a bag from $\set{v_C: C\in \cF}$, and (iii) recursively embed vertices in each connected component%
\footnote{More precise, for presentation, the graph considered in the next recursive step to embed a connected component of $G\setminus \cF$, say $H$, remains to be named $G$. 
However, we will set the weight $\omega_V(u) = 1$  for every $u \in V(H)$ and $0$ otherwise. Therefore, effectively, we only embed vertices of $H$ in the recursion.} 
of $G\setminus \cF$  and each cluster in $\cF$.  As we recurse on each cluster $C$ in $\cF$, we will need to sample a balanced cut for $C$ from $\bbC_{\downarrow C}$; this is possible by the definition of stochastic balanced cut in \Cref{def:stoc-Bcut}. 

When we recurse to embed vertices in a connected component, say $H$, of  $G\setminus \cF$, clusters in $\cF$ become ``boundary clusters'' of $H$, meaning that vertices in $H$ can only connect to other vertices in $G\setminus H$ by going through these clusters. 
In the algorithm below, we will take one representative per boundary cluster and embed these representatives along with vertices in $H$ in the recursion.  
We will keep the number of such representatives small by the following trick reminiscent to the construction of $r$-division: whenever the number of boundary clusters is above a certain threshold, we will sample a balanced cut to balance the number of boundary clusters.

\begin{figure}[ph!]
\centering
   
\begin{tcolorbox}
\paragraph{$\textsc{Embed}_{\bbC}(T, P\/, \bdry T, \bdry \cC)$:} Terminal set $T$, current graph $P$ (called piece), boundary terminals $\bdry T$, and boundary clusters $\bdry \cC$.
\begin{enumerate}
    \item \emph{Base case.}
    
    If $T$ contains at most $4 \tau$ vertices, then define $\hat{G}$ to be a clique on $T$, where the weight between $u$ and $v$ is $\dist_G(u,v)$; define $\hat{\cT}$ to be a single bag containing all vertices in $T$; and return $(\hat{G}, \hat{T})$. If $T$ contains more than $4 \tau$ vertices, continue.

    \item \emph{Compute a balanced cut $\cF$.}
    
    If $\bdry T$ contains more than $4 \tau$ vertices, set $\EMPH{$S$} \gets \bdry T$; otherwise set $S \gets T$.
    Sample a cut \EMPH{$\cF$} by calling $\textsc{Cut}_{\bbC}(S, P\/, \bdry \cC)$. Notice that $\cF$ is balanced with respect to $S$ and conforms $\bdry \cC$.
    
    \item \emph{Recurse on $\cF$.}
    
    For each cluster $C$ in $\cF$, let $\EMPH{$T_C$} \coloneqq T \cap C$, and let $\EMPH{$\bdry T_C$} \coloneqq \bdry T \cap C$. 
    Let \EMPH{$\bdry \cC_C$} be the set of clusters in $\bdry \cC$ whose representatives are in $\bdry T_C$. 
    Recursively call $\textsc{Embed}_{\bbC}(T_C, G[C], \bdry T_C, \bdry \cC_C)$ to find an embedding \EMPH{$(\hat{G}_C, \hat{\cT}_C)$}.
    
    \item \emph{Recurse on $P \setminus \bigcup \cF$.}
    
    For each cluster $C$ in $\cF$ with $C \cap T \neq \varnothing$, select an arbitrary representative terminal \EMPH{$v_C$} from $C \cap T$; let \EMPH{$T_{\cF}$} denote the set of representatives for all clusters in $\cF$.
    For each connected component $H$ of $P \setminus \bigcup \cF$, let $\EMPH{$T_H$} \coloneqq (T \cap H) \cup T_{\cF}$, let $\EMPH{$\bdry T_H$} \coloneqq (\bdry T \cap H) \cup T_{\cF}$, and let $\EMPH{$\bdry \cC_{H}$}$ be the set containing all clusters in $\cF$ and in $\bdry \cC$ whose representatives are in $\bdry T_H$.   
    Recursively call $\textsc{Embed}_{\bbC}(T_H, P\/, \bdry T_H, \bdry \cC_H)$ to find an embedding \EMPH{$(\hat{G}_H, \hat{\cT}_H)$}. 
    (Note here that we pass $P$ to the next recursive call instead of $H$.)

    \item \emph{Construct the host graph $\hat{G}$.}
    
    Take the union of all $\hat{G}_C$ and $\hat{G}_H$ recursively constructed above (identifying all vertices in $T_{\cF}$, which may appear in multiple graphs $\hat{G}_H$). For each cluster $C$ in $\cF$, add an edge from the representative vertex $v_C$ to every vertex $u$ in $C$, with weight equal to $\dist_G(v_C, u)$. The resulting graph is $\hat{G}$.

    \item \emph{Construct the tree decomposition $\hat{\cT}$.}
    
    The root bag of $\hat{\cT}$ contains every vertex in $T_{\cF} \cup \bdry T$. The children of the root bag are the roots of the tree decompositions for the recursive embeddings ($\hat{\cT}_{C}$ and $\hat{\cT}_{H}$). For every cluster $C$ in $\cF$, we add $v_C$ to every bag of $\hat{\cT}_{C}$. 
    The resulting tree is $\hat{\cT}$. Return $(\hat{G}, \hat{\cT})$.
\end{enumerate}
\end{tcolorbox}
\begin{tcolorbox}
    \paragraph{$\textsc{Embed}(G)$:} graph $G = (V,E,\len{\cdot})$ 
    \begin{enumerate}
        \item Sample $\bbC$ from distribution $\frakC$ in the assumption of \Cref{thm:embedding}.
        \item  Return  $\textsc{Embed}_{\bbC}(V, G, \varnothing, \varnothing)$.
    \end{enumerate}
\end{tcolorbox}
    \caption{The embedding algorithm.}
    \label{fig:embed-main}
\end{figure}

\medskip
Now, we describe the embedding algorithm in more detail.  Throughout this section, fix $\e$ in $(0,1)$.  The main embedding procedure \EMPH{$\textsc{Embed}_{\bbC}(T, P\/, \bdry T, \bdry \cC)$} takes four parameters as input:
\begin{itemize}
    \item the \EMPH{terminals $T$}, a subset of the vertices of $G$ that we want to embed, 
    
    \item the \EMPH{current piece $P$}, a subgraph%
    \footnote{we remark that we pass $P$ as input only to help intuition; a suitable $P$ could be inferred from $T$ and $\bbC$ alone} 
    induced by a cluster in $\bbC$ such that all terminals are within $P$,
    
    \item the \EMPH{boundary vertices $\bdry T$}, a subset of $T$ that contains representatives of boundary clusters, and
    
    \item the \EMPH{boundary clusters $\bdry \cC$}, a set of clusters in $\bbC$ 
    where every cluster $C$ in $\bdry \cC$ satisfies $|C \cap \bdry T| = 1$. \\
    (We remark that intuitively $\bdry \cC$ are clusters that were added to a cut during earlier recursive calls.)
\end{itemize}
The procedure returns a pair \EMPH{$(\hat{G}, \hat{\cT})$}, where $\hat{G}$ is a small treewidth graph with vertex set $T$, and $\hat{\cT}$ is a tree decomposition for $\hat{G}$ where every vertex in $\bdry T$ appears in the root bag of $\hat{\cT}$. 
Implementation details of $\textsc{Embed}_{\bbC}(T, P, \bdry T, \bdry \cC)$ are in \Cref{fig:embed-main}. 
The wrapper function \EMPH{$\textsc{Embed}(G)$} sample a clustering chain $\bbC$ 
from distribution $\frakC$ in the assumption of \Cref{thm:embedding}, and produce an embedding of $G$ by calling $\textsc{Embed}_{\bbC}(V, G, \varnothing, \varnothing)$. 
See \Cref{fig:emb-example} for an illustration. 
We bound the treewidth of~$\hat{\cT}$ in \Cref{subsec:treewidth} and bound the expected distortion of embedding into $\hat{G}$ in \Cref{subsec:distortion}, completing the proof of \Cref{thm:embedding}. 

\begin{figure}[!htb]
\centering
\begin{tikzpicture}
	 \node at (0,0){\includegraphics[width=1.0\linewidth]{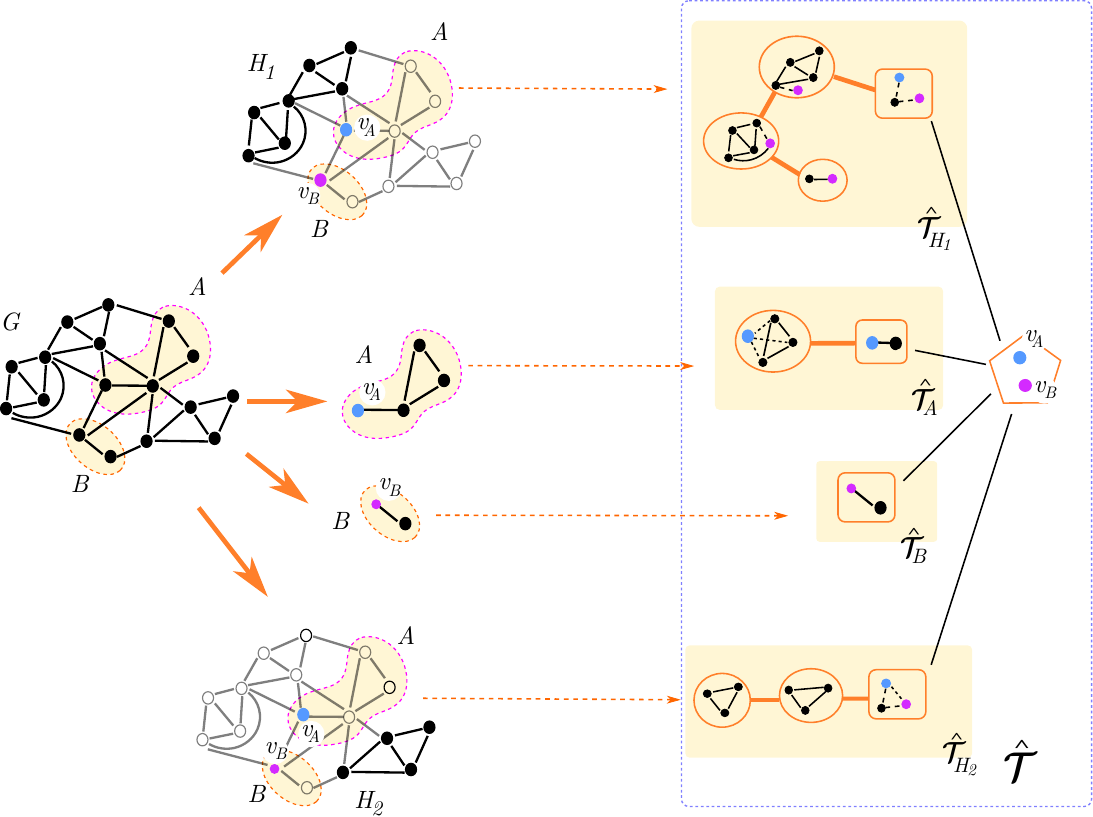}};		
			\node at (0.6,5.1){\footnotesize $\textsc{Embed}_{\bbC}(T_{1}, G, \{v_A,v_B\}, \{A,B\})$};
			\node at (0.2,0.9){\footnotesize $\textsc{Embed}_{\bbC}(T_{A}, A, \varnothing, \varnothing)$};
            \node at (0.2,-1.3){\footnotesize $\textsc{Embed}_{\bbC}(T_{B}, B, \varnothing, \varnothing)$};
            \node at (0.1,-4.1){\footnotesize $\textsc{Embed}_{\bbC}(T_{2}, G, \{v_A,v_B\}, \{A,B\})$};
		\end{tikzpicture}   
\caption{An embedding of $G$ obtained by calling $\textsc{Embed}_{\bbC}(V, G, \varnothing, \varnothing)$: the algorithm will call $\textsc{Cut}_{\bbC}(V, G, \varnothing)$ to sample a cut $\cF$; assume that $\cF = \{A,B\}$ as in the figure. Then, the algorithm will recursively embed the terminals in components of $G\setminus \cF$, which are $H_1,H_2$. Note that the graphs we recurse on to embed terminals in $H_1$ and $H_2$ are $G$. Terminals are filled vertices, while non-terminals are hollow vertices. The algorithm also recursively embeds clusters in $\cF$, which are $A$ and $B$. The final tree decomposition $\hat{\cT}$ is obtained by connecting four tree decompositions from four recursive calls via the root bag $\{v_A,v_B\}$, which are representative vertices of $A$ and $B$, respectively. Dashed edges are new edges added during the embedding process.}  
\label{fig:emb-example}
\end{figure}

We will use a (randomized) helper procedure \EMPH{$\textsc{Cut}_{\bbC}(S, P\/, \bdry \cC)$}, where $S\subseteq V(P)$ and $\bdry \cC$ is the set of boundary clusters of graph $P$ with respect to $\bdry T$. 
This procedure will sample a stochastic random cut $\cF$ in $P$ conforming $\bdry \cC$ to balance the number of vertices in $S$, achieved by setting the weight $\omega_{V(P)}(u) = 1$ for every $u \in S$ and  $\omega_{V(P)}(v) = 0$ for every $v\in V(P)\setminus S$. 
Since $\cF$ conforms $\bdry \cC$, no cluster in $\cF$ will be a proper subset of a cluster in $\bdry\cC$ (though $\cF$ could contain clusters in $\bdry \cC$).  
Note that we only call procedure $\textsc{Cut}_{\bbC}(S, P\/, \bdry \cC)$ when $P$ is the subgraph of $G$ induced by some cluster $\hat{C}\in \bbC$; we will show in \Cref{obs:P-cluster} below that this is the case during the course of the embedding algorithm.

\begin{observation}\label{obs:P-cluster} 
For any recursive call  $\textsc{Embed}_{\bbC}(T, P\/, \bdry T, \bdry \cC)$, $P = G[\hat{C}]$ for some cluster $\hat{C} \in \bbC$.
\end{observation}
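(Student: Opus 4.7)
The plan is a simple induction on the depth of recursion. The base case is the initial call $\textsc{Embed}_{\bbC}(V, G, \varnothing, \varnothing)$ made by $\textsc{Embed}(G)$: here $P = G = G[V]$, and $V$ is precisely the unique cluster at the top level $\cC_k$ of $\bbC$ (by \Cref{def:clusterin-chain}), so the invariant holds with $\hat C = V$.

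For the inductive step, fix a call $\textsc{Embed}_{\bbC}(T, P, \bdry T, \bdry \cC)$ with $P = G[\hat C]$ for some cluster $\hat C \in \bbC$, and examine the two kinds of recursive calls made from it. The calls in Step~3 have the form $\textsc{Embed}_{\bbC}(T_C, G[C], \bdry T_C, \bdry \cC_C)$ for $C \in \cF$. By the [Respecting] property of the $(\tau,\psi)$-stochastic balanced cut (\Cref{def:stoc-Bcut}), every cluster in $\cF$ belongs to $\bbC_{\downarrow \hat C} \subseteq \bbC$, so $C \in \bbC$ and the invariant is maintained with $\hat C' = C$. The calls in Step~4, for each connected component $H$ of $P\setminus \bigcup\cF$, have the form $\textsc{Embed}_{\bbC}(T_H, P, \bdry T_H, \bdry \cC_H)$; crucially the second argument passed is $P$ itself, not $H$ (as stressed in the parenthetical remark in Step~4). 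Hence the invariant is preserved verbatim with the same $\hat C$ as in the parent call.

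There is no real obstacle here; the only thing one has to be careful about is to read Step~4 correctly and note that although we recurse on the terminal set $T_H$ restricted to the component $H$, the ``piece'' argument is still the ambient $P = G[\hat C]$. Combining the two cases with the base case gives the claim by induction on recursion depth. As a bonus, the same argument shows that whenever we invoke $\textsc{Cut}_{\bbC}(S, P, \bdry \cC)$ in Step~2, the graph $P$ is $G[\hat C]$ for some $\hat C \in \bbC$, which is exactly what is required so that the helper procedure has access to the sub-clustering-chain $\bbC_{\downarrow \hat C}$ guaranteed by \Cref{def:stoc-Bcut}.
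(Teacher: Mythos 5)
Your proof is correct and follows essentially the same induction on the recursion tree as the paper: base case $P = G = G[V]$ with $V$ the root cluster, Step~3 calls pass $G[C]$ with $C \in \cF \subseteq \bbC$ (by the respecting property), and Step~4 calls pass $P$ unchanged. Nothing further is needed.
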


\begin{proof}
    In the beginning, the observation follows from the fact that $P = G$ and the root of $\bbC$ is $V$. Inductively, we assume that in the call $\textsc{Embed}_{\bbC}(T, P\/, \bdry T, \bdry \cC)$, $P = G[\hat{C}]$ for some cluster $\hat{C} \in \bbC$. The call $\textsc{Embed}_{\bbC}(T, P\/, \bdry T, \bdry \cC)$ spawns two types of recursive calls: (i)  $\textsc{Embed}_{\bbC}(T_C, G[C], \bdry T_C, \bdry \cC)$ for a cluster $C\in \cF$ in Step~3 and (ii) $\textsc{Embed}_{\bbC}(T_H, P\/, \bdry T_H, \bdry \cC_H)$ in Step~4. The observation clearly holds for type (ii).  For type (i), the next piece is $G[C]$, so the observation also holds.
\end{proof}

\subsection{Bounding the treewidth}
\label{subsec:treewidth}

First, we show that $\hat{\cT}$ output by $\textsc{Embed}(G)$ is a valid tree decomposition of $\hat{G}$.  

\begin{lemma}
    $\hat{\cT}$ is a valid tree decomposition of $\hat{G}$.
\end{lemma}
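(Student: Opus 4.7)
The plan is to prove a slightly strengthened statement by induction on recursion depth: for every recursive call $\textsc{Embed}_{\bbC}(T, P, \bdry T, \bdry \cC)$ returning $(\hat{G}, \hat{\cT})$, (a) $\hat{\cT}$ is a valid tree decomposition of $\hat{G}$, (b) $V(\hat{G}) = T$, and (c) every vertex of $\bdry T$ appears in the root bag of $\hat{\cT}$. The strengthening (c) is what lets the inductive step stitch together child decompositions across the root bag; without it there is no handle on the subtree axiom for boundary terminals. The base case ($|T|\leq 4\tau$) is immediate, since the single-bag decomposition on $T$ satisfies everything because $\bdry T\subseteq T$.

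For the inductive step I would verify the three tree-decomposition axioms in order. Vertex coverage: every $u\in T$ lies either in some $T_C$ for $C\in\cF$ or in some $T_H$ for a component $H$ of $P\setminus\bigcup\cF$, so by induction $u$ appears in some bag of $\hat{\cT}_C$ or $\hat{\cT}_H$; the representatives $v_C$ additionally sit in the root bag. Edge coverage: internal edges of each $\hat{G}_C$ and $\hat{G}_H$ are covered inside their own subtrees by induction; the only newly added edges are $\{v_C,u\}$ for $u\in C$, and since $u\in T_C$ appears in some bag of $\hat{\cT}_C$ and $v_C$ is explicitly pushed into every bag of $\hat{\cT}_C$, the edge is covered. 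Property (b) and the fresh (c) for $\hat{\cT}$ itself both follow by construction, because $T_\cF \cup \bdry T$ is exactly the root bag and $\bdry T$ is contained in it.

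The main obstacle will be the subtree-connectivity axiom, particularly for a representative vertex $u=v_C$, which simultaneously lives in the root bag of $\hat{\cT}$ (as part of $T_\cF$), in every bag of $\hat{\cT}_C$ (by explicit insertion in Step 6), and in the root bag of every $\hat{\cT}_H$, because $v_C\in T_\cF\subseteq \bdry T_H$ and by the inductive hypothesis (c) boundary terminals surface at the root of the child decomposition. Here I would use that the roots of all $\hat{\cT}_C$ and $\hat{\cT}_H$ are attached directly as children of the root of $\hat{\cT}$, and within each child the bags containing $v_C$ form a connected subtree anchored at its root (by induction applied to the child). Hence the union is a connected subtree rooted at the root of $\hat{\cT}$. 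For a non-representative terminal $u$ the bookkeeping collapses to at most two places, namely the root bag of $\hat{\cT}$ if $u\in \bdry T$, and a single child subtree containing $u$ rooted at the root of that child by induction (c); in either case the collection of bags containing $u$ is clearly a connected subtree. This completes the inductive step and thus the proof.
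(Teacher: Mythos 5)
Your proof is correct and follows essentially the same route as the paper: induction over the recursion, observing that vertices shared between recursive pieces are exactly the representatives $T_{\cF}$ (together with boundary terminals), which appear in the root bags of the child decompositions, and that the newly added edges $\{v_C,u\}$ are covered because $v_C$ is inserted into every bag of $\hat{\cT}_C$. The only difference is cosmetic: you state the root-bag property of $\bdry T$ explicitly as part of a strengthened induction hypothesis, whereas the paper carries it implicitly as part of the specification of what $\textsc{Embed}_{\bbC}$ returns.
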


\begin{proof} By induction, we assume that in procedure $\textsc{Embed}_{\bbC}(T, P\/, \bdry T, \bdry \cC)$,  the tree decompositions $\hat{\cT}_C$  in Step 3  and  $\hat{\cT}_H$ in Step 4 are valid tree decompositions of $\hat{G}_C$ and $\hat{G}_H$, respectively.  Recall that recursive calls are made on subsets $T_H$ and $T_C$, which are pairwise disjoint \emph{except} for the vertices of $T_{\cF}$. Thus, if a vertex in $\hat{G}$ appears in multiple subtrees, then it is in $T_{\cF}$ (and thus in the root bag). Further, $T_{\cF}$ appears (if at all) in the root bag of all tree decompositions for the recursive embeddings. Recursively, each subtree is a tree decomposition. Thus, the bags containing any given vertex form a connected subtree.

 Let $e$ be an edge in $\hat{G}$. Recall that in Step 5, we only add edges between the representative $v_C$ of $C$ to vertices in $C$ for every cluster $C\in \cF$ (aside from edges in $\hat{G}_C$ and $\hat{G}_H$)  to form $\hat{G}$. If $e$ is an edge between some $v_C$ and a vertex in $C$ (that is, if it was added in Step 5), then there is a bag containing both endpoints of $e$ (as we added $v_C$ to every bag of the tree decomposition for cluster $C$). Otherwise, recursively, there is some bag that contains both endpoints of $e$.
\end{proof}

To bound the width of $\hat{T}$, we must first bound the depth of the recursion tree.
Let $X$ be a call $\textsc{Embed}_{\bbC}(T, P\/, \bdry T, \bdry \cC)$ that occurs during the algorithm. 
Let \EMPH{$\mathrm{scale}(P)$} denote the scale of the cluster $\hat{C}$ inducing $P = G[\hat{C}]$ in $\bbC$. Note that $\mathrm{scale}(P)\leq \lceil \log \Phi\rceil$.  We define the potential function: 
\[
\EMPH{$\phi(T, P\/, \bdry T)$} \coloneqq 5 \log_2 |T| + \mathrm{scale}(P) + \frac{|\bdry T|}{\tau}.  
\]
Recall that $\tau$ is the parameter in the $(\tau,\psi)$-stochastic balanced cut $\cF$ sampled in Step~2. 
The root call has potential $\phi(T, P\/, \bdry T) = O(\log n) + O(\log \Phi) + 0 = O(\log(n\Phi))$.

\begin{lemma}
\label{lem:recursion-depth}
    If call $X \coloneqq \textsc{Embed}_{\bbC}(T, P\/, \bdry T, \bdry \cC)$ makes a recursive call $X' \coloneqq \textsc{Embed}_{\bbC}(T', P', \bdry T', \bdry \cC')$, then $\phi(T', P', \bdry T') \le \phi(T, P\/, \bdry T) - 1$. 
    As a corollary, the recursion tree has depth~$O(\log (n\Phi))$.
\end{lemma}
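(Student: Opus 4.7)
The plan is to verify $\phi(T',P',\bdry T')\le\phi(T,P,\bdry T)-1$ by a case split on whether the recursive call is made in Step~3 (recursion into a cluster $C\in\cF$) or Step~4 (recursion into a connected component $H$ of $P\setminus\bigcup\cF$). In each case I will identify a single term of $\phi$ that drops by at least $1$ and control the change in the remaining two. Throughout I use: $T_C\subseteq T$ and $\bdry T_C\subseteq \bdry T$; each representative $v_C$ is chosen from $C\cap T$ so $T_\cF\subseteq T$ and $T_\cF\cap H=\varnothing$, hence $T_H=(T\cap H)\cup T_\cF\subseteq T$; $|T_\cF|\le|\cF|\le\tau$; and by \Cref{obs:P-cluster}, $P=G[\hat C]$ for some cluster $\hat C\in\bbC$.

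For a Step~3 call the new piece is $P'=G[C]$ with $C\in\cF$. Since the [Respecting] property of the sampled stochastic balanced cut (with respect to $\bbC_{\downarrow\hat C}$) forbids $\hat C$ itself from appearing in $\cF$, every $C\in\cF$ lies at scale strictly less than that of $\hat C$, so $\mathrm{scale}(P')\le\mathrm{scale}(P)-1$. The $|T|$ and $|\bdry T|$ terms are non-increasing, so $\Delta\phi\le -1$.

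For a Step~4 call the scale is preserved and $T_H\subseteq T$ forces $5\log_2|T_H|\le 5\log_2|T|$. I split on the value of $S$ chosen in Step~2. If $|\bdry T|>4\tau$ and $S=\bdry T$, the [Balanced] property gives $|\bdry T\cap H|\le|\bdry T|/2$; since $|\bdry T_H|=|\bdry T\cap H|+|T_\cF|\le|\bdry T|/2+\tau$, the boundary term shifts by $(|\bdry T_H|-|\bdry T|)/\tau\le 1-|\bdry T|/(2\tau)<-1$, and $\Delta\phi<-1$. Otherwise $|\bdry T|\le 4\tau$ and $S=T$, so [Balanced] gives $|T\cap H|\le|T|/2$ and hence $|T_H|\le|T|/2+\tau<\tfrac34|T|$ (using $|T|>4\tau$ from the base-case check of Step~1); the $|T|$ term drops by at least $5\log_2(4/3)>2$, whereas the boundary term grows by at most $|T_\cF|/\tau\le 1$, so again $\Delta\phi<-1$.

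The depth bound follows since the root call has $\phi(V,G,\varnothing)=5\log_2 n+\mathrm{scale}(G)+0\le 5\log_2 n+\lceil\log\Phi\rceil=O(\log(n\Phi))$ and $\phi$ is nonnegative, so a unit drop per recursive call yields depth $O(\log(n\Phi))$. The main point requiring care is the calibration of constants: the coefficient $5$ in front of $\log_2|T|$ is exactly what is needed in the second subcase of Step~4 so that $5\log_2(4/3)>2$ absorbs the $+1$ contribution of the boundary term, while the threshold $4\tau$ in Step~2 ensures that the first subcase has $|\bdry T|/(2\tau)>2$ to absorb the $+1$ from the $+\tau$ slack introduced by $T_\cF$.
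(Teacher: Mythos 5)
Your proof is correct and follows essentially the same route as the paper's: Step~3 calls drop the scale term by one while the other terms are non-increasing, and Step~4 calls are split by whether the cut balances $T$ or $\bdry T$, with the coefficient $5$ and the threshold $4\tau$ calibrated exactly as in the paper's two cases. The depth corollary is likewise argued the same way (nonnegative potential of initial value $O(\log(n\Phi))$ decreasing by at least $1$ per level).
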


\begin{proof}
In Step 3, $X$ makes calls $X_C \coloneqq \textsc{Embed}_{\bbC}(T_C, G[C], \bdry T_C, \bdry \cC)$ for every cluster $C$ in $\cF$. 
We have $|T_C| \le |T|$ (as $T_C \subseteq T$) and $|\bdry T_C| \leq |\bdry T|$ (as $\bdry T_C \subseteq \bdry T$). 
We have $\mathrm{scale}(G[C]) \le \mathrm{scale}(P) - 1$ by the definition of a cut. 
We conclude that $\phi(T_C, C, \bdry T_C) \le \phi(T, P\/, \bdry T) - 1$.

\medskip \noindent 
In Step 4, $X$ makes calls $X_H \coloneqq \textsc{Embed}_{\bbC}(T_H, P\/, \bdry T_H, \bdry \cC)$ for every connected component $H$ in $P \setminus \bigcup \cF$.  
We consider two cases based on Step~2 of $X$.
\begin{itemize}
    \item Case 1: $|\bdry T| \le 4 \tau$, so that $X$ selects a balanced cut $\cF$ with respect to the terminals $T$.
    
    In this case, we have $|T_H| \le \frac{1}{2} \cdot |T| + |T_{\cF}|$.  
    Note that $|T_{\cF}|\leq \tau$ since $\cF$ is sampled from a $(\tau,\psi)$-stochastic balanced cut. Observe that $|T| \ge 4 \tau$ (because $X$ makes a recursive call which means the base case of Step 1 was not reached). 
    Thus, $|T|\geq 4 \cdot |T_{\cF}|$, giving $|T_H| \le \frac{3}{4} \cdot  |T|$, and so $5 \log_2 |T_H| \le 5 \log_2 |T| - 2$. 
    Further, $|\bdry T_H| \le |\bdry T| + \tau$ (because we add at $|T_{\cF}| \le \tau$ points into the boundary), and so $\frac{|\bdry T_H|}{\tau} \le \frac{|\bdry T|}{\tau} + 1$. We conclude that $\phi(T_H, P\/, \bdry T_H) \le \phi(T, P\/, \bdry T) - 1$.
    \item Case 2: $|\bdry T| > 4 \tau$, so that $X$ selects a balanced cut $\cF$ with respect to $\bdry T$.
    
    In this case, we have $|T_H| \le |T|$. Further, 
    $|\bdry T_H| 
    \le \frac{1}{2} \cdot |\bdry T| + \tau 
    \le |\bdry T| - \tau$ 
    and hence $\frac{|\bdry T_H|}{\tau} \le \frac{|\bdry T|}{\tau} - 1$.  We conclude that $\phi(T_H, P\/, \bdry T_H) \le \phi(T, P\/, \bdry T) - 1$. \qed
\end{itemize}
\vspace{-16pt}
\end{proof}

\begin{lemma}
    $\hat{\cT}$ has width $O(\tau + \log (n\Phi))$.
\end{lemma}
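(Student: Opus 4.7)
The plan is to show that every bag of $\hat{\cT}$ has size $O(\tau + \log\Phi) = O(\tau + \log(n\Phi))$. I would decompose the content of a bag $B$ into two contributions: the content placed in $B$ at the moment the enclosing recursive call creates it (either as the root bag in Step~6, or as the single base-case bag in Step~1), and the extra representative vertices $v_C$ added to $B$ by Step~6 of ancestor cluster-recursions.

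For the first contribution I would establish the invariant
\[
 |\bdry T_X| \leq 5\tau \quad \text{at every recursive call } X=\textsc{Embed}_{\bbC}(T,P,\bdry T,\bdry \cC),
\]
by induction along the recursion tree. At the root this holds since $\bdry T = \varnothing$. A cluster-recursion passes $\bdry T_C \coloneqq \bdry T \cap C \subseteq \bdry T$, so the invariant is preserved. A component-recursion falls into one of the two cases of Step~2. In Case~1, $|\bdry T|\le 4\tau$, and since $\bdry T_H = (\bdry T \cap H) \cup T_\cF$ with $|T_\cF|\le \tau$, we get $|\bdry T_H| \le 4\tau + \tau = 5\tau$. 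In Case~2, the helper procedure is called with $S=\bdry T$, so the balanced-cut guarantee gives $|\bdry T \cap H| \le \tfrac12|\bdry T|$, and therefore $|\bdry T_H| \le \tfrac12|\bdry T|+\tau < |\bdry T| \le 5\tau$ using the hypothesis $|\bdry T|>4\tau$. Hence the root bag $T_\cF \cup \bdry T$ created at $X$ has size at most $\tau + 5\tau = 6\tau$; the base-case bag has size $|T|\le 4\tau$.

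For the second contribution, I would observe from Step~6 that a representative vertex $v_C$ is added to $B$ exactly when $B$ lies inside the tree decomposition $\hat{\cT}_C$ returned from a cluster-recursion on $C$ performed at some ancestor call of the call that created $B$. Each such cluster-recursion strictly decreases the scale of the current piece, since by \Cref{obs:P-cluster} the piece is always $G[\hat C]$ for some $\hat C\in\bbC$, and $\mathrm{scale}(G[C])\le \mathrm{scale}(P)-1$ for every $C\in\cF$. Since the initial scale is at most $\lceil\log\Phi\rceil$, the ancestor chain contains at most $\lceil\log\Phi\rceil$ cluster-recursions, so at most $O(\log\Phi)$ extra vertices $v_C$ are added to $B$.

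Combining the two contributions, every bag of $\hat{\cT}$ has size at most $6\tau + \lceil\log\Phi\rceil = O(\tau + \log(n\Phi))$, giving width $O(\tau+\log(n\Phi))$. The only delicate point is the invariant $|\bdry T|\le 5\tau$, which relies precisely on the Case~2 branch of Step~2 being triggered as soon as the boundary grows past the $4\tau$ threshold; the scale accounting for the added representatives is then an immediate consequence of the clustering-chain structure.
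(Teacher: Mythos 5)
Your proof is correct and follows essentially the same route as the paper: the same decomposition of each bag into $T_{\cF}\cup\bdry T$ plus the added representatives $v_C$, and the same top-down induction establishing $|\bdry T|\le 5\tau$ via the two cases of Step~2. The only (minor) difference is that you bound the added representatives by the number of cluster-recursions on the ancestor chain, which is $O(\log\Phi)$ by the strict scale decrease, whereas the paper simply invokes the overall recursion depth $O(\log(n\Phi))$ from \Cref{lem:recursion-depth}; this is a slight sharpening of one step, not a different argument.
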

\begin{proof} By~\Cref{lem:recursion-depth}, the recursion tree has depth $O(\log (n\Phi))$.
    This means that, in total, we add at most $O(\log (n\Phi))$ representative vertices $v_C$ to a bag in Step 6 over the course of the algorithm. Excluding the vertices $\set{v_C}$, every bag contains a set of vertices that is $T_{\cF} \cup \bdry T$, for some $\bdry T$ that is passed as a parameter to $\textsc{Embed}$. Thus, every bag contains at most $\tau + |\bdry T|$ vertices.
    
    Let $X \coloneqq \textsc{Embed}_{\bbC}(T, P\/, \bdry T, \bdry \cC)$ be a call that occurs during the algorithm. We show that $|\bdry T| \le 5 \cdot \tau$.
    We proceed by induction, starting at the root of the recursion tree and working downward. 
    Indeed, let $X' \coloneqq \textsc{Embed}_{\bbC}(T', P', \bdry T', \bdry \cC')$ be the parent call of $X$ in the recursion tree (and if $X$ is the root, then $\bdry T = \varnothing$ by construction). Let $\cF'$ be the cut sampled during $X'$. If $P = G[C']$ for some cluster $C' \in \cF'$  (that is, if $X$ was called recursively during Step 3), then $\bdry T \subseteq \bdry T'$ and the induction hypothesis proves the claim. 
    Otherwise, $X$ was called recursively during Step 4, and so all terminals $T$ are contained in $H \cup \cF'$ for some connected component $H$ of $P' \setminus \bigcup \cF'$. 
    In particular, $T$ contains at most $\tau$ terminals in $\cF'$ (the representative terminals $T_{\cF'}$), and so we have $|\bdry T| \le |\bdry T' \cap H| + \tau$. 
    If $|\bdry T'| \le 4 \cdot \tau$, then we are done; otherwise, if $|\bdry T'| > 4 \cdot \tau$, the cut $\cF$ is a balanced separator for $\bdry T'$ and so $|\bdry T' \cap H| \le |\bdry T'|/2 \le 2.5 \cdot \tau$ --- note here that $|\bdry T'| \le 5 \cdot \tau$ by induction --- giving $|\bdry T| \leq 5 \cdot \tau$ as desired.
\end{proof}

\subsection{Bounding the expected distortion}
\label{subsec:distortion}
It will be helpful to bound the distortion of an \emph{edge} $e = (u,v)$, rather than an arbitrary \emph{path} between $u$ and $v$. This will simplify the analysis because either a cut $\cF$ cuts edge $e$, or it does not --- if we instead consider a path, $\cF$ may cut multiple edges along the path.
We generalize this notion. Let $\cX$ be a set of clusters in a clustering chain $\bbC$. 
Let \EMPH{$G / \cX$} denote the graph obtained by contracting every cluster in $\cX$ into a supernode, and let $\eta: V \to V(G/\cX)$ be the surjective map that takes a vertex in $G$ to the corresponding vertex in $G/\cX$ after contraction. 
We say that a pair of vertices $(u,v)$ is a \EMPH{pseudo-edge with respect to $\cX$  in $G$} if there is an edge between $\eta(u)$ and $\eta(v)$ in $G / \cX$. We simply say that  $(u,v)$ is a pseudo-edge with respect to $\cX$ if the graph $G$ is clear from the context.

\begin{observation}
\label{obs:edge-cut}
    Let $(u,v)$ be a pseudo-edge with respect to $\cX$ in $G$ where $\cX$ is a set of clusters in $\bbC$. Let $\cF$ be a cut conforming $\cX$, such that $u$ is in some cluster $C_u$ in $\cF$, but $v$ is not in $C_u$. Then, for every vertex $u'$ in $C_u$, the pair $(u', v)$ is a pseudo-edge with respect to $\cX \cup \set{C_u}$. (Note that $C_u$ might contain some clusters in $\cX$ as subclusters.)
\end{observation}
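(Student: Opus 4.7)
The plan is to unpack the definition of pseudo-edge and trace the witnessing edge. By assumption, there exist vertices $x,y\in V(G)$ with $xy\in E(G)$, $\eta_{\cX}(x)=\eta_{\cX}(u)$, and $\eta_{\cX}(y)=\eta_{\cX}(v)$. I will show that this same edge $xy$ witnesses the claim, i.e.\ $\eta_{\cX\cup\{C_u\}}(x)=\eta_{\cX\cup\{C_u\}}(u')$, $\eta_{\cX\cup\{C_u\}}(y)=\eta_{\cX\cup\{C_u\}}(v)$, and these two supernodes are distinct in $G/(\cX\cup\{C_u\})$.

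The workhorse is laminarity of the clusters in the chain $\bbC$: since each $\cC_i$ refines $\cC_{i+1}$, any two clusters appearing in $\bbC$ are either nested or disjoint. I will combine this with the conforming property of $\cF$, which stipulates that $C_u$ is not strictly contained in any cluster of $\cX$. Together these two facts will tightly constrain how $C_u$ can interact with any cluster of $\cX$ meeting $u$ or $v$.

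For the $u$-side, I argue that $x\in C_u$. If $u$ lies in no cluster of $\cX$ then $\eta_{\cX}(u)=u$, so $x=u\in C_u$ directly. Otherwise $u$ lies in some $C^{*}\in\cX$; laminarity applied to $C^{*}$ and $C_u$, which share $u$, forces one to contain the other, and conforming rules out $C_u\subsetneq C^{*}$, so $C^{*}\subseteq C_u$. Since $\eta_{\cX}(x)=\eta_{\cX}(u)$ then puts $x\in C^{*}\subseteq C_u$, we conclude in either case that $\eta_{\cX\cup\{C_u\}}(x)$ is the supernode representing $C_u$, and hence equals $\eta_{\cX\cup\{C_u\}}(u')$ for every $u'\in C_u$.

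For the $v$-side, I claim that any cluster $C^{**}\in\cX$ containing $v$ is disjoint from $C_u$: by laminarity $C^{**}$ and $C_u$ are nested or disjoint; the inclusion $C^{**}\subseteq C_u$ would place $v\in C_u$, contradicting hypothesis, while $C_u\subsetneq C^{**}$ violates conforming. Therefore the $(\cX\cup\{C_u\})$-supernode of $y$ coincides with that of $v$ (either both lie in such a $C^{**}$ disjoint from $C_u$, or $v$ is in no cluster of $\cX$ and $y=v$). Distinctness from the $C_u$-supernode is immediate: $v$ itself is not in $C_u$, and any cluster of $\cX$ it inhabits is disjoint from $C_u$ by the argument just given. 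The only real obstacle in the write-up is keeping the small case split between ``$u$ (resp.\ $v$) is contained in a cluster of $\cX$'' versus ``is not'' tidy; no deeper difficulty arises.
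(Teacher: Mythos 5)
Your proof is correct and takes essentially the same route as the paper's: both trace the adjacency witnessing the pseudo-edge through the extra contraction of $C_u$, using the conforming property together with laminarity of $\bbC$ to ensure that every cluster of $\cX$ meeting $C_u$ lies inside it while any cluster of $\cX$ containing $v$ stays disjoint from it. Your write-up merely makes explicit the case analysis (witness edge in $G$, nesting versus disjointness, distinctness of the images of $u'$ and $v$) that the paper's terse ``adjacency is preserved after contracting $C_u$'' leaves implicit.
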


\begin{proof}
    Let $\eta:V \to V(G/\cX)$ be as above. Define $\eta':V \to V(G/(\cX \cup \set{C_u}))$ to be the function that sends vertex $x$ to the supernode $C_u$ if $x \in C_u$, and to $\eta(x)$ if $x \not \in C_u$. As $\eta(v)$ and $\eta(u)$ are adjacent in $V(G/\cX)$, they remain adjacent after the cluster $C_u$ is contracted; that is, $\eta'(u)$ and $\eta'(v)$ are adjacent. Now observe that every vertex $u'$ in $C_u$ is mapped to $\eta'(u)$ by $\eta'$, and so $(u',v)$ is a pseudo-edge.
\end{proof}

Let $(u,v)$ be a pair of vertices.
We say that a clustering chain $\bbC$ \EMPH{splits $(u,v)$ at scale $i$} if $i$ is the \emph{smallest} number such that, for every scale $s > i$, there is some cluster $C$ in $\bbC$ at scale $s$ that contains both $u$ and $v$.

\begin{observation}
\label{obs:split}
    Let $u$ and $v$ be vertices that are split at scale $i$ by $\bbC$. 
    Let $C$ be a cluster (at \emph{any} scale) in $\bbC$ that contains $u$ but not $v$. 
    Then, for every vertex $u'$ in $C$, the pair $(u', v)$ is split at scale $i$ by $\bbC$.
\end{observation}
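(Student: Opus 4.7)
The plan is to exploit the laminar (hierarchical) structure of the clustering chain $\bbC$: at each scale the clusters partition the ambient cluster, and clusters at different scales either contain one another or are disjoint. Everything follows from a short case analysis once one fixes the scale $j$ of the cluster $C$ relative to $i$.

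I would first observe that $j \le i$. Since $C$ is a scale-$j$ cluster containing $u$ but not $v$, at scale $j$ the vertices $u$ and $v$ lie in different clusters. By the definition of split, for every scale $s > i$ some cluster at scale $s$ contains both $u$ and $v$; hence $j \le i$.

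For the ``upper direction,'' I would produce a scale-$(i+1)$ cluster containing both $u'$ and $v$. By the definition of split there is such a cluster $D$ at scale $i+1$ containing $u$ and $v$. Since $u$ lies in $C$ at scale $j$ and in $D$ at scale $i+1 > j$, laminarity forces $C \subseteq D$; hence $u' \in D$, so $D$ contains both $u'$ and $v$, and the same holds for all scales $s > i$ by taking ancestors of $D$.

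The main content is the ``lower direction,'' where I claim no cluster at scale $k \le i$ contains both $u'$ and $v$. Suppose for contradiction that some $C'$ at scale $k \le i$ does. If $k \ge j$, then $u' \in C$ at scale $j$ and $u' \in C'$ at scale $k \ge j$ force $C \subseteq C'$ by laminarity; then $u \in C \subseteq C'$, so $C'$ at scale $k \le i$ contains both $u$ and $v$, contradicting that $(u,v)$ is split at scale $i$. If $k < j$, laminarity instead gives $C' \subseteq C$, whence $v \in C' \subseteq C$, contradicting $v \notin C$. Combining the two directions shows that $(u',v)$ is split at scale exactly $i$.

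I do not anticipate any serious obstacle; the argument is purely a two-case application of laminarity. The only point requiring mild care is the boundary $k = j$, which is absorbed into the first case using that clusters at a common scale are pairwise disjoint, forcing $C' = C$ and hence $v \in C$.
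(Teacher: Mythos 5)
Your proof is correct and follows essentially the same route as the paper's: both arguments rest on the nested (laminar) structure of $\bbC$, showing that every common cluster of $(u,v)$ at scales above $i$ contains $C$ (hence $u'$), and that at scale $i$ (and below) the cluster containing $u'$ excludes $v$. Your version just makes explicit the observation that the scale of $C$ is at most $i$ and the case analysis that the paper leaves implicit.
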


\begin{proof}
    This follows from the nested structure of the clusters in $\bbC$. For every scale $s > i$, there is a cluster $C_s$ that contains both $u$ and $v$. The nested structure of $\bbC$ implies that $C \subseteq C_s$, and so $C_s$ contains both $u'$ and $v$. At scale $i$, there is a cluster $C_i$ in $\bbC$ that contains $u$ but not $v$. We have $C \subseteq C_i$, so $C_i$ contains $u'$ but not $v$.
\end{proof}

The key to the distortion analysis is the following lemma. 

\begin{lemma}
    \label{lem:distortion-inductive}
    Let $u$ and $v$ be vertices in $G$. 
    Let $\bbC$ be a clustering chain sampled from distribution $\frakC$ for $G$ that splits $(u,v)$ at scale $i$. 
    For any call to $\textsc{Embed}_{\bbC}(T, P\/, \bdry T, \bdry \cC)$ such that
    \begin{itemize}
        \item the set of terminals $T$ contains $u$ and $v$, and
        \item $(u,v)$ is a pseudo-edge with respect to $\bdry \cC$,
    \end{itemize}
    the output embedding $\hat{G}$ satisfies \[
    \mathbb{E}[\dist_{\hat{G}}(u,v)] \le \dist_G(u,v) + O\Paren{ 2^i \cdot \frac{\log (n\Phi)}{\psi} },
    \]
    where the expectation is over the sampled random cuts.
\end{lemma}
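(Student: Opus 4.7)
The plan is to prove the lemma by induction on the depth $d$ of the recursion subtree rooted at the given call, strengthening the inductive hypothesis so that the additive error is $O(d \cdot 2^i/\psi)$; since \Cref{lem:recursion-depth} bounds $d$ by $O(\log(n\Phi))$ at the root, this yields the claimed bound. The base case $d = 0$ is trivial: once $|T| \le 4\tau$ the algorithm returns the clique on $T$ weighted by $\dist_G$, so $\dist_{\hat G}(u,v) = \dist_G(u,v)$.

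For the inductive step, I would condition on the cut $\cF$ sampled in Step~2 and split into the \emph{no-cut} and \emph{cut} cases. In the no-cut case, $u$ and $v$ either (i) both lie in some component $H$ of $P \setminus \bigcup \cF$, or (ii) both lie in the same cluster $\hat C \in \cF$ (necessarily at scale $>i$, since $\bbC$ splits them at scale $i$). Applying the IH to the corresponding recursive call requires verifying that $(u,v)$ remains a pseudo-edge with respect to the new boundary set ($\bdry \cC_H$ or $\bdry \cC_{\hat C}$) and that $\bbC$ still splits it at scale $i$ in the restricted graph. Pseudo-edge preservation holds because no cluster in the enlarged contraction set contains both $u$ and $v$, so $\eta(u)$ and $\eta(v)$ remain distinct adjacent vertices; the splitting scale is preserved by the nested hierarchy of $\bbC$.

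In the cut case, some cluster $C \in \cF$ contains exactly one of $u,v$; WLOG $u \in C$ and $v \notin C$. By the splitting assumption, the scale $j$ of $C$ satisfies $j \le i$. The graph $\hat G$ contains the edge $(u, v_C)$ of weight $\dist_G(u, v_C) \le 2^j$ added in Step~5, and by \Cref{obs:edge-cut}, $(v_C, v)$ is a pseudo-edge with respect to $\bdry \cC \cup \{C\} \subseteq \bdry \cC_H$ in the recursive call for the component $H$ of $P \setminus \bigcup \cF$ containing $v$; by \Cref{obs:split}, that pair is still split at scale $i$. The IH together with the triangle inequality then yields
\[
\mathbb{E}[\dist_{\hat G}(u,v) \mid \cF] \le \dist_G(u,v) + 2 \cdot 2^j + O\!\left(\frac{(d-1)\cdot 2^i}{\psi}\right).
\]
(If $v$ also lies in some cluster of $\cF$, one applies \Cref{obs:edge-cut} and \Cref{obs:split} a second time and picks up a further $2 \cdot 2^{j'}$ from its scale $j' \le i$; this extra term is absorbed by the geometric sum below.) Taking expectation over $\cF$, only the unique scale-$j$ clusters $C^u_j$ and $C^v_j$ containing $u$ and $v$ can separate them at scale $j$, and each lies in $\cF$ with probability at most $1/\psi$ by the low-probability property of \Cref{def:stoc-Bcut}; the geometric sum $\sum_{j=0}^{i} 2^{j+1}/\psi = O(2^i/\psi)$ bounds the per-level contribution, and combining with the IH bound closes the induction.

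The main obstacle will be the careful pseudo-edge bookkeeping across recursive calls: when descending into a cluster $\hat C \in \cF$ the boundary set shrinks to $\bdry \cC_{\hat C} \subsetneq \bdry \cC$, and one must show that the pseudo-edge property restricts properly to $G[\hat C]$ using the nested structure of $\bbC$. A related technical point is that the $1/\psi$ cut-probability bound of \Cref{def:stoc-Bcut} does not apply to clusters already in $\bdry \cC$; if $C^u_j \in \bdry \cC$, then $u$ shares that cluster with its representative $v_{C^u_j} \in \bdry T$, so the argument can be recast with $v_{C^u_j}$ in place of $u$---the relevant edge already exists in $\hat G$ via bags inherited from earlier recursion levels, so no additional error is incurred.
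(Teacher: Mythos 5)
Your proposal is correct and follows essentially the same route as the paper: the paper inducts on the potential $\phi(T,P\/,\bdry T)$ (a deterministic proxy for the remaining recursion depth, via \Cref{lem:recursion-depth}), distinguishes the same three cases (same cluster, same component, cut), reroutes through a cluster representative using \Cref{obs:edge-cut} and \Cref{obs:split}, and closes with the same per-scale $2/\psi$ probability bound and geometric sum. The only cosmetic differences are that you induct on the (random) subtree depth, which is cleanest formalized exactly as the paper does via the potential, and that your treatment of a chosen cluster lying in $\bdry \cC$ matches the paper's observation that such a cluster contains a single terminal, so the representative coincides with $u$ and contributes no error.
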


Before proving this lemma, we show that it suffices for a good multiplicative distortion bound. 

\begin{lemma}
\label{lem:distortion}
    Let $(\hat{G},\hat{\cT})$ be the output of $\textsc{Embed}_{\bbC}(V, G, \varnothing, \varnothing)$. 
    For any edge $e = (u,v)$ in $G$, in expectation we have 
    \[
    \mathbb{E}[\dist_{\hat{G}}(u,v)] \le \Paren{ 1 + O\Paren{\beta \cdot \frac{\log(n\Phi)\log(\Phi)}{\psi} } } \cdot \len{e}.
    \]
\end{lemma}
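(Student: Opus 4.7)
\textbf{Proof proposal for \Cref{lem:distortion}.}
The plan is to reduce the statement to \Cref{lem:distortion-inductive} by conditioning on the scale at which the sampled clustering chain splits the edge $(u,v)$, and then take total expectation.

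First, I would verify that the top-level call $\textsc{Embed}_{\bbC}(V, G, \varnothing, \varnothing)$ satisfies the two hypotheses of \Cref{lem:distortion-inductive} for every edge $e=(u,v)$. The terminal set $T=V$ trivially contains $u$ and $v$. Moreover, since $\bdry\cC = \varnothing$, contracting $\bdry\cC$ in $G$ does not alter $G$, so every actual edge of $G$ is a pseudo-edge with respect to $\bdry\cC$. Thus the conclusion of \Cref{lem:distortion-inductive} applies, for whatever scale $i$ the sampled chain $\bbC$ splits $(u,v)$ at.

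Next, let $I = I(\bbC)$ be the random variable denoting the scale at which the sampled $\bbC \sim \frakC$ splits $(u,v)$. If $\bbC$ splits $(u,v)$ at scale $i$, then by definition some cluster of $\cC_i$ contains exactly one of $u, v$, which means the edge $e$ is cut by $\cC_i$. The $\beta$-separating property of $\frakC$ (\Cref{def:beta-sep}) therefore yields
\[
    \Pr_{\bbC\sim\frakC}[I = i] \;\le\; \Pr_{\bbC\sim\frakC}[\,e \text{ is cut by } \cC_i\,] \;\le\; \beta \cdot \frac{\len{e}}{2^{i}}.
\]
The possible values of $i$ range over $\set{0, 1, \ldots, \lceil\log\Phi\rceil}$ since the top cluster of any chain has diameter at most $\Phi$ after rescaling so that $\min_{u\neq v}\dist_G(u,v) = 1$.

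Finally, I would combine these ingredients via the tower property. Conditioning on $I=i$, \Cref{lem:distortion-inductive} gives $\mathbb{E}[\dist_{\hat G}(u,v)\mid I=i] \le \dist_G(u,v) + O(2^{i}\log(n\Phi)/\psi)$, where the inner expectation is over the $\textsc{Cut}$ randomness used inside $\textsc{Embed}_{\bbC}$. Using $\dist_G(u,v)\le \len{e}$ and summing,
\[
    \mathbb{E}[\dist_{\hat G}(u,v)] \;\le\; \len{e} \;+\; \sum_{i=0}^{\lceil\log\Phi\rceil} \beta\cdot\frac{\len{e}}{2^{i}} \cdot O\!\Paren{2^{i} \cdot \frac{\log(n\Phi)}{\psi}}.
\]
The $2^{i}$ factors cancel, leaving $O(\log\Phi)$ identical terms, which yields the advertised $(1 + O(\beta\log(n\Phi)\log\Phi/\psi))\cdot\len{e}$ bound.

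There is really no serious obstacle here once \Cref{lem:distortion-inductive} is in hand: the only things to double-check are (i) that the base pseudo-edge hypothesis holds vacuously because $\bdry\cC=\varnothing$, (ii) that ``split at scale $i$'' genuinely implies ``$e$ is cut by $\cC_i$'' so that the separating probability bound applies, and (iii) that the geometric cancellation of $2^{i}$ across $O(\log\Phi)$ scales produces exactly one extra $\log\Phi$ factor beyond the per-scale overhead of $\log(n\Phi)/\psi$ inherited from \Cref{lem:distortion-inductive}.
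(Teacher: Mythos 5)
Your proposal is correct and follows essentially the same route as the paper: condition on the scale $i$ at which the sampled chain splits $(u,v)$, apply \Cref{lem:distortion-inductive} (valid since every edge is a pseudo-edge with respect to $\varnothing$), bound $\Pr[\text{split at scale } i]$ by the $\beta$-separating property via the observation that splitting at scale $i$ forces $e$ to be cut by $\cC_i$, and sum the $O(\log\Phi)$ scales after the $2^i$ factors cancel. The only cosmetic difference is that you make the implication ``split at $i$ $\Rightarrow$ cut by $\cC_i$'' explicit, which the paper leaves implicit.
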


\begin{proof}
The first step in creating $\hat{G}$ is to sample a clustering chain $\bbC = (\cC_0, \ldots, \cC_k)$ from a $\beta$-separating distribution $\frakC$. We split the analysis into cases, based on the scale $i$ in $\set{0, \ldots, k}$ that splits $(u, v)$. There are $k = O(\log \Phi)$ different cases: 
    \begin{align*}
        \mathbb{E}[\dist_{\hat{G}}(u,v)] &=  \sum_{i=1}^{k} \mathbb{E}[\dist_{\hat{G}}(u,v) \mid \bbC \text{ splits $(u, v)$ at $i$}] \cdot \Pr[\bbC\text{ splits $(u, v)$ at $i$}]\\
        &\le \sum_{i=1}^{k} \Paren{ \len{e} + O\Paren{ 2^i \cdot \frac{\log (n\Phi)}{\psi} } } \cdot \Pr[\bbC\text{ splits $(u, v)$ at $i$}]&\text{(by \Cref{lem:distortion-inductive})}\\
        &\le \len{e} + \sum_{i=1}^{k} O\Paren{ 2^i \cdot \frac{\log (n\Phi)}{\psi} } \cdot \Pr[\bbC\text{ splits $(u, v)$ at $i$}]\\
        &\le \len{e} + \sum_{i=1}^{k} O\Paren{ 2^i \cdot \frac{\log (n\Phi)}{\psi} } \cdot \beta\cdot \frac{\len{e}}{2^i}
        &\text{(by \Cref{def:beta-sep})}\\
        &\le \len{e} + O\Paren{\beta \cdot \frac{\log (n\Phi)\log \Phi}{\psi}} \cdot \len{e}
    \end{align*}
The inequality in the second line follows because every edge $(u,v)$ is a pseudo-edge with respect to $\varnothing$; thus we may apply~\Cref{lem:distortion-inductive}. The last line follows from the fact that $k = O(\log \Phi)$.
\end{proof}

\bigskip
\noindent It remains to prove~\Cref{lem:distortion-inductive}.
\begin{proof}[of~\Cref{lem:distortion-inductive}]
Recall that we defined the potential function \[
\phi(T, P\/, \bdry T) \coloneqq (5 \log |T|) + \mathrm{scale}(P) + \frac{|\bdry T|}{\tau}. 
\]
To prove the lemma, we prove the stronger statement:
\begin{quote}
    If $\textsc{Embed}_{\bbC}(T, P\/, \bdry T, \bdry \cC)$ is called, where (1) $T$ contains $u$ and $v$, and (2) $(u,v)$ is a pseudo-edge with respect to $\bdry \cC$ in $P$, and then
    \[
    \mathbb{E}[\dist_{\hat{G}}(u,v)] \le \dist_G(u,v) + \alpha \cdot \frac{2^i \cdot \phi(T, P\/, \bdry T)}{\psi}
    \]
    where $i$ is the scale at which $\bbC$ splits $(u,v)$, and where $\alpha$ is a sufficiently large constant.
\end{quote}
Notice that $\phi(T, P\/, \bdry T) = O(\log (n\Phi))$, so proving this is sufficient to prove the lemma. We proceed by induction on $\phi(T, P\/, \bdry T)$. 

\medskip \noindent \textbf{Base case.} 
The base case is when $|T| \le 4 \cdot \tau$. (In particular, this case must occur if $\phi(T, P\/, \bdry T) \le  4 \cdot \tau$.) In this case, the graph $\hat{G}$ is a clique on $T$, and $\dist_{\hat{G}}(u,v) = \dist_G(u,v)$.

\medskip \noindent \textbf{Inductive case.} 
In the inductive case, the procedure $\textsc{Embed}$ samples a cut $\cF$ that conforms to $\bdry \cC$ in Step 3 from distribution $\frakF$. We will consider three cases below. To find the expected distortion, we will take a weighted sum of the distortion incurred for each of these cases. To simplify notation, we define the \EMPH{distortion overhead $D$} to be
\[
D \coloneqq \dist_{\hat{G}}(u,v) - \Paren{\dist_G(u,v) + \alpha \cdot \frac{2^i \cdot (\phi(T, P\/, \bdry T) - 1)}{\psi}}.
\]
(Note that the term in the parenthesis will arise naturally when we appeal to the induction hypothesis.) To prove the claim, we show that $\mathbb{E}[D] \le \alpha\cdot \frac{2^i}{\psi}$ where the expectation is over the sampled random cuts.

\begin{figure}[!htb]
\centering
\includegraphics[width=0.8\linewidth]{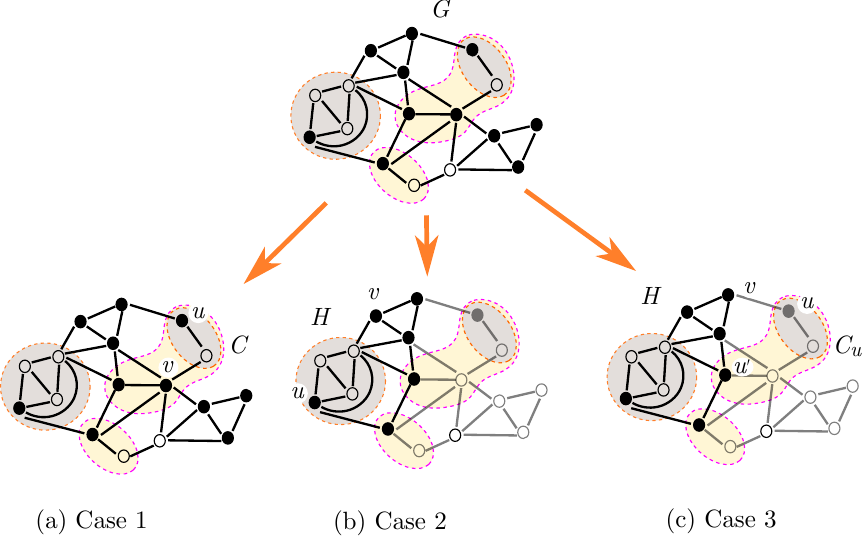}
\caption{Three cases in the analysis of the distortion overhead $D$. Filled vertices are terminals. The gray clusters form $\bdry \cC$.  The yellow clusters form $\cF$. 
}
\label{fig:distortion}
\end{figure}

\bigskip \noindent \EMPH{Case $\xi_1$}: Vertices $u$ and $v$ are assigned to the same cluster $C$ in $\cF$; see \Cref{fig:distortion}(a).

\smallskip \noindent In this case, the procedure recurses on $\textsc{Embed}(T_C, G[C], \bdry T_C, \bdry \cC_C)$.
By~\Cref{lem:recursion-depth}, $\phi(T_C, G[C], \bdry T_C) \le \phi(T, P\/, \bdry T) - 1$. By assumption, both $u$ and $v$ are in $T_C$, and $(u,v)$ is a pseudo-edge with respect to $\bdry \cC_C$. Thus, we can apply induction, and conclude that
    $\mathbb{E}[\dist_{\hat{G}}(u,v) \mid \text{Case $\xi_1$ occurs}] \le \dist_G(u,v) + \alpha \cdot \frac{2^i \cdot (\phi(T, P\/, \bdry T) - 1)}{\psi}.$ In other words, 
    \[\mathbb{E}[D \mid \text{Case $\xi_1$ occurs}] = 0.\]

\medskip \noindent \EMPH{Case $\xi_2$}: Vertices $u$ and $v$ are assigned to the same connected component $H$ of $P \setminus \bigcup \cF$; see \Cref{fig:distortion}(b).

\smallskip \noindent  This case is almost the same as the Case~$\xi_1$.
In this case, we recurse on $\textsc{Embed}_{\bbC}(T_H, P\/, \bdry T_H, \bdry \cC_H)$. By~\Cref{lem:recursion-depth}, $\phi(T_H, P\/, \bdry T_H) \le \phi(T, P\/, \bdry T) - 1$. 
Vertices $u$ and $v$ are in $T_H$, and $(u,v)$ is a pseudo-edge with respect to $\bdry \cC_H$ (because it is a pseudo-edge with respect to $\bdry \cC \subseteq \bdry \cC_H$). 
By induction,  
$\mathbb{E}[\dist_{\hat{G}}(u,v) \mid \text{Case $\xi_2$ occurs}] \le \dist_G(u,v) + \alpha \cdot \frac{2^i \cdot (\phi(T, P\/, \bdry T) - 1)}{\psi}$. 
In other words, 
\[\mathbb{E}[D\mid \text{Case $\xi_2$ occurs}] = 0.\]

\medskip \noindent \EMPH{Case $\xi_3$}: Vertices $u$ and $v$ are cut by $\cF$; 
see \Cref{fig:distortion}(c).
    
\smallskip \noindent  We claim that (without loss of generality) \emph{there is some cluster \EMPH{$C_u$} in $\cF$ that contains $u$ (but not $v$).} 
Indeed, let $\eta:V \to V(G/\bdry \cC)$ be the map from $G$ to the contracted graph $G/\bdry \cC$. 
Because $\cF$ conforms $\bdry \cC$, every cluster $C$ in $\cF$ corresponds to a cluster in $G / \bdry \cC$, denoted by $\eta(C)$, whose preimage is $C$.  (That is, every vertex $c$ with $\eta(c) \in \eta(C)$ satisfies $c \in C$.) 
Because $u$ and $v$ are cut by $\cF$, the corresponding vertices $\eta(u)$ and $\eta(v)$ are also cut by the clusters corresponding to $\cF$.
As there is an edge between $\eta(u)$ and $\eta(v)$, they can be cut only if there is some cluster $\eta(C_u)$ from $\cF$ containing either $\eta(u)$ or $\eta(v)$ (without loss of generality, $\eta(u)$). Thus, $C_u$ contains $u$.

The procedure $\textsc{Embed}$ selects an arbitrary terminal $u'$ from $C_u \cap T$ in Step~4, and adds an edge to $\hat{G}$ between $u$ and $u'$, with weight $\dist_G(u',u)$ in Step~5. 
It also recursively calls $\textsc{Embed}_{\bbC}(T_H, P\/, \bdry T_H, \bdry \cC_H)$ to get output $(\hat{G}_H, \hat{\cT}_H)$.
By~\Cref{obs:edge-cut}, the pair $(u', v)$ is a pseudo-edge with respect to $\bdry \cC_H$ which contains ${C_u}$ by definition of $\bdry \cC_H$ in Step 4 of $\textsc{Embed}_{\bbC}(T, P\/, \bdry T, \bdry \cC)$. 
Further, $T_H$ contains both $u'$ and $v$, and by~\Cref{lem:recursion-depth} $\phi(T_H, P\/, \bdry T_H) \le \phi(T, P\/, \bdry T) - 1$.
Finally, \Cref{obs:split} implies that $\bbC$ splits $(u', v)$ at scale $i$.
Thus, we can apply induction to conclude that, in the output graph $\hat{G}_H$, we have 
\[\mathbb{E}[\dist_{\hat{G}_H}(u',v)] \le \dist_G(u',v) + \alpha \cdot \frac{2^i \cdot (\phi(T, P\/, \bdry T) - 1)}{\psi}.\]
We now split into subcases based on $C_u$. 

\begin{itemize}
    \item \EMPH{Case $\xi_{3}[\varnothing]$}: The cluster $C_u$ satisfies $|C_u \cap T| = 1$.
    
    In Case $\xi_3[\varnothing]$, we have $u = u'$. 
    Thus, in this case $\mathbb{E}[\dist_{\hat{G}_H}(u',v) \mid \text{Case $\xi_3[\varnothing]$ occurs}] = \dist_G(u,v) + \alpha \cdot \frac{2^i \cdot (\phi(T, P\/, \bdry T) - 1)}{\psi}$, and so 
    \[\mathbb{E}[D \mid \text{Case $\xi_3[\varnothing]$ occurs}] = 0.\]
    \item  \EMPH{Case $\xi_3[s]$} (see \Cref{fig:distortion}(c)): The cluster $C_u$ satisfies $|C_u \cap T| > 1$ and is at scale $s$. (Note that $s\leq i$ since $(u,v)$ is split at scale $i$.)
    
    In this case, cluster $C_u$ has diameter $2^s$. 
    Thus, we have $\dist_G(u', u) \le 2^s$, and $\dist_G(u', v) \le \dist_G(u', u) + \dist_G(u, v) \le 2^s + \dist_G(u,v)$.
    We conclude:
    \begin{align*}
        \mathbb{E}[\dist_{\hat{G}}(u,v) \mid \text{Case $\xi_3[s]$ occurs}] &\le \dist_G(u, u') + \mathbb{E}[\dist_{\hat{G}_H}(u',v)]\\
        &\le \dist_G(u,v) + 2^{s+1} + \alpha \cdot \frac{2^i \cdot (\phi(T, P\/, \bdry T) - 1)}{\psi}
    \end{align*}
    In other words, 
    \[
    \mathbb{E}[D \mid \text{Case $\xi_3[s]$ occurs}] \le 2^{s+1}.
    \]
\end{itemize}
The expected value $\mathbb{E}[\dist_{\hat{G}}(u,v)]$ is a weighted sum of these cases. We have 
\begin{align*}
    \mathbb{E}[\dist_{\hat{G}}(u,v)]
    &= \sum_{j = 1}^3 \mathbb{E}[\dist_{\hat{G}}(u,v) \mid \text{Case $\xi_j$ occurs}] \cdot \Pr[\text{Case $\xi_j$ occurs}]\\
    &= \dist_G(u,v) + \alpha \cdot \frac{2^i \cdot (\phi(T, P\/, \bdry T) - 1)}{\psi} + \sum_{j = 1}^3 \mathbb{E}[D \mid \text{Case $\xi_j$ occurs}] \cdot \Pr[\text{Case $\xi_j$ occurs}]\\
    &= \dist_G(u,v) + \alpha \cdot \frac{2^i \cdot (\phi(T, P\/, \bdry T) - 1)}{\psi} + \sum_{s = 1}^{i} \mathbb{E}[D \mid \text{Case $\xi_3[s]$ occurs}] \cdot \Pr[\text{Case $\xi_3[s]$ occurs}]\\
    &\le \dist_G(u,v) + \alpha \cdot \frac{2^i \cdot (\phi(T, P\/, \bdry T) - 1)}{\psi} + \sum_{s = 1}^{i} 2^{s+1} \cdot \Pr[\text{Case $\xi_3[s]$ occurs}]
\end{align*}
where the second-to-last line follows because $\mathbb{E}[D]$ is only nonzero in case $\xi_3[s]$. 
(Notice that the sum involves $i$ terms, because by assumption $(u,v)$ is not cut by any clusters with scale larger than $i$.)
Now, observe that for each scale $s$, there are at most two clusters in $\bbC$ at scale $s$ that contain $u$ or $v$. 
(This is because each clustering in $\bbC$ is a partition of vertices.) 
If Case $\xi_3[s]$ occurs, then one of these two clusters was chosen. 
Since $\cF$ is sampled from a $(\tau,\psi)$-stochastic balanced cut, $C_u$ is chosen with probability at most $1/\psi$. 
(This is by [Low probability] property in \Cref{def:stoc-Bcut}; 
one can readily check that if some cluster $C_u$ is chosen in Case $\xi_3[s]$, then $C_u$ is not a singleton in $\bbC$, 
nor is it in $\bdry \cC$ (as by definition every cluster in $\bdry \cC$ contains at most 1 vertex in $T$.)
Thus, for any $s$, Case $\xi_3[s]$ occurs with probability at most $2/\psi$. We have:
\begin{align*}
    \mathbb{E}[\dist_{\hat{G}}(u,v)]
    &\le \dist_G(u,v) + \alpha \cdot \frac{2^i \cdot (\phi(T, P\/, \bdry T) - 1)}{\psi} + \sum_{s = 1}^{i} 2^{s+1} \cdot \frac{2}{\psi}\\
    &\le \dist_G(u,v) + \alpha \cdot \frac{2^i \cdot (\phi(T, P\/, \bdry T) - 1)}{\psi} + 8\cdot \frac{2^{i}}{\psi}\\
    &\le \dist_G(u,v) + \alpha \cdot \frac{2^i \cdot \phi(T, P\/, \bdry T)}{\psi}
\end{align*}
where the last line follows if $\alpha$ is sufficiently large. (Choosing $\alpha = 8$ suffices.)
\end{proof}

\section{Stochastic Balanced Cuts: Proof of Theorem~\ref{thm:balanced-cuts}}
\label{sec:balanced-cuts}

In this section, we prove \Cref{thm:balanced-cuts}. 
A key concept in the construction of a stochastic balanced cut is the notion of \emph{$(a,b,c)$-contraction sequence} introduced by \cite{CLPP23}. Roughly speaking, it represents a contraction of a graph $G$ into a single vertex through $b$ rounds, where in each round, we contract a certain number of vertex-disjoint subgraphs of radius at most $c$; the total number of subgraphs we contract in $b$ rounds is $a$. See \Cref{fig:contract-sequene}(a). 

\begin{definition}[$(a,b,c)$-Contraction Sequence]\label{def:contr-seq} 
For a connected unweighted graph $G$ and integers $a,b,c\geq 0$, an \EMPH{$(a,b,c)$-contraction sequence} consists of:
\begin{itemize}
\item a sequence of graphs $G_0,G_1,\ldots,G_b$, and
\item for every $i \in [1 \,..\, b]$, a collection of pairwise disjoint subgraphs $H_i^1,\ldots,H_i^{a_i}$
of $G_{i-1}$,
\end{itemize}
such that the following conditions hold:
\begin{itemize}
\item $G_0 = G$ and $G_b$ is a one-vertex graph.
\item Each $H_i^j$ is of radius at most $c$.
\item $G_i$ is obtained from $G_{i-1}$ by contracting each of the subgraphs $H_i^j$, $j \in [1 \,..\, a_i]$, to a single vertex.
\item It holds that $\sum_{i=1}^b a_i \leq a$.
\end{itemize}
\end{definition}

\begin{figure}[!t]
\centering
\includegraphics[width=1.0\linewidth]{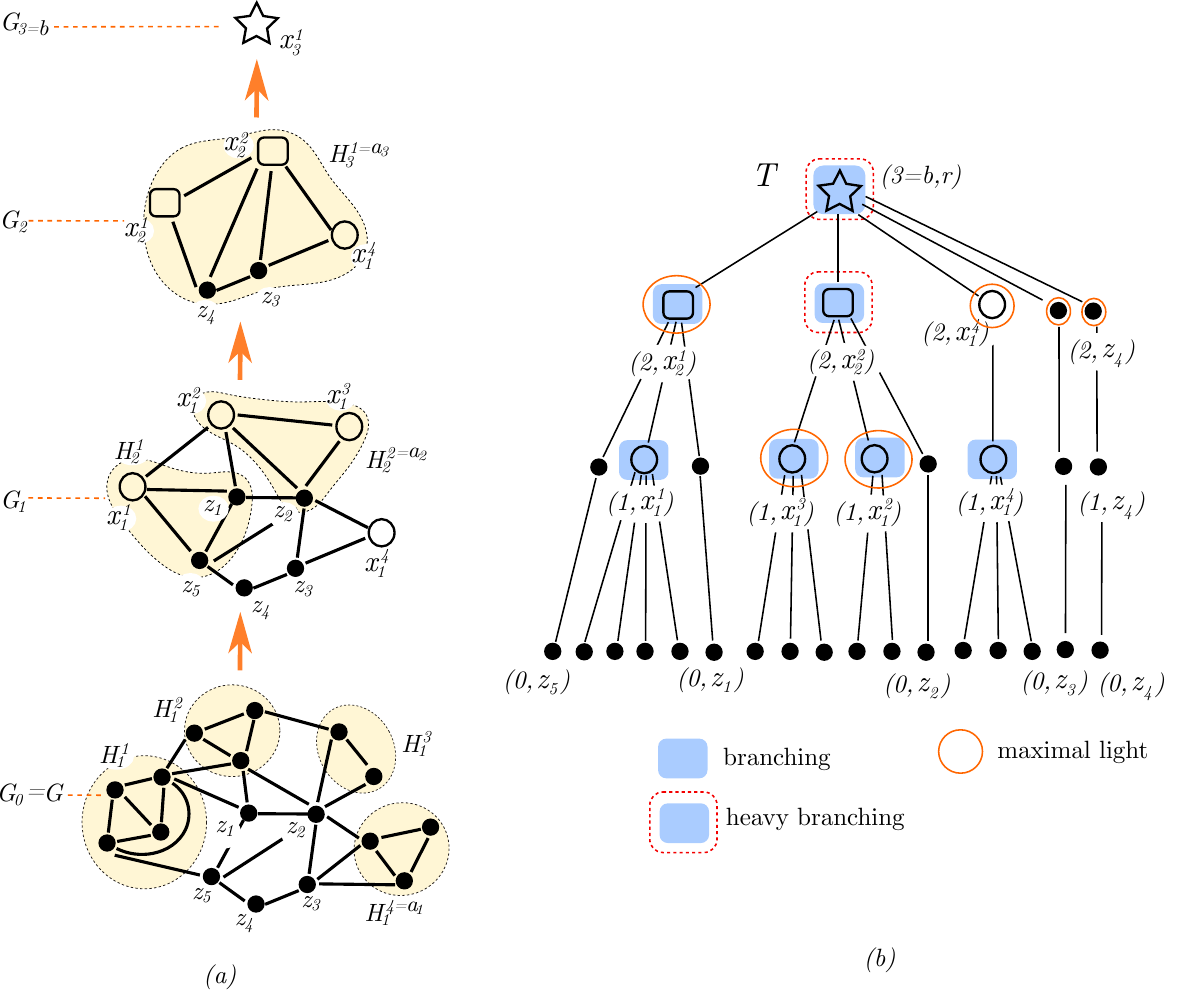}
\caption{(a) An $(a,b,c)$ contraction sequence with $a = 7, b= 3, c=1$. (b) A tree $T$ associated with the contraction sequence in the proof of \Cref{lem:tw}. 
$M$ contains exactly all the nodes at level 2; $(2,x^2_2)$ is minimally heavy and the only node in $M_1$; the rest of nodes at level 2 are in $M_2$.}
\label{fig:contract-sequene}
\end{figure}
\cite{CLPP23} showed that if an apex-minor-free graph $G$ admits an $(a,b,c)$-contraction sequence, then $\tw(G) = O(\sqrt{a}bc)$. 
Their basic idea is to argue that there exists a set $S$ of $a$ vertices such that all other vertices are within hop distance $O(bc)$ from the set, and then apply \Cref{lem:tw:balls}. 
The set $S$ is obtained by picking one vertex per contracted subgraph.   
Here, we improve the treewidth bound by a factor of $\sqrt{b}$ to $O(\sqrt{ab}c)$, which is best possible; see \Cref{rm:contraction-opt}.  
First, by a simple reduction, we can assume that $c = 1$. 
Second, which is also the bulk of the technical details, we show the following structural property: Let $Z$ be any maximal set of vertices in $G$ such that their pairwise (hop) distance in $G$ is $\Omega(b)$\footnote{known as an \emph{$r$-net} with $r = \Omega(b)$}, then $|Z| \leq O(\frac{a}{b})$. 
By \Cref{lem:tw:balls}, we can conclude that $\tw(G) = O(\sqrt{Z}b) = O(\sqrt{ab})$ (for the case $c=1$), which ultimately leads to our improved bound. 

\begin{lemma}\label{lem:tw}
For every apex graph $K$, there exists a constant $\beta_K$ such that if a $K$-minor-free graph $G$ admits an $(a,b,c)$-contraction sequence for some integers $a,b,c \geq 1$, 
   then  $\tw(G)\leq \beta_K \sqrt{ab}c$. 
\end{lemma}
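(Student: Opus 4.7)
The plan is to prove Lemma~\ref{lem:tw} in three stages: (i) reduce to the case $c=1$ by refining each radius-$c$ contraction into a chain of $c$ star contractions, (ii) for $(a,b,1)$-contraction sequences, establish a structural claim that a suitable ``$b$-net'' in $V(G)$ has size $O(a/b)$, and (iii) combine with \Cref{lem:tw:balls} to conclude $\tw(G) \leq \beta_K c \sqrt{ab}$.

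For the reduction to $c=1$, I would replace each radius-$c$ subgraph $H_i^j$ contracted in round $i$ by a chain of $c$ successive star contractions. If $H_i^j$ has center $v$ and BFS layers $L_0, L_1, \ldots, L_c$, then in sub-round $k$ I would contract the star whose center is the already-merged supernode $L_0 \cup \cdots \cup L_{k-1}$ and whose leaves are the vertices of $L_k$; these leaves are adjacent to the center in the current graph because they are adjacent to $L_{k-1}$ in $G$ by the BFS structure. Since the $H_i^j$ are vertex-disjoint in $G_{i-1}$, the $c$ sub-rounds process all $a_i$ subgraphs of round $i$ in parallel, contributing $c \cdot a_i$ stars total. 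Summing over $i$ yields a valid $(ac, bc, 1)$-contraction sequence; so a bound $\tw(G') \leq \beta_K \sqrt{a'b'}$ in the $c=1$ regime gives $\tw(G) \leq \beta_K c \sqrt{ab}$ in general.

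For the $c=1$ case, I would exhibit a net $Z \subseteq V(G)$ with covering radius $O(b)$ and $|Z| = O(a/b)$; then \Cref{lem:tw:balls} yields $\tw(G) \leq \alpha_K \cdot O(b) \cdot \sqrt{O(a/b)} = O(\sqrt{ab})$. Concretely, I would build the contraction tree $T$ whose leaves are the vertices of $G$, whose internal nodes are the $a$ star contractions across the $b$ rounds (each labelled by its round $r(v)\in\{1,\ldots,b\}$, strictly decreasing along any root-to-leaf path), and whose edges link each supernode to the one into which it is absorbed. A simple induction shows that a supernode at round $i$ has hop-diameter in $G$ at most $f(i) = O(3^i)$, since one star merge of components of diameter $\le d$ produces a supernode of diameter $\le 3d+2$. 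Hence the LCA in $T$ of any two $Z$-vertices at pairwise $G$-distance $\ge b$ must lie at round $\Omega(\log b)$; otherwise their $G$-distance would fall below $b$.

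The bulk of the technical work is to sharpen the naive estimate $|Z| = O(a/\log b)$ (obtained directly from the LCA-round observation together with $\sum_i a_i = a$) to the target $|Z| = O(a/b)$. I would attempt this through a heavy/light decomposition of $T$ at a carefully chosen threshold level $L$: partition the level-$L$ nodes into those whose subtree contains a single $Z$-leaf (\emph{minimally heavy}) and those with multiple $Z$-leaves (\emph{heavy}), and then run a charging argument that exploits both the global budget $\sum_i a_i = a$ and the strict monotonicity of round labels along root-to-leaf paths to extract a full factor of $b$ rather than $\log b$ per $Z$-vertex. This is where I expect the main technical difficulty to lie; once the structural claim is in hand, invoking \Cref{lem:tw:balls} and combining with Step~(i) immediately yields the desired bound $\tw(G) \le \beta_K c\sqrt{ab}$.
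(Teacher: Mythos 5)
There is a genuine gap. Your outer skeleton matches the paper's proof: reduce to $c=1$ by splitting each radius-$c$ contraction into $c$ radius-$1$ contractions, then show that any set $Z$ of vertices at pairwise hop-distance $\Omega(b)$ with covering radius $O(b)$ has $|Z| = O(a/b)$, and finish with \Cref{lem:tw:balls}. But the entire content of \Cref{lem:tw} lies in that middle step, which you explicitly defer ("this is where I expect the main technical difficulty to lie"), and the one structural estimate you do supply points in the wrong direction. You bound the $G$-diameter of a supernode at round $i$ by $O(3^i)$ via the recurrence $d \mapsto 3d+2$; this is exponential in the round index and is not the quantity that drives the argument. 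The key observation in the paper is that un-contracting a radius-$1$ subgraph increases diameter by at most $2$, so the subgraph $G[(i,v)]$ contracted onto a node satisfies $\mathrm{diam}(G[(i,v)]) \le 2\,\mathrm{weight}(i,v)$, where the weight is the number of branching (i.e.\ contraction) nodes in its subtree --- a bound that is \emph{linear in the contraction budget used} and independent of the round index. This is what makes a factor of $b$ (rather than $\log b$) extractable from the total budget $\sum_i a_i \le a$.

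Concretely, the paper's heavy/light decomposition is by weight (heavy means weight at least $b$), not by a threshold level $L$ or by counting $Z$-leaves in subtrees as you propose. Taking $M$ to be the minimal heavy nodes together with the maximal light nodes whose parents are not minimal heavy, every root-to-leaf path meets $M$ exactly once, so the weights of $M$ sum to at most $a$ and $|M_1| \le a/b$; light pieces have diameter at most $2(b-1)$ and minimal heavy pieces at most $6b-1$; and since $G/M$ collapses in $b$ rounds, every $M$-node is within $b$ hops in $G/M$ of an $M_1$-node. Combining these, one shows that every vertex $x$ has, within distance $9b$, a collection of $M$-nodes of total weight at least $b$ (either a nearby $M_1$-node directly, or the $M_2$-nodes along a $\le b$-hop path in $G/M$ toward a non-close node, whose weights must sum to at least $b$ because the path has $G$-length at least $3b$ and each $M_2$-node contributes at most $2\cdot\mathrm{weight}$). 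Disjointness of the pieces then packs at least $b$ weight per element of an $18b$-net $Z$, giving $|Z| \le a/b$ and the claimed $O(\sqrt{ab})$ bound. Your proposal contains neither the linear-in-weight diameter bound nor any argument of this packing type, and a level-based threshold cannot substitute for it, since the round index of a node carries essentially no information about how much of the budget $a$ was spent below it; so as written the proof of the crucial claim $|Z| = O(a/b)$ is missing.
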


\begin{proof}
\emph{Reduction to the case when $c=1$.~}
Observe first that if a graph $H$ is of radius at most $c$, then it admits a $(c,c,1)$-contraction sequence: 
if $v \in V(H)$ is such that every vertex of $H$ is within distance at most $c$ from $v$, then iteratively contracting all neighbors of $v$ onto $v$ for $c$ rounds
turns $H$ into a one-vertex graph. 
Consequently, if $G$ admits a $(a,b,c)$-contraction sequence, then it also admits a $(ac,bc,1)$-contraction sequence: 
replace every contraction of a graph $H_i^j$ of radius at most $c$ with at most $c$ contractions of radius $1$. 
Since $\beta_K \sqrt{ab} c = \beta_K \cdot 1 \cdot \sqrt{ac \cdot bc}$, it suffices to prove the lemma for the case $c=1$.
That is, in the remainder of the proof we assume that a $K$-minor-free graph $G$ admits a $(a,b,1)$-contraction sequence for some $a,b \geq 1$
and we prove that the treewidth of $G$ is bounded by $\beta_K \sqrt{ab}$ for some constant $\beta_K$ depending only on $K$.

\bigskip
Let $G_0,G_1,\ldots,G_b$, $a_1,\ldots,a_b$, and $H_i^j$ for $i \in [1 \,..\, b]$ and $j\in [1 \,..\, a_i]$ be as in the definition of an $(a,b,1)$-contraction sequence for $G$,
where $G = G_0$ and $V(G_b) = \{r\}$. 
For $i \in [1 \,..\, b]$ and $j\in [1 \,..\, a_i]$, let $\EMPH{$x_i^j$} \in V(G_i)$ be the vertex of $G_i$ that is the result of the contraction of $H_i^j$.
Without loss of generality, we can assume that every graph $H_i^j$ has at least two vertices, and that $a = \sum_{i=1}^b a_i$ and $a_i > 0$ for every $i \in [1 \,..\, b]$
(in particular, $a \geq b$). 
We view the contraction sequence as a rooted tree $T$; see \Cref{fig:contract-sequene}(b). That is, 
we set $V(T) = \bigcup_{i=0}^b \{i\} \times V(G_i)$, indicate $(b,r)$ as the root of $T$, 
and for every $i \in [1 \,..\, b]$ and $v \in V(G_{i-1})$:
\begin{itemize}
    \item If there exists  $j\in [1 \,..\, a_i]$ such that $v \in V(H_i^j)$, then the parent of $(i-1,v)$ is $(i,x_i^j)$.
    \item Otherwise, the parent of $(i-1,v)$ is $(i,v)$. 
\end{itemize}
A node of $T$ is \EMPH{branching} if it has at least two children. Since every $H_i^j$ has at least two vertices, a node $(i,v)$ is branching if and only if
there exists $j\in [1 \,..\, a_i]$ such that $v = x_i^j$. Consequently, there are exactly $a$ branching nodes of $T$.
The \EMPH{weight} of a node $(i,v)$ of $T$, denoted \EMPH{$\mathrm{weight}(i,v)$}, is the number of branching vertices in the subtree of $T$ rooted at $(i,v)$ (including possibly the node $(i,v)$ itself). 
A node is \EMPH{heavy} if it is of weight at least $b$ and \EMPH{light} otherwise. 
Note that all leaves of $T$ are light (as they are of weight $0$) while the root is heavy (as it is of weight $a$, which by assumption is at least $b$).

\medskip
We say that a node is \EMPH{maximal light} if it is light, but its parent is heavy, and \EMPH{minimal heavy} if it is heavy, but all its children are light. 
Note that all children of a minimal heavy node are maximal light, but a parent of a maximal light node is always heavy, but may not be minimal heavy. 
Let \EMPH{$M_1$} be the family of all minimal heavy nodes and \EMPH{$M_2$} be the family of those maximal light nodes, whose parents are not minimal heavy. 
Let $\EMPH{$M$} \coloneqq M_1 \cup M_2$. 
We observe the following.
\begin{equation}\label{eq:M}
\text{On every root-to-leaf path in $T$ there is exactly one node of $M$.}
\end{equation}
Indeed, since all leaves of $T$ are light while the root is heavy, on every root-to-leaf path in $T$ there is exactly light node $(i,v)$
with a heavy parent $(i+1,w)$. Then either $(i+1,w)$ is minimal heavy and is in $M_1$ (but then $(i,v)$ is not in $M_2$) or $(i+1,w)$ is not minimal heavy
(but then $(i,v)$ is in $M_2$). This proves~\eqref{eq:M}.

\medskip
We infer that the sum of weights of elements of $M$ is at most $a$. This in particular implies that
\begin{equation}\label{eq:M1bound}
|M_1| \leq a/b.
\end{equation}
Furthermore, note that every node of $M_1$ is branching, as the weight of a non-branching non-leaf node of $T$ is equal to the weight of its only child.

\medskip
For a node $(i,v)$ of $T$, let \EMPH{$G[(i,v)]$} be the subgraph of $G$ induced by those vertices $w$ where $(0,w)$ is a descendant of $(i,v)$ in $T$.
In other (less formal) words, $G[(i,v)]$ is the subgraph that is contracted onto~$v$ in $G_i$ in the contraction process. 
Note that~\eqref{eq:M} implies that $\Set{\big. V(G[(i,v)]) \mid (i,v) \in M}$ is a partition of~$V$.
We observe the following.
\begin{equation}
\label{eq:trivial-diam}
\text{If $(i,v) \in V(T)$ is of weight $p$, then $\mathrm{diam}(G[(i,v)]) \leq 2p$.}
\end{equation}
Indeed, to see~\eqref{eq:trivial-diam}, observe that a contraction of a subgraph of radius $1$ in a connected graph can decrease the diameter by at most $2$. 
By induction, if a graph $H$ can be turned into a one-vertex graph by a series of $p$ contractions of subgraphs of radius $1$, then the diameter of $H$ is bounded by $2p$.
This proves~\eqref{eq:trivial-diam}.

\medskip
We infer that for every light $(i,v)$, the diameter of $G[(i,v)]$ is bounded by $2(b-1)$. 
We now show a bound on the diameter of $G[(i,v)]$ for minimal heavy $(i,v)$.
\begin{equation}\label{eq:M1-diam}
\text{For every $(i,v) \in M_1$, $\mathrm{diam}(G[(i,v)]) \leq 6b-1$.}
\end{equation}
Let $(i,v)$ be a minimal heavy node. Recall that $(i,v)$ is branching, and thus $v = x_i^j$ for some $j\in [1 \,..\, a_i]$.
Hence, the children of $(i,v)$ in $T$ are $(i-1,w)$ for $w \in V(H_i^j)$. 
Let $w_0 \in H_i^j$ be such that every vertex of $H_i^j$ is $w_0$ or a neighbor of $w_0$. 
As $(i-1,w)$ is light for $w \in V(H_i^j)$, the diameter of $G[(i-1,w)]$ is at most $2(b-1)$. 
We infer that the diameter of $G[(i,v)]$ is bounded by $3 \cdot 2(b-1) + 2 = 6b-1$. 
This proves~\eqref{eq:M1-diam}.

\medskip
Let \EMPH{$G/M$} be the graph obtained from $G$ by contracting, for every $(i,v) \in M$, the graph $G[(i,v)]$ into a single vertex. 
We identify $V(G/M)$ with the set $M$. 
(Recall that {$G / \mathcal{X}$} denote the graph obtained by contracting every cluster in $\mathcal{X}$ into a vertex; here we contract clusters associated with all $(i,v) \in M$.)
The crucial observation is the following.
\begin{equation}\label{eq:H-diam}
\text{For every $(i,v) \in M$ there exists $(i',v') \in M_1$ within distance $b$ in $G/M$.}
\end{equation}
Indeed, by the definition of $M$ and the original $(a,b,1)$ contraction sequence, one can turn $G/M$ into a one-vertex graph by iteratively, in $b$ rounds, contracting some neighborhoods of elements of $M_1$. This proves~\eqref{eq:H-diam}.

\medskip
We fix a threshold \EMPH{$\kappa \coloneqq 9b$}. 
For a vertex $x \in V$ and a node $(i,v)$ of $T$, we say that $(i,v)$ is \EMPH{close} to $x$ if every vertex of $G[(i,v)]$ is within distance at most $\kappa$ from $x$ in $G$. 
Let \EMPH{$\mathrm{Close}(x)$} be the set of nodes close to $x$.
We now prove the following critical property.
\begin{equation}\label{eq:crux}
\text{For every vertex $x$, $\sum_{(i,v) \in M \cap \mathrm{Close}(x)} \mathrm{weight}(i,v) \geq b$.}
\end{equation}
To see~\eqref{eq:crux}, first note that if $(i,v) \in M_1$ is close to $x$, then already $\mathrm{weight}(i,v)$ contributes at least $b$
to the sum of~\eqref{eq:crux}. In the remaining case, we observe that, by~\eqref{eq:M1-diam} and the fact that $\Set{\big. V(G[(i,v)]) \mid (i,v) \in M }$ is a partition of $V$, every vertex within distance at most $\kappa-6b$ of $x$ belongs to $G[(i,v)]$ for some $(i,v) \in M_2$.

\begin{figure}[!htb]
\centering
\includegraphics[width=0.4\linewidth]{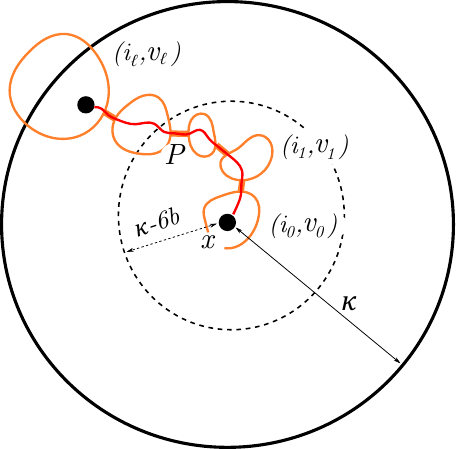}
\caption{Path $P$ from $x$ to a vertex in $G[(i_\ell, v_\ell)]$.}
\label{fig:patX}
\end{figure}

Let $(i_0,v_0) \in M$ be such that $x \in V(G[(i_0,v_0)])$ and let $Q$ be a shortest path (in hops) in $G/M$ from $(i_0,v_0)$ to a node of $M$
that is not close to $x$. 
By~\eqref{eq:H-diam}, the length of $Q$ is $\ell\leq b$, as every node of $M_1$ is not close to $x$.
Let $(i_0,v_0), (i_1,v_1), \ldots, (i_\ell,v_\ell)$ be consecutive vertices of $Q$ where $(i_\ell,v_\ell) \in M_1$ and $(i_j,v_j) \in M_2$ for $j \in [0 \,..\, \ell-1]$. 
By the definition of $G/M$, there is a path $P$ from $x$ to a vertex of $G[(i_\ell,v_\ell)]$ that goes via $G[(i_0,v_0)]$, $G[(i_1,v_1)]$, \ldots, $G[(i_{\ell-1}, v_{\ell-1})]$. See \Cref{fig:patX}.  
By~\eqref{eq:trivial-diam}, the length of $P$ is bounded~by
\[ 
\sum_{j=0}^{\ell-1} 1 + 2 \mathrm{weight}(i_j,v_j) \leq b + 2\sum_{j=0}^{\ell-1} \mathrm{weight}(i_j,v_j). 
\]
On the other hand, the definition of close, together with~\eqref{eq:trivial-diam} and~\eqref{eq:M1-diam} ensure that the length of $P$ is at least $\kappa-6b$. 
We infer that
\[ \sum_{j=0}^{\ell-1} \mathrm{weight}(i_j,v_j) \geq b. \]
As every node $(i_j,v_j)$ for $j \in [0 \,..\, \ell-1]$ is close to $x$, this finishes the proof of~\eqref{eq:crux}.

\medskip
Let \EMPH{$Z$} be a maximal family of vertices of $G$ within pairwise distance more than $2\kappa = 18b$. 
Since the total weight of all nodes of $M$ is at most $a$, we infer from~\eqref{eq:H-diam} that $|Z| \leq a/b$. 
By the definition of $Z$, every node of $G$ is within distance at most $2\kappa$ from an element of $Z$.
Lemma~\ref{lem:tw:balls} implies that the treewidth of $G$ is bounded by
\[ \alpha_K \cdot 2\kappa \cdot \sqrt{|Z|} \leq \alpha_K \cdot 18b \cdot \sqrt{\frac{a}{b}} \leq 18\alpha_K \sqrt{ab}. \]
This finishes the proof of Lemma~\ref{lem:tw} with $\beta_K := 18\alpha_K$, where $\alpha_K$ comes from Lemma~\ref{lem:tw:balls}.
\end{proof}

\begin{remark}\label{rm:contraction-opt}  The bound of Lemma~\ref{lem:tw} is  asymptotically optimal. Pick integers $p,q$ with $2^q \gg p \gg q$ and consider a $pq \times pq$ grid. Let $Z$ be a set of $p^2$ equidistributed vertices of the grid, so that every vertex of the grid is within distance $2q$ from a vertex of $Z$. Iteratively, for $2q$ rounds, contract the neighborhood of $Z$ onto $Z$. This contracts the grid into
a grid of sidelength $p$. Then, within $\Oh(\log p)$ rounds, contract this grid onto a single vertex, by just taking every second vertex of the grid and contracting its neighborhood onto it. 
This gives a $(\Oh(p^2(q + \log p)), \Oh(q + \log p), 1)$-contraction sequence for the original grid of sidelength $pq$. If $q > \log p$, this is a matching lower bound to Lemma~\ref{lem:tw}.
\end{remark}

We are now ready to prove \Cref{thm:balanced-cuts}. 

\begin{proof}[of \Cref{thm:balanced-cuts}] It suffices to show that $(G,\bbC)$ has a $(\tau,\psi)$-stochastic balanced cut with respect to $(\omega_V,\cX)$ since the only property of $G$ that we will use in our construction is apex-minor-free. 
That means the same construction applies to construct a $(\tau,\psi)$-stochastic balanced cut with respect to $(\omega_{C}, \cX)$ for $(G[C], \bbC_{\downarrow C})$ since $G[C]$ is also apex-minor-free.
We will construct a set of balanced cut $\bbF$ such that:
\begin{itemize}
    \item[(i)] $|\bbF| = \psi$.
    \item[(ii)] every cut $\cF \in \bbF$ conforms $\cX$ and contains at most $\tau$ clusters in $\bbC$. 
    \item[(iii)] every \emph{non-singleton} cluster $C \in \bbC \setminus \cX$  appears in (some cut in) $\bbF$ at most once. 
\end{itemize}
Then, the distribution $\frakF$ is simply a uniform distribution over cuts in $\bbF$. That is, one samples a cut $\cF$ with probability $1/|\bbF| = 1/\psi$. As each non-singleton cluster $C \in \bbC \setminus \cX$  appears in $\bbF$ at most once, the probability that $C$ is contained in a sampled cut $\cF$ is at most $1/\psi$. Thus, the existence of $\bbF$ implies \Cref{thm:balanced-cuts}. 

Henceforth, we focus on constructing $\bbF$. 
The construction is the same as in~\cite{CLPP23}, but uses \Cref{lem:tw} instead. We include the details tailored to our notation for completeness. 
Set the threshold for number of clusters in any $\cF$ to be $\EMPH{$\tau$} \coloneqq {\beta'} \cdot h^2 \log\Phi \cdot \psi$ for a constant ${\beta'}$ chosen later. The algorithm is greedy: starting from $\bbF = \varnothing$, if $|\bbF| < \psi$, we will show below that we can add one more cut $\cF$ of size at most $\tau$ to $\bbF$. 
(The size of a cut is the number of clusters within.) 
Thus, the algorithm will terminate when $\bbF$ has exactly $\psi$ cuts, and every cut has a size at most~$\tau$.

Now, we construct a balanced cut $\cF$ of size at most $\tau$. 
Let \EMPH{$|\bbF|$} be the number of cuts currently in~$\bbF$. 
Note that $|\bbF| < \psi $, implying that:
\begin{equation}\label{eq:psi-kappa}
    |\bbF|  < \frac{\tau}{{\beta'} \cdot h^2 \log\Phi}.
\end{equation}

View the clustering chain $\bbC = (\cC_0,\cC_1,\ldots, \cC_k)$ where $k = \lceil \log\Phi \rceil$ as a tree with the root corresponding to a single cluster $V$ (in $\cC_k$) with leaves being singletons in $\cC_0$. We mark the root of $\bbC$, and all clusters in $\bbF\setminus \cX$ as \EMPH{unavailable}. All other clusters are marked \EMPH{available}; these include (not only) clusters in $\cX$ and singleton clusters in $\cC_0$.  We say that a cluster $C$ is \EMPH{maximally available} if $C$ is available and its parent in $\bbC$ is unavailable. 

Let \EMPH{$\cS$} be the set of maximally available clusters. 
Observe that $\cS$ is a partition $V$.  
Let \EMPH{$\check{G}$} be the graph obtained from $G$ by contracting every cluster in $\cS$ into a single vertex. For each vertex $\check{v}\in \check{G}$ corresponding to a cluster $C_v$, we assign a weight $\check{\omega}(\check{v}) \coloneqq \sum_{u\in C_v} \omega(u)$. The key observation is that:
\begin{equation}\label{obs:contraction-sequence}
    \text{$\check{G}$ admits a $(\tau\cdot|\bbF|+1,k,h)$-contraction sequence.}
\end{equation}

\noindent To see~\eqref{obs:contraction-sequence}: Starting from level $1$ (corresponding to $\cC_1$), for each level $i\in [1 \,..\, k]$, contract all unavailable clusters at level $i$ into a single vertex. (Think of each contracted unavailable cluster $C$ at level $i$ as the graph $G[C]/\cC_{i-1}[C]$ to align with the definition of the contraction sequence.) Clearly, the number of contraction rounds is $k$, and the total number of contracted clusters is at most $\tau\cdot|\bbF|+1$, which is an upper bound on the number of unavailable clusters. (The $+1$ is for including the root cluster, which is always unavailable.)   Since the hop bound if $\bbC$ is $h$, the radius of each contracted graph is at most $h$.

\medskip
By \eqref{obs:contraction-sequence} and \Cref{lem:tw}, there is a constant $\beta$ that only depends on the size of the minor such that:
\begin{align*}
    \tw(\check{G} )
    &\leq \beta \cdot \sqrt{(\tau\cdot|\bbF|+1)k} \cdot h
    = \beta \cdot \sqrt{(\tau\cdot|\bbF|+1)\log\Phi } \cdot h\\
    &\leq  \beta \cdot \sqrt{2\tau\cdot|\bbF|\log\Phi} \cdot h\\
    &\leq \beta \cdot \sqrt{2\tau \cdot \log\Phi \cdot \frac{\tau}{{\beta'}\cdot h^2 \log\Phi} } \cdot h \qquad \text{(by \Cref{eq:psi-kappa})}\\
    &= \frac{\sqrt{2}\cdot \beta}{\sqrt{{\beta'}}} \cdot \tau \leq \tau
\end{align*}
by choosing a sufficiently large constant ${\beta'} \ge 2\beta^2$ (that only depends on the size of the minor). 
Thus, $\check{G}$ admits a balanced separator $\check{S}$ with respect to $\check{\omega}$ of size at most $\tau$. 
Then $\cF = \{C_v: \check{v}\in \check{S}\}$ is a balanced cut of size at most $\tau$.  

Clearly, $\cF$ conforms $\cX$ since clusters in $\cX$ are always marked available, and clusters in $\cF$ are maximally available clusters. Thus, property (ii) holds.  For every cluster $C\in \cF$, since $C$ is available, then $C\in \cX$ or $C$ is a singleton, or $C$ has not been added to any other cut in $\bbF$. Thus, property (iii) holds.  
\end{proof}

\section{Separating Distribution of Clustering Chains: Proof of Theorem~\ref{thm:beta-separating}}

Recall the definition of a \emph{buffered cop decomposition}, introduced in \cite{CCLMST24}. 
For any graph $G$, a \EMPH{supernode $\eta$} with \EMPH{skeleton $T_\eta$} and \EMPH{radius $\Delta$} is an induced subgraph of $G$ containing a tree $T_\eta$ where every vertex in $\eta$ is within distance $\Delta$ of $T_\eta$, where distance is measured with respect to the induced shortest-path metric of $\eta$. A \EMPH{buffered cop decomposition} for $G$ is a partition of $G$ into vertex-disjoint supernodes, together with a tree \EMPH{$\cT$} called the \EMPH{partition tree} whose nodes are the supernodes of $G$. For any supernode $\eta$, the \EMPH{domain $\dom(\eta)$} is the subgraph induced by the union of all vertices in supernodes in the subtree of $\cT$ rooted at $\eta$.
\begin{definition}
    A \EMPH{$(\Delta, \gamma, w)$-buffered cop decomposition} for $G$ is a buffered cop decomposition $\cT$ for $G$ that satisfies the following properties:
    \begin{itemize}
        \item \textnormal{[Supernode radius.]} Each supernode $\eta$ has radius at most $\Delta$.
        
        \item \textnormal{[Shortest-path skeleton.]} For every supernode $\eta$, the skeleton $T_\eta$ is an SSSP tree in $\dom(\eta)$ with at most $w-1$ leaves (not counting the root). 
        Further, let \EMPH{$\cA_\eta$} denote the set of ancestor supernodes $\eta'$ of $\eta$ such that there is an edge between $\eta'$ and $\dom(\eta)$ in $G$. 
        Then $\lvert \cA_\eta \rvert \le w-1$, 
        and there is an edge between $T_\eta$ and $\eta'$ in $G$ for each $\eta' \in \cA_\eta$.%
        \footnote{The second half of this property was not explicitly stated by \cite{CCLMST24}; rather, it was stated by \cite{Fil24}, who observed that the construction of \cite{CCLMST24} also satisfies this property.}
        
        \item \textnormal{[Supernode buffer.]} Let $\eta$ be a supernode, and let $X$ be another supernode that is an ancestor of $\eta$. Then either $\eta$ and $X$ are adjacent in $G$, or for every vertex $v$ in $\dom(\eta)$, we have $\dist_{\dom(X)}(v, X) \ge \gamma$.%
        \footnote{If $A$ and $B$ are two sets, we use $\dist(A,B)$ to denote $\min_{a \in A, b \in B} \dist(a,b)$. 
        When $A$ is a singleton set $\set{a}$ we write $\dist(a,B)$ instead.}
        
        \item \textnormal{[Tree decomposition.]} Define the \EMPH{expansion $\hat \cT$} to be a tree isomorphic to $\cT$, such that every supernode $\eta$ in $\cT$ corresponds to a node $B_\eta$ (called the \EMPH{bag at $\eta$}) that contains (all vertices in) the union of $\eta$ and all supernodes in $\cA_\eta$. The expansion $\hat \cT$ is a tree decomposition for $G$, and every bag of $\hat \cT$ is the union of at most $w$ supernodes.
    \end{itemize}
\end{definition}

We say that an edge $e$ is \EMPH{cut} by a buffered cop decomposition $\cT$ if the endpoints of $e$ belong to different supernodes in $\cT$. The majority of this section is devoted to proving the following theorem.

\begin{theorem}
\label{thm:random-cop}
    Let $G$ be a $K_r$-minor-free graph, and let $\Delta$ be a positive number. 
    There exist a constant~$\beta$ such that there is a distribution $\mathbb{T}$ of $(4 \Delta, \Delta/r, r-1)$-buffered cop decompositions of $G$, such that for any edge $e$, $\Pr_{\cT \sim \mathbb{T}}[\text{$e$ is cut by $\cT$}] \leq \beta \cdot \len{e}/\Delta$.
\end{theorem}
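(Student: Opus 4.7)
The plan is to randomize the ball-carving procedure of the buffered cop decomposition of~\cite{CCLMST24}. Recall that the deterministic construction builds supernodes iteratively: at each step, one selects a root vertex in the current residual graph and carves out a ball of a prescribed radius around the root, making this ball a new supernode with its SSSP tree as skeleton; the procedure then recurses inside the carved region and inside the remainder. To obtain a distribution $\mathbb{T}$, I will replace the fixed carving radius at each step with a radius drawn uniformly from an interval of length $\Delta$, say $[3\Delta, 4\Delta]$, independently across carving steps. All the structural guarantees of the deterministic construction --- supernode radius at most $4\Delta$, SSSP skeleton with at most $r-2$ leaves, buffer distance $\Delta/r$ from non-adjacent ancestors, and the tree decomposition property with bag width at most $r-1$ --- are preserved because the deterministic analysis works uniformly for every radius in the chosen window; in particular, the buffer $\Delta/r$ is strictly smaller than the randomization window $\Delta$, so randomizing cannot violate the buffer requirement.

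The genuinely new content is the separating probability bound. Fix an edge $e = (u,v)$ of length $\len{e}$. Edge $e$ is cut by $\cT$ if and only if at some carving step the chosen random radius separates $u$ and $v$ into different sides of the ball boundary. Call a supernode $X$ a \emph{threatener} of $e$ if both $u$ and $v$ lie in the initial domain $\dom_0(X)$ at the moment $X$ is carved. For such a threatener $X$ with root $\basev$ of the skeleton $T_X$, the carving radius of $X$ separates $u$ from $v$ only if it lies in a window of width at most $\len{e}$, namely in the interval between $\dist_{\dom_0(X)}(\basev,u)$ and $\dist_{\dom_0(X)}(\basev,v)$. Since the carving radius is drawn uniformly from an interval of length $\Delta$, the probability that $X$ separates $e$ is at most $\len{e}/\Delta$, and by a union bound
\[
\Pr[\text{$e$ is cut}] \;\le\; \#\set{\text{threateners of $e$}} \cdot \frac{\len{e}}{\Delta}.
\]

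The main obstacle --- flagged explicitly in the footnote of the introduction --- is to bound the number of threateners of $e$ by $O(1)$. The subtlety is that the relevant distances are measured inside $\dom_0(X)$ rather than inside $G$ or $\dom(X)$, and $\dom_0(X)$ may be substantially larger than $\dom(X)$ since subsequent carvings remove many vertices from it. Consequently, the standard ``region growing'' style argument that bounds the number of supernodes close to a fixed vertex in the residual metric does not directly apply. I plan to establish the structural lemma flagged in the overview as \Cref{lem:bdd-threatener-dom0}: in any $K_r$-minor-free $G$, the number of supernodes $X$ whose initial domain $\dom_0(X)$ contains a fixed vertex $\basev$ within $\dom_0(X)$-distance $O(\Delta)$ of the skeleton root of $X$ is at most $O(r)$. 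The proof should exploit $K_r$-minor-freeness directly: given any $r$ such threateners $X_1, \ldots, X_r$ ordered by creation time, the SSSP skeleton of each $X_i$ together with the $\dom_0(X_i)$-shortest path from $\basev$ to the root of $X_i$ gives a candidate branch set, and the laminar/nested structure of the carving --- whereby $\dom_0(X_i)$ is the residual after $X_1, \ldots, X_{i-1}$ are carved --- can be used to route the connecting paths through earlier carved supernodes without collision, yielding a $K_r$-minor and contradicting minor-freeness. Handling the path-collision bookkeeping is the technical crux and is expected to require new properties of the carving that were not needed in the deterministic analysis. Once the $O(r)$ threatener bound is in hand, the above union bound gives \Cref{thm:random-cop} with $\beta = O(r)$.
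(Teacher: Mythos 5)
Your high-level plan (randomize the \cite{CCLMST24} construction, reduce the cut probability to a per-threatener window argument, and bound the number of threateners measured in initial domains) matches the paper's, but two of its load-bearing steps have genuine gaps. First, your model of the construction as ``carve a ball of a random radius around a root and recurse'' is not the actual buffered cop decomposition: a supernode is initialized as a neighborhood of an SSSP skeleton and is subsequently \emph{expanded} by \textsc{GrowBuffer} calls, which grow a buffer of width $\approx \Delta/r$ around cut-off ancestor supernodes and assign every buffer vertex to an already-existing supernode (this expansion is exactly what enforces the buffer property, so it cannot be dropped in favor of pure ball carving). Consequently an edge can be cut in a third way your analysis misses: both endpoints land in the buffer $\cN H_X$ but are assigned to two \emph{different} old supernodes. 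With the deterministic nearest-supernode rule this ``split'' event is not governed by any radius at all --- an edge straddling the Voronoi boundary between two old supernodes is cut with conditional probability $1$ --- so the claim that $e$ is cut only if some uniform radius falls in a window of width $\norm{e}$ fails, and no choice of radius distribution repairs it. The paper randomizes in three places (initial growth radius in $[0,\Delta/r]$, buffer width in $[\Delta/r,2\Delta/r]$, and an additive perturbation $\alpha_{X'} \sim \mathrm{Unif}[0,\Delta/r]$ in the assignment rule), and it is the third randomization that makes the split event occur with probability $O_r(\norm{e}/\Delta)$ (Claim~\ref{clm:cop-single-probability}, Case~3).

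Second, the threatener-counting lemma you defer to is indeed the crux (the paper's Lemma~\ref{lem:bdd-threatener-dom0}), but the $K_r$-minor argument you sketch does not work as stated. Taking $r$ threateners and declaring ``skeleton plus the $\initdom$-shortest path to $\basev$'' to be branch sets has two unresolved problems: all $r$ connecting paths terminate at $\basev$, so the branch sets are not disjoint (and truncating them destroys the pairwise adjacency a $K_r$-minor requires), and distinct threateners need not be pairwise adjacent or connectable by disjoint paths --- the buffer property only keeps non-adjacent ancestors far apart in the \emph{final} domain, whereas closeness to $\basev$ here is measured in the much larger initial domain, which is precisely why the earlier bound of Filtser does not apply. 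The paper's actual proof constructs, for each threatener, a threatening sequence of at most $r-1$ \textsc{GrowBuffer} calls leading back to the supernode containing $\basev$ (Claim~\ref{clm:threat-sequence}), proves a strengthened buffer property relative to $\dom(X) \cup \out{\basev}$ (Claim~\ref{clm:out-buffer}), converts it into bounded-length paths in the dag $\vec G$ of out-degree at most $r-1$ (Claims~\ref{clm:out-filtser} and~\ref{clm:out-neighbor-count}), and finishes with a charging scheme (Claim~\ref{clm:charge-strong}), yielding a bound of $r^{O(r^2)}$ rather than the $O(r)$ you assert. So while the skeleton of your reduction is right, both the per-call cut-probability step and the threatener bound are missing essential ideas as written.
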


\cite{CCLMST24} also construct a \emph{shortcut partition} (a type of partition first introduced by \cite{CCLMST23a} as a weaker version of the scattering partition of \cite{filtser2024scattering}). We recall the definition here.
\begin{definition}
    An \EMPH{$(\e, h)$-shortcut partition} of $G$ is a clustering $\cC = \set{C_1, \ldots, C_m}$ of $G$ such that:
    \begin{itemize}
        \item \textnormal{[Diameter.]} the strong diameter of each cluster $C_i$ is at most $\e \cdot \diam (G)$;
        \item \textnormal{[Low-hop.]} for any two vertices $u$ and $v$ in $G$, there is a path $\check \pi$ in the graph $G / \cC$ (obtained by contracting every cluster $C_i$ into a single vertex) between the two clusters containing $u$ and $v$, such that 
        (1) $\check \pi$ has hop length at most $\e h \cdot \lceil \frac{\dist_G(u,v)}{\e \cdot \diam(G) }\rceil$, and 
        (2) there is a shortest path $\pi$ in $G$ between $u$ and $v$, such that every cluster on $\check \pi$ has nontrivial intersection with $\pi$.
    \end{itemize}
\end{definition}

We remark that $\cC$ is an $(\e, h)$-shortcut partition, then the graph $G / \cC$ has hop-diameter at most $h + 1$ (see the proof of Theorem~\ref{thm:beta-separating} below). \cite{CCLMST24} used buffered cop decomposition to construct shortcut partition for minor-free graphs. In particular, they showed that for any $\e < 1$, $K_r$-minor-free graphs admit $(\e, O_r(1/\e))$-shortcut partition. We give a random version of their algorithm, using \Cref{thm:random-cop} as a subroutine, to prove the following lemma.

\begin{lemma}
\label{lem:random-shortcut}
Let $G$ be a $K_r$-minor-free graph, and let $\e$ be some fixed number in $(0, 1)$. There exist constants $h, \beta = O_r(1)$ such that there is a distribution $\mathbb{C}$ of $(\e, h)$-shortcut partitions of $G$ such that for each edge $e$, $\Pr_{\cC \sim \mathbb{C}}[\text{$e$ is cut by $\cC$}] \le \beta \cdot \frac{\norm{e}}{\e \cdot \diam(G)}$.
\end{lemma}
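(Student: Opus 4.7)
The plan is to sample a buffered cop decomposition from \Cref{thm:random-cop} at scale proportional to $\e \cdot \diam(G)$ and take the resulting supernode partition as the clustering $\cC$. This directly randomizes the deterministic shortcut partition construction of \cite{CCLMST24}: the edge-cut probability is inherited for free from \Cref{thm:random-cop}, while the [Diameter] and [Low-hop] properties are deterministic features of each individual sampled decomposition. Concretely, I would set $\Delta \coloneqq \e \cdot \diam(G)/8$, sample $\cT \sim \mathbb{T}$ from \Cref{thm:random-cop}, and define $\cC$ to be the partition of $V(G)$ whose parts are the supernodes of $\cT$.

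The diameter and edge-cut pieces are immediate. Each supernode has radius at most $4\Delta = \e \cdot \diam(G)/2$ under its induced metric, so its strong diameter is at most $\e \cdot \diam(G)$. An edge is cut by $\cC$ iff its endpoints lie in different supernodes of $\cT$, so \Cref{thm:random-cop} yields
\[
\Pr_{\cT \sim \mathbb{T}}[\text{$e$ is cut by $\cC$}] \;\leq\; \beta' \cdot \frac{\norm{e}}{\Delta} \;=\; 8\beta' \cdot \frac{\norm{e}}{\e \cdot \diam(G)},
\]
where $\beta'$ is the constant from \Cref{thm:random-cop}; we set $\beta \coloneqq 8\beta' = O_r(1)$.

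The technical core is the [Low-hop] property. For any $u,v \in V(G)$, take a shortest $u$--$v$ path $\pi$ in $G$ and let $\check\pi$ be the sequence of distinct supernodes it traverses in order. By construction every supernode on $\check\pi$ meets $\pi$, so it remains to bound $|\check\pi|$. Here I would lift the deterministic argument of \cite{CCLMST24}: the [Supernode buffer] property forces each excursion of $\pi$ into the domain of a non-adjacent ancestor supernode to contribute length at least $\gamma = \Delta/r$, while the [Shortest-path skeleton] and [Tree decomposition] properties cap the number of adjacent ancestor supernodes in any bag at $r-1$. Partitioning $\pi$ into blocks of mutually adjacent supernodes separated by buffered transitions then gives a bound of the form $|\check\pi| \leq O_r(1) \cdot \lceil \dist_G(u,v)/\Delta\rceil$, which rewrites via $\Delta = \Theta(\e \cdot \diam(G))$ to the required hop bound with $h = O_r(1)$.

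The main obstacle I anticipate is confirming that this hop analysis relies only on the four buffered cop decomposition axioms, rather than on structural invariants specific to the deterministic CCLMST construction (for example, a particular ordering in which supernodes are created or global compatibility between skeleton trees across the partition tree). If some such invariant is implicitly needed, it must be verified for every decomposition in the support of $\mathbb{T}$, potentially forcing a strengthening of \Cref{thm:random-cop}; the introductory discussion of a ``white-box'' use of the shortcut partition machinery, together with its footnote on the technicality of tracking $\dom_0$ rather than $\dom$, suggests that the authors have indeed extracted exactly the set of axioms needed for this black-box step.
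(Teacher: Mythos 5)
There is a genuine gap, and it sits at the very first step of your plan: you take the supernodes of the sampled buffered cop decomposition as the clusters and claim each has strong diameter at most $\e \cdot \diam(G)$ because its radius is at most $4\Delta$. But in a $(4\Delta, \Delta/r, r-1)$-buffered cop decomposition, ``radius'' is measured to the \emph{skeleton} $T_\eta$, which is an SSSP tree with up to $r-1$ leaves whose constituent shortest paths can be as long as $\diam(G)$. A supernode can therefore have strong diameter $\Omega(\diam(G))$ regardless of how small you take $\Delta$, so the supernode partition simply does not satisfy the [Diameter] property of a shortcut partition. This is exactly why the paper's proof does \emph{not} use the supernodes as clusters: the \textsc{Shortcut} procedure places a $\Delta$-net $N_\eta$ on each skeleton and assigns every vertex of $\eta$ to a cluster centered at a net point (chosen as the nearest center after a random perturbation $\alpha_x \sim \mathrm{Unif}[0,\Delta]$), thereby chopping each long supernode into clusters of strong diameter $O(\Delta)$ (Claim~\ref{clm:valid-random-shortcut}).

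This subdivision also changes the cut-probability accounting in a way your plan omits: an edge whose endpoints lie in the \emph{same} supernode can still be cut when its endpoints are assigned to different net-point clusters, so \Cref{thm:random-cop} alone does not give the stated bound. The paper handles this with a separate argument (Claim~\ref{clm:shortcut-probability}): for each endpoint only $O(r)$ cluster centers are within reach, and the random offsets $\alpha_x$ make any fixed pair of centers split the edge with probability at most $\norm{e}/\Delta$, giving an extra $O(r^2)\cdot\norm{e}/(\e\cdot\diam(G))$ term on top of the probability that the cop decomposition itself cuts the edge. Finally, the [Low-hop] property is not obtained by the block-counting sketch you describe; it is inherited from the Section~4 analysis of \cite{CCLMST24} applied to the net-point clusters, with constants adjusted because the cluster diameter bound degrades from $4\Delta$ to $12\Delta$ under the randomized assignment. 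So the correct repair of your plan is essentially to insert the netting/assignment step and the additional intra-supernode cut analysis, which is precisely the paper's proof.
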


Below, we prove Theorem~\ref{thm:beta-separating} from Lemma~\ref{lem:random-shortcut}. In Section~\ref{SS:cop-construction}, we give a randomized construction for buffered cop decomposition (and prove that the output is in fact a buffered cop decomposition). In Section~\ref{SS:cop-probability}, we upper-bound the probability an edge $e$ is cut by the buffered cop decomposition; the proof of a crucial lemma in this analysis is deferred
to Section~\ref{SS:cop-threateners}. This completes the proof of \Cref{thm:random-cop}. In Section 6.5, we give a randomized construction of shortcut partition, proving Lemma~\ref{lem:random-shortcut}.

\begin{proof}[of Theorem~\ref{thm:beta-separating}]
We will build the $\beta$-separating distribution $\frakC$ of clustering chains level-by-level, from top to bottom.
Start with the original graph $G$
and level $i \gets \lceil \log (\diam(G)) \rceil$.
At each level $i$ we deal with a graph $H_i$ with diameter upper bounded by $\Delta_i \coloneqq 2^i$.
Set $\e = \frac{1}{2}$, and
sample a shortcut partition $\cC_i$ from the distribution of $(\e, h)$-shortcut partitions for $G$ promised by Lemma~\ref{lem:random-shortcut}, where $h$ is some constant depending on the size of the minor excluded by $G$.
The decomposition $\cC_i$ partitions $H_i$ into vertex-disjoint clusters each of diameter $\diam(H_i)/2 \le 2^{i-1}$.
Further, the graph $H_i/\cC_i$ has hop-diameter at most $h$; indeed, any two vertices $u$ and $v$ in $H_i$ satisfy $\dist_{H_i}(u,v) \le \diam(H_i)$, so the [low-hop] property of shortcut partition guarantees that the cluster containing $u$ and $v$ are connected in $H_i / \cC_i$ by a path of hop-length at most $\e h \cdot \lceil \frac{\diam(G)}{\e \cdot \diam(G)} \rceil \le \e h \cdot \lceil 1/\e \rceil \le h$.
We recursively build a $\beta$-separating distribution of clustering chains for each of the clusters, now at level $i-1$, until we reach level $0$ where every cluster is a single vertex.

The resulting distribution $\frakC$ of clustering chains clearly satisfies all properties of clustering chains, and each clustering chain in the support has hop bound $h = O(1)$. 
To demonstrate that the distribution $\frakC$ is $\beta$-separating for some $\beta = O(1)$, we consider an arbitrary edge $e$ and prove $\Pr_{\mathbb{C} \sim \frakC}[\text{$e$ is cut by $\cC_i \in \mathbb{C}$]} \le \beta \cdot \norm{e}/2^i$ by induction on $i$, starting with the highest level ($i = \lceil \log \diam(G) \rceil$).
Indeed, if $e$ is cut by the the $i$-th level of the clustering chain $\cC_i$, then either (1) $e$ was cut by $\cC_{i+1}$, or (2) $e$ is contained in some cluster $C \in \cC_{i+1}$ but $e$ is cut by the shortcut partition of $C$ at level $i$. 
Case (1) occurs with probability at most $\beta \cdot \norm{e} / 2^{i+1}$, by induction. 
Case (2) occurs with probability at most $\beta' \cdot \norm{e}/(\e \cdot 2^{i+1}) = \beta' \cdot \norm{e}/2^i$ for some $\beta' = O(1)$ promised by Lemma~\ref{lem:random-shortcut}. 
By choosing $\beta \coloneqq 2 \beta'$ and applying a union bound, we conclude $e$ is cut with probability at most $\beta \cdot \norm{e}/2^i$.
\end{proof}

\subsection{Construction of stochastic buffered cop decomposition}
\label{SS:cop-construction}

We slightly modify the construction of buffered cop decomposition from \cite{CCLMST24}. 
Before giving the full algorithm, we recall some terminology.  Throughout the algorithm, we maintain a global variable \EMPH{$\cS$}, a set of supernodes (which changes over the course of the algorithm). 
We say a subgraph $H$ \EMPH{sees} a supernode $X$ in $\cS$ if (1) $X$ is disjoint from $H$ and (2) there exists some \EMPH{witness vertex $v_X$} in $H$ that is adjacent (in $G$) to some vertex in $X$. 
For any subgraph $H$, let \EMPH{$\cS_{|H}$} denote the set of supernodes in $\cS$ that $H$ sees. 

Every supernode $\eta$ in $\cS$ 
is associated with a subgraph of $G$ called the \EMPH{initial domain} and denoted \EMPH{$\initdom(\eta)$}, as well as a set of supernodes in $\cS$ that $\initdom(\eta)$ sees, denoted \EMPH{$\cS_{|\eta}$}.
We would like to allow the supernodes in $\cS_{|\eta}$ \emph{as induced subgraphs} to grow over the course of the algorithm as $\cS$ changes, but the \emph{collection of} supernodes will be fixed even when $\initdom(\eta)$ later intersects some (grown) supernode $X$ in $\cS_{|\eta}$ and thus technically no longer ``sees'' $X$ because of the disjointness requirement is violated.
At any point in the algorithm, with supernode assignment $\cS$, the subgraph \EMPH{$\dom_{\cS}(\eta)$} is defined to be $\initdom(\eta) \setminus \bigcup \cS_{|\eta}$.%
\footnote{We remark that our definition of $\dom_{\cS}(X)$ makes specific of the definition given in \cite{CCLMST24}; their definition is somewhat ambiguous. 
All the proofs of \cite{CCLMST24} also work with our clarified definition of $\dom_{\cS}(\eta)$.}
It will hold that at the end of the algorithm, $\dom_{\cS}(\eta) = \dom(\eta)$.

Our new algorithm is described in Figure~\ref{fig:cop-decomposition} below; the lines that differ from \cite{CCLMST24} are in red. The parameters $\Delta$ and $r$ in the algorithm are fixed throughout execution of the algorithm.
For intuition behind the algorithm, refer to \cite{CCLMST24}. 
To construct a buffered cop decomposition for a graph $G$, we run $\textsc{BuildTree}(\varnothing, G)$.
If a vertex $v$ is assigned to a supernode $X'$ in Step 2 of a call $C \coloneqq \textsc{GrowBuffer}(\cS, \cX, H)$, then we say that $v$ is \EMPH{assigned during $C$}, and $X'$ is \EMPH{expanded during $C$}.
If the $C$ selects a supernode $X \in \cX$ in Step~1, we say that $C$ \EMPH{processes} $X$.

\begin{figure}[ph!]
\centering
\begin{tcolorbox}
\paragraph{$\textsc{BuildTree}(\cS, H)$:} Input a set of supernodes $\cS$, and a subgraph $H$ of $G$ consisting of vertices unassigned to a supernode in $\cS$. Returns a tree of supernodes.
\begin{enumerate}
    \item \emph{Initialize a new supernode $\eta$.}
    
    Let $v$ be an arbitrary vertex in $H$.
    Let $T_\eta$ be an SSSP tree in $H$, connecting $v$ to a witness vertex for every supernode in $\cS_{|H}$.
    Initialize $\eta \coloneqq T_\eta$ to be a new supernode with skeleton $T_\eta$, and add $\eta$ to $\cS$.
    {\textcolor{BrickRed}{Let $\alpha \sim \mathrm{Unif}[0,1]$. 
    For every vertex $v'$ satisfying $\dist_{H}(v', T_\eta) \le \alpha \cdot \Delta/r$, assign $v'$ to supernode~$\eta$.}}
    Initialize tree $\cT$ with root $\eta$.

    \item \emph{Assign vertices to existing supernodes, to guarantee the supernode buffer property.}

    For each connected component $H'$ of $H \setminus \eta$: let $\cX$ be a list of every supernode seen by $H$ but not $H'$, and call $\textsc{GrowBuffer}(\cS, \cX, H')$. (The \textsc{GrowBuffer} procedure modifies the global variable $\cS$, assigning some vertices in $H'$ to supernodes in $\cS$).

    \item \emph{Recurse.}

    For each connected component $H'$ of $H \setminus \bigcup \cS$: let $\cT'$ be the output of $\textsc{BuildTree}(\cS,H')$, and attach $\cT'$ as a child to the root of $\cT$.
\end{enumerate}
\end{tcolorbox}

\begin{tcolorbox}
\paragraph{$\textsc{GrowBuffer}(\cS, \cX, H)$:} Input a set of supernodes $\cS$, 
a subgraph $H$ of $G$ consisting of vertices unassigned to any supernode in $\cS$, and a list of supernodes $\cX$ in $\cS$ that are not seen by $H$. 
This procedure modifies the global variable $\cS$ by assigning some vertices in $H$ to $\cS$.
\begin{enumerate}
    \item \emph{Grow a buffer around some supernode in $\cX$.}

    Let $X$ be an arbitrary supernode in $\cX$. (If $\cX$ is empty, do nothing and return.)
    Let $\bdry H_{\downarrow X}$ be the set of vertices in $G \setminus H $ that are (1) adjacent to $H$, and (2) in $\dom_{\cS}(X)$.
    {\textcolor{BrickRed}{Let $\alpha \sim \mathrm{Unif}[1,2]$}}, and let $\cN H_X$ be the set of vertices in $H$ such that $\dist_{\dom_{\cS}(X)}(v, \bdry H_{\downarrow X}) \le \text{\textcolor{BrickRed}{$\alpha$}} \cdot \Delta/r$

    \item \emph{Assign the vertices in the buffer to existing supernodes.}

    {\textcolor{BrickRed}{For each supernode $X'$ seen by $H$, choose a random value $\alpha_{X'} \sim \mathrm{Unif}[0,\Delta/r]$.}} For each vertex $v$ in $\cN H_X$, assign $v$ to the supernode $X'$ that minimizes $\dist_{\dom_{\cS}(X)}(v, X' \cap \bdry H_{\downarrow X}) + \textcolor{BrickRed}{\alpha_{X'}}$.
    Update $\cS$ with these assignments. (Note that this changes the supernode $X'$, and may also $\dom_{\cS}(\eta)$ for any supernodes $\eta$ initialized after $X'$.)

    \item \emph{Update $\cX$ to include any newly ``cut-off'' supernodes, and recurse.}

    For each connected component $H'$ of $H \setminus \bigcup \cS$: initialize $\cX'$ to be $\cX \setminus \set{X}$, then add to $\cX'$ all the supernodes in $\cS$ that are seen by $H$ but not $H'$, and then call $\textsc{GrowBuffer}(\cS, \cX', H')$.
\end{enumerate}
\end{tcolorbox}
\caption{The \textsc{BuildTree} and \textsc{GrowBuffer} algorithms.}
\label{fig:cop-decomposition}
\end{figure}

\medskip\noindent
To summarize, we have three modifications to the algorithm of~\cite{CCLMST24}:
\begin{enumerate}
    \item When initializing a supernode $\eta$ in Step 1 of $\textsc{BuildTree}$, we immediately grow it by some random radius in $[0, \Delta/r]$. (\cite{CCLMST24} omits this initial growth.)
    \item When growing a buffer ($\cN H_X$) around old supernodes in Step 1 of $\textsc{GrowBuffer}$, we choose the buffer to have a random radius in $[\Delta/r, 2 \Delta/r]$. (In \cite{CCLMST24}, the buffer had fixed radius $\Delta/r$.)
    \item When assigning each vertex $v$ in the buffer to an existing supernode, in Step 2 of \textsc{GrowBuffer}, we assign $v$ to the closest existing supernode after doing some small random perturbation. (In \cite{CCLMST24}, $v$ was assigned deterministically to the closest existing supernode.)
\end{enumerate}

\cite{CCLMST24} show that, if graph $G$ excludes a $K_r$-minor, then their procedure outputs a $(\Delta, \Delta/r, r-1)$-buffered cop decomposition. Their proof carries over almost verbatim for our stochastic algorithm, with slightly worse constants. We sketch the differences below.
\begin{lemma}
\label{lem:stochastic-cop-decomposition}
    If $G$ is a graph excluding a $K_r$-minor, then for any $\Delta > 0$, the stochastic procedure $\textsc{BuildTree}(\varnothing, G)$ outputs a tree $\cT$ that is a $(4 \Delta, \Delta/r, r-1)$-buffered cop decomposition.
\end{lemma}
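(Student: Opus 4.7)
The plan is to adapt the correctness proof from \cite{CCLMST24} by tracking how each of the three red-marked random modifications in Figure~\ref{fig:cop-decomposition} affects the four defining properties of a buffered cop decomposition. The structural skeleton of the algorithm---which supernodes are created, their skeletons $T_\eta$, and the sets $\cA_\eta$ of ancestors seen---is entirely unchanged by the randomness; what changes is only \emph{which} existing supernode certain buffered vertices are assigned to and by \emph{how much} each supernode expands. Thus I would split the verification into the properties that transfer verbatim and the radius bound, which is the sole source of the worsened constant.

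For [Shortest-path skeleton] and [Tree decomposition], since the randomness never enters the construction of $T_\eta$ in Step~1 of \textsc{BuildTree} nor the collection $\cA_\eta$, the bounds of at most $r-1$ leaves for $T_\eta$ and $|\cA_\eta|\le r-1$, both established in \cite{CCLMST24} from $K_r$-minor-freeness, carry over directly; [Tree decomposition] is then immediate. For [Supernode buffer] with $\gamma=\Delta/r$, the only relevant change is that the buffer in Step~1 of \textsc{GrowBuffer} now has random radius $\alpha\cdot\Delta/r$ for $\alpha\sim\mathrm{Unif}[1,2]$. Since $\alpha\ge 1$, the buffer is at least as wide as the deterministic one, so every vertex left unassigned after \textsc{GrowBuffer} finishes processing a descendant of $X$ lies at distance strictly greater than $\Delta/r$ from $X$ in $\dom_{\cS}(X)$; the random perturbations $\alpha_{X'}$ in Step~2 merely redirect an already-buffered vertex to a different supernode and thus do not affect this separation.

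The principal accounting goes into the [Supernode radius] bound of $4\Delta$. A supernode $\eta$ acquires vertices in two phases: (i) at creation in Step~1 of \textsc{BuildTree}, where the new initial growth adds radius at most $\Delta/r$; and (ii) in Step~2 of each later \textsc{GrowBuffer} call in which $\eta$ is expanded. In any such expansion, any newly assigned vertex $v$ lies within $2\Delta/r$ of $\bdry H_{\downarrow X}$ in $\dom_{\cS}(X)$, and the perturbation $\alpha_\eta\le\Delta/r$ means the winning supernode is at distance at most $3\Delta/r$ from the boundary. By the same minor-theoretic counting argument used in \cite{CCLMST24}, $\eta$ can be expanded at most $r-1$ times during the entire run. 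Summing gives a radius bound of $\Delta/r+(r-1)\cdot 3\Delta/r\le 3\Delta$, comfortably within the claimed $4\Delta$, with the extra slack absorbing rounding and a clean statement.

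The main obstacle will be the bookkeeping of distances through the evolving $\dom_{\cS}(\eta)$, because both $\cS$ and $\dom_{\cS}(\eta)$ change as new supernodes are created after $\eta$; a vertex counted ``within $3\Delta/r$ of the boundary'' is measured with respect to $\dom_{\cS}(X)$ at expansion time, not the final domain of $\eta$. As in \cite{CCLMST24}, I would set up an inductive invariant: at each expansion of $\eta$, every vertex currently assigned to $\eta$ is connected to $T_\eta$ by a path inside the \emph{current} $\dom_{\cS}(\eta)$ of length bounded by the cumulative estimate. Once this invariant is maintained across all expansions---which is where the proof of \cite{CCLMST24} is used almost verbatim, up to the constant $3$ in place of their $1$---the four properties assemble into a $(4\Delta,\Delta/r,r-1)$-buffered cop decomposition.
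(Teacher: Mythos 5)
Your overall architecture matches the paper's proof: carry over the \cite{CCLMST24} arguments for [shortest-path skeleton], [tree decomposition], and [supernode buffer] (the latter because $\alpha \ge 1$ makes the buffer at least as wide as the deterministic one), and charge the weakened radius to (i) the new initial growth of up to $\Delta/r$ and (ii) a per-expansion increment of $3\Delta/r$ instead of $\Delta/r$. However, there is a genuine flaw in your radius accounting. You justify the bound by asserting that ``$\eta$ can be expanded at most $r-1$ times during the entire run.'' This is false: as the paper itself points out (footnote in Section~\ref{SS:cop-threateners}, with a long-star example), a single supernode can be expanded by an unbounded number of \textsc{GrowBuffer} calls, each growing it in a different direction. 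What is bounded by $r-1$ --- and what the radius proof actually uses --- is the number of expansions along the \emph{nested chain} of calls one follows when tracing a fixed vertex $v$ back to the skeleton $T_\eta$ (equivalently, expansions by calls $\textsc{GrowBuffer}(\cdot,\cdot,H)$ with a fixed vertex in $H$; this is \Cref{clm:expand-count}, resting on Claims~3.12--3.13 of \cite{CCLMST24}). Your arithmetic $\Delta/r + (r-1)\cdot 3\Delta/r \le 3\Delta$ happens to be in the right ballpark, but the counting statement it rests on is wrong and must be replaced by the per-vertex chain bound.

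A second, smaller gap: your ``winning supernode is at distance at most $3\Delta/r$ from the boundary'' is a distance in $\dom_{\cS}(X)$, whereas the radius of a supernode is defined with respect to the induced metric of $\eta$ itself. The paper's adaptation of Claim~3.11 of \cite{CCLMST24} closes exactly this gap: for the vertex $\tilde v$ of $\bdry H_{\downarrow X}\cap \eta$ closest to $v$, every vertex on a shortest $v$--$\tilde v$ path in $\dom_{\cS}(X)$ is itself assigned to $\eta$ (by a triangle-inequality comparison of $\dist_{\dom_\cS(X)}(\cdot, X'\cap\bdry H_{\downarrow X}) + \alpha_{X'}$ along the path), so $\dist_\eta(v,\tilde v)\le 3\Delta/r$; the same argument is also what restores the connectivity of supernodes under the perturbed assignment rule (the paper's modification to Claim~3.4(1) of \cite{CCLMST24}). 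You acknowledge the bookkeeping issue, but your proposed invariant (paths inside the current $\dom_{\cS}(\eta)$) measures the wrong quantity and does not by itself yield either the intrinsic radius bound or connectivity. Finally, your opening claim that the randomness leaves ``which supernodes are created'' and their skeletons unchanged is not literally true --- the random radii change the components on which the algorithm recurses --- though this does not affect the per-realization structure of the argument.
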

\begin{proof}
The claims and proofs in Section 3.2 of \cite{CCLMST24} (``Basic Properties'') carry over nearly verbatim; $\cT$ satisfies the [shortest-path skeleton] property and the [tree decomposition] property. 
The only change is in the proof that every supernode is a connected subgraph (Claim 3.4(1)), where we must replace references to ``closest supernode in $\bdry H_{\downarrow X}$ to $v$'' with ``the supernode $X'$ minimizing $\dist_{\dom_{\cS}(X)}(v, X' \cap \bdry H_{\downarrow X}) + \alpha_{X'}$''. 
The other proofs of Section 3.2 only rely on the definition of the skeleton $T_\eta$ in $\textsc{BuildTree}$, and on the fact that calls to $\textsc{BuildTree}(\cdot, H)$ or $\textsc{GrowBuffer}(\cdot, \cdot, H)$ are made only on subgraphs $H$ that are maximal connected components of unassigned vertices.

The \cite{CCLMST24} proof of the [supernode buffer] property relies on the fact that, during any call to $\textsc{GrowBuffer}$, the set $\cN H_{X}$ includes every vertex within distance $\Delta/r$ of $\bdry H_{\downarrow X}$. As Step 2 of our modified \textsc{GrowBuffer} chooses $\alpha \ge 1$, we also have this property. The rest of their proof carries over verbatim.

To adapt the \cite{CCLMST24} proof of the [supernode radius] property, we need to make two changes. 
First, in the \cite{CCLMST24} algorithm, each supernode $\eta$ was initialized with radius $0$, and expands only when $\textsc{GrowBuffer}$ is called. 
In our modified algorithm, supernode $\eta$ is initialized with radius up to $\Delta/r$; thus, our final radius bound is weaker by $+\Delta/r$ than the bound of \cite{CCLMST24}. Secondly, we need to modify Claim 3.11 of \cite{CCLMST24}, which states that each time a supernode $\eta$ is expanded by some call to $\textsc{GrowBuffer}$, the radius of $\eta$ increases by at most $\Delta/r$. Because of our modifications in Step 1 and 2 of $\textsc{GrowBuffer}$, we instead prove that the radius increases by at most $3 \Delta/r$; thus, our final radius bound is weaker by a factor 3 than the bound of \cite{CCLMST24}.

\begin{quote}
    Suppose that $v$ is assigned to a supernode $\eta$ during a call $C \coloneqq \textsc{GrowBuffer}(\cS, \cX, H)$. Let $X$ denote the supernode processed during $C$, and let $\bdry H_{\downarrow X}$ denote the boundary vertices. Let $\tilde v$ be the closest vertex in $\bdry H_{\downarrow X} \cap \eta$ to $v$ (with respect to $\dom_{\cS}(X)$). Then $\dist_{\eta}(v, \tilde v) \le 3\Delta/r$ (with respect to the final~$\eta$).
\end{quote}

    \noindent To prove the claim,
    let $\cN H_X$ denote the set of points assigned during $C$. Let $P$ be a shortest path between $v$ and $\tilde v$ in $\dom_{\cS}(X)$. 
    Every vertex in $P$ (other than $\tilde v$) is in $\cN H_X$. Because we assign every vertex $v'$ in $\cN H_X$ to the supernode $X'$ minimizing $\dist_{\dom_{\cS}(X)}(v', X' \cap \bdry H_{\downarrow X}) + \alpha_{X'}$, every vertex in $P$ is assigned to $\eta$. 
    Furthermore, we claim $P$ has length at most $3 \Delta/r$. 
    Indeed, $v$ is within distance $2 \Delta/r$ of $\bdry H_{\downarrow X}$ (as $\alpha \le 2$ in Step 1 of \textsc{GrowBuffer}), so there is some supernode $X'$ with $\dist_{\dom_{\cS}(X)}(v, X' \cap \bdry H_{\downarrow X}) \le 2 \Delta/r$. 
    As $\alpha_{X'} \le \Delta/r$, we have $\dist_{\dom_{\cS}(X)}(v, X') + \alpha_{X'} \le 3 \Delta/r$. 
    By choice of $\eta$, $\dist_{\dom_{\cS}(X)}(v, \eta) + \alpha_{\eta} \le 3 \Delta/r$, and thus $\dist_{\dom_{\cS}(X)}(v, \eta) \le 3 \Delta/r$. 
    We conclude that $\norm{P} \le 3 \Delta/r$, and so $\dist_{\eta}(v, \tilde v) \le 3\Delta/r$.

\medskip
The rest of the [supernode radius] proof from \cite{CCLMST24} applies for our modified algorithm. Because of the two modifications described above, we end up with a radius bound of $3 \Delta + \Delta/r \le 4 \Delta$. This concludes the proof that $\cT$ is a $(4 \Delta, \Delta/r, r-1)$-buffered cop decomposition.
\end{proof}

\subsection{Bounding the cut probability}
\label{SS:cop-probability}

Let $e = (u,v)$ be an edge in $G$. Let $C_i$ be a random variable that denotes the $i$-th call, whether it is to $\textsc{BuildTree}$ or $\textsc{GrowBuffer}$, that is made during the execution of the algorithm.
For any $i$, we define three events below. Observe that if $u$ and $v$ are in different supernodes, then one of these events occurs.
\begin{itemize}
    \item \EMPH{$\xi^i_{\rm build}$}: call $C_i$ is a call $\textsc{BuildTree}(\cS, H)$, such that both $u, v \in H$, and exactly one of $u$ and $v$ is assigned to a supernode during $C$.
    \item \EMPH{$\xi^i_{\rm buffer}$}: call $C_i$ is a call $\textsc{GrowBuffer}(\cS, \cX, H)$, such that both $u, v \in H$, and exactly one of $u$ and $v$ is in $\cN H_X$.
    \item \EMPH{$\xi^i_{\rm split}$}: call $C_i$ is a call $\textsc{GrowBuffer}(\cS, \cX, H)$, such that $u, v \in \cN H_X$, but $u$ and $v$ are assigned to different supernodes during step 3 of $\textsc{GrowBuffer}$.
\end{itemize}
Let \EMPH{$\xi_i$} be the event that either $\xi^i_{\rm build}$, $\xi^i_{\rm buffer}$, or $\xi^i_{\rm split}$ occurs.  
If $\xi_i$ occurs, we say that \EMPH{call $C_i$ cuts $e$}.
For each $i$, we define an indicator random variable (1 or 0) that indicates whether call $C_i$ ``threatens'' $e$, that is, whether edge $e$ could possibly be cut during call $C_i$.
\begin{itemize}
    \item \EMPH{$X_{\rm build}^i$}: indicator random variable that is 1 if $C_i = \textsc{BuildTree}(\cS, H)$ initializes some supernode $X$ with skeleton $T_X$, such that $\dist_{H}(\set{u,v}, T_X) \le 2\Delta$ and $u, v \in H$.
    \item \EMPH{$X_{\rm buffer}^i$}: indicator random variable that is 1 if $C_i = \textsc{GrowBuffer}(\cS, \cX, H)$ such that $u,v \in H$ and $\dist_{H}(\set{u,v}, \cN H_X) \le 2\Delta$.
    \item \EMPH{$X_{\rm split}^i$}: indicator random variable that is 1 if $C_i = \textsc{GrowBuffer}(\cS, \cX, H)$ such that $u,v \in \cN H_X$.
\end{itemize}
We define \EMPH{$X_i$} to be the indicator random variable that is 1 if $X^i_{\rm build}$, $X^i_{\rm buffer}$, or $X^i_{\rm split}$ is 1. 
If $X_i = 1$, we say that \EMPH{call $C_i$ threatens $e$}.
In Section~\ref{SS:cop-threateners}, we prove two lemmas that immediately let us bound the number of threatening calls, under \emph{any} execution of the algorithm.

\begin{lemma}
\label{lem:buildtree-threaten}
    There is a constant $\mu_{\rm build} = O_r(1)$ such that for any vertex $\basev$, 
    there are at most $\mu_{\rm build}$ calls $\textsc{BuildTree}(\cS, H)$ that initializes some supernode $X$ with skeleton $T_X$, such that $\dist_{H}(\basev, T_X) \le 2 \Delta$.
\end{lemma}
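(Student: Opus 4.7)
\medskip
\noindent\textbf{Proof plan for Lemma~\ref{lem:buildtree-threaten}.} The plan is to argue by contradiction: I assume that there are $N$ such $\textsc{BuildTree}$ calls and will show that $N = O_r(1)$, by exhibiting a $K_r$-minor in $G$ if $N$ is too large. List the offending calls as $C_{i_1}, C_{i_2}, \ldots, C_{i_N}$ in the order they occur during the execution, so that each $C_{i_j} = \textsc{BuildTree}(\cS_j, H_j)$ initializes a supernode $X_j$ with skeleton $T_{X_j}$ satisfying $\basev \in H_j$ and $\dist_{H_j}(\basev, T_{X_j}) \le 2\Delta$. My first step is to establish the nesting $H_1 \supsetneq H_2 \supsetneq \cdots \supsetneq H_N$: since $\textsc{BuildTree}$ only recurses on connected components of the currently unassigned vertices, and $\basev$ lies in every $H_j$ (so in particular $\basev$ is not assigned to any $X_l$ with $l < j$ nor to any ancestor supernode during an intervening $\textsc{GrowBuffer}$ call), the calls $C_{i_j}$ lie on a single root-to-leaf path of the recursion tree. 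Consequently the supernodes $X_1, \ldots, X_N$ are pairwise vertex-disjoint, and form a descending chain in the partition tree.

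Next, for each $j$ I fix a shortest path $P_j$ of length at most $2\Delta$ in $H_j$ from $\basev$ to its closest point $p_j \in T_{X_j}$. The candidate branch sets for a minor are built from the pieces $T_{X_j}$ together with initial segments of the paths $P_j$ near $\basev$. The key structural input is the [shortest-path skeleton] property of Section~\ref{SS:cop-construction}: when $X_{j+1}$ is initialized in $H_{j+1}$, the skeleton $T_{X_{j+1}}$ is explicitly built to contain edges to each ancestor supernode that $H_{j+1}$ sees. Thus, whenever $H_{j+1}$ sees $X_l$ for some $l \le j$, we already get a free $G$-edge from $T_{X_{j+1}}$ into $X_l$ -- these are exactly the adjacencies we need for the minor.

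The main obstacle is disjointness. The paths $P_j$ all start at $\basev$, so they may share prefixes and may even cross later-created skeletons $T_{X_l}$ with $l > j$ (since $T_{X_l}$ is built inside $H_l \subseteq H_j$, it is free to reuse vertices of $P_j \cap H_l$). I will resolve this with a peeling argument: process the indices from largest to smallest, truncating each $P_j$ at the first vertex at which it enters any previously-fixed skeleton or path; this uses the fact that each $T_{X_j}$ has only $O_r(1)$ leaves, so its ``topological type'' has bounded complexity and the truncations lose only a bounded amount of structure. After peeling, I argue that there is a subfamily of size $r-1$ among the $X_j$'s whose truncated connectors, together with the free edges coming from the skeleton construction, make the $r-1$ chosen $X_j$'s pairwise adjacent; adjoining the single vertex $\basev$ as the $r$-th branch set yields a $K_r$-minor.

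The technical heart is quantifying how large $N$ must be for this peeling-plus-selection step to succeed. The worry is that the witnesses and truncation points can conspire so that most pairs $X_j, X_l$ fail to have a direct $G$-edge after peeling. I would control this by Ramsey-type bookkeeping: each skeleton $T_{X_{j+1}}$ sees at most $w-1 = r-2$ ancestor supernodes, so each level of the chain can ``lose sight of'' at most boundedly many earlier $X_l$'s; this yields a recursion that forces all but $O_r(1)$ of the $X_j$'s to be mutually visible, and hence mutually $G$-adjacent via the shortest-path-skeleton mechanism. Plugging into the $K_r$-minor-free hypothesis gives the required absolute bound $\mu_{\rm build} = O_r(1)$.
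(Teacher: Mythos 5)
There is a genuine gap, and it sits exactly at the step your plan treats as ``Ramsey-type bookkeeping.'' Your contradiction hinges on the claim that, because each $H_{j+1}$ sees at most $r-2$ ancestor supernodes, a long chain of calls can only ``lose sight of'' boundedly many earlier $X_l$'s, so that all but $O_r(1)$ of the $X_j$'s end up mutually visible and hence pairwise adjacent via the skeleton construction. This does not follow and is contrary to how the construction behaves: the \textsc{GrowBuffer} mechanism is designed precisely so that later components stop seeing earlier supernodes (a buffer of width only $\Delta/r$ suffices to cut off visibility), while a skeleton created much later can still come within $2\Delta \gg \Delta/r$ of $\basev$ in its initial domain. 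The set of supernodes seen by $H_{j+1}$ is some set of at most $r-2$ ancestors (\Cref{clm:cclmst-seen}, \Cref{obs:see-ancestor}) that typically contains none of your earlier $X_l$'s, and there is no accumulation argument that converts ``bounded seen-set per level'' into ``almost all pairs visible.'' Without that, you have no supply of pairwise adjacencies among the $X_j$'s, and the $K_r$-minor never materializes; the peeling/truncation step for disjoint branch sets (and for making $\basev$'s branch set adjacent to each chosen $X_j$ without the paths colliding) is likewise only asserted, not carried out.

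More fundamentally, the proposal does not engage with the difficulty that makes this lemma hard in the paper: the distance $\dist_H(\basev,T_X)\le 2\Delta$ is measured in $H=\initdom(X)$, the \emph{initial} domain, so the witnessing paths $P_j$ may run through vertices that are later assigned to other supernodes by \textsc{GrowBuffer}. This is exactly why the buffer property (and the Filtser-style dag bound, which works with final domains) cannot be applied directly, and it is what forces the paper to introduce $\out{\basev}$, re-prove a strengthened buffer property (\Cref{clm:out-buffer}), build threatening sequences in which $X_{i-1}$ only threatens a supernode \emph{related to} $X_i$ (\Cref{clm:find-next-call}, \Cref{clm:threat-sequence}), and close the argument with the charging bound (\Cref{clm:charge}), yielding \Cref{lem:bdd-threatener-dom0}; the lemma you are proving is then an immediate corollary, since $H=\initdom(X)$ at the moment $X$ is initialized. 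Your sketch acknowledges that paths may cross later skeletons but offers no quantitative mechanism replacing this machinery, so the key counting step remains unproved.
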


\begin{lemma}
\label{lem:growbuffer-threaten}
    There is a constant $\mu_{\rm buffer} = O_r(1)$ such that for any vertex $\basev$, there are at most $\mu_{\rm buffer}$ calls $\textsc{GrowBuffer}(\cS, \cX, H)$ such that $\basev \in H$ and $\dist_H(\basev, \cN H_X) \le 2 \Delta$.
\end{lemma}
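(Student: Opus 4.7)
The plan mirrors the strategy for \Cref{lem:buildtree-threaten}, reducing the bound to the structural lemma \Cref{lem:bdd-threatener-dom0} in \Cref{SS:cop-threateners}, which controls the number of supernodes $X$ such that a fixed vertex $\basev$ lies in $\initdom(X)$ at distance $O(\Delta)$ from $X$ (measured in $G[\initdom(X)]$). The first step is to show that each threatening $\textsc{GrowBuffer}(\cS, \cX, H)$ call processing a supernode $X$ satisfies $\basev \in \initdom(X)$ with $\dist_{G[\initdom(X)]}(\basev, X) = O(\Delta)$. The crucial structural invariant is that whenever $X$ appears in $\cX$ during any recursive \textsc{GrowBuffer} call on $H$, the subgraph $H$ lies entirely within $\initdom(X)$: this is because $\cX$ is populated in Step 2 of some ancestor $\textsc{BuildTree}(\cS_0, H_0)$ from supernodes seen by $H_0$, recursive \textsc{GrowBuffer} calls only shrink the active subgraph ($H \subseteq H_0$), and since $X$ was seen by $H_0$ it must be an ancestor of the current frame in the partition tree, giving $H \subseteq H_0 \subseteq \initdom(X)$. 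Concatenating the $\le 2\Delta$ path from $\basev$ to some vertex in $\cN H_X$ inside $H$, the $\le 2\Delta/r$ path from $\cN H_X$ to $\bdry H_{\downarrow X}$ inside $\dom_{\cS}(X)$, and a short connector from $\bdry H_{\downarrow X}$ into $X$ inside $\initdom(X)$, all of which lie in $G[\initdom(X)]$, yields the desired distance bound.

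The second step applies \Cref{lem:bdd-threatener-dom0} to conclude that at most $O_r(1)$ supernodes $X$ can satisfy the proximity condition established in step one. The third step bounds, for each such $X$, the number of threatening \textsc{GrowBuffer} calls that actually process $X$. Within a single top-level \textsc{GrowBuffer} chain spawned from Step 2 of a \textsc{BuildTree} call, $X$ is processed at most once, since Step 3 removes $X$ from $\cX'$ before recursing and never reinserts it. Across different top-level spawnings, $X$ reappears in the initial $\cX$ only when a newly initialized supernode $\eta$ separates some component from $X$; the number of such events where $\basev$ also lies in the relevant component is bounded by combining the $O_r(1)$ depth of partition-tree nodes in $X$'s subtree that contain $\basev$ with the $(r{-}1)$-bound from the [shortest-path skeleton] property on the number of ancestor supernodes adjacent to any single new skeleton. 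Multiplying the $O_r(1)$ bounds from the second and third steps completes the proof with $\mu_{\rm buffer} = O_r(1)$.

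The main obstacle will be step one, specifically constructing the connector from $\bdry H_{\downarrow X}$ to $X$ in $G[\initdom(X)]$: the boundary vertices only lie in $\dom_{\cS}(X)$, which may include descendant supernodes of $X$ that are not directly adjacent to $X$. To resolve this I plan to argue inductively on the evolution of $\dom_{\cS}(X)$ as $\cS$ grows, using the fact that at any intermediate state, vertices in $\dom_{\cS}(X)$ reachable via $\bdry H_{\downarrow X}$ can be linked back to $X$ through short paths inside $\initdom(X)$, leveraging both the internal radius bound of $X$ itself and the previously-grown buffers separating $X$'s descendants from external supernodes. A secondary technical point is verifying that the $H_0 \subseteq \initdom(X)$ invariant persists correctly when recursing through \textsc{GrowBuffer}'s Step 3, which requires checking that the updated $\cX'$ at each level preserves the ancestor relationship used above.
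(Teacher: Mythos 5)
There is a genuine gap, and it is exactly the one you flag yourself: you charge each threatening call to the supernode $X$ that the call \emph{processes}, which forces you to prove $\dist_{\initdom(X)}(\basev, X) = O(\Delta)$, and the missing ``connector'' from $\bdry H_{\downarrow X}$ to $X$ simply does not exist. The set $\bdry H_{\downarrow X}$ consists of already-assigned vertices of $\dom_{\cS}(X)$ adjacent to $H$; by \Cref{clm:cclmst-dom} these belong to $X$ or to \emph{descendant} supernodes of $X$, and nothing in the construction bounds their distance to $X$ inside $\dom_{\cS}(X)$ (or $\initdom(X)$): the [supernode buffer] property only gives a \emph{lower} bound of $\Delta/r$, and a domain can contain long chains of descendant supernodes, so the part of $\dom_{\cS}(X)$ bordering $H$ — which is precisely where the buffer $\cN H_X$ is grown — can sit at distance far exceeding $\Delta$ from $X$. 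Your proposed repair (induction on the evolution of $\dom_{\cS}(X)$, using the radius of $X$ and earlier buffers) cannot work because the quantity you are trying to bound is genuinely unbounded; so step one fails, and with it your reduction to \Cref{lem:bdd-threatener-dom0}. (Two smaller points: even granting step one you would need \Cref{lem:bdd-threatener-dom0} at radius $2\Delta + O(\Delta/r)$ rather than $2\Delta$, which is harmless; and your step three re-derives by hand what the paper gets cleanly from \Cref{clm:expand-count} via the linear order, by containment, of the subgraphs $H$ containing $\basev$.)

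The paper's proof avoids the problem by charging to a different supernode: pick $x \in \cN H_X$ with $\dist_H(\basev, x) \le 2\Delta$ and let $\eta$ be the supernode to which $x$ is (eventually) assigned, i.e.\ a supernode \emph{expanded} by the call rather than the one processed. Since $\eta$ is seen by $H$, \Cref{obs:no-out-adj} gives $H \subseteq \initdom(\eta)$, so the path from $\basev$ to $x$ inside $H$ already witnesses $\dist_{\initdom(\eta)}(\basev, \eta) \le 2\Delta$ — no connector to $X$ is ever needed. Then \Cref{lem:bdd-threatener-dom0} bounds the number of such $\eta$ by $O_r(1)$, and \Cref{clm:expand-count} bounds by $r-1$ the number of $\textsc{GrowBuffer}$ calls with $\basev \in H$ that expand a fixed $\eta$, giving $\mu_{\rm buffer} = O_r(1)$. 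If you want to salvage your outline, the key change is to replace ``the supernode processed by the call'' with ``a supernode expanded by the call'' throughout.
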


\begin{claim}
\label{clm:total-threateners}
    For any execution of the algorithm, $\sum_{i \in \mathbb{N}} X_i \le O_r(1)$.
\end{claim}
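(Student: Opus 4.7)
The plan is to reduce the bound on $\sum_i X_i$ to two applications each of Lemma~\ref{lem:buildtree-threaten} and Lemma~\ref{lem:growbuffer-threaten}, with the two ``anchor'' vertices being the endpoints $u$ and $v$ of the edge $e$. The key observation that makes this work is that whenever one of $X^i_{\rm build}$, $X^i_{\rm buffer}$, $X^i_{\rm split}$ equals $1$, the corresponding distance condition holds for at least one of $u$ or $v$ in the role of $\basev$.

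First I would handle the \textsc{BuildTree} contribution. If $X^i_{\rm build}=1$ for some call $C_i = \textsc{BuildTree}(\cS,H)$, then by definition $u,v\in H$ and $\dist_H(\{u,v\},T_X)\le 2\Delta$ for the skeleton $T_X$ of the supernode initialized during $C_i$. Hence either $\dist_H(u,T_X)\le 2\Delta$ or $\dist_H(v,T_X)\le 2\Delta$, so $C_i$ is counted by Lemma~\ref{lem:buildtree-threaten} when we set $\basev=u$ or $\basev=v$. Applying that lemma to each of the two endpoints yields at most $2\mu_{\rm build}$ calls with $X^i_{\rm build}=1$.

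Next I would deal with the \textsc{GrowBuffer} contribution. The point is that $X^i_{\rm split}=1$ already implies $X^i_{\rm buffer}=1$: if $u,v\in \cN H_X$, then in particular $u,v\in H$ and $\dist_H(\{u,v\},\cN H_X)=0\le 2\Delta$, so it suffices to bound the number of calls with $X^i_{\rm buffer}=1$. Now if $X^i_{\rm buffer}=1$ for a call $C_i = \textsc{GrowBuffer}(\cS,\cX,H)$, then $u,v\in H$ and $\dist_H(\{u,v\},\cN H_X)\le 2\Delta$, so one of $\dist_H(u,\cN H_X)\le 2\Delta$ or $\dist_H(v,\cN H_X)\le 2\Delta$ holds. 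Lemma~\ref{lem:growbuffer-threaten}, applied once with $\basev=u$ and once with $\basev=v$, then bounds the number of such calls by $2\mu_{\rm buffer}$.

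Putting everything together gives $\sum_i X_i \le 2\mu_{\rm build}+2\mu_{\rm buffer} = O_r(1)$, as both constants are $O_r(1)$ by the cited lemmas. There is no real obstacle here: the entire content of the claim is packaged inside Lemmas~\ref{lem:buildtree-threaten} and~\ref{lem:growbuffer-threaten}, whose proofs (in Section~\ref{SS:cop-threateners}) are the actual technical work. The only small point to verify is the inclusion $X^i_{\rm split}\le X^i_{\rm buffer}$, which follows directly from $\cN H_X \subseteq H$ as used in Step~2 of \textsc{GrowBuffer}.
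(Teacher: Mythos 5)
Your proposal is correct and follows essentially the same route as the paper: split $\dist_H(\{u,v\},\cdot)\le 2\Delta$ into the two cases $\basev=u$ and $\basev=v$ and invoke Lemma~\ref{lem:buildtree-threaten} and Lemma~\ref{lem:growbuffer-threaten} twice each. The only (harmless) difference is the split term: you absorb it via $X^i_{\rm split}\le X^i_{\rm buffer}$ (valid, since $\cN H_X\subseteq H$), whereas the paper bounds it separately by $\sum_i X^i_{\rm split}=1$ using the fact that each vertex is assigned to a supernode only once.
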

\begin{proof}
    For any vertex set $A$, the condition $\dist(\set{u,v}, A) \le 2 \Delta$ implies that either $\dist(u, A) \le 2 \Delta$ or $\dist(v, A) \le 2 \Delta$. Thus, by applying Lemma~\ref{lem:buildtree-threaten} twice (first with $\hat v = u$, and then with $\hat v = v$), we conclude that $\sum_{i \in \mathbb{N}} X^i_{\rm build} \le 2 \cdot \mu_{\rm build} = O_r(1)$.
    Similarly Lemma~\ref{lem:growbuffer-threaten} implies that $\sum_{i} X^i_{\rm buffer} \le 2 \cdot \mu_{\rm buffer} = O_r(1)$.
    Finally, we claim that $\sum_{i} X^i_{\rm split} = 1$. 
    Indeed, observe that if some vertex is in $\cN H_X$ for some call $C_i = \textsc{GrowBuffer}(\cS, \cX, H)$, then that vertex is assigned by $C_i$; the claim follows from the fact that every vertex is assigned only once through the course of the algorithm. 
\end{proof}

We now bound the probability that any specific threatening call cuts $e$.
\begin{claim}
\label{clm:cop-single-probability}
    For any $i$, we have $\Pr[\xi_i \mid X_i = 1] < O_r(\norm{e}/\Delta)$ and $\Pr[\xi_i \mid X_i = 0] = 0$.
\end{claim}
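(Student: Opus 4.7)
The plan is to dispose of the $X_i = 0$ direction immediately and then run a three-way case analysis on the type of threat realized by $C_i$. For the second statement, I would note that each of the three cut events implies its corresponding threat indicator equals $1$: for instance, if $\xi^i_{\rm build}$ holds, then some vertex in $\{u, v\}$ lies within the radius-$\alpha \cdot \Delta/r \le \Delta/r$ neighborhood of $T_X$ inside $H$, so $\dist_H(\{u, v\}, T_X) \le 2 \Delta$ and hence $X^i_{\rm build} = 1$; the other two implications are immediate from the definitions. Thus $\xi_i \Rightarrow X_i = 1$, giving $\Pr[\xi_i \mid X_i = 0] = 0$.

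For the first statement I would condition on the full history of the algorithm right before the fresh randomness of call $C_i$ is drawn, and split into three cases according to which of $X^i_{\rm build}, X^i_{\rm buffer}, X^i_{\rm split}$ equals $1$ (under $X_i = 1$ these are effectively exclusive, since $X^i_{\rm split} = 1$ rules out $\xi^i_{\rm buffer}$). In the \emph{BuildTree case}, let $d_u \coloneqq \dist_H(u, T_X)$ and $d_v \coloneqq \dist_H(v, T_X)$; because $u, v$ are unassigned neighbors the edge $uv$ lies in $H$, so $|d_u - d_v| \le \len{e}$, and $\xi^i_{\rm build}$ occurs iff $\alpha \sim \mathrm{Unif}[0, 1]$ places $\alpha \cdot \Delta/r$ strictly between $d_u$ and $d_v$, an event of probability at most $r \len{e}/\Delta$. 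The \emph{buffer-width case} is structurally identical: replace $T_X$ with $\bdry H_{\downarrow X}$ and $\dist_H$ with $\dist_{\dom_{\cS}(X)}$ (the inequality $|d_u - d_v| \le \len{e}$ persists since $H \subseteq \dom_{\cS}(X)$ contains the edge $uv$), and use $\alpha \sim \mathrm{Unif}[1, 2]$; the bound is again at most $r \len{e}/\Delta$.

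The \emph{split case}, in which $u, v \in \cN H_X$, will be the main obstacle. Here $u$ and $v$ are each reassigned to the supernode $X' \in \cS_{|H}$ minimizing $f_{X'}(\cdot) \coloneqq \dist_{\dom_{\cS}(X)}(\cdot, X' \cap \bdry H_{\downarrow X}) + \alpha_{X'}$, with the $\alpha_{X'} \sim \mathrm{Unif}[0, \Delta/r]$ drawn independently. The key observation will be that $|f_{X'}(u) - f_{X'}(v)| \le \len{e}$ (the $\alpha_{X'}$'s cancel and the distance-to-a-set function is $1$-Lipschitz). For each fixed ordered pair $(X'_1, X'_2)$ I would subtract the two ``arg-min swap'' inequalities to show that $\alpha_{X'_1} - \alpha_{X'_2}$ must land in some interval of length $\le 2 \len{e}$; since the difference of two independent $\mathrm{Unif}[0, \Delta/r]$ variables has triangular density of height at most $r/\Delta$, this has probability at most $2 r \len{e}/\Delta$. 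A union bound over the $O(r^2)$ ordered pairs of competing supernodes --- crucially using $|\cS_{|H}| \le r - 2$ from the [shortest-path skeleton] property --- then gives $\Pr[\xi^i_{\rm split} \mid X^i_{\rm split} = 1] = O_r(\len{e}/\Delta)$. Combining the three cases by a final union bound yields $\Pr[\xi_i \mid X_i = 1] = O_r(\len{e}/\Delta)$. The technical care lies in formalizing the arg-min perturbation argument and in pinning down that the cop-number bound controls how many supernodes can compete for $u$ and $v$.
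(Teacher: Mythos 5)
Your proposal is correct and follows essentially the same route as the paper's proof: the cut events imply the corresponding threat indicators (giving the $X_i=0$ part), and each of the build/buffer/split cases is handled by conditioning on the history so that the relevant distances are independent of the fresh uniform randomness, a triangle-inequality/Lipschitz interval argument of width $O(\norm{e})$, and, for the split case, a union bound over the $O(r^2)$ pairs of supernodes seen by $H$. Your only cosmetic deviation is phrasing the split-case bound via the triangular density of $\alpha_{X'_1}-\alpha_{X'_2}$ instead of conditioning on $\alpha_{X'_2}$ and the distances, which is an equivalent computation.
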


\begin{proof}
First observe that $\Pr[\xi^i_{\rm build} \mid X^i_{\rm build} = 0] = \Pr[\xi^i_{\rm build} \mid X^i_{\rm buffer} = 0] = \Pr[\xi^i_{\rm split} \mid X^i_{\rm split} = 0] = 0$. For the remaining cases, we prove that the probability is upper-bounded by $O_r(\norm{e}/\Delta)$.

\medskip \noindent \textbf{Case 1, \boldmath$\xi^i_{\rm build}$:} Suppose that $X^i_{\rm build} = 1$, and call $C_i = \textsc{BuildTree}(\cS, H)$. The skeleton $T_\eta$ constructed by $C_i$ is a random variable depending on the randomness of the previous calls $C_j$ with $j < i$. 
Crucially, however, $T_\eta$ is independent of the random value $\alpha$ chosen in Step 1 of $\textsc{BuildTree}$. 
Recall that even $\xi^i_{\rm build}$ occurs if exactly one of $u$ and $v$ is assigned to supernode $\eta$ by $C_i$; 
that is, $\xi^i_{\rm build}$ occurs if some $\alpha$ is chosen such that  $\dist_H(v, T_\eta) \le \alpha \cdot \Delta/r$ and  $\dist_H(u, T_\eta) > \alpha \cdot \Delta/r$, or if  $\dist_H(u, T_\eta) \le \alpha \cdot \Delta/r$ and  $\dist_H(v, T_\eta) > \alpha \cdot \Delta/r$. 
By triangle inequality, if $\alpha \cdot \Delta/r < \dist_H(v, T_\eta) - \norm{e}$, then both $\dist_H(v, T_\eta)$ and $\dist_H(u, T_\eta)$ are smaller than $\alpha \cdot \Delta/r$; on the other hand, if $\alpha \cdot \Delta/r > \dist_H(v, T_\eta) + \norm{e}$, then both $\dist_H(v, T_\eta)$ and $\dist_H(u, T_\eta)$ are larger than $\alpha \cdot \Delta/r$. Thus, $\xi^i_{\rm build}$ occurs only if
\begin{equation*}
    \alpha \in \left( \dist_H(v, T_\eta)\cdot \frac{r}{\Delta} - \norm{e}\cdot \frac{r}{\Delta},  \dist_H(v, T_\eta)\cdot \frac{r}{\Delta} + \norm{e} \cdot \frac{r}{\Delta} \right)
\end{equation*}
This is an interval of width $2\norm{e} \cdot r/ \Delta = O_r(\norm{e}/\Delta)$. As $\alpha$ is uniformly distributed and independent of $\dist_H(v, T_\eta)$, event $\xi^i_{\rm build}$ occurs with probability at most $O_r(\norm{e}/\Delta)$.

\medskip \noindent \textbf{Case 2, \boldmath$\xi^i_{\rm buffer}$:} Suppose that $X^i_{\rm buffer} = 1$, and call $C_i = \textsc{GrowBuffer}(\cS, \cX, H)$. As above, the set $\bdry H_{\downarrow X}$ is a random variable that depends only on the randomness of previous calls $C_j$ with $j < i$, and so $\alpha$ is independent from $\bdry H_{\downarrow X}$. By the same argument from Case 1, the probability that $\alpha$ is chosen such that exactly one of $u$ and $v$ is in $\cN H_X$ is at most $O_r(\norm{e}/\Delta)$.
    
\medskip \noindent \textbf{Case 3, \boldmath$\xi^i_{\rm split}$:} Suppose that $X^i_{\rm split} = 1$, and call $C_i = \textsc{GrowBuffer}(\cS, \cX, H)$. The sets $\cS$ and $\cN H_X$ are random variables that depends on the randomness of previous calls $C_j$ with $j < i$ and on the value $\alpha$ chosen is Step 1 of the $C_i$ call to $\textsc{GrowBuffer}$; in particular, for any vertex $v' \in \cN H_X$ and any supernode $X'$ seen by $H$, the distance $\dist_{\dom_{\cS}(X)}(v', X' \cap \bdry H_{\downarrow X})$ is independent of the $\alpha_{X'}$ values chosen during Step 2 of the $C_i$ call. 
(For ease of presentation, define $X''_1 \coloneqq X'_1 \cap \bdry H_{\downarrow X}$ and $X''_2 \coloneqq X'_2 \cap \bdry H_{\downarrow X}$.)
For any two supernodes $X'_1$ and $X'_2$ seen by $H$, we claim that $u$ can be assigned to $X'_1$ and $v$ assigned to $X'_2$ only if
\begin{equation}
\label{eq:voronoi-interval}
    \dist_{\dom_{\cS}(X)}(u, X''_1) + \alpha_{X'_1} \in \left( \dist_{\dom_{\cS}(X)}(v, X''_2) + \alpha_{X'_2} - \norm{e}, \dist_{\dom_{\cS}(X)}(v, X''_2) + \alpha_{X'_2} + \norm{e}\right).
\end{equation}
Indeed, suppose that
\(\dist_{\dom_{\cS}(X)}(u, X''_1) + \alpha_{X'_1} < \dist_{\dom_{\cS}(X)}(v, X''_2) + \alpha_{X'_2} - \norm{e}.\)
Then triangle inequality implies that
\(\dist_{\dom_{\cS}(X)}(v, X''_1) + \alpha_{X'_1} \le \norm{e} + \dist_{\dom_{\cS}(X)}(u, X''_1) + \alpha_{X'_1}
< \dist_{\dom_{\cS}(X)}(v, X''_2) + \alpha_{X'_2},\)
and thus $v$ is not assigned to $X'_2$.
Similarly, if 
\(\dist_{\dom_{\cS}(X)}(u, X''_1) + \alpha_{X'_1} > \dist_{\dom_{\cS}(X)}(v, X''_2) + \alpha_{X'_2} + \norm{e}\),
then one can show
\(\dist_{\dom_{\cS}(X)}(u, X''_1) + \alpha_{X'_1} >\dist_{\dom_{\cS}(X)}(u, X'_2) + \alpha_{X''_2} \),
and so $u$ is not assigned to $X'_1$.

\smallskip
The interval in (\ref{eq:voronoi-interval}) has length $2 \norm{e}$. Recall that $\alpha_{X'_1}$ is uniformly chosen from an interval of length $\Delta/r$, independent of $\alpha_{X'_2}$ and $\dist_{\dom_{\cS}(X)}(v, X''_2)$ random variable. 
Thus for any fixed $X'_1$ and $X'_2$ seen by~$H$, $u$ is assigned to $X'_1$ and $v$ is assigned to $X'_2$ with probability at most $\norm{e}/(\Delta/r)$. Applying union bound over all $O(r^2)$ pairs of supernodes seen by $H$, we conclude that $u$ and $v$ are assigned to different supernodes by Step 2 of $C_i$ with probability at most $O_r(\norm{e}/\Delta)$.
\end{proof}

We now bound the total probability that edge $e$ is cut. 
\begin{lemma}
    Let $G$ be a graph excluding a fixed minor, and let $\cT$ be a buffered cop decomposition constructed by the stochastic procedure $\textsc{BuildTree}(\varnothing, G)$. Then there is a constant $\beta$ such that, for any edge $e$ in $G$, $\Pr[\text{$e$ is cut by $\cT$}] \le \beta \cdot \norm{e}/\Delta$.
\end{lemma}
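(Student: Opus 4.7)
The plan is to combine the two claims established just above---Claim~\ref{clm:total-threateners} (a deterministic bound on the total number of threatening calls) and Claim~\ref{clm:cop-single-probability} (a per-call cut probability conditioned on threatening)---via a standard union bound, converted into expectation form by linearity.

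First, I would observe that if edge $e=(u,v)$ is cut by the output tree $\cT$, then $u$ and $v$ end up in different supernodes, which by the case analysis in the definitions of $\xi^i_{\mathrm{build}}$, $\xi^i_{\mathrm{buffer}}$, and $\xi^i_{\mathrm{split}}$ must be witnessed by some index $i$ with $\xi_i$ occurring. Hence
\[
\Pr[\text{$e$ is cut by $\cT$}] \le \Pr\Bigl[\bigcup_{i\in\mathbb{N}} \xi_i\Bigr] \le \sum_{i\in\mathbb{N}} \Pr[\xi_i].
\]
Using the second half of Claim~\ref{clm:cop-single-probability}, namely $\Pr[\xi_i \mid X_i = 0] = 0$, we can rewrite each summand as
\[
\Pr[\xi_i] = \Pr[\xi_i \mid X_i = 1]\cdot \Pr[X_i = 1] \le O_r(\norm{e}/\Delta)\cdot \Pr[X_i = 1],
\]
where the inequality uses the first half of Claim~\ref{clm:cop-single-probability}.

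Summing over $i$ and exchanging sum with expectation,
\[
\sum_{i\in\mathbb{N}} \Pr[\xi_i] \;\le\; O_r(\norm{e}/\Delta)\cdot \sum_{i\in\mathbb{N}} \Pr[X_i = 1] \;=\; O_r(\norm{e}/\Delta)\cdot \mathbb{E}\!\left[\sum_{i\in\mathbb{N}} X_i\right].
\]
By Claim~\ref{clm:total-threateners}, $\sum_i X_i \le O_r(1)$ pointwise on every execution of the algorithm, so the expectation is also $O_r(1)$. Combining, $\Pr[\text{$e$ is cut by $\cT$}]\le \beta \cdot \norm{e}/\Delta$ for some constant $\beta = O_r(1)$.

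I do not anticipate any genuine obstacle here, since all the technical work has been absorbed into the preceding two claims; the only subtle point to double-check is that the bound $\Pr[\xi_i\mid X_i = 1] \le O_r(\norm{e}/\Delta)$ in Claim~\ref{clm:cop-single-probability} is valid \emph{conditionally}---that is, the random variables on which $X_i$ depends (the outcomes of earlier calls $C_j$ for $j<i$, together with the identity of the current call $C_i$) are independent of the fresh uniform randomness ($\alpha$ in Step~1 of \textsc{BuildTree}/\textsc{GrowBuffer} and the perturbations $\alpha_{X'}$ in Step~2 of \textsc{GrowBuffer}) used to bound $\Pr[\xi_i]$. This independence is exactly what the proof of Claim~\ref{clm:cop-single-probability} exploits, so the conditioning on $X_i = 1$ does not alter the distribution of those fresh variables and the union-bound step is sound.
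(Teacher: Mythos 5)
Your proposal is correct and follows essentially the same route as the paper: a union bound over the events $\xi_i$, the conditional bound $\Pr[\xi_i \mid X_i = 1] \le O_r(\norm{e}/\Delta)$ together with $\Pr[\xi_i \mid X_i = 0] = 0$ from Claim~\ref{clm:cop-single-probability}, and the pointwise bound $\sum_i X_i \le O_r(1)$ from Claim~\ref{clm:total-threateners} converted via linearity of expectation into $\sum_i \Pr[X_i = 1] \le O_r(1)$. Your closing remark about the fresh uniform randomness being independent of the conditioning is exactly the point the paper's proof of Claim~\ref{clm:cop-single-probability} relies on, so no gap remains.
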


\begin{proof}
By Lemma~\ref{clm:total-threateners}, $\sum_{i \in \mathbb{N}} X_i \le \mu = O_r(1)$. Then clearly  $\mathbb{E}[\sum_{i \in \mathbb{N}} X_i] \le \mu$, and linearity of expectation implies that $\sum_{i \in \mathbb{N}} \Pr[X_i = 1] \le \mu$.
We have
\begin{align*}
    \Pr[\text{$e$ is cut by $\cT$}]
    &= \Pr[\bigcup_{i \in \mathbb{N}} \xi_i]
    \le \sum_{i \in \mathbb{N}}\Pr[\xi_i]\\
    &= \sum_{i \in \mathbb{N}} \Pr[\xi_i \mid X_i = 1] \cdot \Pr[X_i = 1] + \Pr[\xi_i \mid X_i = 0] \cdot \Pr[X_i = 0] \\
    &\le \sum_{i \in \mathbb{N}} O_r(1) \cdot \frac{\norm{e}}{\Delta} \cdot \Pr[X_i = 1] &\text{(by Claim~\ref{clm:cop-single-probability})}\\
    &\le \mu \cdot O_r(1) \cdot \frac{\norm{e}}{\Delta}
    = O_r(1) \cdot \frac{\norm{e}}{\Delta}.
\end{align*}
\vspace{-11pt}
\aftermath
\end{proof}

\subsection{Bounding the number of threateners}
\label{SS:cop-threateners}

Our goal in this section is to prove 
\Cref{lem:buildtree-threaten} and 
\ref{lem:growbuffer-threaten} from the previous subsection.
Both lemmas turn out to depend on
bounding the number of threateners $X$ to any vertex $\basev$ with respect to $\initdom(X)$.

\begin{lemma}
\label{lem:bdd-threatener-dom0}
For any execution of $\textsc{BuildTree}(\varnothing, G)$, and for any vertex $\basev$, there are $O_r(1)$ supernodes $X$ such that $\dist_{\initdom(X)}(\basev, X) \le 2 \Delta$. (The dependency on $r$ is around $r^{O(r^2)}$.)
\end{lemma}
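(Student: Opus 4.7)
I would prove the contrapositive by assuming there are many such threateners and extracting a $K_r$-minor. Let $X_1, X_2, \ldots, X_N$ be the supernodes (ordered by creation time) satisfying $\dist_{\initdom(X_i)}(\basev, X_i) \le 2\Delta$, and for each $i$ fix a shortest path $P_i$ from $\basev$ to $X_i$ inside $\initdom(X_i)$, of length at most $2\Delta$.

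The first step is a nesting/disjointness observation: if $j > i$ then $X_i \cap \initdom(X_j) = \varnothing$, because at the moment $X_j$ is initialized the vertices of $X_i$ have already been assigned. In particular $X_1, \ldots, X_N$ are pairwise vertex-disjoint, and the path $P_i$ avoids every earlier $X_1, \ldots, X_{i-1}$ entirely. This already yields a ``star of paths'' structure centered at $\basev$ with $N$ branches $X_1, \ldots, X_N$, but only a $K_{1,N}$-minor — too weak on its own.

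To upgrade to $K_r$, I would exploit the [shortest-path skeleton] property. Each $P_i$ may pass through vertices that later get assigned to other supernodes; call these the \emph{transit supernodes} of $P_i$. I would assign each threatener $X_i$ a combinatorial \emph{type} that records (a) the ordered sequence of transit supernodes along $P_i$ and (b) for each transit supernode $Y$, which of the at most $r-1$ leaves of the skeleton $T_Y$ the path $P_i$ enters/exits through, as well as which of the at most $r-1$ ancestors in $\cA_Y$ mediate the entry/exit. A careful nested count — $r-1$ choices for leaves, $r-1$ choices for ancestors, repeated over the $O_r(1)$ levels that can be reached — bounds the number of possible types by roughly $r^{O(r^2)}$.

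The heart of the argument, and the main obstacle, is the following pigeonhole-minor extraction. Assuming $N \gg r^{O(r^2)}$, I pick $r$ threateners $X_{i_1}, \ldots, X_{i_r}$ sharing a common type. The common type provides a shared skeleton of transit supernodes through which, for every pair $(k,\ell)$, one can route a connecting structure that does \emph{not} use the shared prefix of the $P_{i_k}$'s near $\basev$ (where the paths would otherwise all merge). I would then assemble $r$ pairwise vertex-disjoint, pairwise adjacent, connected branch sets by combining each $X_{i_k}$ with a suitable tail of $P_{i_k}$ and carefully allocated portions of the common transit skeletons — yielding $K_r$ as a minor and contradicting minor-freeness. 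The delicate part is the allocation of the common transit structure among the $r$ branch sets so that disjointness and pairwise adjacency are maintained simultaneously; this is precisely where the bound on skeleton leaves and on $|\cA_Y|$, used iteratively across the nested hierarchy of transit supernodes, is invoked — and it is the source of the $r^{O(r^2)}$ dependence.
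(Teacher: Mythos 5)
Your strategy---pigeonholing the threateners by a combinatorial ``type'' of their paths $P_i$ and then extracting a $K_r$-minor---is genuinely different from the paper's proof, which never builds a minor: the paper charges each threatener along a threatening sequence of at most $r-1$ \textsc{GrowBuffer} calls (Claim~\ref{clm:threat-sequence}) and bounds the charges via the dag $\vec G$ and Filtser-style counting (Claims~\ref{clm:expand-count}, \ref{clm:cclmst-seen}, Corollary~\ref{cor:victim-bound}, Claim~\ref{clm:charge}). Unfortunately your sketch has a genuine gap already at its first step: the ``nesting/disjointness observation'' is false. If $X_i$ is created before $X_j$, only the \emph{snapshot} of $X_i$ at the moment $X_j$ is initialized is disjoint from $\initdom(X_j)$; the algorithm explicitly lets supernodes seen by $\initdom(X_j)$ keep growing into $\initdom(X_j)$ through later \textsc{GrowBuffer} calls, so the final $X_i$ can intersect $\initdom(X_j)$ and $P_j$ need not avoid $X_1,\ldots,X_{j-1}$. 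This mismatch between $\initdom$ and $\dom$ is precisely the difficulty of the lemma (it is why Filtser's bound for supernodes with $\dist_{\dom(X)}(\basev,X)\le 2\Delta$ cannot be used off the shelf), and your proposal inherits it rather than resolving it: the buffer property controls distances only inside domains of ancestors, so nothing in your sketch bounds the number of transit supernodes met by a path of length $2\Delta$ inside $\initdom(X_i)$, and without such a bound the number of ``types'' is not $r^{O(r^2)}$. (Moreover, if a type records the identities of the transit supernodes, its cardinality is not a function of $r$ at all; if it records only leaf/ancestor indices, two threateners of equal type share no common structure to route through.) The paper's whole machinery of $\out{\basev}$, the strengthened buffer property relative to $\dom(X)\cup\out{\basev}$ (Claim~\ref{clm:out-buffer}), and the step-by-step threatening sequence exists exactly to substitute for the missing control here.

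The second gap is that the minor-extraction step is not an argument. Even granting $r$ threateners of a common type, routing all of their paths through the same transit supernodes yields only a star-like structure through shared hubs---essentially the $K_{1,N}$ you yourself dismissed. A $K_r$-minor needs $r$ pairwise disjoint branch sets with an edge between every pair; a shared transit supernode cannot be allocated to more than one branch set, and a common type provides no edges between $X_{i_k}$ and $X_{i_\ell}$ or between their tails. You flag this allocation as ``the delicate part,'' but it is the entire content of the lemma, and no concrete mechanism is proposed; indeed the bound ultimately rests on the algorithmic structure of the decomposition (bounded out-degree of $\vec G$, at most $r-2$ seen supernodes, the buffer property), which the paper exploits through counting lemmas rather than through a direct contradiction with minor-freeness.
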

We will prove \Cref{lem:buildtree-threaten} and \ref{lem:growbuffer-threaten} using \Cref{lem:bdd-threatener-dom0} after some properties of the buffered cop decomposition are introduced in Section~\ref{SSS:properties}.
We remark that if one follows the original construction of buffered cop decomposition, Filtser~\cite[\S4.2]{Fil24} showed that there are $O_r(1)$ supernodes $X$ such that $\dist_{\dom(X)}(\basev, X) \le 2 \Delta$; notice that the distance is measured in the final domain $\dom(X)$ after the buffered cop decomposition is built, as oppose to $\initdom(X)$ which is possibly bigger.
As a result, we cannot use the threatener bound from Filtser~\cite[\S4.2]{Fil24} directly; in fact, we need to perform an involved charging argument based on the recursive call structure of $\textsc{GrowBuffer}$.

\medskip
To bound the number of threateners $X$ with respect to $\initdom(X)$, intuitively we will prove the existence of a \emph{threatening sequence} $(X_0, \dots, X_\ell)$ from $X$ to the supernode $\eta$ containing $\basev$, such that (1) $X_0 = X$, (2) $X_{i-1}$ threatens $X_i$ for each $i$, and (3) $X_\ell$ threatens $\eta$, where each ``threatening'' relation is with respect to the final domain of the threatener.  
This way we can apply the bound from Filtser~\cite[\S4.2]{Fil24} iteratively. 
Unfortunately, we cannot actually prove the second property of the threatening sequence;
instead, we can only guarantee the existence of another supernode ``related'' to $X_i$ that is threatened by $X_{i-1}$.
(See \Cref{clm:threat-sequence}.)
This leads to many complications, and along the way we need to re-establish and strengthen several properties from the buffered cop decomposition, including the supernode buffer property (Lemma~\ref{clm:out-buffer}) and the threatener bound (\Cref{clm:out-neighbor-count} and \Cref{cor:out-filtser}), both with respect to a modification of the final domain.

\paragraph{Charging scheme.}
For every supernode $X_0$ with $\dist_{\initdom(X_0)}(\hat{v}, X_0) \le 2\Delta$, let $(X_0, \ldots, X_\ell)$ denote the threatening sequence of supernodes from $X_0$ to the supernode $\eta$ containing $\basev$ guaranteed by Claim~\ref{clm:threat-sequence}, such that $X_i$ is a \emph{victim} of $X_{i-1}$ for every $i$, and $\eta$ is a victim of $X_\ell$.
(The exact statement of the threatening sequence or the meaning of victim is not important at the moment; one only has to know that $\ell$, the length of any sequence, is less than $r$.)
For each supernode $X$, create $r$ pairs \EMPH{$(X , i)$}, one for each $i \in [0 \,..\, r-1]$.
For each threatening sequence, add a \EMPH{charge} to the pair $(X_0, 0)$.
We then move the charge along every threatening sequence from $(X_{i-1}, i-1)$ to $(X_{i}, i)$ until the charge reaches $(X_\ell, \ell)$.
Finally, for every sequence we move the charge on $(X_\ell, \ell)$ to $\eta$ (length $\ell$ could be different from one sequence to another).
Intuitively speaking, the amount of charge on the pair $(X, i)$ across all threatening sequences counts the number of times a supernode $X$ serves as the $i$-th step threatener in some threatening sequence.
And the total amount of charge on $\eta$ is equal to the number of supernode $X_0$ such that $\dist_{\initdom(X_0)}(\hat{v}, X_0) \le 2\Delta$.

\begin{claim}
\label{clm:charge}
    For any $i \in \mathbb{N}$ and any supernode $X$, the pair $(X, i)$ receives at most $\mu^i$ charges, where $\mu = (r-1) \cdot (r-2) \cdot \binom{7r-1}{r-1}$.
    The supernode $\eta$ containing $\basev$ receives at most $\mu^{r}$ charges.
\end{claim}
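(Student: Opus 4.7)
The plan is to establish the per-pair bound by induction on $i$ and then sum across all sequence lengths to bound the charge at $\eta$. The crux is a single structural bound on the in-degree of the ``victim'' relation.

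\emph{Induction on $i$.} The base case $i = 0$ is immediate: each threatening sequence is uniquely identified by its starting supernode $X_0$, so $(X_0, 0)$ accumulates at most one charge. For the inductive step, any charge arriving at $(X, i)$ must previously have visited some pair $(Y, i-1)$ in which $X$ is a victim of $Y$. Writing $N(X) := \#\{Y : X \text{ is a victim of } Y\}$, one gets
\[
\#\mathrm{charges}(X, i) \,\le\, \sum_{Y\,:\, X \text{ is a victim of } Y} \#\mathrm{charges}(Y, i-1) \,\le\, N(X) \cdot \mu^{\,i-1}
\]
by the inductive hypothesis. The entire inductive step therefore reduces to proving $N(X) \le \mu$.

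\emph{Structural bound on $N(X)$, the expected main obstacle.} The product $(r-1)(r-2)\binom{7r-1}{r-1}$ naturally suggests a three-part argument. The binomial factor should come from \Cref{cor:out-filtser}, the strengthened Filtser-style threatener bound developed later in this section: applied to a suitable ``anchor'' vertex of $X$ with respect to the modified final domain, it caps by $\binom{7r-1}{r-1}$ the number of supernodes whose modified domain reaches within distance $2\Delta$ of the anchor. The factor $r-1$ accounts for the number of ancestor supernodes to which $X$'s skeleton can be attached, by the [shortest-path skeleton] property, providing at most $r-1$ candidate anchors. The factor $r-2$ corresponds to the intermediate supernode promised by Claim~\ref{clm:threat-sequence} that links the threatener $Y$ to $X$ along the sequence; once anchor and intermediate are fixed, $Y$ is pinned down by the Filtser-type bound. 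The delicate step is verifying that the domain modification needed to invoke \Cref{cor:out-filtser} and \Cref{clm:out-neighbor-count} is compatible with these combinatorial choices, so that each triple (ancestor, intermediate, Filtser-counted supernode) corresponds to at most one valid $Y$.

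\emph{Charges at $\eta$.} Once $N(\cdot) \le \mu$ is established, the charge at $\eta$ equals the number of threatening sequences, which can be grouped by sequence length $\ell$ (recall $\ell < r$):
\[
\#\mathrm{charges}(\eta) \,=\, \sum_{\ell=0}^{r-1} \sum_{Y\,:\, \eta \text{ is a victim of } Y} \#\mathrm{charges}(Y, \ell) \,\le\, \sum_{\ell=0}^{r-1} \mu \cdot \mu^{\ell} \,\le\, \mu^{\,r},
\]
where we absorb the geometric sum into the dominant term. This yields the desired constant bound on the number of supernodes $X_0$ with $\dist_{\initdom(X_0)}(\basev, X_0) \le 2\Delta$, proving Lemma~\ref{lem:bdd-threatener-dom0} and hence Lemmas~\ref{lem:buildtree-threaten} and~\ref{lem:growbuffer-threaten}.
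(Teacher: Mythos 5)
There is a genuine gap, and it sits exactly where the paper warns the difficulty lies. Your induction is set up on the premise that a charge moving from $(X_{i-1}, i-1)$ to $(X_i, i)$ certifies that ``$X_i$ is a victim of $X_{i-1}$,'' so that the in-degree $N(X) = \#\{Y : X \text{ is a victim of } Y\}$ controls the inductive step. But the threatening sequence of \Cref{clm:threat-sequence} does \emph{not} provide this relation: $X_i$ is an \emph{ancestor} of $X_{i-1}$ in the partition tree (hence cannot be its victim), and all that condition (2) of \Cref{clm:threat-sequence} guarantees is that some \emph{other} supernode $\eta_i$, which is a $q$-step victim of $X_{i-1}$, is seen by the subgraph $H_i$ of the $\textsc{GrowBuffer}$ call $C_i$ that expands $X_i$. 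So the relevant in-degree of $(X,i)$ is not $N(X)$; it is the number of triples (call $C$ expanding $X$ with $\basev \in H$, supernode $\eta$ seen by $H$, supernode $X_{i-1}$ having $\eta$ as a $q$-step victim). The paper bounds these three factors by \Cref{clm:expand-count} (at most $r-1$ such calls, using $\basev \in H_i$ from condition (1)), \Cref{clm:cclmst-seen} (at most $r-2$ supernodes seen by $H$), and \Cref{cor:victim-bound} with $q = 2r$ (at most $\binom{7r-1}{r-1}$ threateners of each $\eta$), which is where $\mu = (r-1)(r-2)\binom{7r-1}{r-1}$ comes from. Your factor attribution does not reconstruct this: the $r-1$ is not ``candidate anchors from the skeleton attachment'' used to apply a Filtser-type bound to $X$ itself, and the binomial bound is applied to the intermediate victim $\eta_i$, not to an anchor vertex of $X$. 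The verification you defer as ``the delicate step'' is precisely the content of the claim, so the proposal does not yet constitute a proof.

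A smaller quantitative issue: for the charge at $\eta$ you write $\sum_{\ell=0}^{r-1} \mu \cdot \mu^{\ell} \le \mu^{r}$, which is false ($\mu + \mu^2 + \cdots + \mu^{r} > \mu^{r}$). The paper obtains exactly $\mu^{r}$ because the final hop uses \Cref{clm:threat-sequence}(3) directly: $\eta_{\basev}$ is a victim of $X_\ell$, so the number of possible $X_\ell$ per length $\ell$ is only $\binom{7r-1}{r-1}$ (by \Cref{cor:victim-bound}), not $\mu$; summing $\binom{7r-1}{r-1}\sum_{\ell=0}^{r-1}\mu^{\ell}$ then fits under $\mu^{r}$ using the slack factor $(r-1)(r-2)$. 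This is fixable, but as written your constant for the second assertion of the claim is not the stated one.
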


We postpone the proof of \Cref{clm:charge}, and proceed to prove Lemma~\ref{lem:bdd-threatener-dom0} using the charging bound.

\begin{proof}[of Lemma~\ref{lem:bdd-threatener-dom0}]

    Fix a vertex $\hat{v}$.
    We need to count the number of supernodes $X$ satisfying $\dist_{\initdom(X)}(\basev, X) \le 2 \Delta$.
    This is equal to the amount of charge on all pairs $(X, 0)$, 
    which is also equal to the final charge received by $\eta$.
    By Claim~\ref{clm:charge}, $\eta$ receives at most $\mu^{r}$ charges, where $\mu = (r-1) \cdot (r-2) \cdot \binom{7r-1}{r-1}$.
    In total there are
    \[
    \Paren{(r-1) \cdot (r-2) \cdot \binom{7r-1}{r-1}}^{r} \le O_r(1)
    \]
    many supernodes $X$ satisfying $\dist_{\initdom(X)}(\basev, X) \le 2 \Delta$.
\end{proof}

The rest of the section is devoted to the existence of threatening sequence (\Cref{clm:threat-sequence}) and 
a proof of the charging bound (\Cref{clm:charge}).

\subsubsection{Properties of buffered cop decomposition}
\label{SSS:properties}

Let $\basev$ be a fixed vertex.
Imagine fixing an arbitrary sequence of random bits, so that there is a well-defined and deterministic notion of ``the set of calls to \textsc{BuildTree} and \textsc{GrowBuffer} made during the execution of $\textsc{BuildTree}(\varnothing, G)$''.
Define \EMPH{$\out \basev$} to be the set of all vertices assigned by some call $\textsc{GrowBuffer}(\cdot, \cdot, H)$
where the subgraph $H$ does not contain $\basev$. 
We need a few observations from \cite{CCLMST24} that extends to the presence of $\out \basev$.
The main reason we introduce $\out \basev$, which may not be clear at the moment, is for a technical but crucial claim for the threatening sequence (see \Cref{clm:find-next-call}).

We elaborate a bit on the motivation for $\out \basev$.  
In the threatening sequence sketched at the start of Section~\ref{SS:cop-threateners}, we would like to guarantee that $X_{i-1}$ threatens $X_i$ (or, more precisely, threatens some supernode related to $X_i$) with respect to $\dom(X_{i-1})$.
However, 
the amount of charge on $X_i$ would depend on 
the number of possible threateners $X_{i-1}$, which is tied to the number of calls to $\textsc{GrowBuffer}$ that expand $X_i$.
The number of such calls could be unbounded.%
\footnote{While the out-of-box [supernode radius] property from \cite[\S3.4]{CCLMST24} implies that $X_i$ can expand at most $O(1)$ times ``in one direction'', there could be an unbounded number of $\textsc{GrowBuffer}$ calls that each expand $X_i$ in different directions. 
Indeed, imagine a long-star graph with a large number of paths attached to a single center vertex $x$, which we initialized to be a supernode $X$.  For each path $P$ hanging off $x$, there are $O(1)$ $\textsc{GrowBuffer}$ calls that assign vertices in $P$ to be part of $X$; however there could be unbounded many calls to $\textsc{GrowBuffer}$ that expands $X$ along different paths $P$.}
We only have a bound on the number of calls $\textsc{GrowBuffer}(\cdot, \cdot, H)$ that expand $X_i$ \emph{and satisfy $\basev \in H$}, for any fixed vertex $\basev$; see Claim~\ref{clm:expand-count}.
To exploit this bound, we construct a threatening sequence with the weaker guarantee that $X_{i-1}$ threatens (a supernode related to) $X_i$ with respect to $\dom(X_{i-1}) \cup \out \basev$, rather than with respect to $\dom(X_{i-1})$.

\medskip
The following observation is essentially a rephrasing of Invariant~3.3 in \cite{CCLMST24}, and follows from the fact that a call to $\textsc{GrowBuffer}$ or $\textsc{BuildTree}$ is made on a subgraph $H$ only if $H$ is a maximal connected component of unassigned vertices.

\begin{observation}[Rephrasing of \cite{CCLMST24} Invariant~3.3]
\label{obs:no-out-adj}
Suppose that some call $C$, whether it is $\textsc{BuildTree}(\cS, H)$ or $\textsc{GrowBuffer}(\cS, \cX, H)$, occurs during execution of the algorithm. 
Then every vertex $x$ in $G \setminus H$ that is adjacent to $H$ was assigned in $\cS$ at the time $C$ was called.  Further, $x$ was assigned during some call to $\textsc{BuildTree}(\cS', H')$ or $\textsc{GrowBuffer}(\cS', \cX', H')$ for which $H$ is a subgraph of $H'$. In particular, this implies that if $\basev \in H$, then every such vertex $x$ cannot be in $\out \basev$. 
\end{observation}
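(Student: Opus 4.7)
The plan is to proceed by strong induction on the depth of the call $C$ in the recursion tree of $\textsc{BuildTree}$/$\textsc{GrowBuffer}$ calls made during execution of $\textsc{BuildTree}(\varnothing, G)$. The base case is the root call, where $H = G$ and so $G \setminus H$ is empty; the claim is vacuous. For the inductive step, I write $C^*$ for the parent call (with subgraph $H^*$) that recursively invoked $C$. Inspecting the pseudocode in Figure~\ref{fig:cop-decomposition}, $C$ arises in one of three places: Step 2 of $\textsc{BuildTree}$ on a component of $H^* \setminus \eta$, Step 3 of $\textsc{BuildTree}$ on a component of $H^* \setminus \bigcup\cS$, or Step 3 of $\textsc{GrowBuffer}$ on a component of $H^* \setminus \bigcup\cS$. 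In every case $H$ is a \emph{maximal} connected component of unassigned vertices inside $H^*$ at the moment $C$ is made, and in particular $H \subseteq H^*$.

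Now fix any $x \in G \setminus H$ adjacent (in $G$) to some vertex of $H$. Split into two cases. If $x \in G \setminus H^*$, then $x$ is adjacent to $H^*$ as well (via the same edge into $H \subseteq H^*$), so the inductive hypothesis applied to $C^*$ tells us that $x$ was already assigned at the time $C^*$ was called (hence certainly at the time $C$ was called) and that $x$ was assigned during some call on a subgraph $H^{**} \supseteq H^*\supseteq H$. If instead $x \in H^* \setminus H$, then maximality of $H$ as an unassigned component of $H^*$ forces $x$ to have been assigned before $C$. I then identify which call assigned $x$ using the fact that $H^* \setminus \eta$ (respectively $H^* \setminus \bigcup\cS$) is an \emph{induced} subgraph, so its connected components are mutually non-adjacent in $G$: the component $\tilde H$ of $H^* \setminus \eta$ that meets $H$ must contain $x$ unless $x \in \eta$. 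In the first sub-case, $x$ was assigned during a $\textsc{GrowBuffer}$ call on $\tilde H \supseteq H$; in the second, $x$ was assigned during $C^*$ itself on $H^* \supseteq H$. Either way, there is a call on a subgraph containing $H$ that assigned $x$.

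The last sentence of the observation then drops out: if $\basev \in H$ and $x$ was assigned during a call on some subgraph $H'\supseteq H$, then $\basev \in H'$, so by the very definition of $\out{\basev}$ (vertices assigned by calls whose subgraph omits $\basev$) we have $x \notin \out{\basev}$.

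The main obstacle I anticipate is Step 3 of $\textsc{BuildTree}$: here $H$ is a component of $H^* \setminus \bigcup\cS$ rather than of $H^* \setminus \eta$, and $\bigcup\cS$ at this point has absorbed extra vertices from the intermediate $\textsc{GrowBuffer}$ calls of Step 2, each operating on a distinct component of $H^* \setminus \eta$. To pin down a single call whose subgraph contains $H$, the argument relies on the key geometric observation that $H$ lies entirely inside one specific component $\tilde H$ of $H^* \setminus \eta$ (since $\eta \subseteq \bigcup \cS$ and $H$ is connected and unassigned), and that the Step~2 $\textsc{GrowBuffer}$ call on $\tilde H$ only assigns vertices within $\tilde H$. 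Verifying this carefully—and checking the analogous fact for $\textsc{GrowBuffer}$'s Step~3—is where the bookkeeping is concentrated; everything else is a routine unfolding of the definitions.
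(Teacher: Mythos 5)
The paper itself gives no real proof of this observation: it is presented as a rephrasing of Invariant~3.3 of \cite{CCLMST24}, justified in one sentence by the fact that every call is made on a maximal connected component of unassigned vertices. Your self-contained induction is therefore more ambitious than what the paper does, but it has a genuine gap at exactly the point you flag. In the case where $C$ is spawned by Step~3 of $\textsc{BuildTree}(\cS,H^*)$ and $x\in \tilde H\setminus(H\cup\eta)$, you conclude that ``$x$ was assigned during a \textsc{GrowBuffer} call on $\tilde H\supseteq H$''. That is not justified: the Step-2 call on $\tilde H$ spawns its own recursion, and $x$ may well be assigned during a descendant call $\textsc{GrowBuffer}(\cS_d,\cX_d,H_d)$ with $H_d\subsetneq\tilde H$. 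The observation requires that the subgraph of the call that \emph{actually assigned} $x$ contain $H$, i.e.\ $H\subseteq H_d$; this containment is the entire content of the ``Further\dots'' clause (and is what the $\out{\basev}$ conclusion rests on), and it is never established. It also does not follow from your inductive hypothesis: the $H_d$-call sits in a sibling branch of the recursion tree and can be at depth at least that of $C$, so induction on recursion depth never reaches it.

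The gap is fixable, but needs an extra ingredient. Either (i) induct on \emph{execution order} rather than depth and prove the stronger invariant --- precisely \cite{CCLMST24} Invariant~3.3, which the paper simply cites --- that every call's subgraph is a maximal connected component of unassigned vertices of $G$; then $x\in H_d$, the neighbor $h\in H$ of $x$ is unassigned at that time and hence lies in $H_d$ by maximality, and connectivity of the still-unassigned set $H$ gives $H\subseteq H_d$. Or (ii) keep your structure and add a chain argument: writing $\tilde H=H_0\supseteq H_1\supseteq\dots\supseteq H_d$ for the nested subgraphs of the \textsc{GrowBuffer} recursion leading to the assigning call, show by induction along the chain that $H\cup\{x\}$ stays inside each $H_i$ (all of $H$ is unassigned throughout, $x$ is unassigned until Step~2 of the $H_d$-call, and $H\cup\{x\}$ is connected), whence $H\subseteq H_d$. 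The remaining parts of your argument --- the base case, the case $x\in G\setminus H^*$ via the parent, the case $x\in\eta$, and the Step-3-of-\textsc{GrowBuffer} case where the assigning call is $C^*$ itself --- are fine.
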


One immediate consequence of Observation~\ref{obs:no-out-adj} is:

\begin{observation}
\label{obs:see-ancestor}
    Let $X$ and $\eta$ be supernodes, where $\initdom(\eta)$ sees $X$ at the time $\eta$ is initialized. Then $X$ is a proper ancestor of $\eta$ in the partition tree.
\end{observation}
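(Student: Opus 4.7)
The plan is to show that the $\textsc{BuildTree}$ call $D_X$ that first created $X$ must be a strict ancestor of the $\textsc{BuildTree}$ call $D_\eta = \textsc{BuildTree}(\cS, H)$ that just initialized $\eta$. Since the partition tree is built to mirror the $\textsc{BuildTree}$ recursion tree (in Step~3 of $\textsc{BuildTree}$, the output subtree is attached as a child of the current root), ancestry on recursion-tree nodes transfers directly to ancestry on the corresponding supernodes. To locate $D_X$ relative to $D_\eta$, I would apply Observation~\ref{obs:no-out-adj} iteratively, tracing backward in time along a chain of ``assigning calls.''

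Concretely, by the definition of ``sees'' some vertex $x^{(1)} \in X$ lies in $G \setminus H$ and is adjacent to $H$; Observation~\ref{obs:no-out-adj} applied to $D_\eta$ then yields an earlier call $C^{(1)}$ with subgraph $H^{(1)} \supseteq H$ during which $x^{(1)}$ was assigned (necessarily to $X$, since each vertex is assigned exactly once). If $C^{(1)}$ is a $\textsc{BuildTree}$ call then it must equal $D_X$ and I am done. Otherwise $C^{(1)}$ is a $\textsc{GrowBuffer}$ call, and here I would invoke its Step~2 rule: the chosen $X'$ in the assignment of any vertex $v$ must satisfy $X' \cap \bdry H_{\downarrow X''} \neq \varnothing$, since otherwise the objective is infinite. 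Specializing this to $X' = X$ (the supernode $x^{(1)}$ is assigned to) shows that $X$ already contained some vertex $x^{(2)}$ adjacent to $H^{(1)}$ at the start of $C^{(1)}$; necessarily $x^{(2)} \neq x^{(1)}$, since $x^{(1)}$ only enters $X$ during $C^{(1)}$. Now apply Observation~\ref{obs:no-out-adj} to the pair $(C^{(1)}, x^{(2)})$ to obtain a still earlier call $C^{(2)}$ with $H^{(2)} \supseteq H^{(1)}$, and iterate.

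For termination, I would argue that the vertices $x^{(1)}, x^{(2)}, \ldots$ produced along the way are pairwise distinct: each $x^{(i+1)}$ is already in $X$ at the start of $C^{(i)}$, while each earlier $x^{(j)}$ (with $j \leq i$) only enters $X$ during $C^{(j)}$, and $C^{(j)}$ occurs weakly after $C^{(i)}$. Because $X$ is finite the chain must halt, which is possible only when some $C^{(k)}$ is a $\textsc{BuildTree}$ call; this call must be $D_X$. Chaining the containments yields $H_{D_X} = H^{(k)} \supseteq H^{(k-1)} \supseteq \cdots \supseteq H$. Finally, the $\textsc{BuildTree}$ recursion tree has the structural property that sibling subgraphs are disjoint while a descendant's subgraph is contained in its ancestor's; together with $H \neq \varnothing$, the inclusion $H_{D_X} \supseteq H$ forces $D_X$ to be an ancestor of $D_\eta$, and properness follows from $X \neq \eta$ (since $X$ already existed when $\eta$ was initialized).

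The main technical hurdle is precisely the $\textsc{GrowBuffer}$ step of the iteration: one must carefully extract a pre-existing ``witness vertex'' $x^{(i+1)} \in X$ from the algorithmic description of Step~2, and then verify distinctness of these witnesses across iterations so that finiteness of $X$ genuinely delivers termination. Once that bookkeeping is in place, the remaining recursion-tree argument is routine.
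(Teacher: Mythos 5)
Your proof is correct and follows essentially the same route as the paper, which treats the observation as an immediate consequence of Observation~\ref{obs:no-out-adj}: the intended argument (used explicitly again inside the proof of Claim~\ref{clm:find-next-call}, ``by repeatedly applying Observation~\ref{obs:no-out-adj}'') is exactly your backward chain of assigning calls showing that $X$ was initialized by a $\textsc{BuildTree}$ call on a subgraph containing $\initdom(\eta)$, after which laminarity of the $\textsc{BuildTree}$ recursion gives proper ancestry in the partition tree. Your write-up just fills in the bookkeeping (witness vertices in $\bdry H_{\downarrow X}$, strictly decreasing assignment times) that the paper leaves implicit.
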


\begin{claim}[Rephrasing of \cite{CCLMST24} Claim 3.4(2)]
\label{clm:cclmst-dom}
    Suppose that call $C$, whether it is $\textsc{BuildTree}(\cS, H)$ or $\textsc{GrowBuffer}(\cS, \cX, H)$, occurs at some point in the execution of the algorithm. 
    Over the course of execution, every vertex in $H$ is either assigned to a supernode initialized by or below $C$, or is assigned to some supernode that $H$ sees (at the time $C$ is called).
    In particular, if $C = \textsc{BuildTree}(\cS, H)$ which initializes some supernode $\eta$,
    then every vertex $x$ in $\initdom(\eta)$ but not in $\dom(\eta)$ is assigned to some supernode in $\cS_{|\eta}$.
\end{claim}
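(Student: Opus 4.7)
\medskip
\noindent\textbf{Proof proposal.}
The plan is to prove the first statement by structural induction on the recursion tree rooted at the call $C$, and then derive the second statement as a direct consequence. The induction hypothesis is that for any call $C' = \textsc{BuildTree}(\cS', H')$ or $\textsc{GrowBuffer}(\cS', \cX', H')$ appearing strictly below $C$ in the recursion tree, every vertex in $H'$ is eventually assigned either to a supernode initialized by $C'$ or a descendant, or to a supernode in $\cS'_{|H'}$ at the time $C'$ is called.

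For the inductive step, I split on the type of $C$. Suppose $C = \textsc{BuildTree}(\cS, H)$ and let $\eta$ be the supernode initialized in Step~1. Every vertex assigned in Step~1 goes to $\eta$, which is initialized by $C$ itself. For each recursive call $C' = \textsc{GrowBuffer}(\cS \cup \{\eta\}, \cX, H')$ in Step~2 or $C' = \textsc{BuildTree}(\cdot, H')$ in Step~3, the subgraph $H'$ is a component of $H$ (minus some newly assigned vertices), so I claim that every supernode seen by $H'$ at the time $C'$ is called is either $\eta$ or lies in $\cS_{|H}$. Indeed, any vertex in $G \setminus H'$ adjacent to $H'$ either belongs to $H \setminus H'$ (hence already assigned to $\eta$ or to some supernode in $\cS_{|H}$ that grew into $H$ during Step~2, the latter case handled inductively via Step~2 of \textsc{GrowBuffer}), or belongs to $G \setminus H$, in which case Observation~\ref{obs:no-out-adj} implies it was already assigned in $\cS$ at the time $C$ was called and thus lives in a supernode in $\cS_{|H}$. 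Applying the induction hypothesis to $C'$ completes this case. The case $C = \textsc{GrowBuffer}(\cS, \cX, H)$ is analogous: Step~2 assigns some vertices in $\cN H_X \subseteq H$ directly to supernodes in $\cS_{|H}$ (which is legal by the outer clause of the claim), and each recursive call $\textsc{GrowBuffer}(\cS, \cX', H')$ in Step~3 again operates on a component $H'$ of $H \setminus \bigcup \cS$, for which the same analysis above shows that $\cS_{|H'}$ is contained in $\cS_{|H}$ together with supernodes whose expansion during Step~2 was to elements of $\cS_{|H}$.

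For the particularization to $\initdom(\eta)$, observe that $\initdom(\eta)$ is by definition exactly the subgraph $H$ passed to the \textsc{BuildTree} call that initialized $\eta$, and $\dom(\eta)$ is the union of all supernodes in the subtree of the partition tree rooted at $\eta$, which is precisely the set of vertices in $H$ that are assigned to supernodes initialized by $C$ or its descendants. Hence a vertex $x \in \initdom(\eta) \setminus \dom(\eta)$ must fall into the second alternative of the first statement, i.e., $x$ is assigned to some supernode in $\cS_{|H} = \cS_{|\eta}$, as required.

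The main technical subtlety is the verification that whenever a recursive call is made on a subgraph $H' \subseteq H$, the set of supernodes seen by $H'$ is contained in $\{\eta\} \cup \cS_{|H}$ (or just $\cS_{|H}$ in the \textsc{GrowBuffer} case); this is where Observation~\ref{obs:no-out-adj} is invoked to rule out adjacencies to supernodes that were neither visible to $H$ nor created by $C$. The rest is bookkeeping that unwinds the recursion cleanly.
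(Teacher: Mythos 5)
The paper does not actually reprove this claim---it is imported from \cite{CCLMST24}, with \Cref{lem:stochastic-cop-decomposition} noting that the proofs of the ``Basic Properties'' carry over to the stochastic variant---so yours is a from-scratch argument. The $\textsc{BuildTree}$ half of your induction is essentially sound: the covering of $H$ by the Step-1 assignment to $\eta$, the Step-2 buffer assignments, and the Step-3 components, together with \Cref{obs:no-out-adj} to show that any supernode seen by a child subgraph lies in $\set{\eta} \cup \cS_{|H}$, is the right bookkeeping, and your derivation of the ``in particular'' clause from $\initdom(\eta) = H$ is fine.

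The genuine gap is the $\textsc{GrowBuffer}$ case, which you dismiss as ``analogous''; it is not, and it is exactly where the statement is delicate. A call $C = \textsc{GrowBuffer}(\cS,\cX,H)$ initializes no supernodes, and the calls below it in the recursion tree are again only $\textsc{GrowBuffer}$ calls, which assign only the buffer vertices $\cN H_X$. The remaining vertices of $H$---typically most of them---are assigned only later, by $\textsc{BuildTree}$ calls spawned in Step~3 of an \emph{ancestor} $\textsc{BuildTree}$ call; those calls are not below $C$ in the recursion tree, and the supernodes they create are certainly not seen by $H$ at the time $C$ is called (they do not yet exist). So under the literal recursion-tree reading of ``initialized by or below $C$'' that your structural induction adopts, the statement for a $\textsc{GrowBuffer}$ call is not even true, and your induction over the subtree rooted at $C$ never reaches the calls that assign those leftover vertices. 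The claim has to be read (as its use inside the proof of \Cref{clm:find-next-call} requires) in the temporal sense: every vertex of $H$ is assigned either to a supernode initialized \emph{at or after} the time $C$ is called (necessarily by later calls operating on subgraphs of $H$), or to a supernode in $\cS_{|H}$; equivalently, no vertex of $H$ is ever absorbed by a pre-existing supernode that $H$ does not see. Proving that version needs an induction over the whole subsequent execution rather than over $C$'s recursion subtree: one maintains, via \Cref{obs:no-out-adj} and the fact that calls are made on maximal connected components of unassigned vertices, the invariant that any later call on a subgraph $H'' \subseteq H$ sees only supernodes that are either in $\cS_{|H}$ or were initialized after $C$, and then observes that Step~1 of $\textsc{BuildTree}$ and Step~2 of $\textsc{GrowBuffer}$ only ever assign vertices to the newly initialized supernode or to supernodes seen by the current subgraph. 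Your argument already contains the right local ingredient (the adjacency analysis via \Cref{obs:no-out-adj}), but as organized it does not cover the vertices of a $\textsc{GrowBuffer}$ call's subgraph that survive its recursion, so the second half of the claim---which the paper needs verbatim in \Cref{clm:find-next-call}---is left unproved.
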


\begin{claim}[Implicit from \cite{CCLMST24} Claim~3.5]
\label{clm:cclmst-seen}
    Suppose $\textsc{BuildTree}(\cS, H)$ or~$\textsc{GrowBuffer}(\cS, \cX, H)$ is called during the algorithm. 
    Let $\cS_{|H}$ be the set of supernodes in $\cS$ seen by $H$. 
    Then $\cS_{|H}$ contains at most $r - 2$ supernodes;
    furthermore, the supernodes in $\cS_{|H}$ are pairwise adjacent.
\end{claim}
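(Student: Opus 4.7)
The plan is to prove by induction on the recursive call structure of $\textsc{BuildTree}(\varnothing, G)$ the stronger statement that \emph{at every call} $\textsc{BuildTree}(\cS, H)$ \emph{and} $\textsc{GrowBuffer}(\cS, \cX, H)$ \emph{made during the algorithm, the supernodes in $\cS_{|H}$ are pairwise adjacent in $G$}. The size bound $|\cS_{|H}| \le r-2$ then follows at once: every member of $\cS_{|H}$ is a connected subgraph, $H$ itself is connected and adjacent to each such member (by definition of ``sees''), and pairwise adjacency lets us contract $H$ and each $X \in \cS_{|H}$ to a single vertex and exhibit $K_{|\cS_{|H}|+1}$ as a minor of $G$; since $G$ excludes $K_r$, we conclude $|\cS_{|H}| + 1 \le r - 1$.

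The induction rests on two structural observations. First, when $\textsc{BuildTree}(\cS, H)$ initializes a new supernode $\eta$, its skeleton $T_\eta$ is built in step~1 as an SSSP tree in $H$ connecting a chosen vertex to a witness vertex for \emph{every} supernode in $\cS_{|H}$; since $T_\eta \subseteq \eta$, this makes $\eta$ adjacent in $G$ to every $X \in \cS_{|H}$. Second, step~2 of $\textsc{GrowBuffer}(\cS, \cX, H)$ assigns each vertex of the buffer $\cN H_X$ only to supernodes $X' \in \cS_{|H}$ (this is the explicit range of the ``For each supernode $X'$ seen by $H$'' quantifier in the algorithm). Consequently $\textsc{GrowBuffer}$ creates no new adjacency between a connected component $H' \subseteq H \setminus \bigcup \cS$ and a supernode outside $\cS_{|H}$; combined with Observation~\ref{obs:no-out-adj}, which forbids unassigned vertices outside $H$ from being adjacent to $H$, this forces $\cS_{|H'} \subseteq \cS_{|H}$.

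The base case $\textsc{BuildTree}(\varnothing, G)$ is vacuous. In the inductive step there are three kinds of descents. For a recursive $\textsc{GrowBuffer}(\cS, \cX', H')$ call inside a parent $\textsc{GrowBuffer}(\cS, \cX, H)$, the second observation yields $\cS_{|H'} \subseteq \cS_{|H}$ and pairwise adjacency is inherited. For either a $\textsc{GrowBuffer}$ or a $\textsc{BuildTree}$ call spawned within a parent $\textsc{BuildTree}(\cS, H)$, the pool of supernodes at the time of the recursive call is $\cS \cup \{\eta\}$ with $\eta$ the newly initialized supernode; the same reasoning gives $\cS_{|H'} \subseteq \cS_{|H} \cup \{\eta\}$, pairwise adjacency among the ``old'' members holds by the induction hypothesis, and adjacency of $\eta$ to each such member is delivered by the first observation. (Intervening $\textsc{GrowBuffer}$ calls between step~2 and step~3 of a $\textsc{BuildTree}$ call only expand, but never create, supernodes, so this case analysis is exhaustive.)

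I do not expect a genuine obstacle: the argument is a clean structural induction. The one piece that warrants care is verifying that no supernode outside $\cS_{|H}$ can become adjacent to a child component $H'$ as a consequence of $\textsc{GrowBuffer}$, which is exactly where the combination of the explicit range of $X'$ in step~2 with Observation~\ref{obs:no-out-adj} is indispensable.
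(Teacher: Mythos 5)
Your argument is correct, and it is worth noting that the paper itself contains no proof of \Cref{clm:cclmst-seen}: the claim is imported from Claim~3.5 of \cite{CCLMST24}, with the paper only remarking (in the proof of \Cref{lem:stochastic-cop-decomposition}) that the ``Basic Properties'' of \cite{CCLMST24} carry over to the stochastic variant. Your induction is precisely the standard argument behind that citation: maintain down the recursion that the supernodes seen by the current subgraph are pairwise adjacent and each adjacent to it --- the new supernode via the witness vertices placed on its skeleton, the inherited ones via \Cref{obs:no-out-adj} together with the fact that \textsc{GrowBuffer} assigns vertices only to supernodes seen by its own subgraph --- and then contract $H$ and the supernodes of $\cS_{|H}$ to exhibit a $K_{|\cS_{|H}|+1}$-minor, forcing $|\cS_{|H}| \le r-2$; the one point needing explicit care (which you flag) is the strengthened invariant that every assignment made within the subtree of a call lands in that call's seen set plus the newly initialized supernode, so that a child component cannot see anything outside $\cS_{|H} \cup \{\eta\}$.
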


\begin{claim}[Implicit from {\cite[\S3.4]{CCLMST24}} on supernode radius property]
\label{clm:expand-count}
    For any supernode $\eta$, over the course of execution of the algorithm, there are at most $r-1$ calls $C = \textsc{GrowBuffer}(\cdot, \cdot, H)$ such that $C$ expands $\eta$ and $\hat{v} \in H$.
\end{claim}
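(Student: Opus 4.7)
The plan is to extract the bound of $r-1$ from the supernode radius analysis in \cite[\S3.4]{CCLMST24}. Fix $\hat{v}$ and $\eta$. First, I observe that every call $\textsc{GrowBuffer}(\cdot,\cdot,H)$ with $\hat{v}\in H$ lies on the unique descending path in the recursion tree that follows $\hat{v}$: the sequence of nested calls whose input subgraph contains $\hat{v}$. Indeed, by \Cref{obs:no-out-adj} the subgraph passed to any recursive call is a maximal connected component of unassigned vertices, and since $\hat{v}$ is assigned to exactly one supernode over the course of the algorithm, the component containing $\hat{v}$ is uniquely defined at every recursion level.

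Enumerate the calls on this $\hat{v}$-path that expand $\eta$ as $C_1,C_2,\ldots,C_t$, in the order they occur, and let $X_j$ denote the supernode that $C_j=\textsc{GrowBuffer}(\cS_j,\cX_j,H_j)$ processes. The main claim is that the $X_j$'s are pairwise distinct and each lies in $\{\eta\}\cup\cA_\eta$. Given the claim, the [shortest-path skeleton] property bounds $|\cA_\eta|\le w-1=r-2$, and therefore $t\le r-1$, which is what we want.

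For the membership part: since $C_j$ expands $\eta$, there exists $v\in H_j$ assigned to $\eta$, meaning $\eta\cap\bdry(H_j)_{\downarrow X_j}\neq\varnothing$. This forces $\eta\subseteq\dom_{\cS_j}(X_j)$, so $X_j$ is either $\eta$ itself or a proper ancestor of $\eta$ in the partition tree. Moreover, by tracing the buffer path from the $\eta$-vertex in $\bdry(H_j)_{\downarrow X_j}$ through $\dom_{\cS_j}(X_j)$ to $X_j$, the expansion witnesses an edge-connection from $\dom(\eta)$ to $X_j$ in $G$, which together with the ancestor relation places $X_j$ in $\cA_\eta$ (or $X_j=\eta$).

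For distinctness: within a single $\textsc{GrowBuffer}$ chain started by a $\textsc{BuildTree}$ call, Step~3 explicitly removes $X_j$ from $\cX$ once it is processed, so no chain processes the same supernode twice. Across $\textsc{BuildTree}$ boundaries on the $\hat{v}$-path, the new $\cX$ set (constructed in Step~2 of $\textsc{BuildTree}$) contains only the supernodes seen by the outer subgraph but \emph{not} by the inner $\hat{v}$-containing component. An $X_j$ that has already been processed has had a buffer grown entirely around it inside $H_j$, which separates $X_j$ from every deeper $\hat{v}$-containing subgraph; combined with $X_j\in\cA_\eta$ and the bounded-bag structure, this prevents $X_j$ from re-appearing as a processed supernode further along the path. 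The main obstacle is carefully formalizing this last separation argument across $\textsc{BuildTree}$ boundaries---one needs to use the [supernode buffer] property to ensure that no deeper $\hat{v}$-containing subgraph re-enters $\dom_{\cS}(X_j)$ close enough to $X_j$ to re-trigger its inclusion in $\cX$. Once this is established, the count of at most $r-1$ follows immediately from $|\{\eta\}\cup\cA_\eta|\le r-1$.
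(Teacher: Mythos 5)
Your overall skeleton matches the paper's: order the $\hat v$-containing calls by nesting, argue that the supernodes processed by calls expanding $\eta$ are pairwise distinct, and bound how many distinct such supernodes can exist by a set of size $O(r)$ attached to $\eta$. However, the proposal has two genuine gaps, and they are exactly the two places where the paper leans on facts from \cite{CCLMST24}. First, the distinctness of the processed supernodes \emph{across} \textsc{BuildTree} boundaries is the crux, and you leave it unproven --- you yourself flag the ``separation argument'' as the main obstacle. The paper gets this from (the contrapositive of) Claim~3.12 of \cite{CCLMST24}: once a call processes $X$, no later nested call processes $X$ again; your sketch (``the buffer separates $X_j$ from every deeper $\hat v$-containing subgraph'') is not immediate, since being added to a later $\cX$ only requires $X_j$ to be \emph{seen} (i.e.\ adjacent to) some deeper subgraph and then cut off, and adjacency to $X_j$ itself is not excluded by the buffer radius around $\bdry H_{\downarrow X_j}$ without the actual argument of \cite{CCLMST24}.

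Second, your membership step placing each processed supernode $X_j$ in $\{\eta\}\cup\cA_\eta$ is not correctly justified. The assertion that ``tracing the buffer path \ldots witnesses an edge-connection from $\dom(\eta)$ to $X_j$'' does not hold: the buffer $\cN H_{X_j}$ is grown around $\bdry H_{\downarrow X_j}$, which need not touch $X_j$ at all, so the expansion of $\eta$ witnesses no edge between $\dom(\eta)$ and $X_j$; also the intermediate claim that $\eta\cap\bdry H_{\downarrow X_j}\neq\varnothing$ ``forces'' $\eta\subseteq\dom_{\cS_j}(X_j)$ is asserted without proof (and, as a minor point, $X_j=\eta$ is impossible, since the processed supernode lies in $\cX$ and hence is not seen by $H_j$, while the expanded supernode must be seen by $H_j$). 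The conclusion you want is true, but the route the paper takes is different and is the one that actually works: by Claim~3.13 of \cite{CCLMST24}, if $\eta$ is expanded by a call processing $X_j$, then $\initdom(\eta)$ already \emph{saw} $X_j$ at the time $\eta$ was initialized, and by the [shortest-path skeleton] property there are at most $r-1$ such supernodes (the skeleton includes a witness vertex for each). Combining that count with the one-call-per-processed-supernode fact gives the bound; since the claim is explicitly stated as ``implicit from \cite[\S3.4]{CCLMST24},'' citing those two facts (or reproving them) is what your argument is missing.
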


\begin{proof}
    First observe that if two calls  $C = \textsc{GrowBuffer}(\cS, \cX, H)$  and $\tilde C = \textsc{GrowBuffer}(\tilde \cS, \tilde \cX, \tilde H)$ are made during the execution of the algorithm with $\hat{v}$ being in both $H$ and $\tilde H$, then either $H$ is a subgraph of $\tilde H$, or $\tilde H$ is a subgraph of $H$. 
    This follows form the fact that calls to $\textsc{GrowBuffer}$ are only made on maximal connected components of unassigned vertices.
    As a result, all calls to $\textsc{GrowBuffer}(\cS, \cX, H)$ that expands $\eta$ where $\hat{v} \in H$ can be put into a linear order based on the containment relationship of $H$.
    Among such calls,
    the contrapositive of Claim~3.12 in \cite{CCLMST24} implies that, for any call $C = \textsc{GrowBuffer}(\cS, \cX, H)$ that processes a supernode $X$, there is no earlier call $\textsc{GrowBuffer}(\tilde \cS, \tilde \cX, \tilde H)$ with $H \subseteq \tilde H$ that processes $X$. 
    Thus, for any supernode $X$, there is at most one call to $\textsc{GrowBuffer}(\cS, \cX, H)$ with $\hat{v} \in H$.

    Finally, it was shown in Claim~3.13 of \cite{CCLMST24} that if some supernode $\eta$ is expanded during a $\textsc{GrowBuffer}$ call that processes $X$, then $\initdom(\eta)$ sees $X$ at the time $\eta$ was initialized.
    It follows from the [shortest-path skeleton] property that $\initdom(\eta)$ sees only $r - 1$ supernodes at the time $\eta$ was initialized. This proves the claim.
\end{proof}

Before we continue, first we prove 
\Cref{lem:buildtree-threaten} and 
\Cref{lem:growbuffer-threaten} from the previous subsection.

\begin{proof}[of \Cref{lem:buildtree-threaten}]
    By \Cref{lem:bdd-threatener-dom0},
    for any vertex $\basev$, there are $O_r(1)$ supernodes $X$ such that $\dist_{\initdom(X)}(\basev, X) \le 2 \Delta$; 
    these includes those supernode $X$ satisfying $\dist_{\initdom(X)}(\basev, T_X) \le 2 \Delta$. 
\end{proof}

\begin{proof}[of \Cref{lem:growbuffer-threaten}]
    For some $i$, suppose $C_i$ is a call to $\textsc{GrowBuffer}(\cS, \cX, H)$ such that $\basev \in H$ and $\dist_H(\basev, \cN H_X) \le 2\Delta$. 
    By definition of $\cN H_X$, there is some vertex $x \in \cN H_X$ such that $\dist_{\dom_{\cS}(X)}(\basev, x) \le 2 \Delta$. 
    Let $x$ be the vertex that minimizes this distance. Let $\eta$ be the supernode containing $x$. Observe that (1) call $C_i$ expands $\eta$, and (2) $\basev \in H$. 
    We further claim that (3) $\dist_{\initdom(\eta)}(\basev, \eta) \le 2 \Delta$; indeed, by choice of $x$, there is a path from $v$ to $x$ in $H \cup \set{x}$ with length at most $2 \Delta$, and Observation~\ref{obs:no-out-adj} implies that $H$ is a subgraph of $\initdom(\eta)$.
    
    By Lemma~\ref{lem:bdd-threatener-dom0}, there are $O_r(1)$ supernodes $\eta$ such that $\dist_{\initdom(\eta)}(v, \eta) < 2 \Delta$. 
    For each such supernode~$\eta$, Claim~\ref{clm:expand-count} implies that there are at most $r-1$ calls to $\textsc{GrowBuffer}(\cdot, \cdot, H)$ with $\basev \in H$ that expand the supernode $\eta$. 
    We conclude that there are at most $O_r(1)$ calls that satisfy properties (1--3).
\end{proof}

\subsubsection{Supernode buffer property}

We now strengthen the [supernode buffer] property proved by \cite{CCLMST24} to work with $\out\basev$.

\begin{claim}
\label{clm:out-buffer}
    Let $\eta$ be a supernode $\basev \in \initdom(\eta)$. 
    Let $X$ be a supernode above\footnote{that is, $X$ is a proper ancestor of $\eta$ in the partition tree} $\eta$ in the partition tree.
    If $\eta$ is not adjacent to $X$ in $G$, then $\dist_{\dom(X) \cup \out \basev} (v, X) > \Delta/r$ for every vertex $v$ in $\dom(\eta)$.
\end{claim}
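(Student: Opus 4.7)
The plan is to extend the standard [supernode buffer] property of \cite{CCLMST24}, which establishes $\dist_{\dom(X)}(v, X) > \Delta/r$ when $\eta$ and $X$ are non-adjacent, so that paths traversing vertices of $\out\basev$ are also ruled out. I would proceed by contradiction: assume a path $P = v = p_0, p_1, \ldots, p_m = x \in X$ of length at most $\Delta/r$ lies entirely in $\dom(X) \cup \out\basev$. If $P$ stays within $\dom(X)$, the original [supernode buffer] property contradicts its existence immediately; since our modified algorithm grows the initial buffer of $\eta$ with random radius $\alpha \cdot \Delta/r$ for $\alpha \ge 1$, the \cite{CCLMST24} argument still yields the strict inequality. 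Otherwise, let $p = p_{i^*}$ be the first vertex of $P$ outside $\dom(X)$; then $p \in \out\basev \setminus \dom(X)$ while $p_{i^*-1} \in \dom(X)$.

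Let $C' = \textsc{GrowBuffer}(\cS', \cX', H')$ be the call that assigned $p$, so $\basev \notin H'$ by the definition of $\out\basev$. The core step is to derive a contradiction by showing that $\basev$ must in fact lie in $H'$. I would split on whether $C'$ occurs before or after $\eta$ is initialized. If $C'$ precedes $\eta$'s initialization, then $\basev$ is unassigned at the time of $C'$, since $\basev \in \initdom(\eta)$ forces $\basev$ to be unassigned when $\eta$ is initialized, hence also at all earlier times. Using the prefix of $P$ from $v$ to $p_{i^*-1}$ together with $\initdom(\eta) = H_\eta$, which is a connected subgraph of unassigned vertices at the moment $\eta$ is initialized, I would build a path from $\basev$ to $p$ consisting entirely of vertices that are unassigned at time $C'$; by the maximality of $H'$ as a connected component of unassigned vertices, this forces $\basev \in H'$, a contradiction.

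If instead $C'$ occurs after $\eta$ is initialized, the contradiction is more direct: by \Cref{obs:no-out-adj} applied to $C'$, every vertex of $G \setminus H'$ adjacent to $H'$ is already assigned at the time of $C'$, so the edge $p_{i^*-1} p$ witnesses that $p_{i^*-1}$ lies in some supernode $Y \in \cS'_{|H'}$. Combining \Cref{obs:see-ancestor} with \Cref{clm:cclmst-dom} shows that $Y$ is a proper ancestor of the supernode eventually containing $p$, and by tracing $\dom$-containments one finds that $p$ itself must then lie in $\dom(X)$, contradicting the choice of $p$.

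The main obstacle is the first case, namely the careful verification that the constructed chain of vertices linking $\basev$ to $p$ is unassigned at time $C'$. This requires tracing how assignments of vertices within $\initdom(\eta)$ evolve over the algorithm's execution, using \Cref{clm:cclmst-dom} to control which vertices get assigned to ancestors of $\eta$ versus to $\eta$ itself or its descendants, and ruling out the possibility that any intermediate vertex was assigned to a supernode existing at time $C'$. The rest of the argument is bookkeeping that combines standard structural facts about $\textsc{BuildTree}$ and $\textsc{GrowBuffer}$ already proved earlier in the section.
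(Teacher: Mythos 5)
Your Case A is fine modulo small slips (the strictness of the inequality, and the fact that it is \textsc{GrowBuffer}'s buffer radius $\alpha\in[1,2]$, not the initial growth in \textsc{BuildTree} where $\alpha\in[0,1]$, that provides $\alpha\ge 1$), but Case B has a genuine gap. Notice that in Case B you never use the hypothesis $\norm{P}\le \Delta/r$: you try to derive a contradiction from the mere fact that $P$ exits $\dom(X)$ at a vertex $p\in\out{\basev}$. If that worked, it would show that \emph{every} path from $\dom(\eta)$ to $X$ inside $\dom(X)\cup\out{\basev}$ stays inside $\dom(X)$, which is far stronger than the claim and is not true: vertices of $\out{\basev}\setminus\dom(X)$ (for instance, vertices assigned by a \textsc{GrowBuffer} call, on a component not containing $\basev$, to an \emph{ancestor} of $X$) can be adjacent to $\dom(X)$ and can lie on long detours from $\dom(\eta)$ to $X$; the claim only forbids short ones. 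Concretely, in sub-case B1 the prefix $P[v:p_{i^*-1}]$ lies in $\dom(X)$ but need not be unassigned at the time of $C'$ --- it may pass through $X$ itself or through descendants of $X$ in other branches that were initialized or expanded before $C'$ --- so the ``chain of unassigned vertices from $\basev$ to $p$'' cannot in general be built (this is exactly the step you flag as the main obstacle, and it is not mere bookkeeping: it must fail, since otherwise you would prove the false stronger statement). In sub-case B2, $p_{i^*-1}$ may itself be unassigned (i.e.\ in $H'$), and nothing in \Cref{obs:see-ancestor} or \Cref{clm:cclmst-dom} forces $p\in\dom(X)$; indeed the supernode $p$ is assigned to is by definition \emph{not} a descendant of $X$, and such configurations genuinely occur.

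The paper's proof is structured differently: it re-runs the buffer argument of \cite{CCLMST24} (their Lemma~3.10), adding the hypothesis $\basev\in H'$ to the inductive claim. One takes $P$ to be a shortest path in $\dom(X)\cup\out{\basev}$, considers the \textsc{GrowBuffer} call $C=\textsc{GrowBuffer}(\cS,\cX,H)$ that processes $X$ with $H'\subseteq H$, and looks at the first vertex $x$ of $P$ outside $H$. Since $\basev\in H$, \Cref{obs:no-out-adj} gives $x\notin\out{\basev}$, hence $x\in\dom_{\cS}(X)$ and so $x\in\bdry H_{\downarrow X}$, while the prefix $P[v:x]$ lies in $H\subseteq\dom_{\cS}(X)$; therefore $\dist_{\dom_{\cS}(X)}(v,\bdry H_{\downarrow X})\le\norm{P}$. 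Only at this point does the length bound enter: if $\norm{P}\le\Delta/r$, then $v$ would have been swallowed by the buffer $\cN H_X$ (as $\alpha\ge 1$) and assigned during $C$, contradicting that $v$ is still unassigned in the later call on $H'$. So the role of $\out{\basev}$ is confined to showing the single crossing vertex $x$ is not an ``outside'' vertex, rather than forcing the whole path to stay in $\dom(X)$; any repair of your argument must likewise let the buffer-growth mechanism together with the bound $\norm{P}\le\Delta/r$ do the work.
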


\begin{proof}
Lemma 3.10 in \cite{CCLMST24} proves a similar claim,
with a guarantee of $\dist_{\dom(X)} (v,X) > \Delta/r$ instead of $\dist_{\dom(X) \cup \out \basev} (v,X) > \Delta/r$. 
The proof from \cite{CCLMST24} carries over almost verbatim for our new statement, with two modifications. First, \cite{CCLMST24} introduce a claim that they proved inductively:
\begin{quote}
    Let $C' \coloneqq \textsc{BuildTree}(\cS', H')$ be a call that is below (in the recursion tree) the call that initialized $X$. Either $H'$ sees $X$ (at the time $C'$ is called), or $\dist_{\dom(X)}(v, X) > \Delta/r$ for every vertex $v$ in $H'$.
\end{quote}
For our proof, we slightly modify the claim:
\begin{quote}
    Let $C' \coloneqq \textsc{BuildTree}(\cS', H')$ be a call that is below (in the recursion tree) the call that initialized $X$, \textcolor{BrickRed}{with $\basev \in H'$}. 
    Either $H'$ sees $X$ (at the time $C'$ is called), or $\dist_{\dom(X) \textcolor{BrickRed}{\cup \out \basev}}(v, X) > \Delta/r$ for every vertex $v$ in $H'$.
\end{quote}
The additional assumption that $\basev \in H'$ can be added without breaking the proof of \cite{CCLMST24}; indeed, they only apply the inductive claim to the call $C_\eta \coloneqq \textsc{BuildTree}(\cS_\eta, H_\eta)$ that initialized $\eta$ (for which we have $\basev \in \initdom(\eta) = H_\eta$ by assumption) or to a call $\textsc{BuildTree}(\cS', H')$ that is an ancestor of $C_\eta$ in the recursion tree (and thus $H_\eta$ is a subgraph of $H'$, and $\basev \in H'$).

\medskip
The second modification occurs in the proof of the claim introduced above. We first recall some details from the proof of \cite{CCLMST24}. During their proof, they consider a shortest path $P$ in $\dom(X)$ between $X$ and $v$, and show that $\norm{P} > \Delta/r$ if $\eta$ is not adjacent to $X$. 
They consider inductively a certain call $C= \textsc{GrowBuffer}(\cS, \cX, H)$ 
where $H'$, the graph from the call $C' \coloneqq \textsc{BuildTree}(\cS', H')$ described above, is a subgraph of $H$
and $H$ is a subgraph of $\initdom(X)$, such that $C$ processes $X$. 
They define $x$ to be some vertex on $P$ that is in $G \setminus H$ and is adjacent to $H$ (they show that such an $x$ exists, and without loss of generality we may assume that it is the first such vertex along $P$ when travelling from $v$ to $X$).
Recall that $\bdry H_{\downarrow X}$ is defined as the set of vertices in $G \setminus H$ that are adjacent to $H$ and are in $\dom_{\cS}(X)$.
They need to show that $\dist_{\dom_{\cS}(X)}(v, \bdry H_{\downarrow X}) \le \norm{P}$. (From this, they show that $\norm{P} \le \Delta/r$ implies that $v$ must have been assigned by the call $C$ and thus $v \not \in H'$, a contradiction; thus $\norm{P} > \Delta/r$. We emphasize that this argument holds with our stochastic version of \textsc{GrowBuffer} --- as we always choose $\alpha \ge 1$ in Step 1 of \textsc{GrowBuffer}, every vertex within distance $\Delta/r$ of $\bdry H_{\downarrow X}$ is assigned during $C$.) 
Their proof that $\dist_{\dom_{\cS}(X)}(v, \bdry H_{\downarrow X}) \le \norm{P}$ is the only place that uses the assumption that $P$ is in $\dom(X)$, and it is as follows: (1) $P$ is a path in $\dom(X)$ and thus in $\dom_{\cS}(X)$ (as domains only shrink over the course of the algorithm); and (2) as $P$ is in $\dom_{\cS}(X)$, the vertex $x$ is in $\dom_{\cS}(X)$ and thus in $\bdry H_{\downarrow X}$;  thus $\dist_{\dom_{\cS}(X)}(v, \bdry H_{\downarrow X}) \le \dist_{\dom_{\cS}(X)}(v, x) \le \norm{P}$. 

To prove our new claim, we instead 
let $P$ be a shortest path in $\dom(X) \cup \out \basev$ between $X$ and~$v$; we show that $\dist_{\dom_{\cS}(X)}(v, \bdry H_{\downarrow X}) \le \norm{P}$. 
(Notice that the distance is still measured in $\dom_{\cS}(X)$ as \textsc{GrowBuffer} algorithm expands supernode based on it.)
We observe that $\basev$ is in $H$; this is because $\basev \in H'$ (by assumption) and $H'$ is a subgraph of $H$.
It follows from Observation~\ref{obs:no-out-adj} that vertex $x$ is not in $\out \basev$. 
Thus, $x \in \dom_{\cS}(X)$, and so $x \in \bdry H_{\downarrow X}$. 
By choice of $x$, the prefix of $P$ from $v$ to~$x$, denoted $P[v:x]$, is contained in $H$. 
But $H$ is a subgraph of $\dom_{\cS}(X)$, as $H$ is a subgraph of $\initdom(X)$ (by assumption on the call $C$) and every vertex of $H$ is unassigned in $\cS$.
Thus, $\dist_{\dom_{\cS}(X)}(v, \bdry H_{\downarrow X}) \le \norm{P}$. By following the proof of \cite{CCLMST24}, we conclude that $\norm{P} > \Delta/r$ unless $\eta$ is adjacent to $X$.
\end{proof}

Following Filtser~\cite{Fil24}, for any buffered cop decomposition we define a \EMPH{dag $\vec G$} whose vertices are the supernodes, and there is an edge from supernode $\eta$ to supernode $\eta'$ if (1) $\eta$ and $\eta'$ are adjacent in $G$, and (2) $\eta'$ is an ancestor of $\eta$ in the partition tree. As observed by \cite{Fil24}, the [shortest-path skeleton] property of buffered cop decomposition implies that every supernode in $\vec G$ has out-degree at most $r - 1$.

\begin{claim}[{\cite[Lemma 1]{Fil24}}, with $w=r-1$]
\label{clm:out-neighbor-count}
    For any supernode $\eta$ in $\vec G$ and any $q \in \mathbb{N}$, there are at most $\mu = \binom{q + r - 1}{r - 1}$ supernodes $X$ such that $\vec G$ contains an path from $\eta$ to $X$ of length at most $q$. 
\end{claim}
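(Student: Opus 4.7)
}
The plan is to reduce the claim directly to Filtser's Lemma~1 in~\cite{Fil24}, verifying the two hypotheses that Filtser's argument requires.  First, I would observe that $\vec G$ is a DAG: every edge $\eta \to \eta'$ goes from a supernode to a \emph{strict} ancestor of it in the partition tree $\cT$, so following any directed path strictly decreases partition-tree depth, forbidding cycles.  Second, I would check that every vertex of $\vec G$ has out-degree at most $w := r-1$.  Indeed, by construction the out-neighbors of $\eta$ in $\vec G$ are exactly the supernodes $\eta'$ which are (i) ancestors of $\eta$ in $\cT$ and (ii) adjacent to $\eta$ in $G$; these are precisely the supernodes in $\cA_\eta$, and the [shortest-path skeleton] property guarantees $|\cA_\eta|\le w$.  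Applying Filtser's lemma with this out-degree bound then yields $|N_q(\eta)|\le \binom{q+w}{w}=\binom{q+r-1}{r-1}$.

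If a self-contained proof is preferred over a black-box citation, I would reprove Filtser's lemma by double induction on $q+w$.  The base cases $q=0$ (only $\eta$ is reachable) and $w=0$ (no out-neighbors) both give the bound $1$, which equals $\binom{w}{w}$ and $\binom{q}{0}$ respectively.  For the inductive step, enumerate the $k\le w$ out-neighbors $v_1,\ldots,v_k$ of $\eta$, ordered by partition-tree depth with $v_1$ shallowest.  Any directed path of length $\le q$ starting at $\eta$ either is trivial or starts with the edge $\eta\to v_i$ for some $i$.  Because edges of $\vec G$ only go upward in $\cT$, after entering $v_i$ one cannot subsequently reach $v_{i+1},\ldots,v_k$ (which are deeper descendants of $v_i$ in $\cT$); this restricts the ``effective out-degree available from $v_i$ for reaching new vertices'' to at most $w-i$.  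The inductive hypothesis then gives a recurrence of the form
\[
|N_q(\eta)|\;\le\;1+\sum_{i=1}^{w}\binom{(q-1)+(w-i)}{w-i},
\]
and the hockey-stick identity $\sum_{j=0}^{w-1}\binom{q-1+j}{j}=\binom{q+w-1}{w-1}$ combined with Pascal's rule $\binom{q+w-1}{w-1}+\binom{q+w-1}{w}=\binom{q+w}{w}$ closes the induction.

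The main obstacle in the self-contained approach is the precise formulation of the ``effective out-degree reduction'': one has to argue that reachable vertices strictly past $v_i$ which are not already counted by $v_1,\ldots,v_{i-1}$ are parameterized by at most $w-i$ branching directions.  This is cleaner to encode by strengthening the inductive statement to a DAG equipped with an ancestor-like ordering rather than bare out-degree, exactly as Filtser does; for the present paper I believe the cleanest presentation is simply to invoke Filtser's lemma as a black box, since the two hypotheses are immediate from the structural definition of $\vec G$.
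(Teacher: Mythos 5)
Your primary route---invoking Filtser's Lemma~1 as a black box---is exactly what the paper does: the claim is stated as a citation of \cite{Fil24} and no proof is given, so on that level your proposal matches the paper. However, your framing of the hypotheses is off in a way that matters for your fallback argument. The bound $\binom{q+w}{w}$ does \emph{not} follow from ``$\vec G$ is a DAG with out-degree at most $w$ whose edges go to strict ancestors'': all supernodes reachable from $\eta$ lie on the single root-to-$\eta$ path of the partition tree, but with only an out-degree bound one can wire binary-tree-like jumps along that path and reach $\Theta(2^q)$ distinct ancestors in $q$ hops, far exceeding $\binom{q+2}{2}$ already for $w=2$. So the self-contained induction you sketch genuinely breaks at the step you yourself flag: the ``effective out-degree available from $v_i$ drops to $w-i$'' assertion has no justification from out-degree and upward-monotonicity alone.

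The missing ingredient is the nesting structure of the sets $\cA_\eta$ coming from the [shortest-path skeleton] property together with domain monotonicity: if $X, X' \in \cA_\eta$ and $X'$ is a strict ancestor of $X$, then $\dom(\eta) \subseteq \dom(X)$, so the edge between $X'$ and $\dom(\eta)$ is also an edge between $X'$ and $\dom(X)$, whence $X' \in \cA_X$. In other words, after stepping from $\eta$ to an out-neighbor $X$, all strictly shallower out-neighbors of $\eta$ are already out-neighbors of $X$, and it is this absorption that makes Filtser's recursion close with the binomial coefficient. Since the graph $\vec G$ in this paper is literally the DAG of a $(4\Delta, \Delta/r, r-1)$-buffered cop decomposition, i.e.\ the same object Filtser analyzes with $w = r-1$, the black-box citation is legitimate as is; but if you ever need the argument self-contained, you must carry the $\cA$-nesting property into the induction rather than a bare out-degree bound.
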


The next claim is similar in spirit to \cite[Lemma~2]{Fil24}, which is an analogous claim with $\dist_{\dom(X)}(\eta, X) \le q \cdot \Delta/r$ instead of $\dist_{\dom(X) \cup \out \basev}(\eta, X) \le q \cdot \Delta/r$. The details of the proof are rather different, and we provide a complete proof here.

\begin{claim}
\label{clm:out-filtser}
    Let $\eta$ be the supernode with $\basev \in \initdom(\eta)$, let $X$ be a supernode with skeleton $T_X$ such that $X$ is above $\eta$ in the partition tree, and let $q \in \mathbb{N}$ with $q \ge 1$. If $\dist_{\dom(X) \cup \out \basev}(\eta, T_X) \le q \cdot \Delta/r$,
    then $\vec G$ contains a path from $\eta$ to $X$ of length at most $2q$.
\end{claim}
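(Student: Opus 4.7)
The plan is to prove the claim by induction on $q$, following the spirit of Filtser's Lemma 2 while adapting it to handle the enlarged ambient subgraph $\dom(X) \cup \out \basev$.

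For the base case $q = 1$, I would appeal to Claim~\ref{clm:out-buffer}: since $\dist_{\dom(X) \cup \out \basev}(\eta, T_X) \le \Delta/r$ and $T_X \subseteq X$, the contrapositive of Claim~\ref{clm:out-buffer} forces $\eta$ to be adjacent to $X$ in $G$, which yields an edge $\eta \to X$ in $\vec G$ of length $1 \le 2q$.

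For the inductive step with $q \ge 2$, I would take a shortest path $P = (u_0, \ldots, u_L)$ in $\dom(X) \cup \out \basev$ with $u_0 \in \eta$ and $u_L \in T_X$ of weighted length at most $q \cdot \Delta/r$. I may assume $\eta$ is not adjacent to $X$, otherwise the edge $\eta \to X$ is immediate. I then trace $P$ starting from $u_0$ and locate the first vertex $u_j$ of $P$ lying outside $\dom(\eta)$; such $u_j$ exists because $u_L \in X$ and $X$ is a strict ancestor of $\eta$. Let $\eta_1$ be the supernode containing $u_j$. A case analysis paralleling the proof of Claim~\ref{clm:out-buffer}, using Observation~\ref{obs:no-out-adj} and Claim~\ref{clm:cclmst-dom} applied to the call that initialized the supernode containing $u_{j-1}$, would show that $\eta_1$ is a strict ancestor of $\eta$ in the partition tree. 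The edge $(u_{j-1}, u_j)$ in $G$ witnesses $\eta_1 \in \cA_\eta$, whereupon the [shortest-path skeleton] property guarantees that $\eta_1$ is adjacent to $T_\eta \subseteq \eta$ in $G$, giving the edge $\eta \to \eta_1$ in $\vec G$.

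To set up the induction I iterate once more along $P$: let $u_{j'}$ be the first vertex of $P$ past $u_j$ lying outside $\dom(\eta_1)$ (which exists since $u_L \notin \dom(\eta_1)$), and let $\eta_2$ be its supernode. A verbatim repetition of the previous argument shows $\eta_1 \to \eta_2$ is an edge of $\vec G$. The crucial new ingredient is Claim~\ref{clm:out-buffer} applied to the pair $(\eta_1, \eta_2)$, which, provided $\eta_1$ and $\eta_2$ are non-adjacent in $G$ (otherwise we could shortcut), forces the prefix $u_0, \ldots, u_{j'}$ of $P$ to have weighted length at least $\Delta/r$. Consequently $\dist_{\dom(X) \cup \out \basev}(\eta_2, T_X) \le (q-1)\Delta/r$, so the induction hypothesis applied to $\eta_2$ yields a $\vec G$-path of length at most $2(q-1)$ from $\eta_2$ to $X$. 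Concatenating with $\eta \to \eta_1 \to \eta_2$ yields the desired $\vec G$-path of length at most $2q$ from $\eta$ to $X$.

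The hard part will be the subtle mismatch between the metric $\dom(X) \cup \out \basev$ in which distances on $P$ are measured, and the metric $\dom(\eta_2) \cup \out \basev$ in which Claim~\ref{clm:out-buffer} is naturally phrased. Since $\dom(\eta_2) \subseteq \dom(X)$, the $P$-prefix could \emph{a priori} be strictly shorter than the $\Delta/r$ lower bound of Claim~\ref{clm:out-buffer}. Overcoming this likely requires showing that any vertex of the $P$-prefix lying in $\dom(X) \setminus \dom(\eta_2)$ must actually lie in $\out \basev$, so that the relevant sub-path can be re-embedded into $\dom(\eta_2) \cup \out \basev$ without growth in length. Getting this metric-comparison accounting right, together with the delicate ancestry argument that $\eta_1$ (and then $\eta_2$) is a proper ancestor of $\eta$ (rather than merely an ancestor of the intermediate supernode of $u_{j-1}$), will be the technical crux of the proof.
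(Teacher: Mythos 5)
Your overall strategy matches the paper's: induction on $q$, base case from Claim~\ref{clm:out-buffer}, then walking a shortest path $P$ in $\dom(X)\cup\out\basev$ and crossing into ancestor supernodes, paying $\Delta/r$ of path length per two hops of $\vec G$. However, two steps of your inductive case have genuine gaps. First, your crossing vertices $u_j,u_{j'}$ are the first vertices of $P$ leaving $\dom(\eta)$, resp.\ $\dom(\eta_1)$, \emph{without excluding $\out\basev$}. Such a vertex can lie in $\out\basev$ (e.g.\ a vertex of $\initdom(\eta_1)\setminus\dom(\eta_1)$ assigned, during a \textsc{GrowBuffer} call on a region not containing $\basev$, to a supernode seen by $\initdom(\eta_1)$), and the supernode $\eta_2$ containing it, while indeed a proper ancestor of $\eta_1$, need not lie below $X$: it may sit strictly above $X$ on the root path. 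Then $X$ is not above $\eta_2$, the induction hypothesis cannot be applied to $(\eta_2,X)$, and in fact no $\vec G$-path from $\eta_2$ to $X$ need exist, since edges of $\vec G$ only point upward. The paper avoids this by defining each crossing point as the first vertex \emph{not in $\dom(\eta_{i-1})\cup\out\basev$}; such a vertex avoids $\out\basev$, hence lies in $\dom(X)$ (as $P\subseteq\dom(X)\cup\out\basev$), so every intermediate supernode is $X$ or below $X$. Incidentally, the ``metric mismatch'' you flag at the end is not where the argument actually breaks: if the ancestry facts held, your prefix $u_0,\dots,u_{j'}$ would automatically lie in $\dom(\eta_1)\cup\eta_2\subseteq\dom(\eta_2)$, so Claim~\ref{clm:out-buffer} would apply to it directly.

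Second, applying the buffer to the consecutive pair $(\eta_1,\eta_2)$ and dismissing the adjacent case with ``otherwise we could shortcut'' does not work. If $\eta_1$ and $\eta_2$ are adjacent you get the $\vec G$-edge but consume no path length, so you cannot invoke the hypothesis with budget $q-1$; iterating such shortcuts lets the hop count grow with no length consumed, destroying the $2q$ bound. The paper instead continues the chain $\eta_0,\eta_1,\dots$ (which necessarily terminates at $X$) and takes the \emph{first} $\eta_k$ that is not adjacent to the original $\eta$: then $\eta\to\eta_{k-1}\to\eta_k$ is a $2$-hop path, the buffer is applied to the pair $(\eta,\eta_k)$, and the nesting $\dom(\eta_i)\subseteq\dom(\eta_k)$ for $i<k$ guarantees that the prefix of $P$ up to $\eta_k$ lies in $\dom(\eta_k)\cup\out\basev$, so its length exceeds $\Delta/r$ and the suffix has length at most $(q-1)\Delta/r$; if no such $k$ exists, then $\eta$ is adjacent to $\eta_{i_{\max}}=X$ and the claim holds outright. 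Your proposal needs both repairs (the $\out\basev$-skipping definition of the crossing points, and the non-adjacency test against the original $\eta$ rather than between consecutive supernodes) to go through.
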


\begin{proof}
    We prove the claim by induction on $q$. If $q = 1$, then the claim immediately follows from our stronger supernode buffer property (Claim~\ref{clm:out-buffer}).
    
    In the inductive case when $q > 1$, 
    if there is a path in $\vec G$ from $\eta$ to $X$ of length at most 2, then we are done. Otherwise, let $P$ be a shortest-path in $\dom(X) \cup \out \basev$ between some vertex $a \in \eta$ and a vertex $b \in T_X$, with $\norm{P} \le q \cdot \Delta/r$. 
    We claim:
    \begin{equation}
    \label{eq:hop-step}
    \parbox{14cm}{%
    There is some supernode $\hat \eta$ such that (1) some vertex on $P$ is in $\hat \eta$, (2) there is \emph{not} an edge from $\eta$ to $\hat \eta$ in $\vec G$, but (3) there \emph{is} a 2-hop path from $\eta$ to $\hat \eta$ in $\vec G$,  (4) $\basev \in \initdom(\hat \eta)$, and (5) $X$ is above $\hat \eta$ in the partition tree.
    }
    \end{equation}
    
    \noindent We will prove (\ref{eq:hop-step}) later. 
    Now we complete the inductive step assuming (\ref{eq:hop-step}). 
    Indeed, let $x$ be some vertex on $P \cap \hat \eta$ from (\ref{eq:hop-step})(1).  
    By (\ref{eq:hop-step})(2) and the buffer property of Claim~\ref{clm:out-buffer}, we have $\dist_{\dom(X) \cup \out \basev}(\eta, \hat \eta) > \Delta/r$. 
    As $P$ passes through $\hat \eta$, we have $\dist_{\dom(X) \cup \out \basev}(\hat \eta, X) \le \norm{P} - \Delta/r \le (q - 1) \cdot \Delta/r$. 
    By the inductive hypothesis (which we may apply due to conditions (\ref{eq:hop-step})(4--5), $\vec G$ contains a path from $\hat \eta$ to $X$ of length at most $2q-2$. Thus by (\ref{eq:hop-step})(3), $\vec G$ contains a path from $\eta$ to $X$ of length at most $2q$.

    \medskip
    We now prove (\ref{eq:hop-step}). Define $\EMPH{$\eta_0$} \coloneqq \eta$ and $\EMPH{$a_0$} \coloneqq a$. 
    For every $i > 0$, inductively define \EMPH{$a_i$} to be the first vertex along the subpath $P[a_{i-1}: b]$ (when moving from $a_{i-1}$ to $b$) that is not in $\dom(\eta_{i-1}) \cup \out \basev$, and let \EMPH{$\eta_i$} be the supernode containing $a_i$.
    We make three observations about $\eta_i$.
    \begin{itemize}
        \item \emph{There is an edge from $\eta_{i-1}$ to $\eta_i$ in $\vec G$, and $\eta_i$ is a proper ancestor of $\eta_0$ in the partition tree.} 
        It suffices to show that $\initdom(\eta_{i-1})$ sees $\eta_i$ at the time $\eta_{i-1}$ is initialized: in this case, by construction there is an edge between $\eta_{i-1}$ and $\eta_i$ in $G$, and Observation~\ref{obs:see-ancestor} implies that $\eta_i$ is a proper ancestor of $\eta_{i-1}$ (and thus of $\eta_0$ as well by induction) in the partition tree.
    To this end, consider the subpath $P[a_{i-1}:a_i]$ of path $P$ that starts at $a_{i-1}$ and ends at $a_i$. By choice of $a_i$, path $P[a_{i-1}:a_i]$ is contained in $\dom(\eta_{i-1}) \cup \out \basev \cup \eta_i$.
    Consider the path $P[a_{i-1}:a_i]$ at the time $\eta_{i-1}$ was initialized by some call $\textsc{BuildTree}(\cS, H)$. As $a_{i-1} \in \eta_{i-1}$, we have $a_{i-1} \in \initdom(\eta_{i-1}) = H$. 
    Now, there are two cases. 
    If $a_i \in H$, then $a_i$ is assigned to some supernode seen by $\eta_{i-1}$, by Claim~\ref{clm:cclmst-dom}, and we are done. 
    Otherwise, $a_i \not \in H$. 
    As one endpoint of $P[a_{i-1}:a_i]$ is in $H$ and the other is not in $H$, there exists some vertex $y$ on $P[a_{i-1}:a_i]$ that lies in $G \setminus H$ but is adjacent to $H$. 
    As $\basev \in \initdom(\eta_0)$ by assumption, and $\eta_{i-1}$ is a proper ancestor of $\eta_0$ in the partition tree by induction, we have $\basev \in \initdom(\eta_{i-1}) = H$, so Observation~\ref{obs:no-out-adj} implies that $y \not \in \out \basev$. 
    But $y \not \in \dom(\eta_{i-1})$ as it is not in $H$, and so $y \in \eta_i$. We conclude that $\initdom(\eta_{i-1})$ sees $\eta_i$.

    \item \emph{For every $\eta_i$, we have $\basev \in \initdom(\eta_i)$, and either $X = \eta_i$ or $X$ is above $\eta_i$ in the partition tree.}
    We have $\basev \in \initdom(\eta_i)$ immediately from the fact that every $\eta_i$ is above $\eta_{i-1}$ in the partition tree, and $\basev \in \initdom(\eta_0)$. 
    For the latter claim, observe that $a_i \not \in \out \basev$ (as, by definition, it is the first vertex not in $\dom(\eta_{i-1}) \cup \out \basev$)
    and so it is in $\dom(X)$ (as $P$ is a path in $\dom(X) \cup \out \basev$); the fact that $a_i \in \eta_i$ and $a_i \in \dom(X)$ implies that either $\eta_i = X$ or $X$ is above $\eta_i$ in the partition tree.

    \item Let $i_{\max}$ be the largest number such that $a_{i_{\max}}$ exists;
    that is, every vertex on the subpath $P[a_{i_{\max}}:b]$
    is in $\dom(\eta_{i_{\max}}) \cup \out \basev$. 
    \emph{We argue that $\eta_{i_{\max}} = X$. }
    Indeed, vertex $b$ is in the skeleton $T_X$, so it is assigned during a call to $\textsc{BuildTree}$ and not during a call to $\textsc{GrowBuffer}$. Thus, $b \not \in \out \basev$, and so (by assumption on $P[a_{i_{\max}}:b]$) we have $b \in \dom(\eta_{i_{\max}})$.
    As $b \in T_X$ and $b \in \dom(\eta_{i_{\max}})$, either $\eta_{i_{\max}} = X$ or $\eta_{i_{\max}}$ is above $X$ in the partition tree. On the other hand, we showed above that either $\eta_{i} = X$ or $X$ is above $\eta_{i}$ in the partition tree for every $\eta_i$. 
    We conclude that $\eta_{i_{\max}} = X$.
    \end{itemize}

    We can now prove (\ref{eq:hop-step}). Let $k$ be the smallest number such that there is \emph{not} an edge from $\eta$ to $\eta_k$ in~$\vec G$. 
    (If no such supernode exists, then  there is an edge from $\eta$ to $\eta_{i_{\rm max}} = X$ in $\vec G$. This contradicts our assumption that there is no path from $\eta$ to $X$ with length at most 2.)
    (1) Clearly, there is some vertex on $P$ in $\eta_k$ (namely, $a_k$), and (2) there is no edge from $\eta$ to $\eta_k$ in $\vec G$. Finally, there is an edge in $\vec G$ from $\eta_{k-1}$ to $\eta_k$ by construction, and (by assumption on $k$) there is an edge in $\vec G$ from $\eta$ to $\eta_{k-1}$; thus, (3) there is a 2-hop path from $\eta$ to $\eta_k$ in $\vec G$. As show above, (4) $\basev \in \initdom(\eta_k)$, and (5) either $X = \eta_k$ or $X$ is above $\eta_k$ in the partition tree (and if $X = \eta_k$, $\vec G$ contains a 2-hop path from $\eta$ to $X$, contradicting our assumption). Thus, choosing $\hat \eta \coloneqq \eta_k$ satisfies (\ref{eq:hop-step}).        
\end{proof}

It is easy to generalize the claim to apply to the case where $\dist_{\dom(X) \cup \out \basev}(\eta, X)$ is bounded.
We say that supernode $\eta$ is a \EMPH{$q$-step victim of a supernode $X$} (\emph{victim} for short) if 
(i) $\basev \in \initdom(\eta)$, 
(ii) $X$ is an ancestor of $\eta$ in the partition tree, where possibly $X = \eta$, and
(iii) $\dist_{\dom(X) \cup \out \basev}(\eta, X) \le q \cdot \Delta/r$.

\begin{corollary}
\label{cor:out-filtser}
    Let integer $q \ge 1$. 
    Let $\eta$ be a supernode that is a $q$-step victim of another supernode $X$.
    Then $\vec G$ contains a path from $\eta$ to $X$ of length at most $2(q+r)$.
\end{corollary}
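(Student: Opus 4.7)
The plan is to deduce Corollary~\ref{cor:out-filtser} directly from Claim~\ref{clm:out-filtser} by trading a distance bound to the supernode $X$ for one to its skeleton $T_X$, at a cost of only an $O_r(\Delta)$ additive overhead. The degenerate case $X = \eta$ is handled by the empty path, so we may assume that $X$ is a proper ancestor of $\eta$ in the partition tree, and fix vertices $u \in \eta$ and $v \in X$ realizing $\dist_{\dom(X) \cup \out{\basev}}(u,v) \le q \cdot \Delta/r$ as promised by the $q$-step victim hypothesis.

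The key observation is that the [supernode radius] property of the buffered cop decomposition guarantees that $v$ lies within distance at most $O(\Delta) = O(r) \cdot (\Delta/r)$ from the skeleton $T_X$, measured in the induced shortest-path metric of $X$. Since $X \subseteq \dom(X) \subseteq \dom(X) \cup \out{\basev}$, this short certifying path persists in the larger graph, so the triangle inequality gives
\[
    \dist_{\dom(X) \cup \out{\basev}}(\eta, T_X) \;\le\; (q + O(r)) \cdot \Delta/r.
\]
Feeding this distance-to-skeleton bound into Claim~\ref{clm:out-filtser} immediately produces a directed path in $\vec G$ from $\eta$ to $X$ of hop length at most $2(q + O(r))$, matching the claimed bound up to the implicit constant folded into the $r$-term.

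All the substantive work has already been done inside Claim~\ref{clm:out-filtser}, so there is no real obstacle here; the only thing worth double-checking is that the radius-certifying path from $v$ to $T_X$ actually lives inside $\dom(X) \cup \out{\basev}$, which is automatic because that path lives inside $X$ and $X$ is a subgraph of $\dom(X)$.
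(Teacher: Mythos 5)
Your proposal is correct and follows essentially the same route as the paper: dispose of the case $X=\eta$ trivially, use the [supernode radius] property to convert the distance-to-$X$ bound into a distance-to-$T_X$ bound of $(q+r)\cdot\Delta/r$ (the certifying path lies in $X\subseteq\dom(X)$), and then invoke Claim~\ref{clm:out-filtser}. The only difference is that you leave the radius overhead as $O(r)\cdot\Delta/r$ rather than pinning it down, which matches the stated $2(q+r)$ bound only up to the constant hidden in the $r$-term; this is harmless since all downstream uses only need an $O_r(q+r)$ hop bound.
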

\begin{proof}
    The statement is trivial if $X=\eta$.
    Let $T_X$ denote the skeleton of $X$. Every vertex in $X$ is within distance $\Delta$ of $T_X$,
    by the [supernode radius] property. Thus, $\dist_{\dom(X) \cup \out \basev}(\eta, X) \le q \cdot \Delta/r + \Delta = (q+r) \cdot \Delta/r$, and the claim follows from Claim~\ref{clm:out-filtser}.
\end{proof}

\begin{corollary}
\label{cor:victim-bound}
    Let $\eta_i$ be a supernode.
    There are at most $\binom{2(q+r)+r-1}{r-1}$ supernodes $X_{i-1}$ that has $\eta_i$ as a $q$-step victim.
\end{corollary}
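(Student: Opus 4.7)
The plan is to combine the two preceding results. By the definition of a $q$-step victim, every supernode $X_{i-1}$ that has $\eta_i$ as a $q$-step victim satisfies the hypotheses of Corollary~\ref{cor:out-filtser}: namely, $\basev \in \initdom(\eta_i)$, $X_{i-1}$ is an ancestor of $\eta_i$ in the partition tree, and $\dist_{\dom(X_{i-1}) \cup \out\basev}(\eta_i, X_{i-1}) \le q \cdot \Delta/r$.

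Applying Corollary~\ref{cor:out-filtser} (with $X := X_{i-1}$ and $\eta := \eta_i$), it follows that $\vec G$ contains a path from $\eta_i$ to $X_{i-1}$ of length at most $2(q+r)$. Therefore the collection of supernodes $X_{i-1}$ for which $\eta_i$ is a $q$-step victim is contained in the set of supernodes reachable from $\eta_i$ in $\vec G$ via a path of length at most $2(q+r)$.

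Finally, we invoke Claim~\ref{clm:out-neighbor-count} with $\eta := \eta_i$ and the parameter chosen to be $2(q+r)$: this gives that the number of such reachable supernodes is at most $\binom{2(q+r)+r-1}{r-1}$, which is exactly the claimed bound. There is no real obstacle here; the work was already done in establishing the strengthened Filtser-style bound (Corollary~\ref{cor:out-filtser}) and the out-degree bound of $\vec G$ (which underlies Claim~\ref{clm:out-neighbor-count}).
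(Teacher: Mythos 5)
Your proof is correct and follows exactly the paper's own argument: apply Corollary~\ref{cor:out-filtser} to get a path of length at most $2(q+r)$ in $\vec G$ from $\eta_i$ to any such $X_{i-1}$, then bound the number of supernodes reachable within that hop length via Claim~\ref{clm:out-neighbor-count}. No gaps.
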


\begin{proof}
    Corollary~\ref{cor:out-filtser} (where $\eta = \eta_{i}$ and $X = X_{i-1}$) shows that if $X_{i-1}$ has $\eta_i$ as a $q$-step victim, then there is a path from $\eta_{i}$ to $X_{i-1}$ of length at most $2(q+r)$ in dag $\vec G$.
    By \Cref{clm:out-neighbor-count}, there are at most $\binom{2(q+r)+r-1}{r-1}$ supernodes $X_{i-1}$ such that $\vec G$ contains an path from $\eta_{i}$ to $X_{i-1}$ of length at most $2(q+r)$.
\end{proof}

\subsubsection{Threatening sequence}

So far, we have bounded the number of supernodes $X$ for which $\dist_{\dom(X) \cup \out \basev}(\basev, X) \le 2\Delta$. 
However, we really want to bound the number of supernodes $X$ for which $\dist_{\initdom(X)}(\basev, X) \le 2\Delta$.
We now show the following lemma, \Cref{clm:find-next-call}, which is the heart of the proof:
if $\dist_{\initdom(X)}(\basev, X) \le 2\Delta$ but $\dist_{\dom(X) \cup \out \basev}(\basev, X) > 2\Delta$, we can ``charge'' this to some $C = \textsc{GrowBuffer}(\cdot, \cdot, H)$ call and replace $X$ with some supernode $X'$ expanded by $C$.
In this way, we bound the number of supernodes $X$ for which $\dist_{\initdom(X)}(\basev, X) \le 2\Delta$.
The complication we mentioned at the start of the section is that
$X'$ is not a victim (and thus a descendant) of $X$, but is in fact an ancestor of $X$; 
we have to choose a victim from the supernodes seen by $H$ during the $\textsc{GrowBuffer}(\cdot, \cdot, H)$ call instead.
(In fact, it is crucial that $X'$ is an ancestor of $X$, as this lets us prove that the threatening sequence has length at most $r-1$; see Claim~\ref{clm:threat-sequence}.)
See Figure~\ref{fig:threat-seq-one-step} for an example of Claim~\ref{clm:find-next-call}.

\begin{claim}
\label{clm:find-next-call}
    Let $X$ be a supernode and $q \in \mathbb{N}$. 
    Let $P$ be a path in $\initdom(X)$ from $X$ to $\basev$ with length at most $q \cdot \Delta/r$. 
    Then either $P$ is already in $\dom(X) \cup \out \basev$;
    or there is a call $C = \textsc{GrowBuffer}(\cS, \cX, H)$ 
    such that 
    \begin{enumerate}
        \item[(1)] $\basev \in H$,
        
        \item[(2)] there is some $q$-step victim $\eta'$ of $X$
        such that $H$ sees $\eta'$ at the time $C$ is called,

        \item[(3)] there is some supernode $X'$ expanded by $C$ and some suffix path $P'$ of $P$, such that $P'$ is a path from $X'$ to $\basev$ in $\initdom(X')$, 
        and $X'$ is a proper ancestor of $X$ in the partition tree.
    \end{enumerate} 
\end{claim}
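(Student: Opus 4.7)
The plan is to extract $(C, X', \eta', P')$ from the last ``bad'' vertex along $P$.  Suppose $P \not\subseteq \dom(X) \cup \out{\basev}$ (otherwise the first alternative of the claim holds), write $P = (p_0, p_1, \ldots, p_k)$ with $p_0 \in X$ and $p_k = \basev$, and let $j$ be the largest index with $p_j \notin \dom(X) \cup \out{\basev}$.  Since $p_j \in \initdom(X) \setminus \dom(X)$, Claim~\ref{clm:cclmst-dom} combined with Observation~\ref{obs:see-ancestor} implies that the final supernode $X'$ containing $p_j$ is a proper ancestor of $X$ in the partition tree.  The assignment of $p_j$ to $X'$ cannot happen during the $\textsc{BuildTree}$ call initializing $X'$---otherwise $p_j$ would already be assigned when $X$ is later initialized, contradicting $p_j \in \initdom(X)$---so it must occur during some subsequent $\textsc{GrowBuffer}$ call $C = \textsc{GrowBuffer}(\cS, \cX, H)$ that expands $X'$.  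Since $p_j \notin \out{\basev}$, the graph $H$ of $C$ must contain $\basev$, yielding~(1).  Setting $P' := P[p_j : p_k]$ gives a suffix path from $p_j \in X'$ to $\basev$ lying inside $\initdom(X) \subseteq \initdom(X')$, yielding~(3).

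For the victim $\eta'$ in~(2), I propose choosing the supernode $\eta^*$ containing $p_m$, where $m$ is the largest index such that $p_m$ is assigned at the time $C$ is called; such $m$ exists because $p_0 \in X$ is assigned before $C$.  The key observation is that $p_{m+1}, \ldots, p_k$ are all unassigned at the time $C$ is called and are pairwise connected along $P$ to the unassigned vertex $\basev \in H$, so they all lie in the unassigned connected component $H$.  In particular $j > m$, and $H$ sees $\eta^*$ through the edge $p_m p_{m+1}$; I set $\eta' := \eta^*$.  Property~(i) of the victim definition ($\basev \in \initdom(\eta^*)$) follows from the same connectivity argument applied at the earlier moment when $\eta^*$ is initialized: at that moment $p_m, p_{m+1}, \ldots, p_k$ are all unassigned and pairwise connected along $P$, and hence all lie in the single unassigned component, which is by definition $\initdom(\eta^*)$.

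The principal obstacle is verifying conditions~(ii) and~(iii).  For~(ii) we need $\eta^*$ to be a descendant of $X$ (or $X$ itself).  If $\eta^*$ is instead a strict ancestor of $X$, the same reasoning as for $p_j$ shows that $p_m$ must be assigned to $\eta^*$ via some $\textsc{GrowBuffer}$ call $C^* \neq C$, and I plan to refine the choice of $m$ to be the largest index such that $p_m$ is assigned at $C$'s time to a supernode in $X$'s subtree (existence of which is guaranteed by $p_0 \in X$); the connectivity argument continues to yield $\basev \in \initdom(\eta^*)$ and $H$ sees $\eta^*$.  For~(iii) the natural candidate path from $\eta^*$ to $X$ is the prefix $P[p_0 : p_m]$ of length at most $q \cdot \Delta/r$; the technical hurdle is to guarantee that this prefix lies inside $\dom(X) \cup \out{\basev}$, which may require yet another refinement of $m$ combining the previous refinement with the requirement $P[p_0 : p_m] \subseteq \dom(X) \cup \out{\basev}$.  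Reconciling these refinements simultaneously---and in particular verifying that the resulting $m$ still places $p_{m+1}, \ldots, p_k$ inside $H$ and that the prefix to $\eta^*$ avoids all bad vertices---is where I expect the bulk of the remaining technical work, since bad vertices at indices smaller than $m$ need to be routed around using the assignments recorded in $\out{\basev}$.
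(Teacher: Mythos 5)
There is a genuine gap, and it sits exactly where you flag it: condition (2). Your choices of $C$, $X'$ and $P'$ (parts (1) and (3)) are fine, but they are anchored at the \emph{last} vertex $p_j$ of $P$ outside $\dom(X)\cup\out{\basev}$, and with that anchor the victim $\eta'$ cannot be produced by your scheme. Your candidate $\eta^*\ni p_m$ ($m$ = largest index assigned at the time $C$ is called) gives you ``$H$ sees $\eta^*$'' and even $\basev\in\initdom(\eta^*)$, but nothing forces $\eta^*$ to lie in $X$'s subtree (victim condition (ii)), and nothing bounds $\dist_{\dom(X)\cup\out{\basev}}(\eta^*,X)$ (condition (iii)): the prefix $P[p_0:p_m]$ may be riddled with vertices outside $\dom(X)\cup\out{\basev}$ precisely because you chose $j$ maximal rather than minimal. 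The repairs you sketch pull in opposite directions: restricting $m$ to indices assigned to supernodes in $X$'s subtree, or to indices whose prefix stays in $\dom(X)\cup\out{\basev}$, destroys the property that $p_{m+1},\ldots,p_k$ are unassigned, which is what made $H$ see $\eta^*$ and made the $\basev\in\initdom(\eta^*)$ timing argument work. (A smaller unjustified step: ``$p_0\in X$ is assigned before $C$'' need not hold, since $p_0$ may be added to $X$ by a \textsc{GrowBuffer} call occurring after $C$; in that situation one must instead invoke \Cref{clm:cclmst-dom} to conclude $X$ itself is seen by $H$.)

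The paper resolves all of this by anchoring at the \emph{first} vertex $x$ of $P$ (walking from $X$ to $\basev$) outside $\dom(X)\cup\out{\basev}$. Then (1) and (3) go through just as in your argument (the suffix $P[x:\basev]$ only needs to live in $\initdom(X)\subseteq\initdom(X')$, so it is irrelevant that it may contain further bad vertices), while the prefix $P[X:x]$ lies entirely in $\dom(X)\cup\out{\basev}$, which is exactly what (2) needs: if the whole prefix is in $H$, then \Cref{clm:cclmst-dom} shows $X$ itself is seen by $H$ and one takes $\eta'=X$; otherwise the first prefix vertex $x'$ outside $H$ is adjacent to $H$, is not in $\out{\basev}$ by \Cref{obs:no-out-adj}, hence lies in $\dom(X)$ (giving (ii)), and the prefix $P[X:x']\subseteq\dom(X)\cup\out{\basev}$ of length at most $q\cdot\Delta/r$ gives (iii), while $\basev\in\initdom(\eta')$ follows because $H$ sees $\eta'$ and so $H\subseteq\initdom(\eta')$. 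So the missing idea is not another refinement of $m$ but switching your extremal choice from last to first bad vertex; as written, your proof establishes (1) and (3) but not (2).
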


\begin{proof}
    Suppose that $P$ is not in $\dom(X) \cup \out \basev$.
    We will show that some appropriate call $C$ exists.
    Let \EMPH{vertex $x$} be the first one along $P$ (walking from $X$ to $\basev$) that is not in $\dom(X) \cup \out \basev$.
    As $x$ is on $P$ which is in $\initdom(X)$, Claim~\ref{clm:cclmst-dom} implies that $x$ was assigned to some supernode $X'$ that was seen by $\initdom(X)$. 
    Let $C = \textsc{GrowBuffer}(\cS, \cX, H)$ be the call during which $x$ was assigned. (In particular, $x$ is in $H$.)
    We show that $C$ satisfies the three conditions required by the claim.
    \begin{enumerate}
        \item[(1)] 
        By choice of $x$, we have $x \not \in \out \basev$, and so $\basev \in H$ by definition of $\out \basev$.
        
        \item[(2)] Consider the prefix $P[X:x]$ of $P$ that runs from $X$ to $x$. One endpoint $x$ of $P[X:x]$ is in $H$.
        Suppose the whole prefix $P[X:x]$ is in $H$. 
        The other endpoint $x'$ of $P[X:x]$ must \emph{eventually} be assigned to $X$. 
        Notice that $X$ cannot be initialized by or after $C$, because
        the whole path $P$, including $x$, is in $\initdom(X)$.
        Every vertex in $\initdom(X)$ is unassigned at the time $X$ is initialized; as call $C$ assigns $x$, call $C$ cannot occur before $X$ is initialized.
        Claim~\ref{clm:cclmst-dom}, applied on $C$ and $x' \in H$, then implies that 
        $x'$ is assigned to a supernode (which must be $X$) that
        $H$ sees at the time $C$ is called.  
        We choose $\eta' \coloneqq X$, and $\eta'$ is trivially a $q$-step victim of itself.
        
        If some vertex on $P[X:x]$ is not in $H$, then there is some \EMPH{vertex $x'$} on $P[X:x] \setminus \set{x}$ that is in $G \setminus H$ but is adjacent to $H$. 
        We choose $\eta'$ to be the supernode containing $x'$; because $x' \in G\setminus H$, $\eta'$ must have included $x'$ at the time $C$ is called.
        We show that $\eta'$ is a $q$-step victim of $X$ and $H$ sees $\eta'$.
        By definition of $x'$ and $\eta'$, the graph $H$ sees $\eta'$ at the time $C$ is called.
        (i) As $H$ sees $\eta'$, it follows (by repeatedly applying Observation~\ref{obs:no-out-adj})
        that $\eta'$ was initialized by some call to $\textsc{BuildTree}(\cS', H')$ where $H \subseteq H'$; in particular, by (1), $\basev \in H \subseteq H' = \initdom(\eta')$.
        (ii) By Observation~\ref{obs:no-out-adj}, $x' \not \in \out \basev$; by the choice of $x$, $P[X:x']$ is contained in $\dom(X) \cup \out \basev$, thus we have $x' \in \dom(X)$. 
        This implies that either $X = \eta'$ or $X$ is above $\eta'$ in the partition tree.
        (iii) Again, because the path $P[X:x']$ is contained in $\dom(X) \cup \out \basev$, we have $\dist_{\dom(X) \cup \out \basev}(\eta', X) \le q \cdot \Delta/r$. 
        
        \item[(3)] By definition, $X'$ was expanded by $C$. 
        Further, consider the subpath $P' \coloneqq P[x:\basev]$ of $P$. 
        This is a path from $X'$ to $\basev$. 
        Further, as $X'$ is a proper ancestor of $X$ in the partition tree (by Observation~\ref{obs:see-ancestor}), $\initdom(X)$ is a subgraph of $\initdom(X')$. 
        As $P$ is a path in $\initdom(X)$, it follows that $P[x:\basev]$ is a path in $\initdom(X')$. 
        Finally, observe that $x \not \in \dom(X)$ and $x$ was assigned to $X'$, so $X \neq X'$. \qed
    \end{enumerate}
\end{proof}
\vspace{-22pt}

\begin{figure}
    \centering
    \includegraphics[width=\textwidth]{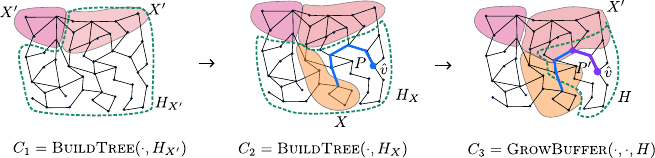}
    \caption{A series of \textsc{BuildTree} and \textsc{GrowBuffer} calls, demonstrating one case of \Cref{clm:find-next-call}. The calls $C_1$ and $C_2$ initialize supernodes $X'$ and $X$, respectively. Supernode $X'$ is an ancestor of $X$, so vertices in $X'$ are not in $\dom(X)$. The blue path $P$ is a shortest-path from $X$ to $\basev$, in the graph $\initdom(X) \cup \out \basev$. The \textsc{GrowBuffer} call $C_3$ expands supernode $X'$, causing $P$ to not be contained in $\dom(X) \cup \out \basev$. The call $C_3$ satisfies the conditions of Claim~\ref{clm:find-next-call}: $\hat v$ is in $H$, supernode $\eta' \coloneqq X$ is seen by $H$, and supernode $X'$ is connected to $\basev$ by a subpath $P'$ of $P$.}
    \label{fig:threat-seq-one-step}
\end{figure}

By applying this claim repeatedly, we can prove the existence of a threatening sequence.

\begin{claim}
\label{clm:threat-sequence}
    Let $q \in \mathbb{N}$, and let $X$ be a supernode with $\dist_{\initdom(X)}(\basev, X) \le q \cdot \Delta/r$. Define $X_0 \coloneqq X$.
    For some $\ell < r - 1$
    (possibly $\ell = 0$), there is a sequence of $\ell$ calls $C_i = \textsc{GrowBuffer}(\cS_i, \cX_i, H_i)$ and a sequence of supernodes $X_i$ expanded by $C_i$ such that:
    \begin{enumerate}
        \item[(1)] for every $i \in [1 \,..\, \ell]$, $\basev \in H_i$;

        \item[(2)] for every $i\in [1 \,..\, \ell]$, there is a $q$-step victim  $\eta_i$ of $X_{i-1}$ such that $H_i$ sees $\eta_i$ at the time $C_i$ is called;
                
        \item[(3)] denote $\eta_{\basev}$ the supernode containing $\basev$, then $\eta_{\basev}$ is a $q$-step victim of $X_\ell$.
        
    \end{enumerate}
\end{claim}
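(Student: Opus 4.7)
The plan is to iteratively apply Claim~\ref{clm:find-next-call}, starting from $X_0 \coloneqq X$ and a shortest path $P_0$ from $X_0$ to $\basev$ in $\initdom(X_0)$ of length at most $q\cdot\Delta/r$ (which exists by hypothesis). For $i \ge 1$, given $X_{i-1}$ and a path $P_{i-1}$ from $X_{i-1}$ to $\basev$ in $\initdom(X_{i-1})$ of length at most $q\cdot\Delta/r$, if $P_{i-1}$ is already contained in $\dom(X_{i-1}) \cup \out \basev$ I terminate with $\ell \coloneqq i - 1$; otherwise, I invoke Claim~\ref{clm:find-next-call} on $X_{i-1}$ and $P_{i-1}$ to extract the next call $C_i = \textsc{GrowBuffer}(\cS_i, \cX_i, H_i)$, a $q$-step victim $\eta_i$ of $X_{i-1}$ that is seen by $H_i$, a supernode $X_i$ expanded by $C_i$ which is a proper ancestor of $X_{i-1}$, and a suffix $P_i$ of $P_{i-1}$ that is a path from $X_i$ to $\basev$ in $\initdom(X_i)$. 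Conditions~(1) and~(2) are then immediate from Claim~\ref{clm:find-next-call}. For condition~(3), at termination $P_\ell$ certifies that $\dist_{\dom(X_\ell) \cup \out \basev}(\eta_\basev, X_\ell) \le q\cdot\Delta/r$, while $\basev \in \initdom(X_\ell)$ forces $\eta_\basev$ to be a descendant of $X_\ell$ in the partition tree; hence $\eta_\basev$ is a $q$-step victim of $X_\ell$.

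The main obstacle is to prove $\ell < r - 1$. Let $H^0 \coloneqq \initdom(X_0)$, i.e., $H^0$ is the subgraph passed to the $\textsc{BuildTree}$ call that initialized $X_0$. By Claim~\ref{clm:cclmst-seen}, at that moment the set $\cS_{|H^0}$ of supernodes seen by $H^0$ has size at most $r - 2$. Since the $X_i$ form a strictly ascending chain in the partition tree and are therefore pairwise distinct, it will suffice to show that each $X_i$ with $i \ge 1$, viewed as its state at the moment $X_0$ is initialized, belongs to $\cS_{|H^0}$.

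To this end, fix $i \ge 1$. Since $X_i$ is a proper ancestor of $X_0$, it is already initialized when $X_0$ is initialized. The vertex $x_i \in X_i$ at which $P_i$ begins lies on $P_{i-1}$, which by iteration is a suffix of $P_0$; hence $x_i \in H^0$, so $x_i$ is unassigned when $X_0$ is initialized and yet is eventually absorbed into $X_i$. Let $C^*$ be the earliest call during which $X_i$ absorbs some vertex of $H^0$; then $C^* = \textsc{GrowBuffer}(\cdot, \cdot, H^*)$ with $H^* \subseteq H^0$, and $X_i$ is seen by $H^*$ at time $C^*$, producing a vertex $u$ in the state of $X_i$ at time $C^*$ that is adjacent in $G$ to $H^0$. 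I claim that $u$ already belongs to $X_i$ at the moment $X_0$ is initialized, which gives $X_i \in \cS_{|H^0}$ at that moment. Indeed, if $u$ were assigned to $X_i$ strictly between the initialization of $X_0$ and $C^*$, then $u$ would be unassigned at the moment $X_0$ is initialized but adjacent in $G$ to $H^0$; since $H^0$ is a maximal connected component of unassigned vertices at that moment, this forces $u \in H^0$, and then $X_i$ would have absorbed the $H^0$-vertex $u$ strictly before $C^*$, contradicting the minimality of $C^*$. Combined with $|\cS_{|H^0}| \le r - 2$, the iteration must terminate after at most $r - 2$ steps, giving $\ell < r - 1$.
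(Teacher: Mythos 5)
Your overall strategy is the one the paper uses: iterate Claim~\ref{clm:find-next-call} starting from a shortest path $P_0$ in $\initdom(X_0)$, read off conditions (1)--(2) from that claim, and stop when the current path lies in $\dom(X_{i-1})\cup\out\basev$. Your argument for $\ell<r-1$ is, however, genuinely different from the paper's and it is valid: the paper maintains an invariant that $P_i$ lies in $\dom(X_i)$ together with a shrinking subset of $\cS_{|X_0}$ (using Claim~\ref{clm:cclmst-dom} at the creation of $X_0$), whereas you show directly that each $X_i$ with $i\ge 1$ was already seen by $\initdom(X_0)$ at the moment $X_0$ was initialized, via the earliest call $C^*$ in which $X_i$ absorbs a vertex of $\initdom(X_0)$ and maximality of connected components of unassigned vertices, and then conclude from distinctness of the $X_i$ (a strict ancestor chain) and $\lvert\cS_{|X_0}\rvert\le r-2$. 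This buys a somewhat more self-contained counting step (no decreasing-counter bookkeeping), at the cost of re-deriving by hand what Claim~\ref{clm:cclmst-dom} gives in one line: the start vertex of $P_i$ lies in $\initdom(X_0)$ and is eventually assigned to $X_i$, a proper ancestor of $X_0$, so it must be assigned to a supernode seen by $\initdom(X_0)$. Your unstated assumptions there (that expansions of pre-existing supernodes happen only in Step~2 of \textsc{GrowBuffer}, and that every call assigning a vertex of $\initdom(X_0)$ operates on a subgraph of $\initdom(X_0)$) are standard structural facts of the recursion, consistent with how the paper itself argues.

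There is one incorrect step, in your derivation of condition (3): you assert that ``$\basev\in\initdom(X_\ell)$ forces $\eta_\basev$ to be a descendant of $X_\ell$ in the partition tree.'' This implication is false, and the gap is exactly the $\initdom$-versus-$\dom$ distinction this whole section revolves around: a vertex of $\initdom(X_\ell)$ may later be assigned, by a \textsc{GrowBuffer} call spawned below the \textsc{BuildTree} call that created $X_\ell$, to a supernode seen by $\initdom(X_\ell)$, which by Observation~\ref{obs:see-ancestor} is a \emph{proper ancestor} of $X_\ell$ (this is precisely the second alternative in Claim~\ref{clm:cclmst-dom}). The correct argument, which the paper gives, uses the termination condition you already have: at termination $P_\ell\subseteq\dom(X_\ell)\cup\out\basev$, and $\basev$ is never in $\out\basev$ (any \textsc{GrowBuffer} call assigning $\basev$ has $\basev$ in its input subgraph), so $\basev\in\dom(X_\ell)$; hence $\eta_\basev$ lies in the subtree of $X_\ell$, and together with $\dist_{\dom(X_\ell)\cup\out\basev}(\eta_\basev,X_\ell)\le\norm{P_\ell}\le q\cdot\Delta/r$ and $\basev\in\initdom(\eta_\basev)$ this makes $\eta_\basev$ a $q$-step victim of $X_\ell$. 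With that one-line repair your proof goes through.
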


\begin{proof}
Let $P_0$ be a path in $\initdom(X_0)$ from $X_0$ to $\basev$, with $\norm{P_0} \le q \cdot \Delta/r$. 
For every $i > 0$, we either decide to end the algorithm by setting $\ell = i - 1$ or we inductively define a call $C_i$, a supernode $X_i$, and suffix path $P_i$, using supernode $X_{i-1}$ and path $P_{i-1}$.  
We maintain the following invariant: 
Let $\cS_{|X_0} \coloneqq \set{S_1, \ldots, S_k}$ denote the set of supernodes seen by $\initdom(X_0)$ at the time $X_0$ is created, where $k \le r - 2$ by \Cref{clm:cclmst-seen}.
\begin{quote}
$P_i$ is a path in $\initdom(X_i)$ from $X_i$ to $\basev$ that is a suffix path of $P_0$, and lies in the union of $\dom(X_i) \cup \set{\text{at most $k - i$ supernodes from $\cS_{|X_0}$}}$, and the call $C_i$ satisfies conditions (1--2).
\end{quote}
For the base case when $i=0$, $P$ lies in $\dom(X) \cup \bigcup \cS_{|X_0}$ by Claim~\ref{clm:cclmst-dom}.

\medskip
Apply Claim~\ref{clm:find-next-call} (choosing $X = X_{i-1}$ and $P = P_{i-1}$). 
Either $P_{i-1}$ is already in $\dom(X_{i-1}) \cup \out \basev$ and thus $\basev \in \dom(X_{i-1}$) (because $\basev$ is never in $\out \basev$), or there is a call $C_i = \textsc{GrowBuffer}(\cS_i, \cX_i, H_i)$ satisfying (1--3) of Claim~\ref{clm:find-next-call}.
In the first case, we show that $\eta_{\basev}$ is a $q$-step victim of $X_\ell \coloneqq X_{i-1}$.
We have $\basev \in \initdom(\eta_{\basev})$ and
$X_{i-1}$ is above $\eta_{\basev}$ in the partition tree (as $\basev \in \dom(X_{i-1})$), and 
\[
\dist_{\dom(X_{i-1}) \cup \out \basev}(\eta_v, X_{i-1}) \le \norm{P_{i-1}} \le q \cdot \Delta/r.
\]
Thus, we may choose $\ell \coloneqq i - 1$, and $\eta_{\basev}$ and $X_\ell$ satisfy condition (3).
Notice that $\ell$ must be less than $k$, 
because by the invariant $P_{i}$ lies in the union of $\dom(X_{i})$ and at most $k-i$ supernodes from $\cS_{|X_0}$; after up to $k$ iterations $P_{i}$ lies in $\dom(X_{i})$ completely.

In the other case, we choose $C_i$ to be the call $\textsc{GrowBuffer}(\cS_i, \cX_i, H_i)$ guaranteed by Claim~\ref{clm:find-next-call}, choose $X_i$ to be the supernode expanded by $C_i$ (called $X'$ in the statement of Claim~\ref{clm:find-next-call}), and choose $P_i$ to be the suffix path of $P_{i-1}$ that connects $X_i$ and $\basev$ in $\initdom(X_i)$ (called $P'$ in the statement of Claim~\ref{clm:find-next-call}). 
Conditions (1--2) follow immediately from Claim~\ref{clm:find-next-call}(1--2). 
It remains to show that $P_i$ is in the union of $\dom(X_i)$ and at most $k-i$ supernodes from $\cS_{|X_0}$. 
By the invariant, $P_{i-1}$ is in the union of $\dom(X_{i-1})$ and at most $k-i+1$ supernodes from $\cS_{|X_0}$, without loss of generality to be $\set{S_1, \ldots, S_{k-i+1}}$.  
By Claim~\ref{clm:find-next-call}(3), supernode $X_i$ is a proper ancestor of $X_{i-1}$ in the partition tree, and thus $X_i$ is not in $\dom(X_{i-1})$; as some vertex of $P_{i-1}$ lies on $X_i$, we thus have $X_i \in \set{S_1, \ldots, S_{k-i+1}}$. 
Without loss of generality suppose that $X_i = S_{k-i+1}$. 
Now, because $X_i$ is an ancestor of $X_{i-1}$ in the partition tree, $\dom(X_{i-1})$ is a subgraph of $\dom(X_i)$ (and clearly $X_i = S_{k-i+1}$ is a subgraph of $\dom(X_i)$). 
We conclude that $P_i$ is a path in the union of $\dom(X_i)$ and
$\set{S_1, \ldots, S_{k-i}}$.
\end{proof}

\subsubsection{Proof of charging bound}

Now we are ready to prove stronger version of \Cref{clm:charge}; \Cref{clm:charge} follows by setting $q=2r$.
To handle the technical issue that $X_{i-1}$ does not threaten $X_i$ directly but threatens instead some $\eta_i$ seen by $H_i$ where $C_i = \textsc{GrowBuffer}(\cS_i, \cX_i, H_i)$ was called, 
we argue that at most $r-1$ supernodes $X_i$ are expanded by $C_i$ and thus at most $r-1$ supernodes $\eta_i$ can be seen by $H_i$ (at the time $C_i$ is called) using \Cref{clm:expand-count}.

\begin{claim}
\label{clm:charge-strong}
    For any $i \in \mathbb{N}$ and any supernode $X$, the pair $(X, i)$ receives at most $\mu^i$ charges, where $\mu = (r-1) \cdot (r-2) \cdot \binom{2(q+r)+r-1}{r-1}$.
    The supernode $\eta_{\basev}$ containing $\basev$ receives at most $\mu^{r}$ charges.
\end{claim}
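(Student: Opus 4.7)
The plan is to prove $c(X,i) \le \mu^i$ by induction on $i$, where $c(X,i)$ denotes the total number of charges accumulated on the pair $(X,i)$ across all threatening sequences. The base case $c(X,0) \le 1$ is immediate: each starting supernode $X_0$ (with $\dist_{\initdom(X_0)}(\basev,X_0) \le 2\Delta$) initiates exactly one sequence, so $(X,0)$ receives at most one charge. The inductive step reduces to a \emph{predecessor bound}: for each pair $(X,i)$ with $i \ge 1$, at most $\mu$ supernodes $X'$ can serve as the position-$(i-1)$ element in a threatening sequence whose position-$i$ element is $X$. Granting this, the induction follows since each charge at $(X,i)$ propagates from a unique predecessor charge at some $(X',i-1)$:
\[ c(X,i) \;\le\; \sum_{X' \text{ a predecessor of } X} c(X',i-1) \;\le\; \mu \cdot \max_{X'} c(X',i-1) \;\le\; \mu \cdot \mu^{i-1} \;=\; \mu^i. \]

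The plan for the predecessor bound is to chain three independent combinatorial bounds. Fix $(X,i)$ and consider a sequence with $X_i = X$ and $X_{i-1} = X'$. By \Cref{clm:threat-sequence}(2), this sequence is witnessed by a call $C_i = \textsc{GrowBuffer}(\cS_i,\cX_i,H_i)$ that expands $X$ with $\basev \in H_i$, together with a $q$-step victim $\eta_i$ of $X'$ that is seen by $H_i$ at the moment $C_i$ is made. First, \Cref{clm:expand-count} bounds the number of choices of such $C_i$ by $r-1$. Second, for any fixed $C_i$, the candidate $\eta_i$ lies in $\cS_{i \mid H_i}$, which by \Cref{clm:cclmst-seen} has at most $r-2$ elements. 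Third, for any fixed $\eta_i$, \Cref{cor:victim-bound} bounds the number of supernodes $X'$ for which $\eta_i$ is a $q$-step victim by $\binom{2(q+r)+r-1}{r-1}$. The product of these three factors is exactly $\mu$.

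For the bound on $\eta_{\basev}$: each sequence deposits exactly one charge on $\eta_{\basev}$ at the end of its life. A sequence of length $\ell \in [0,r-2]$ terminates at some $X_\ell$ for which $\eta_{\basev}$ is a $q$-step victim, by \Cref{clm:threat-sequence}(3). \Cref{cor:victim-bound} gives at most $\binom{2(q+r)+r-1}{r-1} \le \mu$ choices of such $X_\ell$, and for each $(X_\ell,\ell)$ the number of sequences ending there is at most $c(X_\ell,\ell) \le \mu^\ell$ by the main induction. Summing over $\ell$,
\[ c(\eta_{\basev}) \;\le\; \sum_{\ell=0}^{r-2} \mu \cdot \mu^\ell \;\le\; (r-1)\mu^{r-1} \;\le\; \mu^r, \]
using that $\mu \ge r-1$. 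The main technical obstacle sits not in this bookkeeping but in its prerequisites, especially \Cref{cor:victim-bound} (via its dependence on \Cref{clm:out-buffer} and \Cref{clm:out-filtser}), whose validity for the modified domain $\dom(X)\cup\out\basev$ is what allows the predecessor count to factor into the three essentially-independent choices above.
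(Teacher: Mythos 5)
Your proposal is correct and follows essentially the same route as the paper's proof: induction on $i$, with the predecessor count factored exactly as the paper does into the $(r-1)$ bound from \Cref{clm:expand-count}, the $(r-2)$ bound from \Cref{clm:cclmst-seen}, and the $\binom{2(q+r)+r-1}{r-1}$ bound from \Cref{cor:victim-bound}, and the final bound for $\eta_{\basev}$ obtained via \Cref{clm:threat-sequence}(3), \Cref{cor:victim-bound}, and summing over the sequence length $\ell$. The only differences are cosmetic (your explicit predecessor-sum bookkeeping and the slightly more careful $(r-1)\mu^{r-1}\le\mu^r$ estimate for $\eta_{\basev}$).
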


\begin{proof}
    We will prove the statement by induction on $i$.
    For the base case when $i=0$, every super\-node $X$ induces one threatening sequence with $X_0=X$ which charges $(X_0,0)$ once.
    For the inductive case,
    consider any supernode $X_0$ with $\dist_{\initdom(X_0)}(\hat{v}, X_0) \le q\cdot \Delta / r$, such that the sequence of supernodes $(X_0, \ldots, X_\ell)$ guaranteed by Claim~\ref{clm:threat-sequence} contains the supernode $X$ in question as $X_i$.
    Let $C_i = \textsc{GrowBuffer}(\cS_i, \cX_i, H_i)$ be the call that expands $X_i$ for each $i$.
    
    For any supernode $X = X_i$ expanded by $C_i = \textsc{GrowBuffer}(\cdot, \cdot, H_i)$,
    Claim~\ref{clm:threat-sequence}(1) ensures $\hat{v} \in H_i$.
    Consequently, at most $r-1$ calls of the form $\textsc{GrowBuffer}(\cdot, \cdot, H_i)$ can expand $X_{i}$ where $\basev \in H_i$, by Claim~\ref{clm:expand-count}.
    For each of these calls on $H_i$, there are at most $r-2$ many supernodes $\eta_i$ seen by $H_i$ by \Cref{clm:cclmst-seen}. 
    For each such $\eta_i$, there are at most $\binom{2(q+r)+r-1}{r-1}$ supernodes $X_{i-1}$ that has $\eta_i$ as a $q$-step victim, 
    by \Cref{cor:victim-bound}.
    Claim~\ref{clm:threat-sequence}(2) ensures there is a $q$-step victim $\eta_i$ of $X_{i-1}$ such that $H_i$ sees $\eta_i$. 
    By induction, each $(X_{i-1},i-1)$ receives is at most $\mu^{i-1}$ charges.
    In total, $(X,i)$ receives at most 
    \[
    (r-1) \cdot (r-2) \cdot \binom{2(q+r)+r-1}{r-1} \cdot \mu^{i-1} = \mu^i
    \]
    charges, if we set $\mu = (r-1) \cdot (r-2) \cdot \binom{2(q+r)+r-1}{r-1}$.

    \medskip
    As for the total amount of charges on $\eta_{\basev}$,
    Claim~\ref{clm:threat-sequence}(3) says that $\eta_{\basev}$ is a victim of $X_\ell$.
    There are at most $\binom{2(q+r)+r-1}{r-1}$ supernodes $X_{\ell}$ that has $\eta_{\basev}$ as a victim, 
    by \Cref{cor:victim-bound}.
    Each $(X_{\ell},\ell)$ receives is at most $\mu^{\ell}$.
    The index $\ell$ for the threatening sequence to $\basev$ may range from $0$ to $r-1$.
    As a result, $\eta_{\basev}$ receives at most 
    \(
    \sum_{0 \le \ell \le r-1} \mu^{\ell} \le \mu^{r}
    \)
    charges.
\end{proof}

\subsection{Stochastic shortcut partition}

\cite{CCLMST24} construct a shortcut partition from buffered cop decomposition. We slightly modify their algorithm; see Figure~\ref{fig:shortcut}. 
As before, we highlight our modification in red.

\begin{figure}[h!]
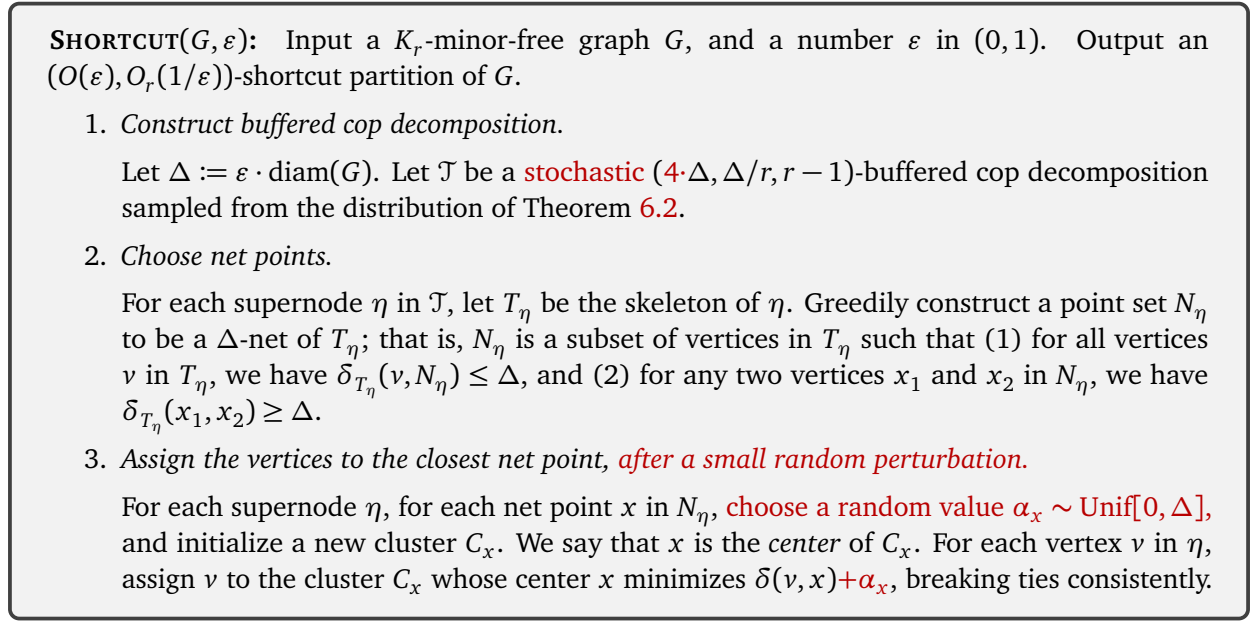

\centering
\begin{tcolorbox}
\paragraph{$\textsc{Shortcut}(G, \e)$:} Input a $K_r$-minor-free graph $G$, and a number $\e$ in $(0,1)$. Output an $(O(\e), O_r(1/\e))$-shortcut partition of $G$.
\begin{enumerate}
    \item \emph{Construct buffered cop decomposition.}
    
    Let $\Delta \coloneqq \e \cdot \diam(G)$. Let $\cT$ be a \textcolor{BrickRed}{stochastic} $(\textcolor{BrickRed}{4 \cdot} \Delta, \Delta/r, r-1)$-buffered cop decomposition sampled from the distribution of \Cref{thm:random-cop}.
    
    \item \emph{Choose net points.}
    
    For each supernode $\eta$ in $\cT$, let $T_\eta$ be the skeleton of $\eta$. Greedily construct a point set $N_\eta$ to be a $\Delta$-net of $T_\eta$; that is, $N_\eta$ is a subset of vertices in $T_\eta$ such that (1) for all vertices $v$ in $T_\eta$, we have $\dist_{T_\eta}(v, N_\eta) \le \Delta$, and (2) for any two vertices $x_1$ and $x_2$ in $N_\eta$, we have $\dist_{T_\eta}(x_1, x_2) \ge \Delta$.

    \item \emph{Assign the vertices to the closest net point, \textcolor{BrickRed}{after a small random perturbation.}}

    For each supernode $\eta$, for each net point $x$ in $N_\eta$, \textcolor{BrickRed}{choose a random value $\alpha_x \sim \mathrm{Unif}[0, \Delta]$,} and initialize a new cluster $C_x$. We say that $x$ is the \emph{center}
    of $C_x$. 
    For each vertex $v$ in $\eta$, assign $v$ to the cluster $C_x$ whose center $x$ minimizes $\dist(v, x) \textcolor{BrickRed}{+ \alpha_x}$, breaking ties consistently.
\end{enumerate}
\end{tcolorbox}
\caption{The \textsc{Shortcut} algorithm.}
\label{fig:shortcut}
\end{figure}

\begin{claim}
\label{clm:valid-random-shortcut}
    For any $K_r$-minor-free graph $G$ and parameter $\e$ in $(0,1)$, the procedure $\textsc{Shortcut}(G, \e)$ outputs an $(O(\e), O_r(1)/\e)$-shortcut partition for $G$.
\end{claim}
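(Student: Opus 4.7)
The plan is to verify the two defining properties of a shortcut partition --- strong diameter at most $O(\e \cdot \diam(G))$ and the low-hop connectivity guarantee --- for the output of $\textsc{Shortcut}(G,\e)$, following the template of the deterministic construction in \cite{CCLMST24} while tracking how the random perturbations $\alpha_x$ degrade the bounds by at most a constant.

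For the diameter bound, I would first observe that every cluster $C_x$ is contained in the single supernode $\eta$ whose skeleton $T_\eta$ hosts the center $x$, since the assignment loop in Step 3 only iterates over vertices of $\eta$. For any $v \in C_x$, the [supernode radius] property gives a vertex $y \in T_\eta$ with $\dist_\eta(v,y) \le 4\Delta$, and the $\Delta$-net property gives a net point $x' \in N_\eta$ with $\dist_{T_\eta}(y,x') \le \Delta$; hence $\dist(v,x') \le 5\Delta$. Because $v$ is assigned to $x$ rather than $x'$, we have $\dist(v,x) \le \dist(v,x') + \alpha_{x'} - \alpha_x \le 5\Delta + \Delta = 6\Delta$. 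For the \emph{strong} diameter I would then argue a perturbation-stable star-shaped property: if $v \in C_x$ and $w$ lies on a shortest $v$-$x$ path, then $w \in C_x$. Indeed, for any competing center $x''$, the triangle inequality $\dist(v,x'') \le \dist(v,w) + \dist(w,x'')$ combined with $\dist(v,x) = \dist(v,w) + \dist(w,x)$ and the defining inequality $\dist(v,x) + \alpha_x \le \dist(v,x'') + \alpha_{x''}$ yields $\dist(w,x) + \alpha_x \le \dist(w,x'') + \alpha_{x''}$. Choosing these shortest paths inside $\eta$ (routing through the SSSP-tree skeleton $T_\eta$ plus the shortest-path ``stems'' attached to $T_\eta$) then gives a $u$-$x$-$v$ route inside $C_x$ of length $\le 12\Delta = O(\e \cdot \diam(G))$.

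For the low-hop property I would take any pair $u,v$ and a shortest $G$-path $\pi$ between them, and enumerate the sequence of clusters it visits. Inside any one supernode $\eta$, the number of clusters $C_x$ meeting $\pi$ is $O_r(1)$: the intersection of $\pi$ with $\eta$ has length at most $\diam(\eta) = O(\Delta)$, its endpoints are within $O(\Delta)$ of $T_\eta$, and the $\Delta$-net $N_\eta$ has at most $O_r(1)$ points within any $O(\Delta)$-ball (bounded using the at-most-$(r-1)$-leaves bound on $T_\eta$ from [shortest-path skeleton], which caps $|N_\eta|$ by $O_r(1)$ per unit of skeleton length). Since the number of maximal subpaths of $\pi$ inside any single supernode totals to $O(\|\pi\|/\Delta + 1) = O(\lceil \dist_G(u,v)/(\e \cdot \diam(G)) \rceil)$, and consecutive clusters along $\pi$ are adjacent in $G/\cC$ because they both touch $\pi$, multiplying yields the required hop bound $\e h \cdot \lceil \dist_G(u,v)/(\e \cdot \diam(G)) \rceil$ with $h = O_r(1/\e)$. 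The [Low-hop] witness path $\pi$ is the same shortest path, satisfying the ``nontrivial intersection'' clause by construction.

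The step I expect to be the main obstacle is the strong-diameter argument, specifically justifying that the $v$-to-$x$ shortest path used in the star-shaped claim can be chosen to lie inside $\eta$ rather than merely inside $G$. The perturbation-stable monotonicity inequality above is a $G$-level statement, but clusters live inside supernodes; so one needs that shortest paths between a vertex of $\eta$ and a net point of $T_\eta$ can be realized inside $\eta$. This should follow from how supernodes were grown in $\textsc{BuildTree}$/$\textsc{GrowBuffer}$ together with the [supernode buffer] property (\Cref{clm:out-buffer}) --- any detour of a shortest path through an ancestor supernode $X$ would incur at least $\Delta/r$ slack, contradicting the bound $\dist(v,x) \le 6\Delta$ up to a choice of constants --- but verifying this carefully against our modified, stochastic construction is the delicate part and requires the strengthened buffer guarantees developed in Section~\ref{SS:cop-threateners}.
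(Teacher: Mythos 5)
Your treatment of the strong-diameter bound is essentially the paper's own argument: distance at most $4\Delta$ to the skeleton by [supernode radius], plus $\Delta$ to a net point, plus the perturbation $\alpha_{x'}\le\Delta$, gives $\dist_\eta(v,x)\le 6\Delta$, and the perturbation-stable ``star-shaped'' monotonicity inequality puts the whole $v$-to-$x$ shortest path into $C_x$, yielding strong diameter $12\Delta$. The obstacle you flag at the end, however, is not the real issue: in Step~3 the assignment distance is (as in \cite{CCLMST24}) the distance \emph{inside the supernode} $\eta$, which is exactly how the paper argues, so the shortest path in the star-shaped claim lives in $\eta$ by definition and no appeal to the buffer property is needed. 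Your proposed fix would not work anyway --- a detour through an ancestor supernode costs only $\Delta/r$ of buffer, which does not contradict a $6\Delta$ distance bound.

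The genuine gap is in your [Low-hop] argument, which is where your route departs from the paper. The paper does not reprove this property; it inherits the inductive argument of Section~4 of \cite{CCLMST24} (their Claim~4.4 and Lemma~4.5), only adjusting constants to account for the weaker $12\Delta$ cluster diameter. Your direct argument rests on two false or unjustified premises. First, ``the intersection of $\pi$ with $\eta$ has length at most $\diam(\eta)=O(\Delta)$'': supernodes do \emph{not} have diameter $O(\Delta)$ --- the skeleton $T_\eta$ is an SSSP tree made of up to $r-1$ shortest paths in $\dom(\eta)$, which can have length comparable to $\diam(G)$, so a supernode is a long, skinny object and a shortest path can meet $\Omega(\norm{\pi}/\Delta)$ of its clusters, not $O_r(1)$ of them. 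Second, ``the number of maximal subpaths of $\pi$ inside any single supernode totals $O(\norm{\pi}/\Delta+1)$'': nothing prevents $\pi$ from weaving in and out of many supernodes through arbitrarily short subpaths; controlling this is precisely the content of the buffer-based induction over the partition tree in \cite{CCLMST24}, and it cannot be obtained by the counting you describe. Since the hop bound is the harder half of the claim, your proof as written does not establish it; you would either need to reproduce (or cite) the \cite{CCLMST24} induction with the $12\Delta$ diameter substituted, as the paper does, or supply a genuinely new argument for why only $O_r(1)$ clusters per $\Delta$-unit of $\pi$ are needed.
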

\begin{proof}
    The proof of correctness is almost exactly the same as in Section 4 of \cite{CCLMST24}. 
    There is one change. 
    Lemma~4.2 of \cite{CCLMST24} states that each cluster created by the \textsc{Shortcut} procedure has (strong) diameter at most $4 \Delta$. With our stochastic procedure, this is no longer true; instead, we now show a diameter bound of $12 \Delta$.
    Indeed, let $\eta$ be a supernode and let $v$ be a supernode that is assigned to some cluster $C_x$. Let $P$ be a shortest-path between $v$ and $x$. Because we break ties consistently, every vertex along $P$ is also assigned to the cluster $C_x$. We will show that $\norm{P} \le 6 \Delta$. By the [supernode radius] property, $v$ is within distance $4 \Delta$ of some point $v'$ on the skeleton $T_\eta$. By definition of the net $N_\eta$, $v'$ is within distance $\Delta$ of some point in $N_\eta$; thus, triangle inequality implies that there is some net point $x'$ such that $\dist_\eta(v, x') \le 5 \Delta$. As $\alpha_{x'} \le \Delta$, we have $\dist_\eta(v, x') + \alpha_{x'} \le 6 \Delta$. By choice of $x$, we have $\dist_\eta(v, x) \le 6 \Delta$. We conclude that cluster $C_x$ has strong diameter at most $12 \Delta$.

    The rest of the proof of correctness from Section~4 in \cite{CCLMST24} carries over almost exactly, except that some constant factors increase because we only have a diameter bound of $12 \Delta$ for each cluster. In particular, original value of $9r$ in Claim~4.4 of \cite{CCLMST24} is now replaced with a bound of $25r$; the original bound of $(54r)^k$ in Lemma~4.5 of \cite{CCLMST24} is now replaced with a bound of $(150r)^k$; and finally, this larger constant is absorbed into the Big-O notation of Theorem~1.2 of \cite{CCLMST24}. 
\end{proof}

\begin{claim}
\label{clm:shortcut-probability}
    For any $K_r$-minor-free graph $G$, the stochastic procedure $\textsc{Shortcut}(G, \e)$ produces a clustering $\cC$ such that for any edge $e$ in $G$, $\Pr[\text{$e$ is cut by $\cC$}] \le \beta \cdot \norm{e}/(\e \cdot \diam(G))$ for constant $\beta$.
\end{claim}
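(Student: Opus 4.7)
The plan is to split the cutting probability into two contributions: (i) the edge $e$ is cut by the buffered cop decomposition $\cT$ in Step~1, and (ii) both endpoints land in a common supernode $\eta$ but are assigned to different cluster centers in Step~3. Contribution~(i) is bounded by $O_r(\norm{e}/\Delta) = O_r(\norm{e}/(\e \cdot \diam(G)))$ directly from Theorem~\ref{thm:random-cop}.

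For (ii), I will condition on the cop decomposition and on the supernode $\eta$ that contains both $u$ and $v$. Writing $\phi_x(w) \coloneqq \dist(w, x) + \alpha_x$, the triangle inequality yields $|\phi_x(u) - \phi_x(v)| \le \norm{e}$ for every net point $x \in N_\eta$. Following Case~3 of the proof of Claim~\ref{clm:cop-single-probability}, for any fixed pair of distinct net points $x, x'$, the event $\{C(u) = x,\ C(v) = x'\}$ forces $\alpha_x - \alpha_{x'}$ into an interval of length at most $2\norm{e}$ determined by the deterministic quantities $\dist(u, x) - \dist(u, x')$ and $\dist(v, x) - \dist(v, x')$. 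Because $\alpha_x$ and $\alpha_{x'}$ are independent $\mathrm{Unif}[0, \Delta]$ variables whose difference has density at most $1/\Delta$, this probability is at most $2\norm{e}/\Delta$.

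The key new ingredient is a packing bound on how many net-point pairs enter the union bound. By the $12\Delta$ cluster-diameter estimate from Claim~\ref{clm:valid-random-shortcut} (together with the definition of the shortcut assignment), the assigned centers $C(u)$ and $C(v)$ lie within graph distance $6\Delta$ of $u$ and $v$ respectively, so only net points within graph distance roughly $7\Delta$ of $u$ in $\eta$ are relevant (assuming $\norm{e} \le \Delta$; otherwise the target bound is trivial). I claim this count is $O_r(1)$: since the skeleton $T_\eta$ is an SSSP tree in $\dom(\eta)$ with at most $r-1$ leaves by the [shortest-path skeleton] property, decompose it into $r-1$ root-to-leaf paths. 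Along each such path $P$, the SSSP property gives $\dist_\eta(u, p) \ge |\dist_{\dom(\eta)}(\mathrm{root}, u) - \dist_{\dom(\eta)}(\mathrm{root}, p)|$ for every $p \in P$, so the candidates on $P$ have $\mathrm{root}$-distances confined to an interval of length at most $14\Delta$. Since tree-distance along a single path equals the difference of root-distances, consecutive net points on $P$ are separated by at least $\Delta$, yielding $O(1)$ candidates per branch and $O_r(1)$ in total.

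Summing the per-pair bound of $2\norm{e}/\Delta$ over the $O_r(1)$ relevant pairs gives $\Pr[C(u) \neq C(v) \mid u, v \in \eta] \le O_r(\norm{e}/\Delta)$. Adding this to~(i) proves the claim with $\beta = O_r(1)$. The main obstacle is the packing claim above; its proof relies crucially on the SSSP property of the skeleton, which makes tree-distance along a branch coincide with graph-distance in $\dom(\eta)$ and thereby prevents ``shortcuts'' from packing arbitrarily many net points close together.
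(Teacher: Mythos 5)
Your proposal is correct and follows essentially the same route as the paper's proof: split into the event that $e$ is cut by the cop decomposition (handled by Theorem~\ref{thm:random-cop}) versus the event that $u,v$ share a supernode but are separated in Step~3, bound each fixed pair of candidate centers by the same $2\norm{e}/\Delta$ perturbation-interval argument as Case~3 of Claim~\ref{clm:cop-single-probability}, and union-bound over $O_r(1)$ relevant centers obtained from the $\Delta$-separation of the net along the $O(r)$ shortest paths of the skeleton. Your packing argument via root-distance intervals on each branch is just a rephrasing of the paper's pigeonhole argument (14 centers on one shortest path would force two centers at distance $\ge 13\Delta$, contradicting that all lie within $12\Delta$ of each other), so the two proofs are essentially identical.
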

\begin{proof}
    Let $\cT$ be the stochastic buffered cop decomposition sampled in Step 1 of the \textsc{Shortcut} procedure. 
    Let $e = (u,v)$ be an edge in $G$. If $e$ is cut by $\cC$, then either $e$ is cut by $\cT$, or both $u$ and $v$ belong to the same supernode $\eta$ but are assigned to different clusters in Step 3 of \textsc{Shortcut}. 
    By \Cref{thm:random-cop}, edge $e$ is cut by $\cT$ with probability at most $\beta' \cdot \norm{e}/(\e \cdot \diam(G))$, for some constant $\beta'$. Thus, it suffices to bound the probability that $e$ is cut by $\cC$, given that $u$ and $v$ belong to the same supernode in $\cT$.

    Let $\eta$ denote the supernode in $\cT$ that contains both $u$ and $v$. 
    By the diameter bound proven above in the proof of Claim~\ref{clm:valid-random-shortcut}, $u$ (resp.\ $v$) is assigned to some cluster $C_x$ with cluster center satisfying $\dist_\eta(u, x) \le 6 \Delta$ (resp.\ $\dist_\eta(v, x) \le 6 \Delta$). 
    We first claim that there are at most $13(r-2) = O(r)$ cluster centers within distance $6 \Delta$ of $u$ (resp.\ $v$), where distance is measured with respect to $\eta$ --- these are the ``threatening'' cluster centers. 
    Suppose otherwise, for contradiction. 
    As the skeleton $T_\eta$ consists of at most $r-2$ shortest paths, pigeonhole principle implies that at least $14$ cluster centers $\set{x_1, \ldots, x_{14}}$ within distance $6 \Delta$ of~$u$ lie on a single shortest path in $T_\eta$. 
    By definition of the net $N_\eta$, the distance between any two cluster center is at least $\Delta$; thus, the distance between furthest two cluster centers in $\set{x_1, \ldots, x_{14}}$ is at least $13 \Delta$. 
    However, triangle inequality implies that all these cluster centers are within distance $12 \Delta$ of each other (as each one is within distance $6 \Delta$ of $v$), a contradiction.

    Now let $C_{x_1}$ and $C_{x_2}$ be two clusters with $C_{x_1} \neq C_{x_2}$, such that the cluster center $x_1$ (resp.\ $x_2$) is within distance $6\Delta$ of $u$ (resp.\ $v$). 
    We claim that $u$ is assigned to $C_1$ and $v$ is assigned to $C_2$ with probability at most $\norm{e}/(\e \cdot \diam(G))$. Indeed, if $u$ is assigned to $C_1$ and $v$ is assigned to $C_2$, we must have
    \[\dist_\eta(u, x_1) + \alpha_{x_1} \in \left( \dist_\eta(v, x_2) + \alpha_{x_2} - \norm{e}, \dist_\eta(v, x_2) + \alpha_{x_2} + \norm{e}\right).\]
    Otherwise, following the argument from Case 3 of Claim~\ref{clm:cop-single-probability}, triangle inequality implies that either $\dist_\eta(v, x_1) + \alpha_{x_1} < \dist_\eta(v, x_2) + \alpha_{x_2}$ and so $v$ is not assigned to $C_2$), or $\dist_\eta(u, x_2) + \alpha_{x_2} < \dist_\eta(u, x_1) + \alpha_{x_1}$ and so $u$ is not assigned to $C_1$. This interval has length $2 \norm{e}$; as $\alpha_{x_1}$ is chosen (independently of $\alpha_{x_2}$) uniformly from an interval of length $\Delta = \e \cdot \diam(G)$, $u$ is assigned to $C_1$ and $v$ is assigned to $C_2$ with probability at most $\norm{e}/(\e \cdot \diam(G))$.
    
    We argued above that there are $O(r)$ possible choices for $C_{x_1}$ and $O(r)$ possible choices for $C_{x_2}$. By applying a union bound over all $O(r^2)$ pairs, we conclude that $e$ is cut by $\cC$ with probability at most $O(r^2) \cdot \norm{e} / (\e \cdot \diam(G))$. This proves the claim.
\end{proof}

Claims~\ref{clm:valid-random-shortcut} and \ref{clm:shortcut-probability} together prove Lemma~\ref{lem:random-shortcut}.


\small
\bibliographystyle{alphaurl}
\bibliography{main}

\newcommand{\etalchar}[1]{$^{#1}$}
\begin{thebibliography}{AGMW10}

\bibitem[AABV95]{AABV95}
Baruch Awerbuch, Yossi Azar, Avrim Blum, and Santosh Vempala.
\newblock Improved approximation guarantees for minimum-weight $k$-trees and
  prize-collecting salesmen.
\newblock In {\em Proceedings of the twenty-seventh annual ACM Symposium on
  Theory of Computing}, STOC ’95, page 277–283, 1995.
\newblock \href {https://doi.org/10.1145/225058.225139}
  {\path{doi:10.1145/225058.225139}}.

\bibitem[AGG{\etalchar{+}}14]{AGGNT14}
I.~Abraham, C.~Gavoille, A.~Gupta, O.~Neiman, and K.~Talwar.
\newblock Cops, robbers, and threatening skeletons: {P}added decomposition for
  minor-free graphs.
\newblock In {\em Proceedings of the 46th Annual ACM Symposium on Theory of
  Computing}, STOC `14, page 79–88, 2014.
\newblock \href {https://doi.org/10.1145/2591796.2591849}
  {\path{doi:10.1145/2591796.2591849}}.

\bibitem[AGMW10]{AGMW10}
I.~Abraham, C.~Gavoille, D.~Malkhi, and U.~Wieder.
\newblock Strong-diameter decompositions of minor free graphs.
\newblock {\em Theory of Computing Systems}, 47(4):837--855, 2010.
\newblock \href {https://doi.org/10.1007/s00224-010-9283-6}
  {\path{doi:10.1007/s00224-010-9283-6}}.

\bibitem[AK06]{AK06}
Sanjeev Arora and George Karakostas.
\newblock A $2+ \epsilon$ approximation algorithm for the {$k$-MST} problem.
\newblock {\em Mathematical Programming}, 107(3):491--504, 2006.
\newblock \href {https://doi.org/10.1007/s10107-005-0693-1}
  {\path{doi:10.1007/s10107-005-0693-1}}.

\bibitem[AR98]{AR98}
Sunil Arya and Hariharan Ramesh.
\newblock A $2.5$-factor approximation algorithm for the {$k$-MST} problem.
\newblock {\em Information Processing Letters}, 65(3):117--118, 1998.
\newblock \href {https://doi.org/10.1016/S0020-0190(98)00010-6}
  {\path{doi:10.1016/S0020-0190(98)00010-6}}.

\bibitem[Bar96]{Bartal96}
Yair Bartal.
\newblock Probabilistic approximations of metric spaces and its algorithmic
  applications.
\newblock In {\em 37th Annual Symposium on Foundations of Computer Science,
  {FOCS} '96, Burlington, Vermont, USA, 14-16 October, 1996}, pages 184--193,
  1996.
\newblock \href {https://doi.org/10.1109/SFCS.1996.548477}
  {\path{doi:10.1109/SFCS.1996.548477}}.

\bibitem[Bar98]{Bartal98}
Yair Bartal.
\newblock On approximating arbitrary metrices by tree metrics.
\newblock In {\em Proceedings of the Thirtieth Annual {ACM} Symposium on the
  Theory of Computing, Dallas, Texas, USA, May 23-26, 1998}, pages 161--168,
  1998.
\newblock \href {https://doi.org/10.1145/276698.276725}
  {\path{doi:10.1145/276698.276725}}.

\bibitem[BRV99]{ARV99}
Avrim Blum, R~Ravi, and Santosh Vempala.
\newblock A constant-factor approximation algorithm for thek-mst problem.
\newblock {\em Journal of Computer and System Sciences}, 58(1):101–108, 1999.
\newblock \href {https://doi.org/10.1006/jcss.1997.1542}
  {\path{doi:10.1006/jcss.1997.1542}}.

\bibitem[CCL{\etalchar{+}}23]{CCLMST23a}
Hsien-Chih Chang, Jonathan Conroy, Hung Le, Lazar Milenkovic, Shay Solomon, and
  Cuong Than.
\newblock Covering planar metrics (and beyond): O(1) trees suffice.
\newblock In {\em 2023 IEEE 64th Annual Symposium on Foundations of Computer
  Science (FOCS)}, pages 2231--2261, 2023.
\newblock \href {http://arxiv.org/abs/2306.06215} {\path{arXiv:2306.06215}},
  \href {https://doi.org/10.1109/FOCS57990.2023.00139}
  {\path{doi:10.1109/FOCS57990.2023.00139}}.

\bibitem[CCL{\etalchar{+}}24]{CCLMST24}
Hsien-Chih Chang, Jonathan Conroy, Hung Le, Lazar Milenković, Shay Solomon,
  and Cuong Than.
\newblock {\em Shortcut Partitions in Minor-Free Graphs: Steiner Point Removal,
  Distance Oracles, Tree Covers, and More}, page 5300–5331.
\newblock SODA '24. 2024.
\newblock \href {http://arxiv.org/abs/2308.00555} {\path{arXiv:2308.00555}},
  \href {https://doi.org/10.1137/1.9781611977912.191}
  {\path{doi:10.1137/1.9781611977912.191}}.

\bibitem[CG04]{CG04}
Douglas~E. Carroll and Ashish Goel.
\newblock Lower bounds for embedding into distributions over excluded minor
  graph families.
\newblock In {\em Algorithms -- {ESA} 2004, 12th Annual European Symposium,
  Bergen, Norway, September 14-17, 2004, Proceedings}, pages 146--156, 2004.
\newblock \href {https://doi.org/10.1007/978-3-540-30140-0\_15}
  {\path{doi:10.1007/978-3-540-30140-0\_15}}.

\bibitem[CJLV08]{CJVL08}
Amit Chakrabarti, Alexander Jaffe, James~R. Lee, and Justin Vincent.
\newblock Embeddings of topological graphs: Lossy invariants, linearization,
  and 2-sums.
\newblock In {\em The Proceedings of the 49th Annual {IEEE} Symposium on
  Foundations of Computer Science, {FOCS}' 08.}, 2008.
\newblock \href {https://doi.org/10.1109/focs.2008.79}
  {\path{doi:10.1109/focs.2008.79}}.

\bibitem[CLPP23]{CLPP23}
Vincent Cohen{-}Addad, Hung Le, Marcin Pilipczuk, and Michał Pilipczuk.
\newblock Planar and minor-free metrics embed into metrics of polylogarithmic
  treewidth with expected multiplicative distortion arbitrarily close to 1.
\newblock In {\em 2023 IEEE 64th Annual Symposium on Foundations of Computer
  Science}, FOCS ’23, pages 2262--2277, 2023.
\newblock \href {http://arxiv.org/abs/2304.07268} {\path{arXiv:2304.07268}},
  \href {https://doi.org/10.1109/FOCS57990.2023.00140}
  {\path{doi:10.1109/FOCS57990.2023.00140}}.

\bibitem[Coh22]{CA22}
Vincent Cohen{-}Addad.
\newblock Bypassing the surface embedding: approximation schemes for network
  design in minor-free graphs.
\newblock In {\em {STOC} '22: 54th Annual {ACM} {SIGACT} Symposium on Theory of
  Computing, Rome, Italy, June 20 - 24, 2022}, pages 343--356. {ACM}, 2022.
\newblock \href {https://doi.org/10.1145/3519935.3520049}
  {\path{doi:10.1145/3519935.3520049}}.

\bibitem[Fil19]{Fil19padded}
Arnold Filtser.
\newblock On strong diameter padded decompositions.
\newblock In {\em Approximation, Randomization, and Combinatorial Optimization.
  Algorithms and Techniques, {APPROX/RANDOM} 2019, September 20-22, 2019,
  Massachusetts Institute of Technology, Cambridge, MA, {USA}}, pages
  6:1--6:21, 2019.
\newblock \href {https://doi.org/10.4230/LIPIcs.APPROX-RANDOM.2019.6}
  {\path{doi:10.4230/LIPIcs.APPROX-RANDOM.2019.6}}.

\bibitem[Fil24a]{Fil24}
Arnold Filtser.
\newblock On sparse covers of minor free graphs, low dimensional metric
  embeddings, and other applications, 2024.
\newblock \href {http://arxiv.org/abs/2401.14060} {\path{arXiv:2401.14060}}.

\bibitem[Fil24b]{filtser2024scattering}
Arnold Filtser.
\newblock Scattering and sparse partitions, and their applications.
\newblock {\em ACM Trans. Algorithms}, Jun 2024.
\newblock \href {https://doi.org/10.1145/3672562} {\path{doi:10.1145/3672562}}.

\bibitem[FRT04]{FRT04}
Jittat Fakcharoenphol, Satish Rao, and Kunal Talwar.
\newblock A tight bound on approximating arbitrary metrics by tree metrics.
\newblock {\em J. Comput. Syst. Sci.}, 69(3):485--497, 2004.
\newblock \href {https://doi.org/10.1016/j.jcss.2004.04.011}
  {\path{doi:10.1016/j.jcss.2004.04.011}}.

\bibitem[Gar96]{Garg96}
Naveen Garg.
\newblock A $3$-approximation for the minimum tree spanning $k$ vertices.
\newblock In {\em Proceedings of the 37th Annual Symposium on Foundations of
  Computer Science}, FOCS '96, pages 302--309, 1996.
\newblock \href {https://doi.org/10.1109/SFCS.1996.548489}
  {\path{doi:10.1109/SFCS.1996.548489}}.

\bibitem[Gar05]{Garg05}
Naveen Garg.
\newblock Saving an epsilon: a $2$-approximation for the {$k$-MST} problem in
  graphs.
\newblock In {\em Proceedings of the 37th annual ACM symposium on Theory of
  Computing}, STOC '05, page 396–402. ACM, 2005.
\newblock \href {https://doi.org/10.1145/1060590.1060650}
  {\path{doi:10.1145/1060590.1060650}}.

\bibitem[KPR93]{KPR93}
Philip Klein, Serge~A. Plotkin, and Satish Rao.
\newblock Excluded minors, network decomposition, and multicommodity flow.
\newblock In {\em Proceedings of the 25th annual {ACM} symposium on Theory of
  Computing, {STOC}' 93}, 1993.
\newblock \href {https://doi.org/10.1145/167088.167261}
  {\path{doi:10.1145/167088.167261}}.

\bibitem[RSM{\etalchar{+}}96]{RSMRR96}
R.~Ravi, R.~Sundaram, M.~V\/. Marathe, D.~J. Rosenkrantz, and S.~S. Ravi.
\newblock Spanning trees—short or small.
\newblock {\em SIAM Journal on Discrete Mathematics}, 9(2):178–200, May 1996.
\newblock \href {https://doi.org/10.1137/s0895480194266331}
  {\path{doi:10.1137/s0895480194266331}}.

\end{thebibliography}
\end{document}